\definecolor{mygreen}{RGB}{28,172,0} % color values Red, Green, Blue
\definecolor{mylilas}{RGB}{170,55,241}
\theoremstyle{section}
\newtheorem{example}{Example}
\newtheorem{theorem}{Theorem}
\newtheorem{assumption}{Assumption}
\newtheorem{corollary}{Corollary}[theorem]
\newtheorem{lemma}[theorem]{Lemma}
\newtheorem{proposition}{Proposition}
\newtheorem{remark}{Remark}
\newenvironment{customthm}[1]
  {\innercustomthm}
  {\endinnercustomthm}
\DeclarePairedDelimiter\abs{\lvert}{\rvert}%
\DeclarePairedDelimiter\norm{\lVert}{\rVert}%
\let\oldabs\abs
\def\abs{\@ifstar{\oldabs}{\oldabs*}}
\let\oldnorm\norm
\def\norm{\@ifstar{\oldnorm}{\oldnorm*}}
\newcommand{\mcI}{\mathcal{I}}
\newcommand{\mcR}{\mathcal{R}}
\newcommand{\mcM}{\mathcal{M}}
\newcommand{\HH}{\mathbb{H}}
\newcommand{\B}{\mathbb{B}}
\newcommand{\R}{\mathbb{R}}
\newcommand{\C}{\mathbb{C}}
\newcommand{\N}{\mathbb{N}}
\newcommand{\Z}{\mathbb{Z}}
\newcommand{\K}{\mathbb{K}}
\renewcommand{\P}{\mathbb{P}}
\newcommand{\rarrow}{\rightarrow}
\newcommand{\eps}{\varepsilon}
\newcommand{\ind}{\mathds{1}}
\begin{document}
% title, etc.
\title[Dimer-dimer correlations at the rough-smooth boundary]{Dimer-dimer correlations at the rough-smooth boundary}
\author{Kurt Johansson\textsuperscript{*} and Scott Mason\textsuperscript{$\dagger$}}

\thanks{\textsuperscript{*}Department of Mathematics, KTH Royal Institute of Technology, kurtj@kth.se}
\thanks{\textsuperscript{$\dagger$}Department of Mathematics, KTH Royal Institute of Technology, scottm@kth.se}
\thanks{Supported by the grant KAW 2015.0270 from the Knut and Alice Wallenberg Foundation.}
\maketitle
\begin{abstract}
Three phases of macroscopic domains have been seen for large but finite periodic dimer models; these are known as the frozen, rough and smooth phases.
The transition region between the frozen and rough region has received a lot of attention for the last twenty years and recently work has been underway to understand the rough-smooth transition region in the case of the two-periodic Aztec diamond. We compute uniform asymptotics for dimer-dimer correlations of the two-periodic Aztec diamond when the dimers lie in the rough-smooth transition region. These asymptotics rely on a formula found in \cite{C/J} for the inverse Kasteleyn matrix, they also apply to the infinite square grid dimer model with a variable weighting which interpolates between the rough and smooth phase \cite{K.O.S}. In particular, we find that dimers initially decay exponentially when the magnetic coordinates are very close to the bounded complementary component of the associated amoebae, they then transition to a power law decay once far enough apart.\end{abstract}

\section{Introduction}
 A dimer model of a bipartite planar graph $G$ consists of a probability measure $\P$ on the set of dimer configurations of $G$. We recall that a \emph{dimer configuration} of a graph $G$ is a set of edges in $G$ with the property that each vertex belongs to exactly one edge. We call an edge in a dimer configuration a \emph{dimer}. Dimer configurations have a simple bijection to an associated set of tilings.  The bijection is in terms of the dual graph $G^*$ of $G$;  since each dimer $e$ crosses two faces of the dual graph, let the tile corresponding to $e$ be the union of these two faces. In this paper we will focus on a graph $G$ which is the subgraph of the grid graph of $\Z^2$ satisfying $|x|+|y|\leq n$. Since $\Z^2$ is its own dual graph, this is the study of domino tilings of the Aztec Diamond introduced in \cite{EKS}.\\ One can also define a set of surfaces in $\R^3$ called height functions which are in bijection with the dimer configurations of $G$. When the graph and probability measure have a certain double periodicity, \cite{K.O.S} find that three classes of limiting Gibbs measures are possible. The three classes are referred to as \emph{frozen} (solid), \emph{rough} (liquid) and \emph{smooth} (gas). Each of the three classes has a distinct decay of correlations for distant dimers, namely fixed (or zero decay), polynomial and exponential decay. 
 
 The above classes are also related to the fluctuations and average slope of the associated random height functions. At a large but finite scale, the induced measure on height functions localises on functions with distinct domains. The limiting height function in these domains either has a smoothly varying, strictly non-zero curvature or zero curvature and a flat shape - called facets. The rough regions, in which the height functions curve, are rough in the sense that they have logarithmic fluctuations around their macroscopic shape. 
In these regions one expects the height fluctuations to be given by the Gaussian free field in an appropriate limit. The facets can have either no fluctuations (frozen) or bounded fluctuations (``smooth" but with small Poisson-like dislocations). The typical behaviour of the macroscopic height function at a rough to facet boundary point $x_0$ is expected to be like $\sim(x-x_0)^{3/2}$, see \cite{CL}, the Pokrovsky-Talapov law. This has been proven rigorously for frozen boundaries in a large class of dimer models in \cite{ADPZ}. In the context of random height functions our results relate to the correlations of the microscopic gradients of the height function at a rough-smooth boundary.
 
  Specifically, this article contains a uniform analysis of the decay of dimer correlations in different directions in a large two-periodic Aztec Diamond in the transition region between the rough and smooth phases. We study this decay in the finite model since if we let the size of the Aztec Diamond go to infinity first we only see a smooth pure phase in the transition region. Our analysis also applies to an infinite planar model with the same graph and weights induced to have the right slope of the height function. One result
 in this setting is the following theorem.
 \begin{theorem}
 Consider dimers in any large enough bounded region of the infinite grid graph (with vertices given by $\Z^2$). There are Gibbs measures in the \emph{rough phase} (or \emph{liquid phase}) for which these dimers have exponential decay of correlations.
 % dimer-dimer correlations decay exponentially.
 \begin{proof}
 See section \ref{infgraphdiscuss}.
 \end{proof}
 \label{theoremgibbsrough}
 \end{theorem}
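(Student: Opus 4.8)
The statement I want to prove is a striking one — it asserts that a Gibbs measure in the rough (liquid) phase can nonetheless exhibit exponential decay of dimer-dimer correlations on any fixed bounded window, which at first sight contradicts the textbook dichotomy that rough = polynomial decay, smooth = exponential decay. The resolution must be that the phase is classified by the \emph{slope} of the height function (i.e. the position of the density in the amoeba of the spectral curve), whereas the correlation decay on a fixed window is governed by the local behaviour of the inverse Kasteleyn matrix, and these two features decouple in a limiting procedure. So the strategy is: construct a sequence of finite two-periodic Aztec diamonds, locate the rough–smooth transition region, and take a scaling limit in which the observation window stays at a fixed lattice distance from the transition curve while the diamond grows. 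The limiting measure is then a translation-invariant Gibbs measure whose slope is the rough-phase slope (because the transition curve is where the liquid region meets the gas region, so points at $O(1)$ lattice distance inside still have liquid-type slope in the $n\to\infty$ classification), but whose inverse Kasteleyn kernel inherits the exponential decay characteristic of proximity to the gas phase.

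**Key steps in order.** First I would recall the explicit formula from \cite{C/J} for the inverse Kasteleyn matrix $K^{-1}$ of the two-periodic Aztec diamond, written as a double contour integral, and specialise the two spatial arguments to lie in the rough–smooth transition region at mesoscopic or $O(1)$ separation. Second, I would perform a saddle-point / steepest-descent analysis of this double integral in the regime where the macroscopic location sits exactly on the transition curve; the relevant feature is that in this regime a pair of saddle points coalesces with a singularity of the integrand coming from the zeros of the characteristic polynomial $P(z,w)$ of the gas phase, and the resulting asymptotics carry an exponentially small prefactor $e^{-c\,\mathrm{dist}}$ in the separation of the two dimers, with $c>0$ determined by the imaginary part of the relevant branch point. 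This gives a limiting kernel $\mathbb{K}^{-1}_\infty$ on the infinite grid with exponential off-diagonal decay. Third, I would verify that the associated determinantal correlations define a genuine translation-invariant Gibbs measure for the dimer model on $\Z^2$ with the induced edge weights — this is where one checks that the limiting weights are exactly the two-periodic weights, that the measure satisfies the DLR equations (using the standard fact that local dimer statistics are given by minors of $K^{-1}$), and that its height-function slope equals the characteristic rough-phase slope, placing it in the interior of the amoeba rather than at a vertex; this last point is what legitimately labels it ``rough''. Fourth, I would translate ``any large enough bounded region'' into a uniform statement: the exponential bound on correlations holds for all pairs of dimers in a box $[-R,R]^2$ once $R$ is fixed, with a rate independent of the box, which follows from uniformity of the saddle-point estimates.

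**Main obstacle.** The hard part will be the saddle-point analysis at the transition point: one must control a double contour integral in which the relevant critical points merge with branch-point singularities of $P(z,w)^{-1}$, so the naive Gaussian saddle approximation degenerates and one needs either a uniform (Airy-type or coalescing-saddles) treatment or a careful deformation of contours that isolates the exponentially small residue contribution. Extracting the precise exponential rate $c$ as a function of direction — and showing it is strictly positive in \emph{every} direction, so that the decay is genuinely exponential and not merely along special rays — requires understanding the geometry of the zero set of $P(z,w)$ near the relevant real branch points, and this is the computational heart of the paper. A secondary subtlety is the interchange of limits: one must ensure that taking $n\to\infty$ after fixing the window (rather than before) is exactly what is needed, and that the resulting object is the \emph{same} as the naive infinite-volume gas-phase Gibbs measure only in slope but not in its finer correlation structure — i.e. that there is no contradiction, just a genuinely new Gibbs measure arising as a non-standard limit.
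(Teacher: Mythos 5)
You have reached for a finite-$n$ scaling limit, whereas the paper works directly with the Kenyon--Okounkov--Sheffield (KOS) classification of translation-invariant Gibbs measures on the infinite doubly-periodic graph, and there is a genuine gap in the limit you propose. You say: ``take a scaling limit in which the observation window stays at a fixed lattice distance from the transition curve while the diamond grows,'' and claim the limiting measure has the rough-phase slope because ``points at $O(1)$ lattice distance inside still have liquid-type slope in the $n\to\infty$ classification.'' This is false, and the introduction of the paper flags it explicitly: ``if we let the size of the Aztec Diamond go to infinity first we only see a smooth pure phase in the transition region.'' The local slope at macroscopic position $n(1+\xi)$ is a function of $\xi$ alone; holding the window at lattice distance $O(1)$ from the curve forces $\xi\to\xi_c$, and the local limit is the \emph{unique smooth} Gibbs measure with slope $(0,0)$. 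Your limiting object is therefore smooth, not rough, and does not prove the theorem. (Fixing $\xi<\xi_c$ strictly and letting $n\to\infty$ would give a rough Gibbs measure, but that is not what you wrote, and it would still leave you the nontrivial task of identifying the local limit and verifying the DLR conditions.)

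The paper's actual proof avoids any finite-to-infinite limit. It invokes KOS directly: the translation-invariant Gibbs measures on $G$ are parametrized by magnetic coordinates $(B_1,B_2)$, each one comes with the explicit full-plane inverse Kasteleyn matrix (\ref{infinvKasteleyn}), and its phase is read off from the position of $(B_1,B_2)$ relative to the amoeba of the characteristic polynomial $P(z,w)$. Taking $s_1=e^{B_1}=1$ and $s_2=e^{B_2}=1/|G(\omega_c)|^2$ with $\omega_c=ie^{-i\varphi_c}$, $\varphi_c>0$ small, places $(B_1,B_2)$ in the \emph{interior} of the amoeba, hence in the rough phase, with a nonzero slope $t\vec e_1$. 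The identity in Lemma~\ref{thetacxi}'s companion result (Lemma~3.3 of \cite{C/J}, stated as the displayed lemma giving (\ref{Kr1r2K11Comegac})) rewrites $\K_{s_1,s_2}^{-1}$ in terms of $\K_{1,1}^{-1}-C_{\omega_c}$, whose uniform asymptotics are exactly Propositions~\ref{propEKL}, \ref{Comegacinnerdisc}, \ref{propeklcomegac}. Regime~I of Section~\ref{SecDiscussion} shows $\mathrm{corr}(e_1,e_2)$ decays exponentially for $r$ up to order $\log(1/\varphi_c)$; so for any fixed $R$, choose $\varphi_c$ small enough that this window covers the ball of radius $R$, and the theorem follows with no interchange-of-limits argument and no DLR verification (both absorbed into the KOS input).

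Your secondary worries --- coalescing saddle points, uniform Airy-type treatment, positivity of the exponential rate in all directions --- are real technical features of the \emph{finite}-$n$ analysis (they appear in Section~\ref{SecUniformbound} when bounding $R_{\eps_1,\eps_2}$), but they are not the bottleneck for this particular theorem, precisely because the paper sidesteps the finite model here. The asymptotics needed are those of $E_{k,\ell}$ and $\tilde C_{\omega_c}(k,\ell)$ at fixed $\varphi_c>0$, which are ordinary single-contour saddle-point estimates, not a degenerate double-contour problem.
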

This behavior of the correlations appears when the slope of the model is such that we are very close to a smooth phase. We will see that as we get closer to the smooth phase, the larger the corresponding region in which the dimers experience exponential decay of correlations. However for much larger distances we see the decay of correlations typically associated with the rough phase. Thus, although we are in a rough phase, if we look locally it seems like we are in a smooth phase and the rough phase only manifests itself at sufficiently long distances.
The behavior at the rough-smooth boundary in the finite two-periodic Aztec diamond is discussed informally
in Section \ref{dimerdimerexample}, and rigorously in Section \ref{corrdire1}.

 As a further general background, the interpolation between exponential and power law decay of correlations has already been observed in a variety of two-dimensional models in statistical mechanics \cite{CL}. Whilst rigorous results are limited \cite{FS}, physicists have used renormalisation group/mean field techniques along with numerical simulations to analyse the transition region of some of these models. As such, it is expected that a sharp transition occurs between the two regimes (exponential and power law decay). This is known as a Kosterlitz-Thouless phase transition, which was originally characterised by the appearance of "topological defects".  For example, in the $XY$ model the transition is associated with the appearance of vortices. The interpretation is that these vortices destroy the quasi-long range order present in the lower temperature regime giving rise to short range order and exponential decay of correlations. We discuss a kind of analogue of this and Figure \ref{AztecDiamondpics1} for the infinite dimer model at the end of section \ref{infgraphdiscuss}.

\subsection{Definition of the model} Consider the subset of $\Z^2$  given by $V = W\cup  B$, where
\begin{align*}
W=\{(i,j): i \ \text{mod } 2=1, \ j \ \text{mod } 2=0, 1
\leq i\leq2n-1, 0\leq j\leq 2n\}
\end{align*}
and
\begin{align*}
B=\{(i,j): i \ \text{mod } 2=0, \ j \ \text{mod } 2=1, 1
\leq i\leq2n, 0\leq j \leq 2n-1\}.
\end{align*}
We define the vertex set $V$ as the vertex set of the \emph{Aztec Diamond graph} $AD$ of size $n$ with corresponding edge set given by all $b-w=\pm \vec{e}_1, \pm\vec{e}_2$ for all $b\in B, w\in W$, where $\vec{e}_1=(1,1)$, $\vec{e}_2=(-1,1)$. For an Aztec Diamond of size $n=4m, m\in \N_{>0}$ define the weight as a function $w$ from the edge set into $\R_{> 0}$ such that the edges contained in the smallest cycle surrounding the point $(i,j)$ where $(i+j)$ mod $4=2$, have weight $a\in(0,\infty)$ and the edges contained in the smallest cycle surrounding the point $(i,j)$ where $(i+j)$ mod $4=0$ have weight $b\in (0,\infty)$. Each of these cycles is the boundary of a face of $AD$ and we call each of these faces an $a$ face ($b$ face) if the edges on its boundary each have weight $a$ ($b$). 
We divide the white and black vertices into two different types. For $i\in\{0,1\}$,
\begin{align*}
&B_i=\{(x_1,x_2)\in B : x_1+x_2 \text{ mod } 4 = 2i+1\},
\end{align*}
and
\begin{align*}
&W_i= \{(x_1,x_2)\in W : x_1+x_2 \text{ mod } 4=2i+1\}.
\end{align*}\\
\begin{figure}[h]
\centering
\includegraphics[width = 0.5\textwidth]{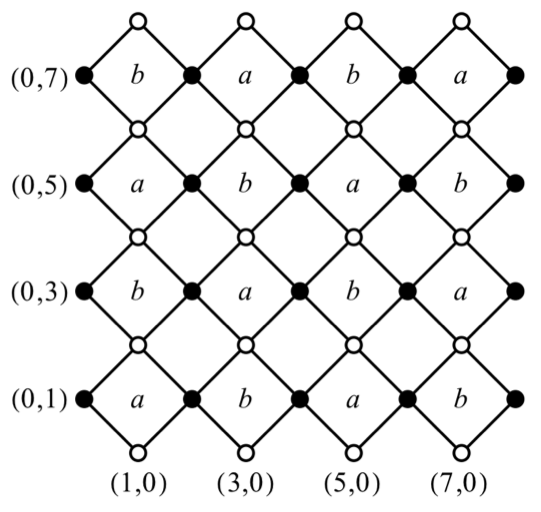}
\caption[Aztec Diamond graph]
	{The two-periodic Aztec Diamond graph for $n=4$ with the $a$ and $b$ faces labelled.}
	\label{AztecDiamondgraphn=4}
\end{figure}
Define a probability measure $\P_{Az}$ on the finite set of all dimer configurations $\mcM(AD)$ of $AD$. For a dimer configuration $\omega\in\mcM(AD)$,
\begin{align*}
\P_{Az} (\omega) = \frac{1}{Z}\prod_{e\in \omega}w(e), \qquad \text{where}  \ \ Z=\sum_{\omega\in \mcM(AD)}\prod_{e\in\omega } w(e)
\end{align*}
is the partition function and the product is over all edges in $\omega$.
We call the probability space corresponding to $\P_{Az}$ and $\mcM(AD)$ the \emph{two-periodic Aztec Diamond}, and note that the setup here is the same as in \cite{C/J}.

\subsection{Kasteleyn's approach and dimer statistics} The classical approach to analyse the statistical behaviour of random dimer configurations of large bipartite graphs $G$ is to follow an idea introduced by Kasteleyn. In this approach, one puts signs $+1,-1$ (called a Kasteleyn orientation) into a submatrix of the weighted adjacency matrix indexed by $B'\times W'$ for the black and white vertices, $B'$ and $W'$, of $G$. The resulting matrix $K$ is called the Kasteleyn matrix and has the property that the partition function of the dimer model is equal to the absolute value of the determinant of $K$. The key idea in this approach is that the signs one introduced to construct the matrix $K$ cause the non-zero terms in the sum of the determinant of $K$ to all have the same sign. In general, the Kasteleyn orientation is not unique, and its values need not be restricted to $1,-1$. Here we introduce the Kasteleyn matrix $K_{a,b}$ that we use for the two-periodic Aztec Diamond model of size $n=4m$. Define
\begin{align}
K_{a,b}(x,y)=\begin{cases} a(1-j)+bj & \text{if } y=x+\vec{e}_1, x\in B_{j}\\
i(aj+b(1-j)) & \text{if } y=x+\vec{e}_2, x\in B_{j}\\
aj+b(1-j) & \text{if } y=x-\vec{e}_1, x\in B_{j}\\
i(a(1-j)+bj) & \text{if }  y=x-\vec{e}_2,x\in B_{j}\\
0 & \text{otherwise}
\end{cases}
\end{align}
 where $i=\sqrt{-1},$ $j\in\{0,1\} $. For dimers $e_1,...,e_n$, define the $n$-point correlation function
 \begin{align}
 \rho_n(e_1,...,e_n)=\P_{Az}(\omega\in \mcM(AD): e_1,...,e_n\in \omega).
  \end{align}
One can use the determinantal expression of the partition function to show that collections of dimers form a determinantal point process. Indeed, a theorem from Kenyon \cite{K} (see also \cite{Joh}) gives that for $e_i=(b_i,w_i), \ i=1,...,n$, the $n$-point correlation functions are 
\begin{align}\label{rhocorr}
\rho_n(e_1,...,e_n)=\det(L(e_i,e_j))_{i,j=1}^n
\end{align}
with correlation kernel
\begin{align}\label{Lkernel}
L(e_i,e_j)=K_{a,b}(b_i,w_i)K_{a,b}^{-1}(w_j,b_i).
\end{align}
In the above, $K_{a,b}^{-1}(w_j,b_i)$ is the inverse of the Kasteleyn matrix $K_{a,b}$ evaluated at $(w_j,b_i)$.
Define the \emph{dimer-dimer correlation} $\text{corr}(e_1,e_2)$ of two dimers $e_1,e_2$ to be the covariance of their indicators $\ind_{e_1\in \omega}, \ind_{e_2\in \omega}$. We have
\begin{align}
\label{covarianceform}
\text{corr}(e_1,e_2)&=\mathbb{E}_{Az}[\ind_{e_1\in \omega, e_2\in \omega}]-\mathbb{E}_{Az}[\ind_{e_1\in \omega}]\mathbb{E}_{Az}[\ind_{e_2\in \omega}]\\
 &=\rho_2(e_1, e_2)-\rho_1(e_1)\rho_1(e_2)\nonumber\\
&=-K_{a,b}(b_1,w_1)K_{a,b}(b_2,w_2)K_{a,b}^{-1}(w_2,b_1)K_{a,b}^{-1}(w_1,b_2),\nonumber
\end{align}
from \eqref{rhocorr} and \eqref{Lkernel}.

\subsection{Dimer locations}
 Fix $a\in(0,1)$ and take, without loss of generality, $b=1$. It is convenient to set $c=a/(1+a^2)\in(0,1/2)$. Our goal is to take a very large $n$ and compute uniform asymptotics of the correlation of two edges separated by a growing distance but for distances much smaller than $n$. We want to investigate the decay of correlations in the transition region 
between the rough and smooth phases of the model.
 As in \cite{C/J} we only consider dimers in the the bottom left quadrant of $AD$. Let $-1<\xi<0$ so that $(n(1+\xi),n(1+\xi))$ is the coordinate varying over the diagonal of the bottom left quadrant of the Aztec Diamond. 
We will use $(n(1+\xi), n(1+\xi))$ as a reference point so we want it to have integer coordinates, i.e. we assume that 
 \begin{align*}
n(1+\xi)\in 2\N_{\geq 0}\cap [n(1-\frac{1}{2}\sqrt{1+2c},n(1-\frac{1}{2}\sqrt{1-2c})].
\end{align*}
The reference point should be close to the asymptotic rough-smooth boundary. 
Following the notation in \cite{C/J}, let $\xi_c=-\frac{1}{2}\sqrt{1-2c}$ so that $\xi=\xi_c$ corresponds to the limiting rough-smooth boundary. We will consider $\xi$ close to $\xi_c$.

The locations of the dimers can be specified as follows. Let $\ind_{a_F}$ be an indicator function on the edge set so that $\ind_{a_F}(e)=1$ if the edge $e$ has weight $a$ and $\ind_{a_F}(e)=0$ otherwise.
Take two dimers and label their positions as follows: Let $\eps_1,\eps_2\in\{0,1\}$. For $i=1,2$, the dimer $e_i$ is given by a pair of vertices $(x_{ \eps_1}^{(i)},y_{\eps_2}^{(i)})\in W_{\eps_1}\times B_{\eps_2}$ where
\begin{align}
&x_{\eps_1}^{(i)}=(n(1+\xi)+2r_1^{(i)}+1, n(1+\xi)+2r_2^{(i)}+2\eps_1^{(i)}(2\ind_{a_F}(e_i)-1))\label{xdimervertex}\\
&y_{\eps_2}^{(i)}=(n(1+\xi)+2r_1^{(i)}+1-(1-2\eps_2^{(i)})(2\ind_{a_F}(e_i)-1), n(1+\xi)+2r_2^{(i)}+2\ind_{a_F}(e_i)-1)\label{ydimervertex}
\end{align}
and $(r_1^{(i)}, r_2^{(i)})\in\{ (r_1^{(i)}, r_2^{(i)})\in \Z^2 : r_1^{(i)}+r_2^{(i)}\in2\Z  \ \text{and}  \ |2r_1^{(i)}|+|2r_2^{(i)}| \leq r\}$.  

Note that the sum of  $r_1^{(i)}$ and $r_2^{(i)}$ is always even and $(n(1+\xi)+1,n(1+\xi))$ is always a $W_0$-type vertex. So $x_0^{(i)}=(n(1+\xi)+2r_1^{(i)}+1,n(1+\xi)+2r_2^{(i)})$ generates all possible $W_0$ vertices within a distance $r$ of $(n(1+\xi)+1,n(1+\xi))$.  In \eqref{xdimervertex} and \eqref{ydimervertex} we have written the locations of the vertices $x$ and $y$ corresponding to the edge $e=(x,y)\in W\times B$ as the sum of two coordinate vectors, the first corresponds to the $W_0$ vertex in the same $a$ or $b$ face as $e$ and the second is $(0,2\eps_1^{(i)}(2\ind_{a_F}(e_i)-1))$ for the white vertex $x$ or $((-1+2\eps_2^{(i)})(2\ind_{a_F}(e_i)-1),2\ind_{a_F}(e_i)-1)$ for the black vertex $y$.

\begin{example} \rm
There are many possibilities for the location and orientation of the dimers. For our discussion of the correlations we will consider the following case.
We take two dimers $e_1=(x^{(1)},y^{(1)})$ and $e_2=(x^{(2)},y^{(2)})$ in two separate $a$ faces and both connecting the vertices of type $W_0$ and $B_0$, so $(x^{(i)},y^{(i)})\in W_0\times B_0$ for $i=1,2$. We place one along the main diagonal. Explicitly
\begin{align}
&x^{(1)}=(n(1+\xi)+1,n(1+\xi)), && y^{(1)}=(n(1+\xi),n(1+\xi)+1),\\
&x^{(2)}=(n(1+\xi)+1+2r_1,n(1+\xi)+2r_2),&&y^{(2)}=(n(1+\xi)+2r_1,n(1+\xi)+1+2r_2)
\end{align}
for $r_1,r_2\in \Z$ such that $r_1+r_2$ is even and $2(|r_1|+|r_2|)\leq r$.\label{example}
\end{example}
\begin{figure}[ht]
\centering
\includegraphics[width=0.6\textwidth]{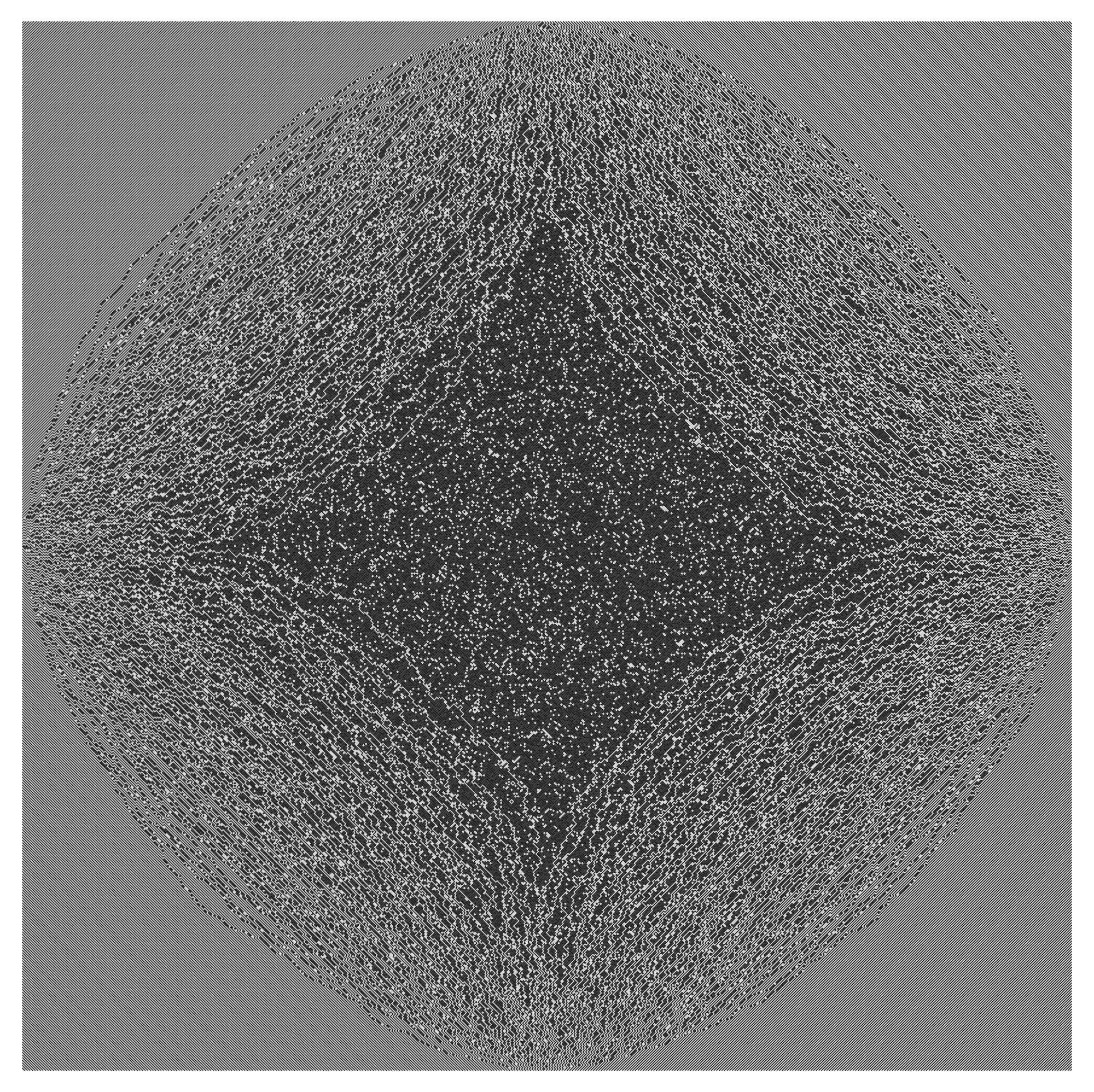}
\caption[Aztec Diamond]
	{A drawing of a sample of the two-periodic Aztec Diamond with $a=0.5$ and $b=1$. The picture consists of eight different grey-scale colours to emphasize the smooth (gas) region. The dimers with weight $a$ are coloured lighter grey than those with weight 1.}
	\label{AztecDiamondpics}
\end{figure}
\begin{figure}[ht]
\centering
\includegraphics[width=0.6\textwidth]{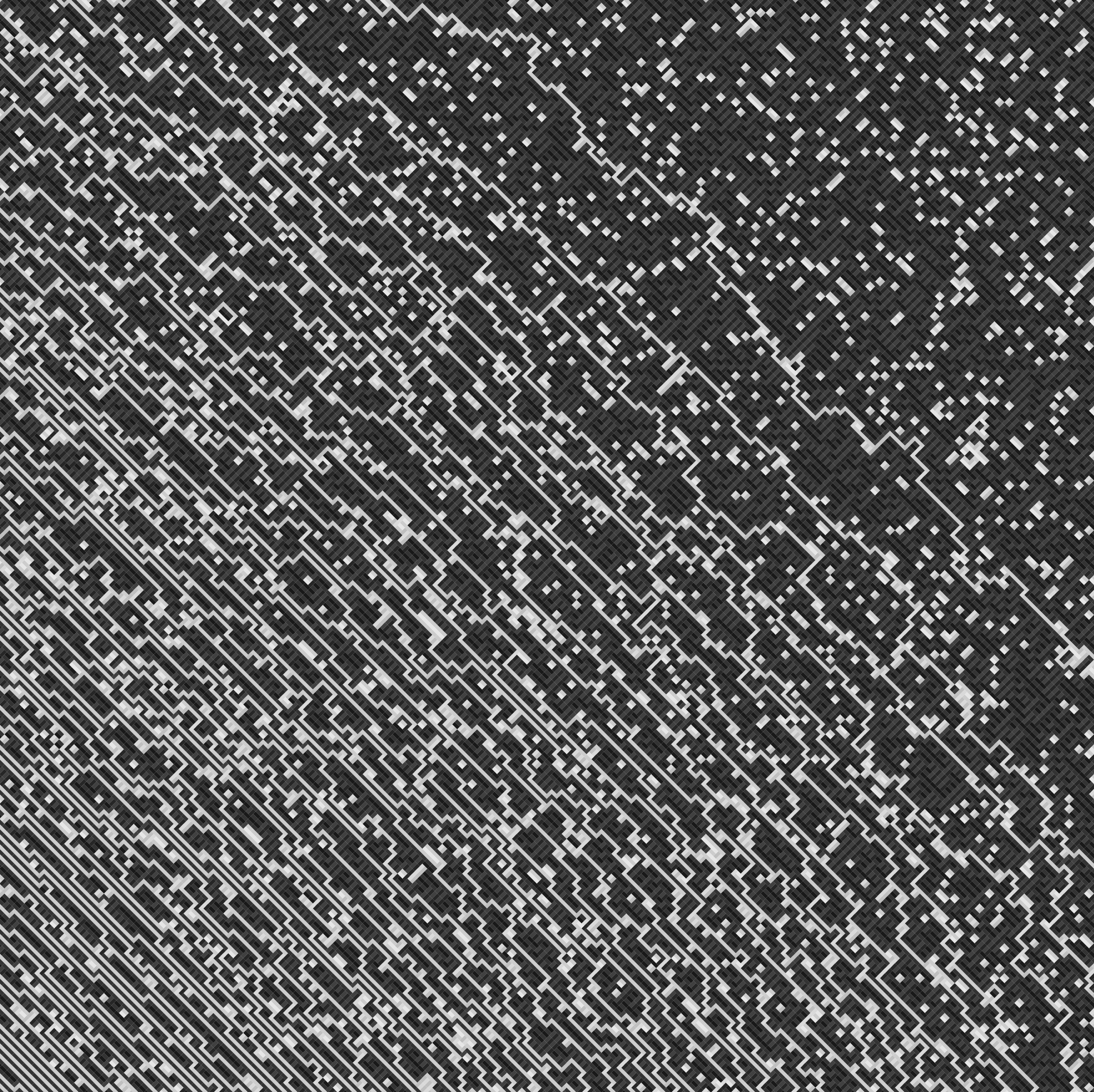}
\caption[Aztec Diamond1]
	{The rough-smooth transition region in the lower left quadrant.}
	\label{AztecDiamondpics1}
\end{figure}

In order to investigate the dimer-dimer correlation (\ref{covarianceform}) asymptotically we need asymptotic formulas for $K_{a,1}^{-1}$. To prove these asymptotic results we need a good formula for $K_{a,1}^{-1}$ as a starting point. A useful formula for $K_{a,1}^{-1}$ was first given in \cite{C/J}. Other versions of the result, not giving the inverse Kasteleyn matrix but a related particle kernel, have been derived in \cite{DK} and in \cite{BD} using a completely different approach. The rough-smooth boundary has also been investigated in \cite{BCJ1} and \cite{BCJ2} with the aim of establishing that right at the boundary we can see the Airy kernel point process.
We will use the formula for $K_{a,1}^{-1}$ derived in \cite{C/J}. Before we can give the formula we first need to define the objects that come into it.
For $\eps_1,\eps_2\in\{0,1\}$, we write
\begin{equation}
h(\eps_1,\eps_2)=\eps_1(1-\eps_2)+\eps_2(1-\eps_1).\label{HHH}
\end{equation}
In many cases the coordinates of the vertices giving the position of the dimers will enter into formulas via the following expressions. Let $(x_1,x_2)\in W_{\eps_1}$, $(y_1,y_2)\in B_{\eps_2}$ and define
\begin{align}
&k_1=\frac{x_2-y_2-1}{2}+h(\eps_1,\eps_2), && \ell_1=\frac{y_1-x_1-1}{2},\label{klh1}\\
& k_2=k_1+1-2h(\eps_1,\eps_2) , && \ell_2=\ell_1+1.
\label{klh2}
\end{align}
A basic role is played by the analytic function
\begin{align}\label{Gfunction}
G: \C \setminus i[-\sqrt{2c},\sqrt{2c}] \rarrow \C ;  \ w\mapsto \frac{1}{\sqrt{2c}}(w-\sqrt{w^2+2c})
\end{align}
with $\sqrt{w^2 +2c}=\exp(\frac{1}{2}\log(w+i\sqrt{2c})+\frac{1}{2}\log(w-i\sqrt{2c}))$ and where the logarithm takes its argument in $(-\pi/2,3\pi/2)$. Let $\sqrt{1/w^2 +2c}$  denote the previous square root evaluated at $1/w$. 
From the definition, we have the symmetries
\begin{align}
\overline{\sqrt{w^2+2c}}=\sqrt{\overline{w}^2+2c}, && -\sqrt{w^2+2c}=\sqrt{(-w)^2+2c}
\label{squarerootsymmetries}
\end{align}
which give
\begin{align}
&\overline{G(w)}=G(\overline{w}),&& G(w)=G(-w).
\label{Gsymmetries}
\end{align}

Let $k,\ell$ be non-zero integers. For $k,\ell> 0$ define
\begin{align}
E_{k,\ell}=\frac{i^{-k-\ell}}{2(1+a^2)2\pi i}\int_{\Gamma_1} \frac{dw}{w}\frac{G(w)^\ell G(1/w)^k}{\sqrt{w^2+2c}\sqrt{1/w^2+2c}},
\label{EKL1}
\end{align}
and then for all $k,\ell$ define
\begin{align}
E_{k,\ell}&=E_{|k|,|\ell|}.\label{ADPAD}
\end{align}
For vertices $(x_1,x_2)\in W_{\eps_1}$, $(y_1,y_2)\in B_{\eps_2}$, we define
\begin{align}
\K_{1,1}^{-1}(x_1,x_2,y_1,y_2)=-i^{1+h(\eps_1,\eps_2)}(a^{\eps_2}E_{k_1,\ell_1}+a^{1-\eps_2}E_{k_2,\ell_2}),
\label{K_11}
\end{align}
where we used (\ref{klh1}) and (\ref{klh2}). This is the inverse Kasteleyn matrix in an infinite smooth phase planar model, see Section \ref{SubsecGibbs}.
In the asymptotic analysis of the inverse Kasteleyn matrix $K_{a,1}^{-1}$ in \cite{C/J}, the following saddle-point function appears
\begin{align}
g_\xi(w)=\log w-\xi \log G(w)+\xi \log G(w^{-1}).
\label{gxi}
\end{align} 
We recall from \cite{C/J} that for $-\sqrt{1+2c}/2<\xi<\xi_c$, $g_\xi$ has four critical points $\pm\omega_c, \pm\overline{\omega}_c$, $\omega_c=e^{i\theta_c}$,
 $\theta_c\in[0,\pi/2]$, where $g'_\xi(\pm\omega_c)=g'_\xi(\pm\overline{\omega_c})=0$. These four points satisfy the saddle-point equation
\begin{align}
-\frac{1}{\xi}=\frac{w}{\sqrt{w^2+2c}}+\frac{1}{w\sqrt{1/w^2+2c}} \label{critptsss}
\end{align}
by \eqref{firstderiv}, and will hence depend on $\xi$.
Define
\begin{align}
\tilde{C}_{\omega_c}(k,\ell)=\frac{i^{-k-\ell}}{2(1+a^2)2\pi i}\int_{\Gamma_{\omega_c}}\frac{dw}{w}\frac{G(w)^\ell G(1/w)^k}{\sqrt{w^2+2c}\sqrt{1/w^2+2c}},
\end{align}
where the contour $\Gamma_{\omega_c}=\{e^{i\theta}:\theta\in(\theta_c,\pi-\theta_c)\}\cup \{e^{i\theta}:\theta\in (-\pi+\theta_c,-\theta_c)\}$ has positive orientation around the origin. 
For $x=(x_1,x_2)\in W_{\eps_1}$, $y=(y_1,y_2)\in B_{\eps_2}$, define
\begin{align}
C_{\omega_c}(x,y) =-i^{1+h(\eps_1,\eps_2)}\big(a^{\eps_2}\tilde{C}_{\omega_c}(k_1,\ell_1)+a^{1-\eps_2}\tilde{C}_{\omega_c}(k_2,\ell_2)\big).\label{Comegacccc}
\end{align} 
The elements of the inverse Kasteleyn matrix are given by the following theorem.
\begin{theorem}\label{ThmInvKast}
For $n=4m$, $m\in\N_{>0}$, $(x_1,x_2)\in W_{\eps_1}$, $(y_1,y_2)\in B_{\eps_2}$ with $\eps_1,\eps_2\in \{0,1\}$ then 
\begin{align}
K_{a,1}^{-1}((x_1,x_2),(y_1,y_2))&=\K_{1,1}^{-1}((x_1,x_2),(y_1,y_2))-C_{\omega_c}((x_1,x_2),(y_1,y_2))+ R_{\eps_1,\eps_2}((x_1,x_2),(y_1,y_2))\label{inverseKastelements2}\\
&\qquad\qquad+B^*_{\eps_1,\eps_2}(a,(x_1,x_2),(y_1,y_2)).\nonumber
\end{align}
\end{theorem}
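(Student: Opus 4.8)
The plan is to take the explicit formula for $K_{a,1}^{-1}$ established in \cite{C/J} as the starting point and to transform it, by contour deformation and residue calculus, into the four-term shape \eqref{inverseKastelements2}; this transformation simultaneously \emph{produces} the definitions of $R_{\eps_1,\eps_2}$ and $B^*_{\eps_1,\eps_2}$, which are then recorded explicitly, and the content of the theorem is that with those definitions the identity holds exactly for every $n=4m$. The formula of \cite{C/J} writes $K_{a,1}^{-1}((x_1,x_2),(y_1,y_2))$, in each parity class $(\eps_1,\eps_2)\in\{0,1\}^2$, with the very prefactor $-i^{1+h(\eps_1,\eps_2)}$ and weights $a^{\eps_2},a^{1-\eps_2}$ that occur in \eqref{K_11} and \eqref{Comegacccc}, as a combination of scalar quantities indexed by $(k_1,\ell_1)$ and $(k_2,\ell_2)$ from \eqref{klh1}--\eqref{klh2}. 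So the problem reduces to a statement about those scalar building blocks: each must be shown to equal the corresponding $E_{k,\ell}$ of \eqref{EKL1}--\eqref{ADPAD} (which reassembles into the infinite smooth-phase inverse $\K_{1,1}^{-1}$), minus the corresponding $\tilde C_{\omega_c}(k,\ell)$ (which reassembles into $C_{\omega_c}$), plus scalar remainders that become $R_{\eps_1,\eps_2}$ and $B^*_{\eps_1,\eps_2}$ once the prefactors are reattached.

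The core of this is a contour deformation governed by the saddle-point function $g_\xi$ of \eqref{gxi}. The \cite{C/J} expression carries its $n=4m$ dependence through a factor that is exponential in $n$ at rate $g_\xi$ and through the positions of finitely many poles. Because $G(1/w)=\overline{G(w)}$ on $|w|=1$ by \eqref{Gsymmetries}, so that $\operatorname{Re} g_\xi\equiv 0$ there, the unit circle $\Gamma_1$ is the natural contour along which to deform, and the four critical points $\pm\omega_c,\pm\overline{\omega_c}$ (with $\omega_c=e^{i\theta_c}$, and folded onto a single arc using the symmetries \eqref{squarerootsymmetries}--\eqref{Gsymmetries}) are exactly the points on $\Gamma_1$ where the deformation must be treated with care. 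Collecting the piece of $\Gamma_1$ lying between the critical points --- precisely the contour $\Gamma_{\omega_c}$, taken with the orientation and prefactors of \eqref{Comegacccc} --- produces $-C_{\omega_c}((x_1,x_2),(y_1,y_2))$, so that $\K_{1,1}^{-1}-C_{\omega_c}$ localises to the two complementary arcs of $\Gamma_1$ near $w=\pm1$ (this is exactly the object on which the later saddle-point analysis operates). Whatever is not captured by $\K_{1,1}^{-1}-C_{\omega_c}$ then falls into two classes: the parts of the deformed contour on which $\operatorname{Re} g_\xi$ has a strictly favourable sign, whose total contribution is exponentially small in $n$ and is declared to be $R_{\eps_1,\eps_2}((x_1,x_2),(y_1,y_2))$; and the residues swept during the deformation, whose locations are fixed by the boundary of the size-$n$ Aztec diamond and by the parity $n=4m$, which assemble into $B^*_{\eps_1,\eps_2}(a,(x_1,x_2),(y_1,y_2))$.

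The main obstacle is the bookkeeping in this last step, carried out uniformly over the four parity classes. One has to determine precisely which poles are crossed in each contour move --- this depends on $n \bmod 4$ and on the signs of $k_1,\ell_1,k_2,\ell_2$ through \eqref{klh1}--\eqref{klh2} --- and to keep the prefactors $i^{1+h(\eps_1,\eps_2)}$, the weights $a^{\eps_2}$, the orientation of $\Gamma_{\omega_c}$ and the branch of $\sqrt{w^2+2c}$ mutually consistent, so that the four pieces of \eqref{inverseKastelements2} reassemble with the right signs and normalisation. In addition, the split between $R_{\eps_1,\eps_2}$ and $B^*_{\eps_1,\eps_2}$ is to some extent a matter of choice: it should be arranged so that $R_{\eps_1,\eps_2}$ is manifestly controlled by the sign of $\operatorname{Re} g_\xi$ off $\Gamma_{\omega_c}$, while $B^*_{\eps_1,\eps_2}$ collects exactly the finite-size boundary contributions, and one then has to verify that with these definitions \eqref{inverseKastelements2} is an exact identity and not merely an asymptotic one.
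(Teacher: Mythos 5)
Your overall strategy — start from the explicit \cite{C/J} formula, deform contours guided by the saddle-point function $g_\xi$, extract $C_{\omega_c}$ as a residue contribution, and collect the rest into error terms — is morally the right outline, but several details are misattributed in a way that matters. The paper does not deform $K_{a,1}^{-1}$ directly; it begins from the \emph{already-established} \cite{C/J} decomposition
\begin{equation*}
K_{a,1}^{-1}=\K_{1,1}^{-1}-B_{\eps_1,\eps_2}+B^*_{\eps_1,\eps_2},
\end{equation*}
in which $B_{\eps_1,\eps_2}$ and $B^*_{\eps_1,\eps_2}$ are \emph{double} contour integrals (not combinations of the scalars $E_{k,\ell}$), and the only new content is the identity $B_{\eps_1,\eps_2}=C_{\omega_c}+R_{\eps_1,\eps_2}$, obtained by deforming the two contours $\Gamma_{r'},\Gamma_{1/r'}$ of $B_{\eps_1,\eps_2}$ to the steepest descent and ascent paths through $\pm\omega_c,\pm\overline{\omega_c}$. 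When the $w_2$-contour crosses the $w_1$-contour one picks up the residue of the pole $w_1=w_2$, and this single-contour residue integral over $\Gamma_{\omega_c}$ is exactly $C_{\omega_c}$; the remaining double integral over $\mathrm{desc}_\xi\times\mathrm{asc}_\xi$ is $R_{\eps_1,\eps_2}$. Your description instead treats $C_{\omega_c}$ as coming from an arc of the single contour $\Gamma_1$ in $\K_{1,1}^{-1}$, which conflates the single-integral structure of $\K_{1,1}^{-1}$ with the double-integral structure of $B_{\eps_1,\eps_2}$.

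Two attributions are reversed in your proposal and would lead you to a wrong error analysis. First, you declare that the "favourable-sign" contribution on the deformed contours is exponentially small in $n$ and equals $R_{\eps_1,\eps_2}$; in fact $R_{\eps_1,\eps_2}$ is precisely that descent$\times$ascent contribution, and it is \emph{not} exponentially small — it is the dominant error term, decaying like $n^{-1/3}$ (or $(n\sqrt{\xi_c-\xi})^{-1/2}$) as shown in Proposition \ref{uniformbound}; this polynomial scale is essential, because it is comparable to the quantities being computed in the transition region. Second, you propose that $B^*_{\eps_1,\eps_2}$ is assembled from "residues swept during the deformation." It is not: the residue swept is $C_{\omega_c}$, and $B^*_{\eps_1,\eps_2}$ is a separate term already present in the \cite{C/J} formula (a sum of three symmetry-related double integrals), which \emph{is} the exponentially small piece, bounded by $C_1 e^{-C_2 n}$ via Lemma 3.6 of \cite{C/J}. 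Without this correction you would be looking for an exponential estimate on the wrong term and would not see why a polynomial bound on $R_{\eps_1,\eps_2}$ is needed at all.
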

We will not define $B^*_{\eps_1,\eps_2}$ here. By lemma $3.6$ in \cite{C/J}  there are positive constants $C_1,C_2$ such that $\abs{B^*_{\eps_1,\eps_2}(a,x_1,x_2,y_1,y_2)}\leq C_1e^{-C_2 n}$ when the vertices $(x_1,x_2)$, $(y_1,y_2)$ are  $\ell_1$-distance $O(\sqrt{n})$ from the line $(n(1+\xi),n(1+\xi))$, $-1<\xi<0$. This means that for the purposes of this paper it will be negligible. The theorem follows from the results in \cite{C/J}, see Section \ref{SecUniformbound}. The term $R_{\eps_1,\eps_2}$ is an error term in this paper.
We have the following bound for $R_{\eps_1,\eps_2}$.
\begin{proposition}
Let \begin{align}
&(x_1,x_2)=(n(1+\xi)+2a_1-1,n(1+\xi)+2a_2)\label{changeofcoordinates},\\ 
&(y_1,y_2)=(n(1+\xi)+2b_1,n(1+\xi)+2b_2-1)\nonumber.
\end{align}
If $\delta^*_n>0$ is a  sequence going to zero as $n\rarrow \infty$  then there exists $B,C>0$ such that for any $n$ large enough and $0< \xi_c-\xi\leq \delta^*_n$
\begin{align}
\abs{R_{\eps_1,\eps_2}(x_1,x_2,y_1,y_2)}\leq B|G(\omega_c)|^{b_1-b_2+a_2-a_1}\min(\frac{1}{n^{1/3}},\frac{1}{\sqrt{n\sqrt{\xi_c-\xi}}})\label{propstate}
\end{align}
uniformly for all $\abs{a_i},\abs{b_i}\leq \max\{n^{1/3},\sqrt{n\sqrt{\xi_c-\xi}}\}$ with $i=1,2$.
\label{uniformbound}
\end{proposition}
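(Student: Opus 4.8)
The plan is to obtain the error term $R_{\eps_1,\eps_2}$ as the difference between the full contour-integral representation of $K_{a,1}^{-1}$ coming from \cite{C/J} and the two explicit pieces $\K_{1,1}^{-1}$ and $C_{\omega_c}$ that have already been extracted in Theorem \ref{ThmInvKast}. Concretely, each of the four terms $E_{k,\ell}$ and $\tilde C_{\omega_c}(k,\ell)$ is a contour integral of $w^{-1}G(w)^\ell G(1/w)^k (w^2+2c)^{-1/2}(1/w^2+2c)^{-1/2}$; after the change of coordinates \eqref{changeofcoordinates}, the exponents $k,\ell$ are linear in $a_1,a_2,b_1,b_2$ and grow like $n(1+\xi)$ plus an $O(\max\{n^{1/3},\sqrt{n\sqrt{\xi_c-\xi}}\})$ fluctuation. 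The first step is to rewrite the integrand in the form $e^{n g_\xi(w)}$ times a slowly varying prefactor times $G(w)^{b_1-b_2+a_2-a_1}$ (up to $a$-dependent constants and the $i$-powers), using the definition \eqref{gxi} of the saddle function $g_\xi$; this identifies $R_{\eps_1,\eps_2}$ as the contribution of the part of the contour near the coalescing saddle points $\pm\omega_c,\pm\overline\omega_c$, after the $\Gamma_1$ and $\Gamma_{\omega_c}$ contributions have been accounted for.

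Next I would carry out a uniform steepest-descent analysis in the regime $0<\xi_c-\xi\le\delta^*_n$. The crucial structural fact, recalled after \eqref{critptsss}, is that as $\xi\uparrow\xi_c$ the four critical points of $g_\xi$ pairwise coalesce on the unit circle; near $\xi_c$ one has $\omega_c=e^{i\theta_c}$ with $\theta_c$ small, and $g_\xi$ has a degenerate (cubic-type) critical point in the confluent limit. So I would Taylor expand $g_\xi$ about $\omega_c$ to third order, checking that $g_\xi''(\omega_c)=O(\sqrt{\xi_c-\xi})$ while $g_\xi'''(\omega_c)$ stays bounded away from $0$; this is what produces the two competing scales. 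Deforming the relevant contour through the saddles and rescaling $w-\omega_c = n^{-1/3}s$ (respectively $w-\omega_c=(n\sqrt{\xi_c-\xi})^{-1/2}s$ when that scale dominates), the integral becomes, to leading order, an Airy-type integral $\int e^{-s^3/3 + \mu s}\,ds$ with parameter $\mu\sim (\xi_c-\xi) n^{2/3}$, whose modulus is uniformly bounded. The factor $|G(\omega_c)|^{b_1-b_2+a_2-a_1}$ is pulled out because $|G(w)|=|G(\omega_c)|$ on the unit circle near the saddle and the remaining $a_i,b_i$-dependence sits in a bounded phase; the bound $|a_i|,|b_i|\le\max\{n^{1/3},\sqrt{n\sqrt{\xi_c-\xi}}\}$ is exactly what keeps that phase from contributing more than a constant. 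Collecting the Jacobian of the rescaling gives the prefactor $\min(n^{-1/3},(n\sqrt{\xi_c-\xi})^{-1/2})$.

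The remaining steps are bookkeeping: one must check that the tails of the contour away from the saddles are exponentially small (using that $\mathrm{Re}\,g_\xi$ is strictly decreasing there, uniformly in $\xi$ near $\xi_c$), that the combination $a^{\eps_2}E_{k_1,\ell_1}+a^{1-\eps_2}E_{k_2,\ell_2}$ appearing in \eqref{K_11} and the matching combination in \eqref{Comegacccc} produce the same leading saddle contribution so that it cancels in the difference, leaving precisely the Airy-order remainder, and that all error estimates are uniform in the four parameters. I would also verify compatibility with Theorem \ref{ThmInvKast} by noting that the $B^*$ term is exponentially small on the relevant $\ell_1$-distance scale $O(\sqrt n)$, which is consistent with $\max\{n^{1/3},\sqrt{n\sqrt{\xi_c-\xi}}\}=o(\sqrt n)$ once $\xi_c-\xi\le\delta^*_n\to0$.

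I expect the main obstacle to be making the steepest-descent estimate genuinely \emph{uniform} across the whole range $0<\xi_c-\xi\le\delta^*_n$, i.e. interpolating between the two regimes where $n^{-1/3}$ versus $(n\sqrt{\xi_c-\xi})^{-1/2}$ is the binding scale. This requires a single change of variables and a single dominating function valid for all such $\xi$, handling the confluent-saddle (cubic) behaviour and the well-separated-saddle (two Gaussian) behaviour simultaneously — the standard device is to use the cubic normal form $w\mapsto \zeta$ with $g_\xi(w)=\tfrac13\zeta^3-\mu\zeta+\mathrm{const}$ and to control the Jacobian $dw/d\zeta$ uniformly in $\mu$ near $\zeta=\pm\sqrt\mu$, which is delicate precisely because the two saddles collide. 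Controlling the slowly-varying prefactor (the square-root factors and the $G$-powers in the $a_i,b_i$) under this change of variables, and showing it stays bounded after the rescaling on the full admissible range of $a_i,b_i$, is the other place where care is needed.
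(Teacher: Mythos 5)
Your proposal misidentifies what $R_{\eps_1,\eps_2}$ actually is, and this propagates into an approach that would not succeed as written. You say you will ``obtain the error term $R_{\eps_1,\eps_2}$ as the difference'' between the full representation and the two explicit pieces $\K_{1,1}^{-1}$ and $C_{\omega_c}$, and that ``each of the four terms $E_{k,\ell}$ and $\tilde C_{\omega_c}(k,\ell)$ is a contour integral \dots after the change of coordinates \eqref{changeofcoordinates}, the exponents $k,\ell$ are linear in $a_1,\dots$ and grow like $n(1+\xi)$''. Both claims are false. The quantities $k_i,\ell_i$ defined in \eqref{klh1}--\eqref{klh2} depend only on the \emph{differences} of coordinates, so they are $O(a_i,b_i)$, bounded by $\max\{n^{1/3},\sqrt{n\sqrt{\xi_c-\xi}}\}$ — there is no $n(1+\xi)$ in them, no large parameter, and no saddle function $g_\xi$ in $E_{k,\ell}$ or $\tilde C_{\omega_c}$. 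Rewriting those single-contour integrals in the form $e^{ng_\xi(w)}\times(\text{prefactor})$ is therefore not possible. The object $R_{\eps_1,\eps_2}$ is instead a \emph{double} contour integral: it is what remains of $B_{\eps_1,\eps_2}$ (the $n$-dependent correction term in the Chhita--Johansson formula \eqref{inverseKastelements}, equation \eqref{Beps1eps2}) after one deforms the two contours to steepest-descent/ascent curves of $g_\xi$ and subtracts the residue at $w_1=w_2$, which is precisely $C_{\omega_c}$. The $n$-dependence enters through the ratio $H_{x_1+1,x_2}(w_1)/H_{y_1,y_2+1}(w_2)$, producing $\exp\{\tfrac n2 (g_\xi(w_1)-g_\xi(w_2))\}$, and the crucial $1/(w_2-w_1)$ pole sits in between. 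None of this structure appears in your plan, and without it the cancellation that ``leaves precisely the Airy-order remainder'' cannot be made precise, because $E_{k,\ell}$ and $\tilde C_{\omega_c}$ have no $n\to\infty$ asymptotics to match against.

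Your analytical instincts about the confluent saddle are otherwise sound and largely agree with what is needed: $g_\xi''(\omega_c)\asymp\sqrt{\xi_c-\xi}$ while $g_\xi'''(\omega_c)$ stays bounded away from zero, the two length scales $n^{-1/3}$ and $(n\sqrt{\xi_c-\xi})^{-1/2}$ compete, and the admissible range of $a_i,b_i$ is exactly what keeps the prefactor $K(w_1,w_2)$ under control after pulling out $|G(\omega_c)|^{b_1-b_2+a_2-a_1}$. However, where you anticipate needing the cubic normal form (Chester--Friedman--Ursell, $g_\xi(w)=\tfrac13\zeta^3-\mu\zeta+\mathrm{const}$) to interpolate uniformly, the paper takes a simpler route: it deforms to piecewise-straight contours near $\omega_c$ and proves a single pointwise inequality
\begin{equation*}
\mathcal{R}[g_\xi(w)-g_\xi(\omega_c)]\le -C\max\bigl\{|w-\omega_c|^3,\ \sqrt{\xi_c-\xi}\,|w-\omega_c|^2\bigr\}
\end{equation*}
valid on the deformed contour (Lemma \ref{locallemma}); the two rescalings $s\mapsto n^{-1/3}s$ and $s\mapsto (n\sqrt{\xi_c-\xi})^{-1/2}s$ are then applied separately to the two halves of the max, yielding the $\min$ in \eqref{propstate} without any normal-form machinery. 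You also do not address the $1/(w_2-w_1)$ singularity in the double integral — the descent and ascent contours must be shown not to cross (except at the already-subtracted residue point), and the contour lengths must be controlled uniformly in $\xi$ (the paper invokes the Hayman--Wu theorem for this). These are not bookkeeping issues; they are where the uniformity is actually won or lost.
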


We see that what we need is uniform asymptotics for $\K_{1,1}^{-1}$ and $C_{\omega_c}$. The main results of this paper is the bound in Proposition \ref{uniformbound} and 
uniform asymptotics for $\K_{1,1}^{-1}$ and $C_{\omega_c}$.

\begin{theorem}\label{ThmMain} The asymptotic formulas (\ref{EklAs1}), (\ref{EklAs2}), (\ref{tempCom12}), and (\ref{tempd2wd}) below together with (\ref{K_11}) and (\ref{Comegacccc}), give uniform asymptotics for $\K_{1,1}^{-1}$ and $C_{\omega_c}$, and hence for $K_{a,1}^{-1}$ by (\ref{inverseKastelements2}).
\end{theorem}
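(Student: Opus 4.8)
The plan is to obtain uniform asymptotics for the two building blocks $\K_{1,1}^{-1}$ and $C_{\omega_c}$ directly from their integral representations \eqref{EKL1} and the integral defining $\tilde C_{\omega_c}$, and then assemble the statement via \eqref{K_11} and \eqref{Comegacccc}. The first step is to rewrite $E_{k,\ell}$ and $\tilde C_{\omega_c}(k,\ell)$ in terms of the saddle-point function $g_\xi$ from \eqref{gxi}: after the change of coordinates \eqref{changeofcoordinates}, the exponents $G(w)^\ell G(1/w)^k$ combine with $w^{-1}$ into an integrand of the form $e^{n\, g_\xi(w)}$ times a slowly varying prefactor $1/(w\sqrt{w^2+2c}\sqrt{1/w^2+2c})$ times a bounded "local" factor depending on $a_1,a_2,b_1,b_2$. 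I would then apply the classical saddle-point method at the critical points $\pm\omega_c,\pm\overline\omega_c$ identified via \eqref{critptsss}, deforming $\Gamma_1$ (respectively $\Gamma_{\omega_c}$) through the relevant saddles.

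Second, I would track the dependence on $\xi_c-\xi$. As $\xi\to\xi_c$, the critical points coalesce (the rough-smooth boundary is exactly the place where two saddles merge onto the branch cut $i[-\sqrt{2c},\sqrt{2c}]$), so $g_\xi''(\omega_c)$ degenerates like a power of $\xi_c-\xi$. This is precisely why the answer is not a clean Gaussian saddle-point formula but rather the uniform (Airy-type) expansion encoded in \eqref{EklAs1}, \eqref{EklAs2}, \eqref{tempCom12}, \eqref{tempd2wd}: one needs the Chester--Friedman--Ursell uniformisation of two nearly-coalescing saddles, yielding Airy functions with argument scaling like $n^{2/3}(\xi_c-\xi)$, which is also the source of the $\min(n^{-1/3},(n\sqrt{\xi_c-\xi})^{-1/2})$ cross-over seen in Proposition~\ref{uniformbound}. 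I would carry this out first for the half-line integral $E_{|k|,|\ell|}$ on $\Gamma_1$, then for $\tilde C_{\omega_c}$ whose contour $\Gamma_{\omega_c}$ is chosen to pass exactly through the saddles, so that $C_{\omega_c}$ captures the subtracted contribution and $\K_{1,1}^{-1}-C_{\omega_c}$ is what survives.

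Third, having the two asymptotic expansions, the theorem is essentially bookkeeping: substitute the expansions for $\tilde C_{\omega_c}(k_1,\ell_1), \tilde C_{\omega_c}(k_2,\ell_2)$ and $E_{k_1,\ell_1}, E_{k_2,\ell_2}$ (with $k_i,\ell_i$ as in \eqref{klh1}--\eqref{klh2}) into \eqref{K_11} and \eqref{Comegacccc}, collect the powers of $a$ and the factors $i^{1+h(\eps_1,\eps_2)}$, and combine with Theorem~\ref{ThmInvKast} and Proposition~\ref{uniformbound}; the $B^*_{\eps_1,\eps_2}$ term is exponentially small by the cited Lemma~3.6 of \cite{C/J} and the $R_{\eps_1,\eps_2}$ term is controlled by Proposition~\ref{uniformbound}, so the asymptotic formulas \eqref{EklAs1}--\eqref{tempd2wd} indeed determine $K_{a,1}^{-1}$ uniformly.

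The main obstacle is the uniformity over two independent scales. One must control the saddle-point analysis simultaneously for $\xi$ ranging over a shrinking window around $\xi_c$ (so that $g_\xi''(\omega_c)$ can be arbitrarily small) and for the "lattice" displacements $a_i,b_i$ ranging up to $\max\{n^{1/3},\sqrt{n\sqrt{\xi_c-\xi}}\}$, which perturb the location of the effective saddle. Getting error terms that are uniformly $o(1)$ relative to the leading term across the entire parameter range — in particular handling the regime where $n^{2/3}(\xi_c-\xi)$ is neither small nor large, where genuine Airy behaviour appears — is where the real work lies; the deformation of contours must be done so that $\mathrm{Re}\,g_\xi$ is controlled along the steepest-descent paths uniformly, and the local Airy-type integral must be matched to the full contour with error bounds that survive the coalescence. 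The rest (branch-cut bookkeeping for $G$ and the square roots via \eqref{squarerootsymmetries}--\eqref{Gsymmetries}, and the algebra of combining $k_1,\ell_1$ with $k_2,\ell_2$) is routine once the uniform saddle-point estimates are in place.
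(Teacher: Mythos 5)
Your third step --- the bookkeeping that assembles $\K_{1,1}^{-1}$, $C_{\omega_c}$ and $K_{a,1}^{-1}$ from \eqref{K_11}, \eqref{Comegacccc} and \eqref{inverseKastelements2}, with $B^*_{\eps_1,\eps_2}$ exponentially small and $R_{\eps_1,\eps_2}$ controlled by Proposition~\ref{uniformbound} --- is the right reading of Theorem~\ref{ThmMain}, which is essentially a summary of Propositions~\ref{propEKL}, \ref{Comegacinnerdisc}, \ref{propeklcomegac}. The problem is that your first two steps do not prove those propositions. The quantities $E_{k,\ell}$ and $\tilde C_{\omega_c}(k,\ell)$ do not contain $n$ or $g_\xi$ at all: the relevant saddle-point function is $\widetilde{g}_\alpha(w)=\log G(w)+\alpha\log G(1/w)$ with $\alpha=|k|/|\ell|$ as in \eqref{ekl}, the large parameter is $|\ell|$, and the saddle is at $iw_\alpha$ (the root of $\widetilde{g}_\alpha'=0$), not at $\omega_c$ (which solves $g_\xi'=0$, a different equation with the different parameter $\xi$). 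Writing the $E_{k,\ell}$ integrand as $e^{n\,g_\xi(w)}$ is simply not an available normalization here.

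Consequently there is no coalescence of $\widetilde{g}_\alpha$-saddles and no Chester--Friedman--Ursell / Airy uniformization in \eqref{EklAs1}--\eqref{tempd2wd}. Formula \eqref{EklAs1} is a plain Gaussian saddle-point expansion at $iw_\alpha$; \eqref{EklAs2} is a branch-point contribution at $i/\sqrt{2c}$; \eqref{tempCom12} is a sine arising from the short arc $[\theta_c,\pi-\theta_c]$ whose midpoint $\pi/2$ is the saddle of $\psi(\theta)=\widetilde{g}_\alpha(e^{i\theta})$; and \eqref{tempd2wd} produces Dawson-function and Mills-ratio integrals (cf.\ \eqref{temp2essa}--\eqref{temp23we}), not Airy functions. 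The uniformity is achieved by the case split $(|\ell|+|k|)\varphi_c^{2-\gamma}\lessgtr 1$ between Propositions~\ref{Comegacinnerdisc} and \ref{propeklcomegac}, handled on one side by a direct Taylor expansion of $\psi$ around $\pi/2$ and on the other by integration by parts and a genuine steepest-descent analysis --- not by a two-saddle uniform matching. The coalescing-saddle picture you invoke, and the $\min(n^{-1/3},(n\sqrt{\xi_c-\xi})^{-1/2})$ scale, do occur, but only for $g_\xi$ in the bound on $R_{\eps_1,\eps_2}$ (Proposition~\ref{uniformbound}), a separate statement that is itself only an upper bound obtained from steepest-descent contour control, not a uniform Airy expansion. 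Carrying out your proposal literally would not reproduce the stated formulas and hence would not prove Theorem~\ref{ThmMain}.
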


Although these formulas give the asymptotics we need, it is not immediate to see how the the covariance (\ref{covarianceform}) behaves if we fix $n$ large and study how the covariance between two dimers in the transition region between the smooth and rough phases behaves as we increase the distance between them. We will discuss this in the setting of Example \ref{example} in Section \ref{SecDiscussion}. Here we give a more informal summary.

\subsection{Discussion of dimer-dimer correlation in example 1.}\label{dimerdimerexample}
We first formulate a corollary of the analysis for the covariance in the setting of Example \ref{example} with the dimers separated along the diagonal. The proof of this corollary is found in subsection \ref{corrdire1} and the discussion contained therein.
\begin{corollary}
Let $\delta^*_n>0$ be any sequence going to zero with $n$ and take $n$ so large that $\delta^*_n>0$ is sufficiently small. Assume that $\xi$ satisfies  $n\delta^*_n>n(\xi_c-\xi)>>n^{1/3}$. Let the distance between the two dimers $r=2\sqrt{r_1^2+r_2^2}$ lie in an interval $(r_{\text{min}},r_{n})$ where $r_{\text{min}}>0$ is large but fixed and $r_{n}$ is very large (growing with $n$) but less than $\sqrt{n\sqrt{\xi_c-\xi}}$.
Take the direction pointing from one of the dimers to the other to be parallel to the main diagonal, that is, parallel to the vector $(1,1)$. 

There are constants $c_1,c_2,c_3>0$ such that as $r$ varies from  $r_{\text{min}}$ to $c_1\log(\frac{1}{\sqrt{\xi_c-\xi}})$, the correlation 
\begin{align}
\text{$\text{corr}(e_1,e_2)$ is exponentially decaying in $r$},
\end{align}
for $r$ from $c_1\log(\frac{1}{\sqrt{\xi_c-\xi}})$ to $c_2\frac{1}{\sqrt{\xi_c-\xi}}$, we see a transition to
\begin{align}\label{Exnodecay}
\text{corr}(e_1,e_2)\,\, \text{is constant} \text{ or "no decay"},
\end{align}
for $r$ from $c_2\frac{1}{\sqrt{\xi_c-\xi}}$ to $r_n$ we see a transition to
\begin{align}
\text{corr}(e_1,e_2)\,\, \text{has decay like $1/r^2$ (with oscillations)}.
\end{align}
The oscillations have a long period compared to the lattice spacing, the constant in \eqref{Exnodecay} is $\propto \xi_c-\xi$.
\end{corollary}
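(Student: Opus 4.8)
The plan is to reduce everything to the explicit product formula \eqref{covarianceform} for $\text{corr}(e_1,e_2)$ and insert the decomposition \eqref{inverseKastelements2} of the inverse Kasteleyn matrix. In the setting of Example \ref{example} the two Kasteleyn prefactors $K_{a,1}(b_i,w_i)$ are constants depending only on $a$, so the whole problem becomes understanding the product $K_{a,1}^{-1}(w_2,b_1)K_{a,1}^{-1}(w_1,b_2)$ as the separation $r=2\sqrt{r_1^2+r_2^2}$ grows. First I would discard $B^*_{\eps_1,\eps_2}$ (it is $O(e^{-C_2 n})$, hence negligible on the whole range $r<r_n\ll\sqrt{n\sqrt{\xi_c-\xi}}$) and control $R_{\eps_1,\eps_2}$ by Proposition \ref{uniformbound}: on the stated range of $r$ this term is bounded by $B|G(\omega_c)|^{\text{(shift)}}\min(n^{-1/3},(n\sqrt{\xi_c-\xi})^{-1/2})$, which I would need to check is smaller than the main term in each of the three regimes. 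So the covariance is, up to negligible errors, $-K_{a,1}(b_1,w_1)K_{a,1}(b_2,w_2)\big(\K_{1,1}^{-1}-C_{\omega_c}\big)(w_2,b_1)\big(\K_{1,1}^{-1}-C_{\omega_c}\big)(w_1,b_2)$, and everything reduces to the uniform asymptotics of $\K_{1,1}^{-1}$ and $C_{\omega_c}$ furnished by Theorem \ref{ThmMain}.

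Next I would carry out the asymptotic analysis of these two ingredients in the three ranges of $r$, all governed by the competition between the modulus $|G(\omega_c)|<1$ of the saddle and the location of $\omega_c=e^{i\theta_c}$ relative to the unit circle. As $\xi\uparrow\xi_c$ the four saddles $\pm\omega_c,\pm\overline\omega_c$ coalesce in pairs on the imaginary axis, so $\theta_c\to\pi/2$ and $|G(\omega_c)|\to 1$ at a rate controlled by $\sqrt{\xi_c-\xi}$; quantitatively $-\log|G(\omega_c)|\asymp\sqrt{\xi_c-\xi}$ and the curvature $g''_\xi(\omega_c)$ degenerates like $\sqrt{\xi_c-\xi}$, which is exactly what produces the three crossover scales $r_{\min}$, $c_1\log(1/\sqrt{\xi_c-\xi})$, and $c_2/\sqrt{\xi_c-\xi}$. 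Concretely: for $r$ up to $c_1\log(1/\sqrt{\xi_c-\xi})$ the factor $|G(\omega_c)|^{r}$ is the dominant feature and forces exponential decay $e^{-r\cdot\text{const}\sqrt{\xi_c-\xi}}$ of each inverse-Kasteleyn entry; for $r$ between $c_1\log(1/\sqrt{\xi_c-\xi})$ and $c_2/\sqrt{\xi_c-\xi}$ this exponential factor has saturated to an $O(1)$ constant $\propto\sqrt{\xi_c-\xi}$ while the oscillatory/polynomial saddle correction is not yet felt, giving a flat ``no decay'' regime in which the covariance, being the product of two such entries, is $\propto\xi_c-\xi$; and for $r$ beyond $c_2/\sqrt{\xi_c-\xi}$ the steepest-descent expansion of the $\Gamma_{\omega_c}$-integral in $\tilde C_{\omega_c}$ kicks in, producing the classical $r^{-1}$ decay per entry with a phase $e^{i r\,\text{Im}\,g_\xi(\omega_c)}$, hence $r^{-2}$ with slow oscillations for the covariance. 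The long oscillation period comes from $\text{Im}\,g_\xi(\omega_c)\to 0$ as $\xi\uparrow\xi_c$.

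The technical heart, and the main obstacle, is obtaining the asymptotics of $\K_{1,1}^{-1}$ and $C_{\omega_c}$ \emph{uniformly} in both $r$ and $\xi_c-\xi$ across the merging-saddle regime — that is, carrying out the steepest-descent analysis of the contour integrals \eqref{EKL1} and the $\tilde C_{\omega_c}$ integral when the saddle curvature is vanishing. This is a coalescing-saddle-point problem, so I expect the honest uniform statement to involve an Airy-type or Gaussian-type model integral with a parameter interpolating between the two regimes; in practice one writes $G(w)^\ell G(1/w)^k/(w\sqrt{w^2+2c}\sqrt{1/w^2+2c}) = e^{(\text{linear in }r)\,g_\xi(w)}\times(\text{amplitude})$, deforms $\Gamma_1$ (resp.\ $\Gamma_{\omega_c}$) to pass through the relevant saddles in the direction of steepest descent, and tracks how the answer depends on the small parameters $r\sqrt{\xi_c-\xi}$ and $r/\sqrt n$. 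Matching the three regimes at the crossover scales $c_1\log(1/\sqrt{\xi_c-\xi})$ and $c_2/\sqrt{\xi_c-\xi}$ then requires showing the leading terms of consecutive regimes agree in the overlap and, crucially, that the $R_{\eps_1,\eps_2}$ bound from Proposition \ref{uniformbound} and the subleading saddle corrections are genuinely lower order throughout — the delicate point being the middle ``no decay'' window, where the main term is itself small ($\propto\xi_c-\xi$) and one must be sure the error terms are smaller still. I would organise the write-up so that the three displayed conclusions of the corollary follow by reading off the leading behaviour of $\text{corr}(e_1,e_2)=-K_{a,1}(b_1,w_1)K_{a,1}(b_2,w_2)\,\K_{1,1}^{-1}(w_2,b_1)\,\K_{1,1}^{-1}(w_1,b_2)\,(1+o(1))$ with the $C_{\omega_c}$ contribution included, and deferring the detailed integral estimates to Section \ref{SecDiscussion}.
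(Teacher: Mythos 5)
Your structural outline is on the right track—the decomposition of $K_{a,1}^{-1}$ via Theorem \ref{ThmInvKast}, discarding $B^*$, bounding $R_{\eps_1,\eps_2}$ by Proposition \ref{uniformbound}, and reducing the problem to asymptotics of $\K_{1,1}^{-1}-C_{\omega_c}$ is exactly the paper's strategy, carried out in Section \ref{corrdire1} and Corollary \ref{CorCov1}—but the mechanism you invoke for the three regimes is incorrect, and it would produce the wrong crossover scale.

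Your core misstep is the claim that $|G(\omega_c)|\to 1$ as $\xi\uparrow\xi_c$ with $-\log|G(\omega_c)|\asymp\sqrt{\xi_c-\xi}$, and that the Regime~I exponential decay is $e^{-r\cdot\text{const}\sqrt{\xi_c-\xi}}$ coming from $|G(\omega_c)|^r$. In fact $\omega_c\to i$ and $|G(\omega_c)|\to|G(i)|=(1-\sqrt{1-2c})/\sqrt{2c}<1$, a \emph{fixed} constant strictly less than $1$; see \eqref{estGomegac}. The exponential decay in Regime~I is $\propto |G(i)|^{4r}/r$ and comes from the $E_{k,\ell}$ term via Proposition~\ref{propEKL} (the smooth-phase infinite-volume piece), with a rate independent of $\xi_c-\xi$. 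What changes with $\xi$ is not the exponential rate but the floor at which the decay stops. Corollary~\ref{CorCov1} makes the competition transparent:
\begin{equation*}
\text{corr}(e_1,e_2)\simeq-\frac{a^2(a-|G(i)|^{-2})^2}{\pi^2(1-a)^4}\Big(\underbrace{\frac{\sqrt{\pi}(1-2c)^{3/4}}{\sqrt{8c}}\frac{|G(i)|^{2r}}{\sqrt{r}}}_{\text{from }E}-\underbrace{\frac{\sin(2r\varphi_c/\sqrt{1-2c})}{2r/\sqrt{1-2c}}}_{\text{from }\tilde C_{\omega_c}}\Big)^2.
\end{equation*}
Regime~I is where the $E$-term, $\sim|G(i)|^{2r}/\sqrt r$, dominates; the crossover occurs once $|G(i)|^{2r}$ drops below $\varphi_c$, i.e.\ at $r\sim\log(1/\varphi_c)/\log|G(i)|^{-2}$, which is how $c_1\log(1/\sqrt{\xi_c-\xi})$ arises. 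Your mechanism (exponential with rate $\sqrt{\xi_c-\xi}$ ``saturating'') would instead give a crossover at $r\sim 1/\sqrt{\xi_c-\xi}$, collapsing Regimes~I and~II into one and missing the logarithmic scale entirely. Regime~II (constant $\propto\varphi_c^2\propto\xi_c-\xi$) and Regime~III ($1/r^2$ with oscillations) are both governed by the $\tilde C_{\omega_c}$ sine-term alone, whose argument $2r\varphi_c/\sqrt{1-2c}$ switches from small to $O(1)$ near $r\sim1/\varphi_c$. Your expectation of an Airy-type uniform integral is also not realized here: the coalescing-saddle issue for $g_\xi$ is already absorbed into the bound on $R_{\eps_1,\eps_2}$ (Proposition~\ref{uniformbound}), and the asymptotics of $\K_{1,1}^{-1}$ and $C_{\omega_c}$ are controlled by the separate saddle function $\tilde g_\alpha$ (Propositions~\ref{propEKL}, \ref{Comegacinnerdisc}, \ref{propeklcomegac}) which involves no saddle coalescence along the diagonal direction.
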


The sign of the correlation changes depending on which type of dimers we pick. Here is some intuition behind the results. Think of the above case when the direction pointing from one dimer to the other is parallel to the main diagonal. There are two length scales in the problem. One is the lattice spacing and the other is the typical distance $1/\sqrt{\xi_c-\xi}$ between the paths that we see in Figure \ref{AztecDiamondpics1}. These are the long (corridor) paths which connect sides of the Aztec diamond defined in \cite{BCJ2}. As we increase the distance $r$ between the dimers we first see the smooth phase exponential decay which takes place in the order of the lattice spacing. After that the correlations in some sense come from the paths, similar to what should happen at the rough-frozen boundary. In terms of the distance between the paths the lattice spacing is very short and hence we see constant correlations proportional to $\xi_c-\xi$. At some point when the distance $r$ is of order $1/\sqrt{\xi_c-\xi}$, $r\sim d/\sqrt{\xi_c-\xi}$, we start to see the type of decay that we have in a rough phase, the correlations decay like $\frac{\sin^2((\xi_c-\xi)d}{d^2}$. 

We can also consider the decay of the covariance in other directions, for example: take the direction pointing from one dimer to the other to be parallel to the anti-diagonal, i.e the vector $(-1,1)$. Let $0<m<1$ be small and $M>1$ large. As $r$ varies from $r_{\text{min}}$ to $c_3\log(\frac{1}{\sqrt{\xi_c-\xi}})$, 
\begin{align}
\text{corr}(e_1,e_2)\,\,\text{is exponentially decaying in $r$},
\end{align}
 then as $r$ varies from $\log(\frac{1}{\sqrt{\xi_c-\xi}})$ to $m/(\xi_c-\xi)$,
\begin{align}
\text{corr}(e_1,e_2)\,\, \text{ decays like $1/\sqrt{r}$},
\end{align}
then as $r$ varies from $m/(\xi_c-\xi)$ to $M/(\xi_c-\xi)$ we see a transition to
\begin{align}
\text{corr}(e_1,e_2)\,\, \text{has $1/r$ decay},
\end{align}
and then as $r$ varies from $M/(\xi_c-\xi)$ to $r_n$ we see a transition to
\begin{align}
\text{corr}(e_1,e_2)\,\, \text{decays like $1/r^2$ (with no oscillations)}.
\end{align}
For a discussion on arbitrary directions, see Section \ref{SecDiscussion}. 

\subsection{Gibbs measures and the infinite planar graph}\label{SubsecGibbs}
\begin{figure}[h]
\centering
\includegraphics[width = 0.4\textwidth]{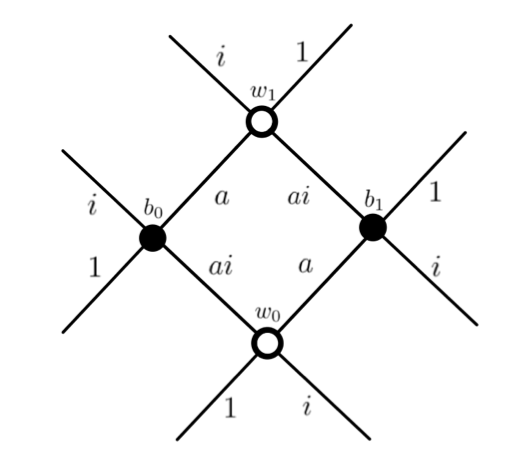}
\includegraphics[width = 0.4\textwidth]{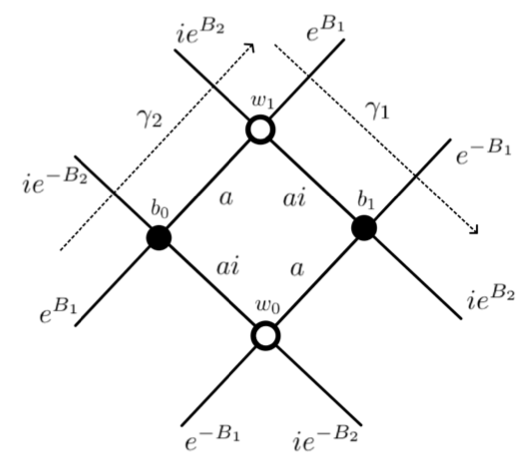}
\caption[FD]
	{Left: the fundamental domain $G^1$ with corresponding Kasteleyn matrix entries. Right: the fundamental domain of $G^1_{e^{B_1},e^{B_2}}$ with paths $\gamma_1,\gamma_2$.}
	\label{FundamentalDomainMagnetAltered}
\end{figure}

In \cite{K.O.S}, the authors describe the set of all translation invariant Gibbs measures of the infinite bipartite planar dimer models with a certain double periodicity. 
They also give an approach to compute the full plane inverse Kasteleyn matrix. The infinite planar graph $G=\tilde{V}\cup\tilde{E}$ relevant for the two-periodic Aztec diamond is the infinite version of what we defined above, and is defined as follows:
 for $i\in\{0,1\}$, let
\begin{align}
\tilde{B}_i=\{(x,y)\in \Z^2: x \text{ mod }2=0, y \text{ mod }2 =1, x+y \text{ mod } 4=2i+1\},\nonumber\\
\tilde{W}_i=\{(x,y)\in \Z^2: x \text{ mod }2=1, y \text{ mod }2 =0, x+y \text{ mod } 4=2i+1\}\nonumber
\end{align}
where $\tilde{B}=\tilde{B}_0\cup \tilde{B}_1$ are the black vertices, $\tilde{W}=\tilde{W}_0\cup \tilde{W}_1$ are the white vertices and $\tilde{V}=\tilde{W}\cup \tilde{B}$. The edge set $\tilde{E}$ is all edges of the form $b-w=\pm \vec{e}_1+,\pm\vec{e}_2$, for all $b\in \tilde{B},w\in \tilde{W}$. Let the edges contained in the smallest cycle surrounding the point $(i,j)$ where $(i+j)$ mod $4=2$, have weight $a\in(0,\infty)$ and the edges contained in the smallest cycle surrounding the point $(i,j)$ where $(i+j)$ mod $4=0$ have weight $1$.

As a reference for the following we note the unpublished Lectures on the Dimer model, Toninelli, F. 
The graph $G$ (together with its weights) is doubly periodic in the sense that the set $T$ of shifts of the form $n\vec{e_1}+m\vec{e_2}$, $(n,m)\in (2\Z)^2$ preserve the colour of the vertices and edge weights. Define the graph $G_1$ to be $G$ mod the same shifts so that $G$ is a collection of copies of $G_1$ obtained by applying all shifts $T$. From its definition, $G_1$ has periodic boundary conditions in both directions $\vec{e}_1$ and $\vec{e}_2$, so one can also think of $G_1$ as a graph on the torus. Define the edge weights and Kasteleyn orientation of $G_1$ as in figure \ref{FundamentalDomainMagnetAltered}, which induces a Kasteleyn orientation on all of $G$. The graph $G_1$ is called the fundamental domain of $G$, note that it has vertices $w_0\in \tilde{W}_0$, $b_0=w_0+\vec{e}_2$, $b_1=w_0+\vec{e}_1$, $w_1=b_0+\vec{e}_1$.

The authors of \cite{K.O.S} introduce "magnetic coordinates" to parametrise the set of Gibbs measures.  Following \cite{K.O.S}, let $(B_1,B_2)\in \R^2$ be the magnetic coordinates for the graph $G$ and take two paths $\gamma_1$, $\gamma_2$ in each fundamental domain as shown in Figure \ref{FundamentalDomainMagnetAltered}. Multiply the edge weight of an edge $e=(b,w)$ by $e^{B_1}$ ($e^{-B_1}$) if $\gamma_1$ crosses $e$ such that the white vertex $w$ is on its right (left). Similarly multiply the edge weight of $e$ by $e^{B_2}$ ($e^{-B_2}$) if $\gamma_2$ crosses $e$ such that $w$ is on its right (left). This yields the same graph with a different set of edge weights denoted $G_{e^{B_1},e^{B_2}}$. Define the fundamental domain of $G_{e^{B_1},e^{B_2}}$ to be $G^1_{e^{B_1},e^{B_2}}$ as shown in the Figure \ref{FundamentalDomainMagnetAltered}. The magnetic coordinates introduced in this way re-weight the average slope of the corresponding height functions.  Note that $B_1=B_2=0$ gives zero average slope in directions $\vec{e}_1$, $\vec{e}_2$ and gives the limiting smooth phase in the two-periodic Aztec diamond we are considering.

For $L\in 2\N_{>0}$, define $G^L=(V_L,E_L)$ to be the graph obtained by applying all shifts of the form $n\vec{e}_1+m\vec{e}_2, (n,m)\in (2\Z\cap [-L,L])^2$ to $G^1$ and with periodic boundary conditions,  observe that $G=\cup_L G^L$. If $B_L$ are the black vertices and $W_L$ are the white vertices so that $V_L=B_L\cup W_L$ then viewing the Kasteleyn matrix $K^L$ of the graph $G^L$ as an operator $\C^{W_L}\rarrow \C^{B_L}$, \cite{K.O.S} block diagonalise $K^L$ (along with three slightly modified variants of $K^L$). The authors of \cite{K.O.S} then use an extension of Kasteleyn's theory for dimer models on the torus to perform a limiting argument in $L$ which yields a double contour integral formula for the free energy (per fundamental domain) of $G$, and posit a formula for the inverse Kasteleyn matrix.  Following \cite{K.O.S}, denote the magnetically altered Kasteleyn matrix for the fundamental domain $G_1$ by $K(z,w)$ where one multiplies the edge weight of an edge $e=(b,w)$ by $z$ (or $1/z$) if $\gamma_1$ crosses $e$ with the white vertex on its right (or left). Likewise multiply the edge weight of $e$ by $w$ (or $1/w$) if $\gamma_2$ crosses $e$ such that $w$ is on its right (or left). So for $i,j\in\{0,1\}$ we obtain
\begin{align}
\big( K(z,w)\big)_{b_{i},w_{j}}=\begin{pmatrix}
i(a+1/w) & a+z\\
a+1/z & i(a+w)
\end{pmatrix}_{i+1,j+1}.
\end{align}
The magnetically altered Kasteleyn matrix in magnetic coordinates $(B_1,B_2)$ corresponding to $G^1_{e^{B_1},e^{B_2}}$ is $K(e^{B_1}z,e^{B_2}w)$. 

Suppose that $x\in \tilde{W}_{\eps_1}$ and $y\in \tilde{B}_{\eps_2}$, $\eps_1,\eps_2\in\{0,1\}$. Let $(u,v)\in \Z^2$ be such that $u (2\vec{e}_1)+v(2\vec{e}_2)$ is the translation to get from the fundamental domain containing $x$ to the fundamental domain containing $y$. The whole plane inverse Kasteleyn matrix of $G_{s_1,s_2}$ for the entries $x,y$ in magnetic coordinates $(\log(s_1),\log(s_2))$ is
\begin{align}
\K_{s_1,s_2}^{-1}(x,y)=\frac{s_1^{-u+1}s_2^{-v+1}}{(2\pi i)^2}\int_{\Gamma_{s_1}}\frac{dz}{z}\int_{\Gamma_{s_2}}\frac{dw}{w}\frac{Q(z,w)_{\eps_1+1,\eps_2+1}}{P(z,w)}z^uw^v
\label{infinvKasteleyn}
\end{align}
where $\Gamma_r$ is a circle or radius $r$ around the origin,
\begin{align}
Q(z,w)=\begin{pmatrix}
i(a+w) & -(a+z)\\
-(a+1/z) & i(a+1/w)
\end{pmatrix}
\end{align}
and 
\begin{align}
P(z,w)=-2-2a^2-\frac{a}{w}-aw-\frac{a}{z}-az
\end{align}
is called the characteristic polynomial. Note that $K(z,w)^{-1}=Q(z,w)/P(z,w)$. The formula for $\K_{s_1,s_2}^{-1}$ differs compared to the one in \cite{C/J} by multiplication of $s_1^{-u+1}s_2^{-v+1}$.

If we set $s_1=s_2=1$ in (\ref{infinvKasteleyn}) we get $\K_{1,1}^{-1}$ which agrees with $\K_{1,1}^{-1}$ given by (\ref{K_11}) as shown in \cite{C/J}.
We summarise a derivation of \eqref{K_11} from \eqref{infinvKasteleyn}. Make the change of variables $z=-u_1u_2$, $w=u_2/u_1$ for $(u_1,u_2)\in \Gamma_1^2$ in the double contour integral \eqref{infinvKasteleyn} so that the characteristic polynomial is $P(z,w)=-2(1+a^2)+a(u_1-1/u_1)(u_2-1/u_2)$. Then perform a small deformation in the $u_1,u_2$ variables so that $(u_1,u_2)\in \Gamma_R^2$ for $R<1$ very close to $1$ (avoiding the zeros of $P$). Then observe that $u\rarrow \sqrt{\frac{c}{2}}(u-1/u)$ is a bijection from $\{u;|u|<1\}$ to $\C\setminus i[-\sqrt{2c},\sqrt{2c}]$ with inverse $w\rarrow G(w)$. Then if we make another change of variables $u_i=G(w_i)$ for $i=1,2$ the characteristic polynomial becomes $-2(1+a^2)(1-w_1w_2)$.  Under the condition that $k,\ell\geq 0$, one obtains a single contour integral from the pole $w_1w_2=1$. One then checks that this single contour integral formula agrees with  \eqref{K_11} in each of the cases of vertices $\eps_1,\eps_2\in\{0,1\}$. The formula holds for all $k,\ell$ by a symmetry argument. For more information see \cite{C/J}.

In fact the whole plane inverse Kasteleyn matrix (\ref{infinvKasteleyn}) can be related to the quantities whose asymptotics we analyze. We bring forward lemma 3.3 in \cite{C/J}.
\begin{lemma}
For $x=(x_1,x_2)\in W_{\eps_1}, y=(y_1,y_2)\in B_{\eps_2}$, and $s_1=1, s_2=1/|G(\omega_c)|^2$ we have
\begin{align}
 \K_{s_1,s_2}^{-1}(x,y)=|G(\omega_c)|^{2v-2}\Big [\K_{1,1}^{-1}(x,y)-C_{\omega_c}(x,y)\Big],
 \label{Kr1r2K11Comegac}
\end{align}
where $(u,v)\in \Z^2$ is such that $u(2\vec{e}_1)+v(2\vec{e}_2)$ is the translation to get from the fundamental domain containing $x$ to the fundamental domain containing $y$.
\end{lemma}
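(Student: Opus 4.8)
The plan is to obtain \eqref{Kr1r2K11Comegac} directly from the double contour integral \eqref{infinvKasteleyn} for $\K_{s_1,s_2}^{-1}$, specialized to $s_1=1$ and $s_2=1/|G(\omega_c)|^2$, by running through the same chain of substitutions by which \eqref{K_11} was derived from \eqref{infinvKasteleyn} in the discussion above, while tracking the effect of the nontrivial radius $s_2$. Two observations fix the shape of the answer: first, $s_2=|G(\omega_c)|^{-2}>1$, since $\omega_c=e^{i\theta_c}$ lies on the unit circle and, for $\xi<\xi_c$, off the slit $i[-\sqrt{2c},\sqrt{2c}]$, hence in the domain of $G$, which by \eqref{Gfunction} maps into the open unit disc; second, the explicit prefactor $s_1^{-u+1}s_2^{-v+1}$ in \eqref{infinvKasteleyn} equals $|G(\omega_c)|^{2v-2}$ for this choice of $(s_1,s_2)$, which is exactly the scalar on the right-hand side of \eqref{Kr1r2K11Comegac}. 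It therefore remains to prove
\[
\frac{1}{(2\pi i)^2}\int_{\Gamma_1}\frac{dz}{z}\int_{\Gamma_{s_2}}\frac{dw}{w}\,\frac{Q(z,w)_{\eps_1+1,\eps_2+1}}{P(z,w)}\,z^uw^v=\K_{1,1}^{-1}(x,y)-C_{\omega_c}(x,y)
\]
with $s_2=|G(\omega_c)|^{-2}$.

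To evaluate the left-hand side I would follow the derivation of \eqref{K_11}: substitute $z=-u_1u_2$, $w=u_2/u_1$ (so that $P(z,w)=-2(1+a^2)+a(u_1-u_1^{-1})(u_2-u_2^{-1})$), perform the contractive deformations onto contours avoiding the zeros of $P$, set $u_i=G(w_i)$ (so that $P=-2(1+a^2)(1-w_1w_2)$ with zero locus $\{w_1w_2=1\}$), and extract the residue at the pole $w_1w_2=1$. For $(s_1,s_2)=(1,1)$ this is precisely the computation yielding $\K_{1,1}^{-1}$ via the $E_{k,\ell}$ of \eqref{EKL1}--\eqref{ADPAD} and \eqref{K_11}, with the surviving single integral taken over the full circle $\Gamma_1$. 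The only effect of enlarging $s_2$ to $|G(\omega_c)|^{-2}$ is to push one of the $u$-contours, hence --- after the residue --- the $w_1$-contour, off of $\Gamma_1$; and $|G(\omega_c)|^{-2}$ is precisely the radius for which this $w_1$-contour runs over the pair of arcs $\Gamma_1\setminus\Gamma_{\omega_c}$ near $\pm1$, its endpoints being the four critical points $\pm\omega_c,\pm\overline{\omega_c}$ of $g_\xi$ (the zeros of $g'_\xi$, equivalently the solutions of the saddle-point equation \eqref{critptsss}; here one also uses the symmetries \eqref{Gsymmetries}). Reading off this arc integral against \eqref{EKL1}, \eqref{K_11} and the definition \eqref{Comegacccc} of $C_{\omega_c}$ --- with the indices $k_1,\ell_1,k_2,\ell_2$ of \eqref{klh1}--\eqref{klh2} carried through the four parity cases $\eps_1,\eps_2\in\{0,1\}$, the powers of $i$ and $a$ matched, and general $(k,\ell)$ reduced to $k,\ell\geq 0$ by the same symmetry argument used for \eqref{K_11}, all exactly as in \cite{C/J} --- the integral over $\Gamma_1\setminus\Gamma_{\omega_c}$ is precisely $\K_{1,1}^{-1}(x,y)-C_{\omega_c}(x,y)$, which is the claim.

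The step I expect to be the main obstacle is this contour-and-pole bookkeeping: one must verify that the contractive deformations and the maps $u_i=G(w_i)$ do not drag the contours across any singularity other than the zero locus of $P$ --- in particular not across a branch cut of $G(w)$ or $G(1/w)$, nor across the poles coming from the Jacobians or from the factors $1/z$, $1/w$ --- and that with $s_2=|G(\omega_c)|^{-2}$ the portion of $\{w_1w_2=1\}$ that is picked up, and hence the surviving $w_1$-contour, is exactly $\Gamma_1\setminus\Gamma_{\omega_c}$, no more and no less. This is precisely where the critical point $\omega_c$, and hence the hypothesis $s_2=1/|G(\omega_c)|^2$, is doing essential work: for any other value of $s_2$ the right-hand side of \eqref{Kr1r2K11Comegac} would involve a contour different from $\Gamma_{\omega_c}$. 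Once this geometric fact is pinned down, the remaining verification of signs and constants over the four parity cases is routine.
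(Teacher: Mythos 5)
The paper does not prove this statement --- it is brought forward as Lemma~3.3 of \cite{C/J} --- so there is no in-text proof to compare against. Judged on its own, your proposal gets the easy pieces right: the prefactor $s_1^{-u+1}s_2^{-v+1}$ in \eqref{infinvKasteleyn} equals $|G(\omega_c)|^{2v-2}$ for $(s_1,s_2)=(1,|G(\omega_c)|^{-2})$, and the remaining task is indeed to show that the double contour integral reduces to $\K_{1,1}^{-1}(x,y)-C_{\omega_c}(x,y)$, which via \eqref{K_11}, \eqref{Comegacccc} and \eqref{Domegac} is built from the arc integrals $D_{\omega_c}(k,\ell)=E_{k,\ell}-\tilde{C}_{\omega_c}(k,\ell)$ over $\tilde{\Gamma}_{\omega_c}$.

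The gap is exactly the step you flag as ``the main obstacle'' and then pass over. You claim that after the residue at $w_1w_2=1$ the surviving $w_1$-contour is precisely $\tilde{\Gamma}_{\omega_c}$, but this does not fall out mechanically from the substitutions. Under $z=-u_1u_2$, $w=u_2/u_1$ with $|z|=1$, $|w|=|G(\omega_c)|^{-2}$, one gets $|u_1|=|G(\omega_c)|<1$ and $|u_2|=|G(\omega_c)|^{-1}>1$; the latter is outside the image of the bijection $u\mapsto\sqrt{c/2}(u-1/u)$ invoked in the derivation summarized after \eqref{infinvKasteleyn}, so $u_2=G(w_2)$ cannot be applied verbatim, and the residue must be extracted in the course of deforming $u_2$ inward. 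Moreover, the pullback of the circle $|u_1|=|G(\omega_c)|$ under $w_1=G^{-1}(u_1)$ is the level curve $\{w_1:|G(w_1)|=|G(\omega_c)|\}$, a single Jordan curve around the slit $i[-\sqrt{2c},\sqrt{2c}]$ meeting $\Gamma_1$ at $\pm\omega_c,\pm\overline{\omega_c}$; this is not the pair of open arcs $\tilde{\Gamma}_{\omega_c}$, and passing from one to the other requires a deformation across a region where the analyticity of $G(w)^\ell G(1/w)^k/(w\sqrt{w^2+2c}\sqrt{1/w^2+2c})$ has to be checked, together with the sign condition on $(k,\ell)$ under which the residue extraction is valid at all (the text flags $k,\ell\ge0$ and, separately after \eqref{Domegac}, $k>0$ or $\ell>0$). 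Your write-up correctly identifies the shape of the answer, but the one computation that actually uses the specific value $s_2=|G(\omega_c)|^{-2}$ is left as an assertion.
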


This means that Propositions \ref{propEKL}, \ref{Comegacinnerdisc} and \ref{propeklcomegac} also yield uniform asymptotics for the inverse Kasteleyn matrix for the infinite planar dimer model $G_{s_1,s_2}$. 

\subsection{Discussion of dimer-dimer correlations on the infinite planar graph}\label{infgraphdiscuss}
Using \eqref{infinvKasteleyn} we can reformulate the discussion of dimer to dimer correlations in section \ref{dimerdimerexample} in terms of the infinite planar graph $G_{s_1,s_2}$. Instead of varying the distance $n(\xi_c-\xi)$ of the two dimers to rough-smooth boundary, we can vary the magnetic coordinates in such a way that the Gibbs measure of the infinite dimer model varies between rough and smooth phases. Note that above we fixed $n$ and chose $\xi$ depending on $n$ so the $n$-dependence now sits in the magnetic coordinates
$(\log(s_1),\log(s_2))=(0,\log(1/|G(\omega_c)|^2))$. Using lemma \ref{thetacxi} one can compute $1/|G(\omega_c)|^2=1/|G(i)|^2+c_4(\xi_c-\xi)+O(\xi_c-\xi)^{3/2}$ for some $c_4>0$ depending on $a$, so we have
\begin{align}
s_1=1 &&s_2=1/|G(\omega_c)|^2=1/|G(i)|^2+c_4(\xi_c-\xi)+O(\xi_c-\xi)^{3/2},
\end{align}
where $1/|G(i)|^{2}=2a/(\sqrt{1+a^2}-1+a)^2>1$. We see that varying $\xi$ corresponds to varying $s_2$ while keeping $s_1$ fixed.

Take two dimers, $e_i=(x^{(i)},y^{(i)})\in \tilde{W}_0\times \tilde{B}_0$, $i=1,2$ with $e_1$ placed arbitrarily. Define the coordinates of $e_2$ as
\begin{align}
x^{(2)}=x^{(1)}+2(r_1,r_2), && y^{(2)}=y^{(1)}+2(r_1,r_2)
\end{align}
where $r_1,r_2\in \Z$ and $r_1+r_2$ is even. Let $\xi_c-\xi>0$ be sufficiently small and let the distance  between the two dimers $r=2\sqrt{r_1^2+r_2^2}$ lie in $[r_{\min},r_n)$ ($r_{\min}$ is large but fixed and $r_n=\infty$). Note that this allows for any direction between the dimers. We see that the discussion in Section \ref{dimerdimerexample}, and more precisely in Section \ref{SecDiscussion}, of the dimer-dimer correlation between the dimers $e_1,e_2$ translates into this setting. 

The results of this paper in conjunction with \cite{K.O.S} show that for the collection of weights $s_1=e^{B_1}=1$,  $1\leq s_2=e^{B_2} \leq 1/|G(i)|^2$ the corresponding points $(B_1,B_2)$ lie in the bounded complementary component of the Amoeba of $P(z,w)$, which corresponds to a single unique (smooth) Gibbs measure with an average slope of $(0,0)$. When $\varphi_c>0$ but small and $s_1=1, s_2=1/|G(ie^{-i\varphi_c})|^2$ the corresponding points $(B_1,B_2)$ lie in the interior of the Amoeba of $P(z,w)$ and for each $\varphi_c>0$ there is a unique (rough) Gibbs measure with average slope $t \vec{e}_1$ for some $t\in(0,1]$. From this we see Theorem \ref{theoremgibbsrough} holds.

An interpretation is that when we have a non-zero average slope, there are infinitely long paths related to the level curves of the height function that are present. These infinite paths appear to imply quasi-long range order and power law decay. As we transition to the smooth phase these infinite paths typically become further and further apart, corresponding to a decreasing average slope of the height function. The length scales that are much smaller than the typical distance between these infinite paths experience exponential decay of correlations. If we move to the smooth phase the paths move infinitely far apart, disappearing and leaving only short range order.

\subsection*{Acknowledgements} We thank Craig Tracy for raising the question of how to understand the change from exponential to power law decay for dimer correlations
in the two-periodic Aztec  diamond at a meeting in Oberwolfach several years ago. We also thank Sunil Chhita for providing code to draw the figures \ref{AztecDiamondpics} and \ref{AztecDiamondpics1} and for helpful comments on the introduction.

\section{Asymptotic results}

In this section we formulate the precise asymptotic results for $E_{k,\ell}$ and $\tilde{C}_{\omega_c}(\ell,k)$.
The critical point $\omega_c$ defined above is a function of $\xi$ and lies on the unit circle so we can write $\omega_c(\xi)=e^{i\theta_c(\xi)}$, where $\theta_c(\xi)\in (0,\pi/2)$. It  satisfies the equation \eqref{critptsss}.  For convenience, we also define the function  
 \begin{align}
 \varphi_c(\xi)=\pi/2-\theta_c(\xi)
 \end{align} 
 since this will be a natural quantity that is small and can be related to $\xi_c-\xi$, see Lemma \ref{thetacxi}.
 Note that when $\xi=\xi_c$, $g_\xi$ has a double critical point at both $ i$ and $-i$ so $\theta(\xi_c)=\pi/2$ and hence $\varphi(\xi_c)=0$.
 
 We need to assume that $\xi_c-\xi$ is small, which means that we are close to the asymptotic rough-smooth boundary.
 \begin{assumption} We assume throughout the paper that $\xi$ satisfies $0<\xi_c-\xi<\delta$, where $\delta<\xi_c+\frac 12\sqrt{1+2c}$ is a small number determined by a finite number of conditions below. We will not discuss the explicit value of $\delta$.
\label{gxiassump}
\end{assumption}
We now relate the critical point $\omega_c$ and $\varphi_c$ to the value of $\xi$ near $\xi_c$.

\begin{lemma} Let $\xi$ be such that Assumption \ref{gxiassump} holds and let $\omega_c=e^{i\theta_c}$, $\theta_c=\pi/2-\varphi_c\in[0,\pi/2]$ satisfy $g'_\xi(\omega_c)=0$. Then there exists a bounded function $R_1(\varphi_c)$ such that
\begin{align}
\xi_c-\xi=\frac{4c(1+c)\xi_c^2}{(1-2c)^{5/2}}\varphi_c^2+R_1(\varphi_c)\varphi_c^4.
\label{xiintermsofthetac}
\end{align}
Furthermore, there are bounded functions $R_2(\xi), R_3(\xi)$ such that 
\begin{align}
\omega_c-i=\sqrt{\frac{(1-2c)^{5/2}}{4\xi_c^2c(1+c)}}\sqrt{\xi_c-\xi}+R_2(\xi)(\xi_c-\xi),
\label{omegacintermsofxi}
\end{align}
and
\begin{align}
\varphi_c=\sqrt{\frac{(1-2c)^{5/2}}{4\xi_c^2c(1+c)}}\sqrt{\xi_c-\xi}+R_3(\xi)(\xi_c-\xi).
\label{thetacintermsofxi}
\end{align}
\label{thetacxi}
\end{lemma}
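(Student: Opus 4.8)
The plan is to reduce the critical point equation \eqref{critptsss} to a scalar equation in one small parameter and then Taylor expand. Write $F(w)=\frac{w}{\sqrt{w^2+2c}}+\frac{1}{w\sqrt{1/w^2+2c}}$, so that \eqref{critptsss} says precisely $F(\omega_c)=-1/\xi$ (equivalently $g_\xi'(w)=\frac1w(1+\xi F(w))$, using that $G'(w)/G(w)=-1/\sqrt{w^2+2c}$, which is immediate from \eqref{Gfunction}). Since $\omega_c=e^{i\theta_c}=ie^{-i\varphi_c}$, I would set $h(\varphi):=F(ie^{-i\varphi})$ and study the equation $h(\varphi_c)=-1/\xi$ for $\varphi_c$ near $0$.

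Next I would collect the structural properties of $h$. Because $\sqrt{2c}<1$, the branch cut $i[-\sqrt{2c},\sqrt{2c}]$ stays away from $\pm i$, so $F$ is analytic near both $i$ and $-i$, hence $h$ is analytic near $0$. The symmetries \eqref{squarerootsymmetries}--\eqref{Gsymmetries} give $F(-w)=F(w)$, and writing $T(z):=z/\sqrt{z^2+2c}$ so that $F(w)=T(w)+T(1/w)$, one also sees $F(1/w)=F(w)$; since $\varphi\mapsto-\varphi$ sends $ie^{-i\varphi}$ to $-1/(ie^{-i\varphi})$, these symmetries force $h$ to be \emph{even}. Moreover $F(i)=-1/\xi_c$ — either because $g_{\xi_c}'(i)=0$, or directly from $\xi_c=-\tfrac12\sqrt{1-2c}$ together with $\sqrt{i^2+2c}=i\sqrt{1-2c}$ and $\sqrt{1/i^2+2c}=-i\sqrt{1-2c}$ in the prescribed branch — and $F'(i)=0$, which is forced by $h$ being even (and matches the fact that $i$ is a double critical point of $g_{\xi_c}$).

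To obtain \eqref{xiintermsofthetac} I would expand $\Phi(\varphi):=-1/h(\varphi)$, which is analytic and even near $0$ with $\Phi(0)=\xi_c$. The equation reads $\xi=\Phi(\varphi_c)$, so $\xi_c-\xi=-\tfrac12\Phi''(0)\varphi_c^2+R_1(\varphi_c)\varphi_c^4$ with $R_1$ bounded by analyticity, and $\Phi''(0)=\xi_c^2\,F''(i)$. To evaluate $F''(i)$ I would use the clean identities $T'(z)=2c\,(z^2+2c)^{-3/2}$ and $T''(z)=-6c\,z\,(z^2+2c)^{-5/2}$ together with $F''(w)=T''(w)+2w^{-3}T'(1/w)+w^{-4}T''(1/w)$; plugging in $\sqrt{i^2+2c}=i\sqrt{1-2c}$ and $\sqrt{1/i^2+2c}=-i\sqrt{1-2c}$ gives $F'(i)=T'(i)+T'(-i)=0$ and
\[
F''(i)=T''(i)+2i\,T'(-i)+T''(-i)=\frac{-8c(1+c)}{(1-2c)^{5/2}},
\]
whence $-\tfrac12\Phi''(0)=\frac{4c(1+c)\xi_c^2}{(1-2c)^{5/2}}$, which is exactly the leading coefficient in \eqref{xiintermsofthetac}.

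Finally, for \eqref{thetacintermsofxi} and \eqref{omegacintermsofxi} I would invert. Put $A:=\frac{4c(1+c)\xi_c^2}{(1-2c)^{5/2}}>0$; by \eqref{xiintermsofthetac}, $\sqrt{\xi_c-\xi}=\varphi_c\,\psi(\varphi_c^2)$ with $\psi$ analytic and $\psi(0)=\sqrt A$, so the analytic inverse function theorem gives $\varphi_c=A^{-1/2}\sqrt{\xi_c-\xi}+R_3(\xi)(\xi_c-\xi)$ with $R_3$ bounded, and $A^{-1/2}=\sqrt{(1-2c)^{5/2}/(4\xi_c^2c(1+c))}$ (using $\xi_c^2=\tfrac14(1-2c)$). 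Then $\omega_c-i=i(e^{-i\varphi_c}-1)=\varphi_c+O(\varphi_c^2)$ is analytic in $\varphi_c$, and substituting the expansion of $\varphi_c$ yields \eqref{omegacintermsofxi}. I expect the only real obstacle to be the branch bookkeeping needed to evaluate $F,F',F''$ at $\pm i$ (in particular to pin down the sign and constants in $F''(i)$); once those values are in hand the rest is elementary Taylor expansion and analytic inversion, and all remainder terms are automatically bounded because $h$ is analytic at the origin.
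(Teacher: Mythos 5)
Your proof is correct, and it takes a genuinely different route from the paper in both halves. For \eqref{xiintermsofthetac}, the paper introduces the real-valued function $F_2(\theta)=1/\mathcal{R}[F_1(e^{i\theta})]$ on $[0,\pi/2]$, asserts by ``direct computation'' that $F_2'(\pi/2)=F_2'''(\pi/2)=0$, evaluates $F_2''(\pi/2)$, and invokes the Lagrange form of Taylor's remainder; you instead work with the complex-analytic $h(\varphi)=F(ie^{-i\varphi})$, derive the vanishing of odd derivatives structurally from the parity of $h$ (via $F(-w)=F(w)=F(1/w)$ and the observation that $\varphi\mapsto -\varphi$ corresponds to $w\mapsto -1/w$), and reduce everything to the single value $F''(i)=-8c(1+c)/(1-2c)^{5/2}$, which I have checked. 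Your shortcut $\Phi''(0)=\xi_c^2F''(i)$ is a cleaner way to organize the leading coefficient than the paper's chain of $F_2$-derivatives. For \eqref{omegacintermsofxi}--\eqref{thetacintermsofxi}, the difference is more substantial: the paper invokes Lagrange's inverse function theorem to obtain a contour-integral representation of $\omega_c=h^{-1}(\sqrt{1/\xi-1/\xi_c})$ and then expands the integrand; you instead observe that $\sqrt{\xi_c-\xi}=\varphi_c\psi(\varphi_c^2)$ with $\psi$ analytic and $\psi(0)\neq 0$, apply the ordinary analytic inverse function theorem to the odd map $\Psi(\varphi)=\varphi\psi(\varphi^2)$, and then substitute into $\omega_c-i=i(e^{-i\varphi_c}-1)$. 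Your version is more elementary and avoids the somewhat delicate construction of $H$ with $F_3(i)-F_3(w)=H(w)(w-i)^2$ and a local branch $\sqrt{H}$; the paper's version has the mild benefit of producing $\omega_c(\xi)$ directly as a holomorphic function of $\sqrt{1/\xi-1/\xi_c}$, but for the statement actually asserted (leading term plus bounded remainder) your argument is entirely sufficient. Both approaches rely on the same bookkeeping of branch values at $\pm i$ (e.g.\ $\sqrt{i^2+2c}=i\sqrt{1-2c}$, $\sqrt{(-i)^2+2c}=-i\sqrt{1-2c}$), which you flag correctly as the one place where care is required.
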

The proof of this lemma is given in Section \ref{proofxiomegac}. Note that Assumption \ref{gxiassump} and Lemma \ref{thetacxi} imply that there is a $\delta_1>0$ such that $0<\varphi_c<\delta_1$ for all of the $\xi$ we consider.

For the asymptotic analysis of $E_{k,\ell}$ we need another saddle-point function.
For $\alpha\in [-1,1]$, define the function $\tilde{g}_\alpha: \C\setminus (i(-\infty,1/\sqrt{2c}]\cup i[-\sqrt{2c},\sqrt{2c}]\cup i[\sqrt{2c},\infty))\rarrow \C$ by
\begin{align}
\widetilde{g}_{\alpha}(w)=\log G(w)+\alpha \log G(1/w),
\end{align}
where both logarithms have branch cuts on the negative real axis. Taking a derivative yields
\begin{align}
\widetilde{g}'_\alpha (w) = \frac{-1}{\sqrt{w^2+2c}}+\frac{\alpha}{w^2}\frac{1}{\sqrt{1/w^2+2c}},
\label{gasderiv}
\end{align}
and setting this equal to zero gives the equation $2cw^4+(1-\alpha^2)w^2-2c\alpha^2=0$. Solving this gives the four critical points 
\begin{align}
\pm i\left(\frac{1-\alpha^2+\sqrt{(1-\alpha^2)^2+16c^2\alpha^2)}}{4c}\right)^{1/2}, \quad
\pm \left(\frac{-(1-\alpha^2)+\sqrt{(1-\alpha^2)^2+16c^2\alpha^2)}}{4c}\right)^{1/2}.\label{walpha1}
\end{align} It is straight forward to see that the four points solve $\widetilde{g}'_\alpha(w)=0$ if $0<\alpha\leq1$ (and that there is no solution for $-1\leq\alpha\leq 0$) by using \eqref{squarerootsymmetries}.
Let
\begin{align}
w_\alpha = \left(\frac{1-\alpha^2+\sqrt{(1-\alpha^2)^2+16c^2\alpha^2)}}{4c}\right)^{1/2} \in [1,1/\sqrt{2c})
\end{align}
so that $iw_\alpha$ is a critical point of $\tilde{g}_\alpha$.

Note that we have the properties
 \begin{align}
 E_{k,\ell}=E_{\ell,k},&& \tilde{C}_{\omega_c}(k,\ell)=\tilde{C}_{\omega_c}(\ell,k), \label{symmeklcomegac}
 \end{align} 
seen by the change of variables $w\rarrow 1/w$ in the definitions.
By (\ref{symmeklcomegac}) we can assume that $|k/\ell| \leq 1$ without loss of generality. 
We are now in position to formulate our main asymptotic results.

\begin{proposition}
Let the integers $\ell,k$ be such that $\ell+k$ is even and $\alpha=\abs{k}/\abs{\ell}$ lie in a compact subset of $(0,1]$. As $|\ell|\rarrow \infty$ there is a positive constant $d_1$ and a bounded function $R_4(k,\ell)$ such that
\begin{align}\label{EklAs1}
E_{k,\ell} = \frac{(-1)^{k}|G(iw_\alpha)|^{\abs{\ell}}|G(1/(iw_\alpha))|^{\abs{k}}}{(1+a^2)\sqrt{2\pi \abs{\tilde{g}''_\alpha(iw_\alpha)\ell}} w_\alpha\sqrt{w_\alpha^2-2c}\sqrt{1/w_\alpha^2-2c}}\left (1+R_4(k,\ell)/|\ell|)\right ).
\end{align}
For fixed $k$,  there is a positive constant $d_2$ and a bounded function $R_5(k,\ell)$ such that
\begin{align}\label{EklAs2}
E_{k,\ell}=\frac{(-1)^{k}|G(\frac{i}{\sqrt{2c}})|^{|\ell|}|G(-i\sqrt{2c})|^{|k|}(1-a^2)^4(1+a)}{(2a(1+a^2)^9)^{1/4}2\sqrt{\pi |\ell|}}\Big (1+R_5(k,\ell)/|\ell|^{3/2})\Big) 
\end{align}
 as $|\ell|\rarrow \infty$.
 \label{propEKL}
\end{proposition}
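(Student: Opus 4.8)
The plan is to evaluate both integral representations of $E_{k,\ell}$ by the classical saddle-point (steepest descent) method, using the function $\widetilde g_\alpha$ introduced above. Recall from \eqref{EKL1} that for $k,\ell>0$,
\begin{align*}
E_{k,\ell}=\frac{i^{-k-\ell}}{2(1+a^2)2\pi i}\int_{\Gamma_1}\frac{dw}{w}\,\frac{G(w)^\ell G(1/w)^k}{\sqrt{w^2+2c}\sqrt{1/w^2+2c}},
\end{align*}
and writing $\alpha=k/\ell$ the integrand's exponential part is $G(w)^\ell G(1/w)^k=\exp(\ell\,\widetilde g_\alpha(w))$. First I would record the location of the relevant critical point: by the computation preceding the statement, the critical points of $\widetilde g_\alpha$ on the imaginary axis are $\pm i w_\alpha$ with $w_\alpha\in[1,1/\sqrt{2c})$, and since the contour $\Gamma_1$ is the unit circle and $w_\alpha\ge 1$, a small outward deformation of $\Gamma_1$ (staying off the branch cuts $i[-\sqrt{2c},\sqrt{2c}]$ and their reflections) brings it through the saddles at $iw_\alpha$ and $-iw_\alpha$. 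One checks, using the symmetries \eqref{squarerootsymmetries}–\eqref{Gsymmetries}, that the two saddles contribute complex-conjugate amounts, so it suffices to analyze $iw_\alpha$ and double the real part; the factor $i^{-k-\ell}$ together with $k+\ell$ even produces the sign $(-1)^{k}$ (equivalently $(-1)^{(k+\ell)/2}\cdot(-1)^{(\ell-k)/2}$, which one simplifies).

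Next I would carry out the standard Laplace expansion at $w=iw_\alpha$: since $\widetilde g_\alpha'(iw_\alpha)=0$ and (one verifies) $\widetilde g_\alpha''(iw_\alpha)\neq 0$, the local contribution is
\begin{align*}
\frac{1}{2\pi}\,\frac{1}{iw_\alpha}\,\frac{G(iw_\alpha)^\ell G(1/(iw_\alpha))^k}{\sqrt{(iw_\alpha)^2+2c}\sqrt{1/(iw_\alpha)^2+2c}}\cdot\sqrt{\frac{2\pi}{\ell\,|\widetilde g_\alpha''(iw_\alpha)|}}\cdot(1+O(1/\ell)),
\end{align*}
and collecting the constants, replacing $\sqrt{(iw_\alpha)^2+2c}$ by $\pm i\sqrt{w_\alpha^2-2c}$ and similarly for the other root (legitimate because $w_\alpha^2>2c$), and absorbing the $i$'s into the prefactor, yields exactly \eqref{EklAs1} with $d_1$ a constant depending on the geometry of the descent directions and $R_4(k,\ell)/|\ell|$ the standard remainder; the extension from $k,\ell>0$ to all $k,\ell$ is immediate from \eqref{ADPAD}. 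The only genuine subtlety here is \emph{uniformity}: $\alpha$ ranges over a compact subset of $(0,1]$, so one must check that $w_\alpha$, $\widetilde g_\alpha''(iw_\alpha)$, the distance from the deformed contour to the branch cuts, and the descent geometry all vary continuously and stay bounded away from their degenerate values on that compact set — this gives a uniform $O(1/\ell)$ error. This uniformity bookkeeping is the part I expect to be the main obstacle, since the estimate must hold simultaneously over all admissible $(k,\ell)$ rather than for a single ray.

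Finally, for \eqref{EklAs2} — fixed $k$, $|\ell|\to\infty$ — the effective parameter is $\alpha=k/|\ell|\to 0$, so $w_\alpha\to w_0=1/\sqrt{2c}$ and the saddle $iw_\alpha$ \emph{merges with the endpoint $i/\sqrt{2c}$ of the branch cut}. Here the naive saddle-point formula degenerates ($\sqrt{w_\alpha^2-2c}\to 0$), and instead I would treat $k$ as fixed and expand $\widetilde g_\alpha$ differently: since $G(1/w)^k$ is a fixed analytic factor away from the cut, write $G(w)^\ell=\exp(\ell\log G(w))$ and perform steepest descent for $\log G(w)$ alone. The critical point of $\log G$ is precisely at the branch point $i/\sqrt{2c}$, so this is an \emph{endpoint / branch-point} contribution rather than an interior saddle; one deforms $\Gamma_1$ to hug the cut near $\pm i/\sqrt{2c}$, uses the local behavior $\sqrt{w^2+2c}\sim c_0\sqrt{w-i/\sqrt{2c}}$ (up to a constant and lower-order terms), and evaluates the resulting integral of the form $\int e^{-\ell\, t}\,t^{-1/2}\,dt$-type, which produces the $1/\sqrt{\pi|\ell|}$ rate (no extra $1/\ell$ from a Gaussian, hence the sharper relative error $O(|\ell|^{-3/2})$). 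Substituting $G(i/\sqrt{2c})$, $G(-i\sqrt{2c})$, and $c=a/(1+a^2)$ into the constants, and simplifying the algebra, gives the explicit prefactor $(1-a^2)^4(1+a)/(2a(1+a^2)^9)^{1/4}$ in \eqref{EklAs2}, with $R_5(k,\ell)$ bounded uniformly in the fixed $k$ and large $|\ell|$. The delicate point in this second regime is matching the branch-point expansion constants exactly; the rest is the same conjugate-saddle and sign bookkeeping as before.
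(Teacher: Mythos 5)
Your proposal matches the paper's proof in both regimes: for $\alpha$ in a compact subset of $(0,1]$ the paper deforms $\Gamma_1$ to horizontal lines through $\pm iw_\alpha$ (Lemma~\ref{eklready}) and applies Laplace's method with the tails controlled by the descent Lemma~\ref{gasdesc}, and for fixed $k$ it deforms the contour to hug the branch cut and performs a Watson-type endpoint expansion at $i/\sqrt{2c}$. One small correction to your narrative: $\widetilde{g}_0=\log G$ has \emph{no} critical point at $i/\sqrt{2c}$ (its derivative $-1/\sqrt{w^2+2c}$ is finite and nonzero there), and it is $\sqrt{1/w^2+2c}$, not $\sqrt{w^2+2c}$, that vanishes like $\sqrt{w-i/\sqrt{2c}}$ at that branch point; the localization instead comes from $\mathcal{R}[\widetilde{g}_0]$ being maximal at the contour's endpoint $i/\sqrt{2c}$ (again Lemma~\ref{gasdesc}), which is exactly consistent with the $\int e^{-b|\ell| t}\,t^{-1/2}\,dt$ computation and the $O(|\ell|^{-3/2})$ relative error you describe.
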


Given a small $\eps>0$, we define $\Lambda_\eps$ to be the set of real numbers where we stay at least $\eps$ away from all $n\pi$, $n\neq 0$, i.e.
\begin{equation}\label{Lambdaeps}
\Lambda_\eps=\{x\in\mathbb{R}\,;\,|x-n\pi|\ge\eps, \text{all $n\in\mathbb{Z}\setminus\{0\}$}\}.
\end{equation}
We use the convention that $\sin(bx)/x=b$ when $x=0$.
\begin{proposition}
\label{Comegacinnerdisc}
Fix $\gamma\in(1/2,1]$ and $\eps>0$ small. Consider non-zero integers $\ell,k$ such that $\ell+k$ is even, $\alpha=k/\ell$ lies in $[-1,1]$ and $\varphi_c^{2-\gamma}(|\ell|+|k|)\leq 1$. if $\varphi_c\ell(1-\alpha)/\sqrt{1-2c}\in \Lambda_\eps$, then
\begin{align}
\tilde{C}_{\omega_c}(k,\ell)=\frac{(-1)^{k}|G(i)|^{\ell+k}}{(1-a)^2\pi}\Big (\frac{\sin((\ell-k)\varphi_c/\sqrt{1-2c})}{(\ell-k)/\sqrt{1-2c}}(1+R_{6}(\xi,\ell,k))\Big )\label{tempCom12}
\end{align}  
where we have the estimate
\begin{align}
|R_{6}(\xi,\ell,k)|\leq C\varphi_c^{-1+2\gamma},\label{temp2zmhh}
\end{align} 
for some constant $C>0$ that only depends on $\eps$.
\end{proposition}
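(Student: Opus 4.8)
The plan is to evaluate the contour integral defining $\tilde{C}_{\omega_c}(k,\ell)$ by a saddle-point / local expansion argument around the critical points $\pm\omega_c$ and their conjugates, exploiting that $\omega_c = e^{i\theta_c}$ with $\theta_c = \pi/2 - \varphi_c$ is close to $i$ when $\varphi_c$ is small (Lemma \ref{thetacxi}). Recall that $\tilde C_{\omega_c}(k,\ell)$ is the integral of $\frac{1}{w}\frac{G(w)^\ell G(1/w)^k}{\sqrt{w^2+2c}\sqrt{1/w^2+2c}}$ over the arc contour $\Gamma_{\omega_c}$ running between the critical points. Writing the integrand as $e^{\ell\,\tilde g_\alpha(w)}$ times the slowly varying prefactor (with $\alpha = k/\ell$), the contour $\Gamma_{\omega_c}$ is exactly the piece of the unit circle between the four saddles, and on the unit circle $|G(e^{i\theta})|$ is constant (this is why the leading factor $|G(i)|^{\ell+k}$ appears, up to an error from moving $\theta_c$ to $\pi/2$). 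So the first step is to substitute $w = e^{i\theta}$, $\theta$ ranging over $(\theta_c,\pi-\theta_c)$ (plus the symmetric arc), and factor out $|G(i)|^{\ell+k}$, collecting the discrepancy $|G(\omega_c)/G(i)|^{\ell+k}$ into the error $R_6$; by Lemma \ref{thetacxi} this is $1 + O(\varphi_c^2(|\ell|+|k|))$, which under the hypothesis $\varphi_c^{2-\gamma}(|\ell|+|k|)\le 1$ is $1 + O(\varphi_c^\gamma)$, hence within the claimed bound $C\varphi_c^{-1+2\gamma}$ since $\gamma > 1/2$ forces $-1+2\gamma > 0$ and (shrinking $\delta$) $\varphi_c^\gamma \le C\varphi_c^{-1+2\gamma}$ is false in general — so more care is needed here (see below).

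The heart of the computation is the phase. On $|w|=1$, $\log G(w) + \alpha\log G(1/w)$ has a purely imaginary derivative in $\theta$, and expanding $G(e^{i\theta})$ near $\theta = \pi/2$ one finds that $\frac{d}{d\theta}\big[\arg G(e^{i\theta})\big]$ at $\theta = \pi/2$ is a constant times $1/\sqrt{1-2c}$ (this is where the $\sqrt{1-2c}$ in the answer comes from). So on the relevant arc the integrand is, to leading order, $\mathrm{const}\cdot |G(i)|^{\ell+k}\, e^{i(\ell - k)\psi(\theta)}$ where $\psi(\theta)$ is a smooth real function with $\psi(\theta) \approx \psi(\pi/2) + (\theta - \pi/2)/\sqrt{1-2c}$. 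Because the arc has length $\pi - 2\theta_c = 2\varphi_c$, the integral $\int_{\theta_c}^{\pi-\theta_c} e^{i(\ell-k)\psi(\theta)}\,d\theta$ — together with the symmetric arc, which by the symmetry $G(w)=G(-w)$ contributes the complex conjugate / mirror term — evaluates to a Dirichlet-type kernel $\frac{\sin((\ell-k)\varphi_c/\sqrt{1-2c})}{(\ell-k)/\sqrt{1-2c}}$ times the prefactor, with corrections coming from (i) the variation of the slowly varying prefactor $\frac{1}{w\sqrt{w^2+2c}\sqrt{1/w^2+2c}}$ across the short arc, which near $w=i$ equals $\frac{1}{i\sqrt{2c-1}\cdot i\sqrt{2c-1}} = \frac{-1}{(1-2c)}$... — one needs to track that this combined with the $(-1)^k$ and the overall $\frac{1}{2(1+a^2)2\pi i}$ and $i^{-k-\ell}$ normalization produces precisely $\frac{(-1)^k}{(1-a)^2\pi}$ using $2c = 2a/(1+a^2)$ so that $1 - 2c = (1-a)^2/(1+a^2)$; and (ii) the quadratic-and-higher terms in $\psi(\theta) - $ linear, which over an arc of length $2\varphi_c$ contribute phase errors of size $O(|\ell - k|\varphi_c^2)$ and hence relative error $O((|\ell|+|k|)\varphi_c^2)$.

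The main obstacle is bookkeeping the error term $R_6$ and showing $|R_6| \le C\varphi_c^{-1+2\gamma}$ uniformly, including the role of the hypothesis $\varphi_c\ell(1-\alpha)/\sqrt{1-2c}\in\Lambda_\eps$. That hypothesis keeps the Dirichlet kernel $\sin((\ell-k)\varphi_c/\sqrt{1-2c})/((\ell-k)/\sqrt{1-2c})$ bounded away from zero relative to its natural envelope $\sqrt{1-2c}/|\ell-k| \asymp \varphi_c$, so that additive errors of absolute size $O(\varphi_c^3(|\ell|+|k|))$ or $O(\varphi_c \cdot (|\ell|+|k|)\varphi_c^2)$ translate into \emph{relative} errors of size $O((|\ell|+|k|)\varphi_c^2) \le O(\varphi_c^\gamma)$ by the standing assumption $\varphi_c^{2-\gamma}(|\ell|+|k|)\le 1$; combined with the $O(\varphi_c^\gamma)$ from replacing $|G(\omega_c)|$ by $|G(i)|$ one gets $|R_6| = O(\varphi_c^\gamma)$, and since $\gamma \in (1/2,1]$ one has $\gamma \ge -1+2\gamma$, so after possibly shrinking $\delta$ in Assumption \ref{gxiassump} this is $\le C\varphi_c^{-1+2\gamma}$. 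I expect the delicate part to be making the prefactor expansion and the phase expansion simultaneously uniform in the three regimes of $\alpha$ near $1$ (where $\ell-k$ is small), $\alpha$ near $-1$ (where $\ell-k \approx 2\ell$ is large and the $\Lambda_\eps$ condition does real work), and intermediate $\alpha$ — but the arc being genuinely short (length $2\varphi_c$) is what makes all the corrections come with a spare power of $\varphi_c$, which is the mechanism that closes the estimate.
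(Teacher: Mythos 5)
Your overall strategy matches the paper's proof: parametrize the arc $\Gamma_{\omega_c}$ by $w=e^{i\theta}$, $\theta\in(\theta_c,\pi-\theta_c)$, linearize $\psi(\theta)=\tilde g_\alpha(e^{i\theta})$ at $\theta=\pi/2$ using $\psi'(\pi/2)=-i(1-\alpha)/\sqrt{1-2c}$, replace the slowly varying prefactor $1/|e^{2i\theta}+2c|$ by $1/(1-2c)$, evaluate the resulting Dirichlet-type integral, and finally convert additive to multiplicative error using the $\Lambda_\eps$ hypothesis. The algebra turning $i^{-k-\ell}e^{\ell\psi(\pi/2)}$ into $(-1)^k|G(i)|^{\ell+k}$ and $1-2c=(1-a)^2/(1+a^2)$ into the $(1-a)^{-2}$ prefactor is also right.

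The gap is in the last step, where you convert the absolute error into a relative one. You assert that the ``natural envelope'' of $\sin((\ell-k)\varphi_c/\sqrt{1-2c})\,/\,((\ell-k)/\sqrt{1-2c})$ is $\asymp\varphi_c$, and divide by it to conclude $|R_6|=O((|\ell|+|k|)\varphi_c^2)=O(\varphi_c^\gamma)$. But the lower bound coming from $bx\in\Lambda_\eps$ is $|\sin(bx)/x|\ge C(\eps)\min(|b|,1/|x|)$, so the envelope is $\asymp\min(\varphi_c,\,1/|\ell-k|)$, and for $\gamma<1$ the standing hypothesis $\varphi_c^{2-\gamma}(|\ell|+|k|)\le 1$ allows $|\ell-k|$ up to $\varphi_c^{\gamma-2}\gg 1/\varphi_c$, in which case the envelope is $1/|\ell-k|\ll\varphi_c$. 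In that regime the same absolute error (of size $\asymp\varphi_c^{1+\gamma}$) divided by the true envelope gives a relative error $\asymp\varphi_c^{1+\gamma}|\ell-k|$, which can be as large as $\varphi_c^{-1+2\gamma}$ but is \emph{not} $O(\varphi_c^\gamma)$; your claimed intermediate bound $|R_6|=O(\varphi_c^\gamma)$ is thus stronger than the proposition and unestablished exactly in the regime where $\Lambda_\eps$ ``does real work.'' The paper avoids this by explicitly writing $|R_6|\le C\varphi_c^{1+\gamma}/\min(\varphi_c,1/|\ell-k|)\le C\max(\varphi_c^\gamma,\varphi_c^{1+\gamma}|\ell-k|)$ and then invoking $\varphi_c^{2-\gamma}|\ell-k|\le 1$ to land on $\varphi_c^{-1+2\gamma}$. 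You should also resolve, rather than flag, the aside about whether $\varphi_c^\gamma\le C\varphi_c^{-1+2\gamma}$: it holds because $\gamma\ge -1+2\gamma$ for $\gamma\le 1$ and $\varphi_c<1$, so that digression distracts from the actual missing case.
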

Set $c'=c/(1-2c)^{3/2}$. For $\theta \in \R$ define
\begin{align}
F(\theta):=\arg G(ie^{-i\theta})-\pi/2=\frac{\theta}{\sqrt{1-2c}}-R_{7}(-\theta)\theta^3.\label{Ftheta}
\end{align}
It follows from \eqref{tempImpsi12} that $R_{7}$ is bounded.

\begin{proposition}\label{propeklcomegac}
Fix $\tilde{\alpha}$ in $[-1,1]$, $\gamma\in(1/2,1]$ and $\eps>0$ small. Consider non-zero integers $\ell,k$ such that 
$\ell+k$ is even, $k/\ell=\tilde{\alpha}+\kappa_\ell\in[-1,1]$ and $(|\ell|+|k|)\varphi_c^{2-\gamma}\geq 1$. Here $\kappa_\ell$ is an arbitrary sequence indexed by $\ell$ such that $|\kappa_\ell|<3/|\ell|$.

Assume that $F(\varphi_c)(\ell-k)\in \Lambda_\eps$. Then, when $|\ell|$ is large enough, there is a bounded function $ R_{8}(\xi,\ell,k)$ such that
\begin{align}
&E_{k,\ell}-\tilde{C}_{\omega_c}(k,\ell)\label{tempd2wd}
\\
&=\frac{(-1)^{k}|G(\omega_c)|^{\ell+k}}{\pi(1-a)^2}\begin{cases}
-e^{(\ell+k)c'\varphi_c^2}\int_0^{\varphi_c}\cos\Big ((\ell-k)F(\theta)\Big) e^{-(\ell+k)c'\theta^2}d\theta, & \ell+k< -2\\
-((\ell-k)/\sqrt{1-2c})^{-1}\sin((\ell-k)F(\varphi_c)),&\ell+k\in \{-2,0,2\}\\
e^{(\ell+k)c'\varphi_c^2}\int_{\varphi_c}^{\infty}\cos\Big ((\ell-k)F(\theta)\Big) e^{-(\ell+k)c'\theta^2}d\theta, & \ell+k>2
\end{cases}
\nonumber\\
&\qquad \times\Big(1+R_{8}(\xi,\ell,k)\Big).\nonumber
\end{align}
Furthermore, there is a constant $C>0$ such that
\begin{align}
|R_{8}(\xi,\ell,k)|\leq C\max(\frac{1}{\sqrt{|\ell|}},\varphi_c). \label{tempR54}
\end{align}

\end{proposition}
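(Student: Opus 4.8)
The plan is to analyse the contour integral defining $\tilde C_{\omega_c}(k,\ell)$ together with the one defining $E_{k,\ell}$ by the same saddle-point machinery, but now in the regime where the relevant saddle $\omega_c=ie^{-i\varphi_c}$ is at macroscopic distance $\varphi_c$ from the double saddle $i$ of $g_{\xi_c}$, and $(|\ell|+|k|)\varphi_c^{2-\gamma}\ge 1$ forces us into a genuine (non-degenerate) steepest-descent regime rather than the Airy-type coalescence. First I would write, as in Proposition~\ref{Comegacinnerdisc}, the integrand $\frac{G(w)^\ell G(1/w)^k}{w\sqrt{w^2+2c}\sqrt{1/w^2+2c}}$ on the contour and substitute $w=ie^{-i\theta}$ near the saddle, expanding the phase. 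Using the definition $F(\theta)=\arg G(ie^{-i\theta})-\pi/2$ in \eqref{Ftheta} and its expansion $F(\theta)=\theta/\sqrt{1-2c}-R_7(-\theta)\theta^3$, together with the modulus expansion coming from Lemma~\ref{thetacxi} (which gives $|G(\omega_c)|^{\ell+k}$ and the Gaussian weight $e^{(\ell+k)c'\theta^2}$ with $c'=c/(1-2c)^{3/2}$ being the second-order term in $\log|G(ie^{-i\theta})|$), the local contribution organises into $|G(\omega_c)|^{\ell+k}$ times an integral of $\cos((\ell-k)F(\theta))e^{-(\ell+k)c'\theta^2}$ against $d\theta$, with the constant prefactor $(-1)^k/(\pi(1-a)^2)$ emerging from the residue/normalisation exactly as in the proof of the previous proposition (same $(1-a)^{-2}$ coming from $\sqrt{w^2+2c}\sqrt{1/w^2+2c}$ at $w=i$, and the $(-1)^k$ from the $i^{-k-\ell}$ prefactor).

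Next I would treat the three cases by how the Gaussian factor $e^{(\ell+k)c'\theta^2}$ is oriented. When $\ell+k>2$, the factor $e^{-(\ell+k)c'\theta^2}$ is genuinely decaying, the saddle at $\theta=\varphi_c$ is a true maximum along the deformed contour, and after shifting the contour through the saddle the dominant contribution is the tail integral $\int_{\varphi_c}^\infty\cos((\ell-k)F(\theta))e^{-(\ell+k)c'\theta^2}\,d\theta$, prefactored by $e^{(\ell+k)c'\varphi_c^2}$ to renormalise $|G(\omega_c)|^{\ell+k}$ back to the value at $i$; the analogous picture with the opposite sign gives the $\ell+k<-2$ case (now the integral runs from $0$ to $\varphi_c$ and an extra overall minus sign appears because the saddle becomes a minimum and the contour must be pushed the other way). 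The borderline cases $\ell+k\in\{-2,0,2\}$ are exactly where the Gaussian has no $\ell$-growth, so only the oscillatory factor survives; integrating $\cos((\ell-k)F(\theta))$ with the saddle contribution localised at $\theta=\varphi_c$ reproduces $-((\ell-k)/\sqrt{1-2c})^{-1}\sin((\ell-k)F(\varphi_c))$, using $F'(0)=1/\sqrt{1-2c}$. Throughout, the difference $E_{k,\ell}-\tilde C_{\omega_c}(k,\ell)$ rather than either term alone is what appears because the $E$-contour $\Gamma_1$ and the $\tilde C$-contour $\Gamma_{\omega_c}$ differ by precisely the arc through the saddle region, so the difference is exactly the saddle-localised piece; the parts of both contours away from the saddle cancel (or are exponentially negligible by the $|G(i)|$-gap), which is where the hypothesis $F(\varphi_c)(\ell-k)\in\Lambda_\eps$ is used to keep the oscillatory denominator away from zeros and make the extraction of the leading term legitimate.

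For the error bound \eqref{tempR54} I would collect two sources of correction: the $1/\sqrt{|\ell|}$ from the standard next-order term in the Laplace/steepest-descent expansion of the remaining transverse integral at the saddle, and the $\varphi_c$ from replacing $F(\theta)$ and $\log|G(ie^{-i\theta})|$ by their leading terms over an interval of length $\varphi_c$ (the cubic remainder $R_7\theta^3$ and the quartic remainder in $\log|G|$ contribute relative errors $O(\varphi_c)$ after the rescaling), plus the $O((\xi_c-\xi))=O(\varphi_c^2)$ slack from Lemma~\ref{thetacxi} which is absorbed. The allowance $|\kappa_\ell|<3/|\ell|$ in $k/\ell=\tilde\alpha+\kappa_\ell$ only perturbs $\ell-k$ additively by $O(1)$, hence changes the arguments of the trigonometric and exponential factors by $O(1/|\ell|)$, which is dominated by the stated error. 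The main obstacle I anticipate is the uniform control of the steepest-descent deformation across the whole parameter range $(|\ell|+|k|)\varphi_c^{2-\gamma}\ge1$: when this product is only just $\ge 1$ the saddle $\varphi_c$ is close to the degenerate saddle and the width of the Gaussian is comparable to $\varphi_c$ itself, so one must choose the descent contour and estimate the non-local parts carefully enough that the same formula and the same error bound hold all the way down to the boundary where Proposition~\ref{Comegacinnerdisc} takes over — matching the two regimes along the surface $(|\ell|+|k|)\varphi_c^{2-\gamma}=1$ is the delicate bookkeeping, and is presumably why $\gamma\in(1/2,1]$ is imposed.
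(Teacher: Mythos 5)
Your overall plan — parametrise the relevant arcs by $w=e^{i\theta}$, expand $\log G(ie^{-i\theta})$ into $\log|G(\omega_c)|$ plus the Gaussian term $c'\theta^2$ and the oscillatory term $F(\theta)$, and extract a one-dimensional Laplace-type integral with prefactor $(-1)^k|G(\omega_c)|^{\ell+k}/(\pi(1-a)^2)$ — is the right machinery and does match the paper. The error sources you list for \eqref{tempR54}, and the reason $F(\varphi_c)(\ell-k)\in\Lambda_\eps$ is needed (a lower bound on $|\sin(\cdot)|$ via \eqref{tempz3e4} so the remainder is comparable to the leading term), are also correct. However, your central structural claim — that in all three branches the difference $E_{k,\ell}-\tilde C_{\omega_c}(k,\ell)$ is ``exactly the saddle-localised piece'' because $\Gamma_1$ and $\Gamma_{\omega_c}$ differ by an arc through the saddle — is not how the proof works, and cannot be made to work in the cases $\ell+k\le 2$.

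The contour representation $D_{\omega_c}(k,\ell)=E_{k,\ell}-\tilde C_{\omega_c}(k,\ell)$ as an integral over $\tilde\Gamma_{\omega_c}=\Gamma_1\setminus\Gamma_{\omega_c}$ is used only when $\ell+k>0$ (and even then it relies on the extension of \eqref{EKL1} to $k>0$ or $\ell>0$). When $\ell+k<0$, the paper does something entirely different: by Lemma~\ref{galphaalpha} and Proposition~\ref{propEKL}, $E_{k,\ell}=E_{|k|,|\ell|}$ is exponentially small, $|E_{k,\ell}|\le C_1e^{-C_2|\ell|}$, while $\tilde C_{\omega_c}(k,\ell)\sim|G(\omega_c)|^{\ell+k}\times(\cdots)$ is exponentially \emph{large} since $\ell+k<0$ and $|G(\omega_c)|<1$. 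The leading term therefore comes from $-\tilde C_{\omega_c}$ alone, not from any cancellation between the two contours, and the minus sign in front of $\int_0^{\varphi_c}$ in \eqref{tempd2wd} is just this overall minus, not ``the saddle becoming a minimum.'' Likewise, the $\ell+k\in\{-2,0,2\}$ branch (which under your normalisation forces $\tilde\alpha=-1$) again has $E$ exponentially negligible, and the $\sin$-formula is obtained from $\tilde C_{\omega_c}$ by \emph{integration by parts twice} on the arc $(\theta_c,\pi-\theta_c)$ (Proposition~\ref{Comegasin}), not by a local saddle contribution. A related conceptual slip: $\omega_c$ is a critical point of $g_\xi$, not of $\tilde g_\alpha$, whose saddle $iw_\alpha$ is off the unit circle for $\alpha\ne 1$; the contributions here are genuinely endpoint (Watson-type) contributions at $\theta=\theta_c$, and treating them as saddle contributions obscures why the cases split on the sign of $\ell+k$. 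Finally, you give no separate treatment of the degenerate direction $\tilde\alpha=1$ (where $\ell-k=O(1)$ so the ``oscillatory'' factor is essentially $1$); there the asymptotic reduces to Dawson/Mills-ratio functions $D_\pm$ and requires Lemmas~\ref{templeextendsinexp1dx} and \ref{templemmalpha13s}, and without this your claimed uniform error bound \eqref{tempR54} would not follow.
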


%For $i=1,2$ we label the functions $k_i, \ell_i$ evaluation at $(x_{\eps_1}^{(1)},y_{\eps_2}^{(2)})$ as $k^{1,2}_i, \ell^{1,2}_i$ and we label their evaluation at $(x_{\eps_1}^{(2)},y_{\eps_2}^{(1)})$ as $k^{2,1}_i, \ell^{2,1}_i$. 
%If we substitute \eqref{xdimervertex}, \eqref{ydimervertex} into \eqref{klh1} we get
%\begin{align}
%&k_1^{1,2}=r_2^{(1)}-r_2^{(2)}+\eps_1^{(1)}(2\ind_{a_F}(e_1)-1)-\ind_{a_F}(e_2)+h(\eps_1^{(1)},\eps_2^{(2)})\\
%&\ell_1^{1,2}=r_1^{(2)}-r_1^{(1)}+\eps_2^{(2)}(2\ind_{a_F}(e_2)-1)-\ind_{a_F}(e_2)
%\end{align}
%and
%\begin{align}
%&k_1^{2,1}=r_2^{(2)}-r_2^{(1)}+\eps_1^{(2)}(2\ind_{a_F}(e_2)-1)-\ind_{a_F}(e_1)+h(\eps_1^{(2)},\eps_2^{(1)})\\
%&\ell_1^{2,1}=r_1^{(1)}-r_1^{(2)}+\eps_2^{(1)}(2\ind_{a_F}(e_1)-1)-\ind_{a_F}(e_1),
%\end{align}
%which gives the corresponding expressions for $k_2^{1,2},...$ by \eqref{klh2}.

\section{Discussion}\label{SecDiscussion}
In this section we discuss the decay of the covariance between pairs of dimers in the transition region between the rough and smooth phases in the setting of Example \ref{example}. Recall the definitions \eqref{klh1}, \eqref{klh2}.
For $i=1,2$ label the evaluation of $k_i, \ell_i$  at $(x^{(1)},y^{(2)})$ as $k^{1,2}_i, \ell^{1,2}_i$ and label their evaluation at $(x^{(2)},y^{(1)})$ as $k^{2,1}_i, \ell^{2,1}_i$. We have  
\begin{align}
k_1^{1,2}=-r_2 -1, \ \ell_1^{1,2}=r_1-1, \ 
k_2^{1,2}=-r_2 , \  \ell_2^{1,2}=r_1
\label{tempk1l1}
\end{align}
and
\begin{align}
k_1^{2,1}=r_2 -1, \ \ell_1^{2,1}=-r_1-1, \ 
k_2^{2,1}=r_2 , \ \ell_2^{2,1}=-r_1.
\label{tempkl12}
\end{align}
This motivates us to think of
\begin{align}
(\ell,k)=r(-\sigma_1,\sigma_2)
\end{align}
(where $\sigma_1^2+\sigma_2^2=1$) in the sense that $r\sim\sqrt{r_1^2+r_2^2}$ gives the distance between two dimers and $\pm(\sigma_1,\sigma_2)$ gives the direction. 

Consider sequences $\delta_n, \delta^*_n, r_n>0$ such that $\delta_n\leq\delta^*_n$,
\begin{align}
 \delta_n,\delta^*_n\rarrow 0, \quad r_n\rarrow \infty \label{assump111}
\end{align}
and
\begin{align}
r_n=o(\sqrt{n\sqrt{\delta_n}}), \quad \delta_n n^{2/3}\rarrow\infty,
\label{assumprn}
\end{align}
as $n\to\infty$.
We assume that $r_{\min}\leq \sqrt{r_1^2+r_2^2}\leq r_n$ and $\delta_n\leq \xi_c-\xi\leq \delta^*_n$, where $r_{\min}$ is a positive number that will be taken large enough. For example we could take $\delta_n=n^{-1/2}$ and $\delta^*_n = 1/\log(\log(n))$. Note that \eqref{assumprn} implies
\begin{align}\label{rndom}
\frac 1{r_n}>>\frac 1{ \sqrt{n\sqrt{\xi_c-\xi}}}=o(n^{-1/3}).
\end{align}
 Recall Lemma \ref{thetacxi}, in particular \eqref{thetacintermsofxi} which gives a positive constant $d_3$ such that
 \begin{align}
\varphi_c=d_3\sqrt{\xi_c-\xi}+R_3(\xi)(\xi_c-\xi).
\end{align} 
 We see there are constants $c_1,C_1>0$ 
 \begin{align} 
 c_1\sqrt{\xi_c-\xi}\leq\varphi_c\leq C_1\sqrt{\xi_c-\xi}. \label{temppassa}
 \end{align}
The dimer-dimer correlation is given by formula \eqref{covarianceform}, which in the setting of Example \ref{example} becomes
\begin{align}
\text{corr}(e_1,e_2)=a^2K_{a,1}^{-1}(x^{(2)},y^{(1)})K_{a,1}^{-1}(x^{(1)},y^{(2)}).\label{corr12}
\end{align}
Let $(i,j)=(1,2)$ or $(2,1)$. 
By Theorem \ref{ThmInvKast}
\begin{align}
K_{a,1}^{-1}(x^{(i)},y^{(j)})=\K_{1,1}^{-1}(x^{(i)},y^{(j)})-C_{\omega_c}(x^{(i)},y^{(j)})+R_{0,0}(a,x^{(i)},y^{(j)})+O(e^{-Cn}).\label{temp23ws}
\end{align}
 Putting the coordinates in Example \ref{example} into \eqref{changeofcoordinates} yields Proposition \ref{uniformbound} as
\begin{align}
|R_{0,0}(a,x^{(i)},y^{(j)})|&\leq B'|G(\omega_c)|^{r_i-r_j}\frac{1}{\sqrt{n\sqrt{\xi_c-\xi}}}\label{remainderbd}\\
&\leq B'e^{Cr_n\varphi_c^2}|G(i)|^{r_i-r_j}\frac{1}{\sqrt{n\sqrt{\xi_c-\xi}}}.\nonumber
\end{align}
In the second inequality in \ref{remainderbd} we used the following approximation; Set $\alpha=0$ in \eqref{psithetaalpha} to get $\mcR[\psi(\theta_c)]=\log|G(\omega_c)|$, and then use \eqref{tempz4f4} to obtain
\begin{align}
|G(\omega_c)|^{r_i-r_j}= e^{(r_i-r_j)O(\varphi_c^2)}|G(i)|^{r_i-r_j}.\label{estGomegac}
\end{align}
Now we use \eqref{K_11} and \eqref{Comegacccc}  to write
\begin{align}
\K_{1,1}^{-1}(x^{(i)},y^{(j)})-C_{\omega_c}(x^{(i)},y^{(j)})=-\sqrt{-1}\big(E_{k_1^{i,j},l_1^{i,j}}-\tilde{C}_{\omega_c}(k_1^{i,j},l_1^{i,j})+a(E_{k_2^{i,j},l_2^{i,j}}-\tilde{C}_{\omega_c}(k_2^{i,j},l_2^{i,j}))\big)\label{tempz5f4}
\end{align}
where we recall  $l_1^{1,2}, k_1^{1,2}$... as given in \eqref{tempk1l1}, \eqref{tempkl12}.

\subsection{Dimer-dimer correlations parallel to $\vec{e}_1$}
\label{corrdire1}
We look at dimers separated along the diagonal, that is, we let $r_1=r_2=r>0$.  This implies $k_1^{2,1}=l_1^{1,2}=r-1$, $k_1^{1,2}=l_1^{2,1}=-r-1$ and $k_2^{1,2}=l_2^{2,1}=-r$, $k_2^{2,1}=l_2^{1,2}=r$. 
We obtain the formula
\begin{align}
K_{a,1}^{-1}(x^{(i)},y^{(j)})=-\sqrt{-1}\big(E_{r-1,-r-1}+aE_{-r,r}-\tilde{C}_{\omega_c}(r-1,-r-1)-a\tilde{C}_{\omega_c}(-r,r)\big)+R_{0,0}(a,x^{(i)},y^{(j)}),\label{tempz5f5}
\end{align}
where we used \eqref{symmeklcomegac}. We will use $\simeq$ to say that two expressions are equal if we neglect subdominant terms.
Take $r_{\min}$ so large that we can neglect remainders in Proposition \ref{propEKL}. Note that $|G(i)|=(1-\sqrt{1-2c})/\sqrt{2c}<1$.
Although other length scales can be analysed, for simplicity take $\delta_n=\xi_c-\xi=\delta_n^*$. 
 \begin{customthm}{3.2}\label{CorCov1}
Take a very large $n$, a small $\eps>0$ and fix $\gamma\in(1/2,1)$. Take $\xi_c-\xi=\delta_n$, which gives $\varphi_c\sim d\sqrt{\delta_n}$ with an explicit constant $d$. Assume that $r_{\min}<r<r_n$,  
\begin{align}
2r_n\varphi_c^{2-\gamma}\le 1, \label{TEMP212}
\end{align}
and $2r\varphi_c/\sqrt{1-2c}\in \Lambda_\eps$, where $ \Lambda_\eps$ is defined by \eqref{Lambdaeps}.
Then
\begin{equation}\label{CorrAs1}
%\text{\rm corr}(e_1,e_2)\simeq-\frac{a^2(a-|G(i)|^{-2})^2}{\pi^2(1-a)^4}\Big( \frac{\sqrt{\pi}(1-2c)^{3/4}}{\sqrt{8c}}\frac{|G(i)|^{2r}}{\sqrt{r}}-
%\frac{\sin(2r\varphi_c/\sqrt{1-2c})}{2r/\sqrt{1-2c}} \Big)^2
\text{\rm corr}(e_1,e_2)\simeq-a^2\Big (\frac{|G(i)|^{2r}\sqrt{1-a}}{\sqrt{8\pi r a \sqrt{1+a^2}}}+\frac{a-|G(i)|^{-2}}{(1-a)^2\pi}\frac{\sin(2r\varphi_c/\sqrt{1-2c})}{2r/\sqrt{1-2c}}\Big)^2
\end{equation}
\end{customthm}

%\begin{corollary}\label{CorCov1}
%Take a very large $n$, a small $\eps>0$ and fix $\gamma\in(1/2,1)$. Take $\xi_c-\xi=\delta_n$, which gives $\varphi_c\sim d\sqrt{\delta_n}$ with an explicit constant $d$. Assume that $r_{\min}<r<r_n$,  
%\begin{align}
%2r_n\varphi_c^{2-\gamma}\le 1, \label{TEMP212}
%\end{align}
%and $2r\varphi_c/\sqrt{1-2c}\in \Lambda_\eps$, where $ \Lambda_\eps$ is defined by \eqref{Lambdaeps}.
%Then
%\begin{equation}\label{CorrAs1}
%\text{\rm corr}(e_1,e_2)\simeq-a^2\Big (\frac{|G(i)|^{2r}\sqrt{1-a}}{\sqrt{8\pi r a \sqrt{1+a^2}}}+\frac{a-|G(i)|^{-2}}{(1-a)^2\pi}\frac{\sin(2r\varphi_c/\sqrt{1-2c})}{2r/\sqrt{1-2c}}\Big)^2
%\end{equation}
%\end{corollary}

\begin{proof}
It follows from our assumptions that $r_n\varphi_c^2=o(1)$, together with \eqref{remainderbd} and \eqref{rndom} we have the estimate
\begin{equation*}
R_{0,0}(a,x^{(i)},y^{(j)})=o(n^{-1/3}).
\end{equation*}
An evaluation using \eqref{g''alphaas} gives
\begin{align}
E_{-r,r}=E_{r,r}\simeq \frac{(-1)^{r}|G(i)|^{2r}}{\sqrt{8\pi c \ r  (1-2c)^{1/2}}(1+a^2)}.\label{temp2rds}
\end{align}
We use
lemma \ref{alphaalpha'} with $\alpha=(r-1)/(r+1)$, $\alpha'=1$ to get $\mcR[\tilde{g}_{\alpha}(iw_\alpha)]=\mcR[\tilde{g}_1(i)]-2\log|G(i)|/r+O(1/r^2)$. We can use this in Proposition \ref{propEKL}, to get
\begin{align}
E_{r-1,-r-1}=E_{r-1,r+1}\simeq-\frac{(-1)^r|G(i)|^{2r}}{\sqrt{8\pi c r(1-2c)^{1/2}}(1+a^2)}.
\end{align}
We now apply Proposition \ref{Comegacinnerdisc} to the $\tilde{C}_{\omega_c}$ appearing in \eqref{tempz5f5}. By \eqref{temppassa}, the remainder $R_{61}$ in \eqref{tempCom12} is very small compared to $1$ and so we neglect this term. It follows from Proposition \ref{Comegacinnerdisc} that we can write
\begin{align}
\tilde{C}_{\omega_c}(r-1,-r-1)+a\tilde{C}_{\omega_c}(-r,r)\simeq (-1)^{r}\frac{(a-|G(i)|^{-2})}{(1+a^2)\pi}\frac{\sin(2r\varphi_c/\sqrt{1-2c})}{2r\sqrt{1-2c}} \label{Comegactemp21}
\end{align}
where  $a-|G(i)|^{-2}<0$.
Now from \eqref{tempz3e4} we obtain
\begin{align}
|\tilde{C}_{\omega_c}(r-1,-r-1)+a\tilde{C}_{\omega_c}(-r,r)|\geq C\min(\frac{\varphi_c}{1-2c},\frac{1}{2r}).\label{temptempbound345f3}
\end{align}
Combining this with \eqref{rndom} we see that the error term $o(n^{-1/3})$ can be neglected. A computation using \eqref{tempz5f5} now gives \eqref{CorrAs1}.
\end{proof}

Observe how the behaviour of 
\begin{align}
\frac{\sin(2r\varphi_c/\sqrt{1-2c})}{2r/\sqrt{1-2c}}
\end{align}
changes as a function of $r$. If $2r\varphi_c$ is small compared to $\sqrt{1-2c}$ it is close to the constant $\varphi_c$ ($n$ is fixed),
whereas for $2r\varphi_c>\sqrt{1-2c}$ it starts to slowly oscillate and decay like $1/r$.
This leads to three regimes for the correlation as a function of $r$

\noindent \emph{Regime I}.
As $r$ varies from $r_{\min}$ to $\frac{1}{\log|G(i)|^{-2}}\log\frac{1}{\varphi_c}$,
\begin{align}
\text{corr}(e_1,e_2)\simeq -\frac{a(1-a)}{8\pi\sqrt{1+a^2}}\frac{|G(i)|^{4r}}r,
\end{align}
which has exponential decay in $r$. 

\noindent \emph{Regime II}. As $r$ varies from $\frac{1}{\log|G(i)|^{-2}}\log\frac{1}{\varphi_c}$ to small compared to $\sqrt{1-2c}/(2\varphi_c)$ then 
\begin{align}
\text{corr}(e_1,e_2)\simeq -\Big(\frac{a(a-|G(i)|^{-2})}{\pi(1-a)^2}\Big)^2\varphi_c^2
\end{align}
which has no decay, so the correlation is constant.

\noindent \emph{Regime III}.
As $r$ varies from $\sqrt{1-2c}/(2\varphi_c)$ to $r_n$, 
\begin{align}
\text{corr}(e_1,e_2)\simeq -\Big(\frac{a(a-|G(i)|^{-2})}{2\pi (1-a)\sqrt{1+a^2}}\Big)^2\frac{\sin^2(2r\varphi_c/\sqrt{1-2c})}{r^2} 
\end{align}
which oscillates and has a decay like $1/r^2$.

\subsection{Dimer-dimer correlations parallel to $\vec{e}_2$}
Now we look at dimers separated along the anti-diagonal. So we instead take $r_1=-r_2=r>0$ for which we have $k_1^{1,2}=\ell_1^{1,2}=r-1$, $k_1^{2,1}=\ell_1^{2,1}=-r-1$, $k_2^{2,1}=l_2^{2,1}=-r$, $k_2^{1,2}=l_2^{1,2}=r$. So we obtain
\begin{align}
K^{-1}_{a,1}(x^{(i)},y^{(j)})=&-\sqrt{-1}\big(E_{(-1)^jr-1,(-1)^jr-1}+aE_{(-1)^jr,(-1)^jr}\label{remtemp}\\&-\tilde{C}_{\omega_c}((-1)^jr-1,(-1)^jr-1)-a\tilde{C}_{\omega_c}((-1)^jr,(-1)^jr)\big)\nonumber\\&+R_{0,0}(a,x^{(i)},y^{(j)})+O(e^{-Cn}). \nonumber
\end{align}
 \begin{customthm}{3.3}
Take a very large $n$, a small $\eps>0$ and fix $\gamma\in(1/2,1)$. Take $\xi_c-\xi=\delta_n$, which gives $\varphi_c\sim d\sqrt{\delta_n}$ with an explicit constant $d$. If $r_{\min}<r<\varphi_c^{\gamma-2}/2$ then
\begin{align}
\text{\rm corr}(e_1,e_2)\simeq\frac{a(1-a)}{8\pi  \sqrt{1+a^2}}\frac{|G(i)|^{4r}}{r}+\frac{a^{3/2}(a-|G(i)|^{-2})^2}{\sqrt{8\pi^3(1-a)^5\sqrt{1+a^2}}}\frac{\varphi_c}{\sqrt{r}}.\label{corrtempc2s}
\end{align}
If instead $\varphi_c^{\gamma-2}/2\leq r<r_n$ then
\begin{align}
\text{\rm corr}(e_1,e_2)\simeq \frac{a^2(a-|G(\omega_c)|^{-2})^2}{2\pi^2ra(1-a)\sqrt{1+a^2}}\int_{\sqrt{2rc'}\varphi_c}^\infty\int_0^{\sqrt{2rc'}\varphi_c}e^{t_1^2-t_2^2}dt_1dt_2\label{corr1ws}
\end{align}
where $c'=a\sqrt{1+a^2}/(1-a)^2$.
\begin{proof}
We can use \eqref{temp2rds}
 \begin{align}
 E_{(-1)^jr-1,(-1)^jr-1}+aE_{(-1)^jr,(-1)^jr}=\frac{(-1)^r|G(i)|^{2r}}{\sqrt{8\pi r\  c(1-2c)^{1/2}}(1+a^2)}\Big(a-|G(i)|^{-2(-1)^j}+O(1/r)\Big) \label{temp22ea}
 \end{align}
 where the remainder $O(1/r)$ comes from estimating the difference between $E_{(-1)^jr-1,(-1)^jr-1}$ and $E_{(-1)^jr,(-1)^jr}$. Clearly \eqref{temp22ea} decays exponentially.
 Take $r_{\min}$ large enough that the remainders in Proposition \ref{Comegacinnerdisc} are small. We can use Proposition \ref{Comegacinnerdisc} to write that for $r\leq1/(2\varphi_c^{2-\gamma} )$,
\begin{align}
\tilde{C}_{\omega_c}((-1)^jr-1,(-1)^jr-1)+a\tilde{C}_{\omega_c}((-1)^jr,(-1)^jr)
\simeq(-1)^r|G(i)|^{(-1)^j2r}\frac{(a-|G(i)|^{-2})\varphi_c}{\pi(1-a)^2}. \label{temp21313}
\end{align}
Observe that there are no oscillations.
By \eqref{assumprn} and \eqref{remainderbd} we see the remainder in \eqref{remtemp} is small compared to \eqref{temp21313} for $n$ large.
 So 
\begin{align}
\text{corr}(e_1,e_2)\simeq&-a^2\Big( \frac{|G(i)|^{2r}(a-|G(i)|^{-2})}{\sqrt{8\pi c \ r  (1-2c)^{1/2}}(1+a^2)}-|G(i)|^{2r}\frac{(a-|G(i)|^{-2})\varphi_c}{\pi(1-a)^2} \Big)\nonumber\\&\quad\times\Big( \frac{|G(i)|^{2r}(a-|G(i)|^{2})}{\sqrt{8\pi c \ r  (1-2c)^{1/2}}(1+a^2)}-|G(i)|^{-2r}\frac{(a-|G(i)|^{-2})\varphi_c}{\pi(1-a)^2} \Big),\nonumber
\end{align} 
now \eqref{corrtempc2s} follows by using \eqref{temp24dsa} with $\theta=-\pi/4$ which gives
\begin{align}
(a-|G(i)|^{-2})(|G(i)|^2-a)=(1-a)^2
\end{align}
and since $1/\sqrt{r}>>\varphi_c$ in this regime.

If $1/(2\varphi_c^{2-\gamma})\leq r\leq r_n $, by substitution in Proposition \ref{propeklcomegac} we obtain
\begin{align}
&E_{r,r}-\tilde{C}_{\omega_c}(r,r)\simeq\frac{(-1)^r|G(\omega_c)|^{2r}}{\pi(1-a)^2}\frac{D_-(\sqrt{2rc'}\varphi_c)}{\sqrt{2rc'}}
\label{temp2essa}\\
&E_{-r,-r}-\tilde{C}_{\omega_c}(-r,-r)\simeq-\frac{(-1)^r|G(\omega_c)|^{-2r}}{\pi(1-a)^2}\frac{D_+(\sqrt{2rc'}\varphi_c)}{\sqrt{2rc'}}
\label{temp23we}
\end{align}
where 
\begin{align}
D_+(z)=e^{-z^2}\int_0^ze^{t^2}dt, &&
D_-(z)=e^{z^2}\int_z^\infty e^{-t^2}dt.
\end{align} $D_+, D_-$ are known as the Dawson function and Mills ratio, respectively. These functions are related to the imaginary and complementary error functions. We use 7.8.7 in \cite{DLMF} to write
\begin{align}
D_+(z)<\frac{1-e^{-z^2}}{z}\label{Dupper}
\end{align}
for $z>0$. We apply the upper bound $|e^{-z^2}-1|\leq z^2e^{z^2}$ to \eqref{Dupper} on $(0,1]$. We also have that for the same interval,  the lower bounds $e^{- z^2}\geq e^{-1}$, and $t\geq0$, $e^{t^2}\geq 1$ hold, hence
 \begin{align}
e^{-1}z \leq D_+(z)< ez.
 \label{dawsonest1}
 \end{align} For  $z\geq 1$, inequality (1) in \cite{C.T} gives a lower bound and we apply the upper bound $1-e^{-z^2}\leq 1-e^{-1}$ to \eqref{Dupper}  to get
\begin{align}
\frac{1}{2z}\leq D_+(z)< \frac{1-e^{-1}}{z}.\label{dawsonest2}
\end{align} 
By 7.8.3 in \cite{DLMF} we have
\begin{align}
\frac{\sqrt{\pi}}{2\sqrt{\pi}z+2}\leq D_-(z)< \frac{1}{z+1}\label{millest}
\end{align}
for $z\geq 0$.
Hence 
\begin{align}
&\frac{D_+(\sqrt{2rc'}\varphi_c)}{\sqrt{2rc'}}\geq \begin{cases}
e^{-1}\varphi_c, &\sqrt{2rc'}\varphi_c\in(0,1]\\
(4rc'\varphi_c)^{-1} & \sqrt{2rc'}\varphi_c\in(1,\infty),
\label{tempm23e2e}
\end{cases}\\
&\frac{D_-(\sqrt{2rc'}\varphi_c)}{\sqrt{2rc'}}\geq \sqrt{\frac{\pi}{8c'}}\frac{1}{\sqrt{2\pi c'}r\varphi_c+\sqrt{r}}
\label{tempm23we}
\end{align}
We can use \eqref{assumprn} to show that both \eqref{tempm23e2e},\eqref{tempm23we} are much greater than $1/\sqrt{n\sqrt{\xi_c-\xi}}$. Hence it follows from \eqref{temp2essa},  \eqref{temp23we} and \eqref{remainderbd} that the remainders in \eqref{remtemp} are negligible.
So we have
\begin{align}
K_{a,1}^{-1}(x^{(1)},y^{(2)})&\simeq -i\big(E_{r,r}-\tilde{C}_{\omega_c}(r,r)+a(E_{r-1,r-1}-\tilde{C}_{\omega_c}(r-1,r-1)\big)\\
&\simeq -i\Big(\frac{(-1)^r|G(\omega_c)|^{2r}}{\pi(1-a)^2}\Big[\frac{D_-(\sqrt{2rc'}\varphi_c)}{\sqrt{2rc'}}(a-|G(\omega_c)|^{-2})+O(\frac{1}{r})\Big]\Big)\label{teplml2}
\end{align}
and
\begin{align}
K_{a,1}^{-1}(x^{(2)},y^{(1)})&\simeq i\Big(\frac{(-1)^r|G(\omega_c)|^{-2r}}{\pi(1-a)^2}\Big[\frac{D_+(\sqrt{2rc'}\varphi_c)}{\sqrt{2rc'}}(a-|G(\omega_c)|^{-2})+O(\frac{1}{r})\Big]\Big).\label{3mplwf}
\end{align}
The $O(1/r)$ error appearing in \eqref{teplml2}  arises from the difference in evaluating $D_-(\sqrt{2 \cdot c'}\varphi_c)/\sqrt{2\cdot c'}$ at $r$ and $r-1$, similarly for $D_+$ and \eqref{3mplwf}. By \eqref{tempm23e2e} and \eqref{tempm23we} these errors are negligible in the current regime. Hence by \eqref{corr12} we have \eqref{corr1ws} for $1/(2\varphi_c)^{2-\gamma}\leq r \leq r_n$
\end{proof}
 \end{customthm}

We bring forward the asymptotics contained in (42.6) from \cite{Atlas} and (7.12.1), (7.6.2) in \cite{DLMF} (for which we note $D_-(z)=\sqrt{\pi}e^{z^2}\text{erfc}(z)/2$). For $z>0$,
\begin{align}
D_+(z)=\begin{cases} z+O(z^3) & \text{for $z$ small} \\ \frac{1}{2z} +O(\frac{1}{z^3}) &\text{for $z$ large} \end{cases},
\quad D_-(z)=\begin{cases}\frac{\sqrt{\pi}}{2}+O(z)&\text{for $z$ small} \\\frac{1}{2z}+O(\frac{1}{z^3})&\text{for $z$ large}\end{cases}.
\label{D-D+asymps}
\end{align}

We see the following distinct decay rates.

\noindent \emph{Regime I}.
As $r$ varies from $r_{\min}$ to $\frac{1}{\log|G(i)|^{-4}}\log\frac{1}{\varphi_c}$ ,
\begin{align}
\text{corr}(e_1,e_2)\simeq \frac{a (1-a)}{8\pi  \sqrt{1+a^2}}\frac{|G(i)|^{4r}}{r}
\end{align} 
 which decays exponentially. 
 
 For the next regime, we use \eqref{corrtempc2s} and \eqref{estGomegac}  when $r$ varies from $\frac{1}{\log|G(i)|^{-4}}\log\frac{1}{\varphi_c}$ to $1/(2\varphi_c^{2-\gamma})$. We then use \eqref{D-D+asymps} on \eqref{corr1ws} when $r$ varies from $1/(2\varphi_c^{2-\gamma})$ to small compared to $1/(2c'\varphi_c^2)$. These two sub-regimes have the same leading order term which is just an artefact of how we proved the asymptotic formulas. We combine them into one regime.
 
 \noindent \emph{Regime II}.
As $r$ varies from $\frac{1}{\log|G(i)|^{-4}}\log\frac{1}{\varphi_c}$ to small compared to $1/(2c'\varphi_c^2)$,
\begin{align}
\text{corr}(e_1,e_2)\simeq\frac{a^{3/2}(a-|G(\omega_c)|^{-2})^2}{\sqrt{8\pi^{3} (1-a)^5\sqrt{1+a^2}}} \frac{\varphi_c}{\sqrt{r}} 
\end{align}
which decays like $1/\sqrt{r}$. 

 \noindent \emph{Regime III}.
As $r$ varies from small compared to $1/(2c'\varphi_c^2)$ to large compared to $1/(2c'\varphi_c^2)$, corr$(e_1,e_2)$ is given by \eqref{corr1ws} and the integral is bounded below by a positive constant. Hence corr$(e_1,e_2)$ decays like $1/r$.

For the next regime, we use \eqref{D-D+asymps} in \eqref{corr1ws}.

\noindent \emph{Regime IV}.
As $r$ varies from large compared to $1/(2c'\varphi_c^2)$ to $r_n$,
\begin{align}
\text{corr}(e_1,e_2)\simeq\frac{(1-a)^2(a-|G(\omega_c)|^{-2})^2}{(4\pi)^2(1+a^2)}\frac{1}{(\varphi_cr)^2}
\end{align}
which decays like $1/r^2$. One can also show that both $a-|G(i)|^{-2}<0$ and $|G(i)|^2-a<0$, hence we see that the correlation is positive in the four regimes above.\\
\subsection{Dimer-dimer correlations at an arbitrary angle.}

Now we analyse a third case which interpolates between the two previous cases. We consider $2r_n\varphi_c^{2-\gamma}\leq1$ for a fixed $\gamma\in(1/2,1)$.  For an arbitrarily small fixed $\eps>0$, consider 
\begin{align}
(r_1,r_2)=\sqrt{2}r(\cos(\theta),\sin(\theta))=:r(\sigma_1,\sigma_2),\quad \theta\in [-\pi/4,-\eps]\cup [\eps,\pi/4-\eps].\label{sig1sig2}
\end{align}
if $\theta=\pi/4$ this would correspond to separating the two dimers along the diagonal, when $\theta=-\pi/4$  this corresponds to the anti-diagonal. In particular, it is interesting to examine the behaviour for $\theta$ close to $-\pi/4$.

Let $\sigma:=\sigma_2/\sigma_1=\tan(\theta)$, so $-1\leq \sigma<1$. We now have  $l_1^{1,2}+1=l_2^{1,2}=r\sigma_1=-l_2^{2,1}=-l_1^{2,1}-1$ and $k_1^{2,1}+1=k_2^{2,1}=r\sigma_2=-k_2^{1,2}=-k_1^{1,2}-1$. 
To state a formula for the correlation in the current setting we require the definitions of some functions, these functions come into the rate of exponential decay and leading term constants. Define $\tilde{\sigma}=\text{sign}(\theta)$, sign$(0)=1$ and note $\sigma_1>0$. Define the piecewise continuous functions
\begin{align}
 &h_{\pm}(\theta)=\mcR[\tilde{g}_{|\sigma|}(iw_{|\sigma|})]\pm(1-\sigma)\log|G(i)|,\\
 &g_{\pm}(\theta)=f(|\sigma|)\frac{a-|G(iw_{|\sigma|})|^{\pm1}|G(1/(iw_{|\sigma|}))|^{\mp\tilde{\sigma}}}{a-|G(i)|^{-2}}
\end{align}
where $f(\alpha)=(\sqrt{2\pi|g''_\alpha(iw_\alpha)|}w_\alpha\sqrt{w_\alpha^2-2c}\sqrt{1/w_\alpha^2-2c})^{-1}$, $\alpha\in(0,1]$.

 \begin{customthm}{3.4}
Take a very large $n$, a small $\eps>0$ and fix $\gamma\in(1/2,1)$. Take $\xi_c-\xi=\delta_n$, which gives $\varphi_c\sim d\sqrt{\delta_n}$ with an explicit constant $d$. Assume that $r_{\min}<r<r_n$,  
\begin{align}
2r_n\varphi_c^{2-\gamma}\le 1, \label{TEMP211a2}
\end{align}
and $r(\sigma_1+\sigma_2)\varphi_c/\sqrt{1-2c}\in \Lambda_\eps$, where $ \Lambda_\eps$ is defined by \eqref{Lambdaeps}. Then 
\begin{align}
\text{\rm corr}(e_1,e_2)\simeq-\frac{a^2(a-|G(i)|^{-2})^2}{(1+a^2)^2}\prod_{q=\pm}\Big(\frac{g_q(\theta)}{\sqrt{r\sigma_1}}e^{r\sigma_1h_q(\theta)}-\frac{\sin(r(\sigma_1+\sigma_2)\varphi_c/\sqrt{1-2c})}{r\pi(\sigma_1+\sigma_2)\sqrt{1-2c}}\Big).\label{corrarbangle}
\end{align}
\begin{proof}
We use Theorem \ref{ThmMain} to establish \eqref{temp2kss} and \eqref{t3ed}.  By \eqref{K_11}, we have
\begin{align}
\K_{1,1}^{-1}(x^{(1)},y^{(2)})=-i(E_{|r\sigma_2|+\tilde{\sigma}_2,r\sigma_1-1}+aE_{|r\sigma_2|,r\sigma_1}),\label{templh1r}\\
\K_{1,1}^{-1}(x^{(2)},y^{(1)})=-i(E_{|r\sigma_2|-\tilde{\sigma}_2,r\sigma_1+1}+aE_{|r\sigma_2|,r\sigma_1}).\nonumber
\end{align}

Let $\alpha^{1,2}=(|r\sigma_2|+\tilde{\sigma}_2)/(r\sigma_1-1)$ and observe that $C_\eps\leq|\alpha^{1,2}|\leq 1$, where $C_\eps$ is small, for large $r$ by \eqref{sig1sig2}.  By \eqref{EklAs1}, we have
\begin{align}
\K_{1,1}^{-1}(x^{(1)},y^{(2)})\simeq -i\frac{f(\alpha^{1,2})}{\sqrt{r\sigma_1-1}}\frac{(-1)^{|r\sigma_2|+\tilde{\sigma}_2}}{1+a^2}e^{(r\sigma_1-1)\mcR[\tilde{g}_{\alpha^{1,2}}(iw_{\alpha^{1,2}})]}-ia\frac{f(|\sigma|)}{\sqrt{r\sigma_1}}\frac{(-1)^{|r\sigma_2|}}{1+a^2}e^{r\sigma_1\mcR[\tilde{g}_{|\sigma|}(iw_{|\sigma|})]}.\label{temp2kss}
\end{align}
We have $\alpha^{1,2}-|\sigma|=(|\sigma|+\tilde{\sigma}_2)/(r\sigma_1)+O(1/r^2)$ and $\frac{f(\alpha^{1,2})}{\sqrt{r\sigma_1-1}}=\frac{f(|\sigma|)}{\sqrt{r\sigma_1}}+O(1/r)$ so 
\begin{align}
\K_{1,1}^{-1}(x^{(1)},y^{(2)})\simeq -i\frac{f(|\sigma|)(-1)^{r|\sigma_2|}}{\sqrt{r\sigma_1}(1+a^2)}|G(iw_{|\sigma|})|^{r\sigma_1}|G(1/(iw_{|\sigma|}))|^{r|\sigma_2|}\Big (a-|G(iw_{|\sigma|})|^{-1}|G(1/(iw_{|\sigma|}))|^{\tilde{\sigma}}\Big)
\end{align}
by lemma \ref{alphaalpha'}.
Similarly,
\begin{align}
\K_{1,1}^{-1}(x^{(2)},y^{(1)})\simeq -i\frac{f(|\sigma|)(-1)^{r|\sigma_2|}}{\sqrt{r\sigma_1}(1+a^2)}|G(iw_{|\sigma|})|^{r\sigma_1}|G(1/(iw_{|\sigma|}))|^{r|\sigma_2|}\Big (a-|G(iw_{|\sigma|})||G(1/(iw_{|\sigma|}))|^{-\tilde{\sigma}}\Big).
\end{align}
We have
\begin{align}
C_{\omega_c}(x^{(i)},y^{(j)})\simeq-i(a-|G(i)|^{-2})\frac{(-1)^{-r\sigma_2}|G(i)|^{r_i-r_j}}{(1+a^2)\pi}\frac{\sin(r(\sigma_1+\sigma_2)\varphi_c/\sqrt{1-2c})}{r(\sigma_1+\sigma_2)\sqrt{1-2c})}.\label{t3ed}
\end{align}
We see the remainder $R_{0,0}$ in \eqref{temp23ws} is much smaller than \eqref{t3ed} via \eqref{remainderbd}.
\end{proof}
 \end{customthm}

Note the following facts about $h_+$ and  $h_-$, since $(1+|\sigma|)\log|G(i)|\leq(1-\sigma)\log|G(i)|\leq (1-|\sigma|)\log|G(i)|$ lemma \ref{galphaalpha} gives
\begin{align}
h_+\leq h_-\leq 0.
\end{align}
Since $h_+-h_-=2(1-\sigma)\log|G(i)|$ we have $h_+(\theta)<0$, it is also easy to compute $h_-(-\pi/4)=0$. In fact, one can show that there is a positive $c_5>0$ such that $h_-(\theta)=-c_5(\theta+\pi/4)^2+O(\theta+\pi/4)^3$.

We have the following decay regimes. Let $m>0$ be such that $m/\sqrt{1-2c}$ is small.

\noindent \emph{Regime I}.
As $r$ varies from $r_{\min}$ to $\frac{1}{\sigma_1|h_+(\theta)|}\log(\frac{1}{\varphi_c})$.
\begin{align}
\text{corr}(e_1,e_2)\simeq -\text{sign}(\theta)\frac{ac\sqrt{1-2c w_{|\sigma|}^2}}{2\pi \sqrt{(1-\sigma^2)^2+16c^2\sigma^2}}\frac{e^{r\sigma_1(h_-(\theta)+h_+(\theta))}}{r\sigma_1}\label{regime1general}
\end{align}
which decays exponentially. Note we used \eqref{g-g+} here.

\noindent \emph{Regime II}.
As $r$ varies from $\frac{1}{\sigma_1|h_+(\theta)|}\log(\frac{1}{\varphi_c})$ to $\min(\frac{1}{\sigma_1|h_-(\theta)|}\log(\frac{1}{\varphi_c}),m/((\sigma_1+\sigma_2)\varphi_c)$
\begin{align} 
\simeq \frac{a^2(a-|G(i)|^{-2})^2g_-(\theta)}{(1-a)^2(1+a^2)}\frac{\varphi_ce^{r\sigma_1 h_-(\theta)}}{\pi\sqrt{r\sigma_1}}
\end{align}
which decays exponentially for $\theta\neq -\pi/4$, and like $1/\sqrt{r}$ for $\theta=-\pi/4$.

\noindent \emph{Regime III}.
As $r$ varies from $\min(\frac{1}{\sigma_1|h_-(\theta)|}\log(\frac{1}{\varphi_c}),m/((\sigma_1+\sigma_2)\varphi_c))$ to $m/((\sigma_1+\sigma_2)\varphi_c)$
\begin{align}
\simeq -\Big(\frac{a(a-|G(i)|^{-2})}{\pi(1-a)^2}\Big)^2\varphi_c^2
\end{align}
which has zero or no decay.

\noindent \emph{Regime IV}.
As $r$ varies from $m/((\sigma_1+\sigma_2)\varphi_c)$ to $r_n$,
\begin{align}
\simeq -\Big(\frac{a(a-|G(i)|^{-2})}{\pi(1-a)\sqrt{1+a^2}}\Big)^2\frac{\sin^2(r(\sigma_1+\sigma_2)\varphi_c/\sqrt{1-2c})}{r^2(\sigma_1+\sigma_2)^2}.
\end{align}
which oscillates and decays like $1/r^2$.

Hence we see that for $\theta>-\pi/4$ close to $-\pi/4$, there are two distinct exponential decay rates; a faster exponential decay in Regime I followed by a slower exponential decay in Regime II. As we vary the angle $\theta$ closer to $-\pi/4$, Regime II increases in size, and the decay in Regime II changes from exponential decay to a decay like $1/\sqrt{r}$.
\begin{remark}
A simpler expression can be given for the product $g_-(\theta)g_+(\theta)$ that appears when one expands the product over $q$ in \eqref{corrarbangle}. This comes into the formula for Regime I \eqref{regime1general}. Through some manipulation one can show the identity
\begin{align}
\label{temp24dsa}&(a-|G(iw_{|\sigma|})||G(1/(iw_{|\sigma|}))|^{-\tilde{\sigma}})(a-|G(iw_{|\sigma|})|^{-1}|G(1/(iw_{|\sigma|}))|^{\tilde{\sigma}})\\&=\tilde{\sigma}(1+a^2)\sqrt{w_{|\sigma|}^2-2c}\sqrt{1/w_{|\sigma|}^2-2c}.\nonumber
\end{align}
The identity \eqref{temp24dsa} together with $w_{|\sigma|}^2\sqrt{1/w_{|\sigma|}^2-2c}=\alpha\sqrt{w_{|\sigma|}^2-2c}$ and \eqref{g''alphaas} can be used to obtain 
\begin{align}
g_-(\theta)g_+(\theta)=\tilde{\sigma}\frac{(1+a^2)\sqrt{1-2c w_{|\sigma|}^2}}{2\pi \sqrt{(1-\sigma^2)^2+16c^2\sigma^2}}\frac{1}{(a-|G(i)|^{-2})^2}\label{g-g+}
\end{align}
\end{remark}

\section{Asymptotics of $\K_{1,1}^{-1}$}
In \cite{C/J}, the asymptotics of $E_{k,\ell}$ are computed when $\pm \ell$ is close to $k$, i.e when the vertices lie close to diagonal or cross-diagonal from one another. Here we extend the asymptotics to arbitrary angles. Note from the symmetry relation we can restrict our attention to the case $\abs{k}\leq\abs{\ell}$ without loss of generality. Let  $\alpha = \frac{|k|}{|\ell|}$ and write
\begin{align}
E_{k,\ell}=\frac{i^{-|k|-|\ell|}}{2(1+a^2)2\pi i}\int_{\Gamma_1} \frac{dw}{w}\frac{\exp[{\abs{\ell} \widetilde{g}_{\alpha}(w)]}}{\sqrt{w^2+2c}\sqrt{1/w^2+2c}}
\label{ekl}
\end{align}
with saddle point function $\widetilde{g}_{\alpha}(w)$.

The integrand in \eqref{ekl} is analytic in the region $\C\setminus \left(i[-\sqrt{2c},\sqrt{2c}]\cup i[1/\sqrt{2c},\infty)\cup i(-\infty,-1/\sqrt{2c}]\right)$, so by Cauchy's deformation theorem, for $r>1$, we can deform $ \Gamma_1$ to $\Gamma_{w_\alpha}(r)=\cup_{j=0}^3 \gamma_j(r)$ where 
\begin{align*}
\gamma_0(r)=\{r+it: t\in[-w_\alpha,w_\alpha]\} && \gamma_1(r) = \{-t+iw_\alpha: t\in (-r,r)\},
\end{align*}
as sets $\gamma_3(r)=-\gamma_1(r)$, $\gamma_2(r)=-\gamma_0(r)$ and as curves each $\gamma_j(r)$ has positive orientation counter-clockwise around the origin. For $\eta_1,\eta_2, r\geq 0$, where $\eta_2$ is small define $\gamma_{\eta_1,\eta_2}(r)$ as
\begin{align}
\{\frac{i}{\sqrt{2c}}-\eta_2 i +t:t\in[0,\eta_2)\}\cup \{\frac{i}{\sqrt{2c}}+t i+\eta_2:t\in[-\eta_2,\eta_1)\} \cup\{\frac{i}{\sqrt{2c}}+i\eta_1+\eta_2+t:t\in[0,r)\}
\end{align}
which is a path from $i/\sqrt{2c}+i\eta_1+\eta_2+r$ to $i/\sqrt{2c}-\eta_2 i$ consisting of three straight lines. Note we define the orientation of this curve to be in the direction of the path beginning at $i/\sqrt{2c}+i\eta_1+\eta_2+r$ and ending at $i/\sqrt{2c}-\eta_2 i$. 

Let $\mathcal{R}[z]$ and $\mathcal{I}[z]$ denote the real and imaginary part of a complex number $z$, respectively.
\begin{lemma} For $k,\ell\in \Z$,
\label{eklready}
\begin{align}
E_{k,\ell}&=\frac{i^{-|k|-|\ell|}}{2(1+a^2)2\pi i}\int_{\gamma_1(\infty)} \frac{dw}{w}\frac{(1+(-1)^{|\ell|+|k|})\exp[{\abs{\ell} \widetilde{g}_{\alpha}(w)]}}{\sqrt{w^2+2c}\sqrt{1/w^2+2c}}.
\label{tempz4e6}
\end{align}
If $\ell+k$ is even, $\eta_1,\eta_2\geq 0$ then 
\begin{align}
E_{k,\ell}=\frac{i^{-|k|-|\ell|}}{(1+a^2)2\pi }\mcI\Big [\int_{\gamma_{\eta_1,\eta_2}(\infty)} \frac{dw}{w}\frac{\exp[{\abs{\ell} \widetilde{g}_{\alpha}(w)]}}{\sqrt{w^2+2c}\sqrt{1/w^2+2c}}\Big ].
\label{tempz4e6e4}
\end{align}
\begin{proof}
One can compute the following expansions for $G$, for $w$ large
\begin{align}
G(w)=-\sqrt{\frac{c}{2}}\frac{1}{w}+O\Big(\frac{1}{w^2}\Big)
\label{Gwlarge}
\end{align}
and for $w$ small
%-1+\frac{w}{\sqrt{2c}}(1-i/\sqrt{2c])+O(w^2) , \mcR(w)>0
\begin{align}
G(w)=
\begin{cases}
-1+O(w), & \mcR(w)>0\\
1+w+O(w^2), &\mcR(w)\leq 0.
\end{cases}
\label{Gwsmall}
\end{align}
Take \eqref{ekl} and  perform the deformation of the contour described above and in the limit as $r\rarrow \infty$, the contributions from $\gamma_0(r)$ and $\gamma_2(r)$ vanish by standard estimates using \eqref{Gwlarge} and \eqref{Gwsmall}. Then \eqref{tempz4e6} follows from \eqref{squarerootsymmetries}, \eqref{Gsymmetries}. Similarly, \eqref{tempz4e6e4} follows from a further deformation to the contour above and applications of \eqref{squarerootsymmetries}, \eqref{Gsymmetries}.
\end{proof}
\end{lemma}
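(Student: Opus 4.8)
The plan is to establish both identities by contour deformation, along the lines indicated just after the statement. A preliminary observation that streamlines everything: since $|k|,|\ell|$ are non-negative integers, $\exp[|\ell|\widetilde{g}_\alpha(w)]=G(w)^{|\ell|}G(1/w)^{|k|}$ as a genuine branch-free identity (the exponential of an integer combination of the two logarithms is a product of integer powers of $G$), so the sign computations below do not see the branch of $\log$.

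For \eqref{tempz4e6} I would start from \eqref{ekl}, rewrite the exponential as above, and apply Cauchy's theorem to replace $\Gamma_1$ by the rectangle $\Gamma_{w_\alpha}(r)=\gamma_0(r)\cup\gamma_1(r)\cup\gamma_2(r)\cup\gamma_3(r)$ for $r>1$; this is legitimate because the integrand is analytic off $i[-\sqrt{2c},\sqrt{2c}]\cup i[1/\sqrt{2c},\infty)\cup i(-\infty,-1/\sqrt{2c}]$ and $w_\alpha<1/\sqrt{2c}$. Letting $r\to\infty$: on the vertical sides $\gamma_0(r),\gamma_2(r)$ one has $|w|\asymp r$, so \eqref{Gwlarge} gives $|G(w)^{|\ell|}G(1/w)^{|k|}|=O(r^{-|\ell|})$ while the remaining factors are $O(1/r)$, and since $|\ell|\ge1$ these contributions vanish, leaving $\int_{\gamma_1(\infty)}+\int_{\gamma_3(\infty)}$. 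On $\gamma_3(\infty)$ I substitute $w\mapsto-w$: by \eqref{squarerootsymmetries} one has $G(-w)=-G(w)$ and $\sqrt{(-w)^2+2c}\,\sqrt{1/(-w)^2+2c}=\sqrt{w^2+2c}\,\sqrt{1/w^2+2c}$, while $dw/w$ is invariant, so the differential form is multiplied by $(-G(w))^{|\ell|}(-G(1/w))^{|k|}/(G(w)^{|\ell|}G(1/w)^{|k|})=(-1)^{|\ell|+|k|}$; since $w\mapsto-w$ carries $\gamma_1(\infty)$ onto $\gamma_3(\infty)$ preserving orientation, this gives $\int_{\gamma_3(\infty)}=(-1)^{|\ell|+|k|}\int_{\gamma_1(\infty)}$, i.e.\ \eqref{tempz4e6}.

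For \eqref{tempz4e6e4}, now $\ell+k$ is even, so the factor $(1+(-1)^{|\ell|+|k|})$ equals $2$ and $E_{k,\ell}=\tfrac{i^{-|k|-|\ell|}}{(1+a^2)2\pi i}\int_{\gamma_1(\infty)}f$. I would split $\gamma_1(\infty)$ at $iw_\alpha$ into its right half $\gamma_1^+$ ($\mcR(w)\ge0$) and left half $\gamma_1^-$. The conjugation symmetries $\overline{G(w)}=G(\bar w)$ and $\overline{\sqrt{w^2+2c}}=\sqrt{\bar w^2+2c}$ from \eqref{squarerootsymmetries}--\eqref{Gsymmetries} give $\overline{f(w)}=f(\bar w)$, and combined with $f(-w)=-f(w)$ (valid because $\ell+k$ is even, from the previous paragraph) a short parametrised computation gives $\int_{\gamma_1^-}f=-\overline{\int_{\gamma_1^+}f}$, hence $\int_{\gamma_1(\infty)}f=2i\,\mcI\big[\int_{\gamma_1^+}f\big]$ and $E_{k,\ell}=\tfrac{i^{-|k|-|\ell|}}{(1+a^2)\pi}\mcI\big[\int_{\gamma_1^+}f\big]$. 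Finally I would deform $\gamma_1^+$ (a horizontal ray at height $w_\alpha$ ending at $iw_\alpha$) inside the right half plane, around the branch point $i/\sqrt{2c}$ and past the cut $i[1/\sqrt{2c},\infty)$, into $\gamma_{\eta_1,\eta_2}(\infty)$ followed by the vertical segment of the imaginary axis from $i/\sqrt{2c}-i\eta_2$ down to $iw_\alpha$ (the contribution from the far end vanishing as $r\to\infty$ by \eqref{Gwlarge}). On that segment $w=iy$, $y\in(\sqrt{2c},1/\sqrt{2c})$, one computes from \eqref{squarerootsymmetries} and the definition of $G$ that $G(iy)$ and $G(1/(iy))$ are purely imaginary with opposite-sign imaginary parts and $\sqrt{(iy)^2+2c}\,\sqrt{1/(iy)^2+2c}$ is positive real, so $G(iy)^{|\ell|}G(1/(iy))^{|k|}$ has argument $(|\ell|-|k|)\pi/2\in\pi\Z$ (using $\ell+k$ even); thus the integrand times $dw=i\,dy$ is real on this segment and drops out under $\mcI[\cdot\,]$, yielding \eqref{tempz4e6e4}. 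Independence of $\eta_1,\eta_2$ is then automatic since admissible choices give homotopic contours to the right of the cut.

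The main obstacle is the branch bookkeeping in this last deformation: verifying that $\gamma_1^+$ can be pushed around $i/\sqrt{2c}$ onto $\gamma_{\eta_1,\eta_2}(\infty)$ plus the imaginary segment without crossing a cut, and controlling the far (real) end uniformly in $r$. The reformulation $\exp[|\ell|\widetilde{g}_\alpha(w)]=G(w)^{|\ell|}G(1/w)^{|k|}$ makes the sign identities clean, but by \eqref{Gwsmall} one still has $G(1/w)\to-1$ as $w\to+\infty$ along $\gamma_1$, so care is needed near the ends of the contours — this is precisely what the ``standard estimates'' using \eqref{Gwlarge} and \eqref{Gwsmall} are for, and everything else is routine.
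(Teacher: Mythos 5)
Your proof is correct and follows the same contour-deformation route the paper sketches, with the ``standard estimates'' and symmetry manipulations worked out explicitly and accurately (the vanishing of $\gamma_0,\gamma_2$, the $w\mapsto-w$ relation between $\gamma_1$ and $\gamma_3$, the split of $\gamma_1(\infty)$ and the reality of the integrand on the imaginary-axis segment for $\ell+k$ even). One point worth flagging: you correctly derive $G(-w)=-G(w)$ directly from \eqref{squarerootsymmetries}, which is what produces the factor $(1+(-1)^{|\ell|+|k|})$ in \eqref{tempz4e6}; this contradicts the second identity in \eqref{Gsymmetries} as printed (which reads $G(w)=G(-w)$), an apparent sign typo in the paper, so your choice to argue from \eqref{squarerootsymmetries} rather than quote \eqref{Gsymmetries} is the right one.
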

Next we give a lemma concerning the existence of descent paths. 
\begin{lemma}
\label{gasdesc}
For $\alpha\in[0,1]$, $\beta\in [1,1/\sqrt{2c}]$  the mapping $(0,\infty)\ni t\rarrow \mathcal{R} [\widetilde{g}_\alpha (i \beta+t)]$ is strictly decreasing. Moreover, if $\alpha=0$ the same mapping is strictly decreasing for all $\beta\in [1,\infty)$.
\begin{proof}
We use an integral representation of
\begin{align*}
\frac{1}{\sqrt{w^2+2c}}=\frac{1}{\pi}\int_{-\sqrt{2c}}^{\sqrt{2c}}\frac{1}{w-i s} \frac{\text{d} s}{\sqrt{2c-s^2}},
\end{align*}
which yields the correct branch cut and sheet of the square root.
From \eqref{gasderiv} write
\begin{align*}
\widetilde{g}'_\alpha(w)&=\frac{1}{\pi}\int_{-\sqrt{2c}}^{\sqrt{2c}}\frac{-1}{w-is}+\frac{\alpha}{w-i s w^2}\frac{\text{d} s}{\sqrt{2c-s^2}}\\
&=\frac{1}{\pi}\int_{0}^{\sqrt{2c}}\frac{-1}{w-is}-\frac{1}{w+is}+\alpha\left ( \frac{1}{w-isw^2}+\frac{1}{w+isw^2}\right ) \frac{\text{d} s}{\sqrt{2c-s^2}}\\
&=: \frac{1}{\pi}\int_{0}^{\sqrt{2c}}f_\alpha(w,s) \frac{\text{d} s}{\sqrt{2c-s^2}}. 
\end{align*}
Now,
\begin{align}
\frac{1}{t}\mathcal{R} f_\alpha(i\beta +t,s)&=\frac{-1}{t^2+(\beta-s)^2}-\frac{1}{t^2+(s+\beta)^2}\label{tempz4z3}\\
&+\alpha \left (\frac{1+2 s\beta}{t^2(1+2s\beta)^2+(\beta-s(t^2-\beta^2))^2}+\frac{1-2s\beta}{t^2(1-2s\beta)^2+(\beta+s(t^2-\beta^2))^2}\right )\nonumber\\
&=\frac{-1}{s^2-2s\beta+t^2+\beta^2}+\frac{-1}{s^2+2s\beta+t^2+\beta^2}\label{realfovert}\\
&+\alpha \left (\frac{1+2 s\beta}{(t^2+\beta^2)(1+2s\beta+s^2(t^2+\beta^2))}+\frac{1-2s\beta}{(t^2+\beta^2)(1-2s\beta+s^2(t^2+\beta^2))}\right ).
\nonumber
\end{align}
Let $t<\beta$, the fact that the denominators in \eqref{realfovert} are greater than zero and $1\pm2s\beta+s^2(t^2+\beta^2)\geq 1\pm2s\beta$  yields
\begin{align*}
\frac{1}{t}\mathcal{R} f_\alpha(i\beta +t,s)\leq \frac{-1}{t^2+(s-\beta)^2}+\frac{-1}{t^2+(s+\beta)^2}+\frac{2\alpha}{t^2+\beta^2}.
\end{align*}
The assertion that the previous expression is less than zero is equivalent to
\begin{align*}
\frac{2\alpha}{t^2+\beta^2}s^4+(4\alpha\frac{t^2-\beta^2}{t^2+\beta^2}-2)s^2-2(t^2+\beta^2-2)<0.
\end{align*}
The quartic above factorises into the form $2\alpha(s^2-C_-)(s^2-C_+)/(t^2+\beta^2)$ where explicitly
\begin{align*}
C_{\pm}&=\frac{1}{4\alpha}\left (2t^2(1-2\alpha)+2\beta^2(1+2\alpha))\pm\sqrt{(2t^2(1-2\alpha)+2\beta^2(1+2\alpha))^2+16\alpha(t^2+\beta^2)^2(1-\alpha)}\right ),
\end{align*}
clearly $C_-<0$ and $C_+>\frac{1}{2\alpha}(t^2(1-2\alpha)+\beta^2(1+2\alpha))>t^2(1/\alpha -1)+\beta^2>2c$ where the second last inequality holds since $t<\beta$.

For $t\geq \beta$, the function $x\rarrow 1/(1+x)$ is convex for $x>-1$ so the sum of the first two terms in \eqref{realfovert} is bounded above by
\begin{align*}
&\frac{-1}{(t^2+\beta^2)s^2-2s\beta+t^2+\beta^2}+\frac{-1}{(t^2+\beta^2)s^2+2s\beta+t^2+\beta^2}\\&\leq \frac{1}{t^2+\beta^2}\left (\frac{-1}{1+s^2-\frac{2s\beta}{t^2+\beta^2}}+\frac{-1}{1+s^2+\frac{2s\beta}{t^2+\beta^2}}\right )\\
&\leq \frac{1}{t^2+\beta^2}\left( \frac{-2}{1+s^2}\right),
\end{align*}
so from \eqref{realfovert}
\begin{align*}
(t^2+\beta^2)\mathcal{R} f_\alpha(i\beta +t,s)/t<\frac{-2}{1+s^2}+1+\frac{1-2s\beta}{1-2s\beta+s^2(t^2+\beta^2)}.
\end{align*}
The assertion that the previous expression is less than zero is now equivalent to
\begin{align*}
s^2((t^2+\beta^2)s^2-4\beta s+2-t^2+\beta^2)<0.
\end{align*}
The previous quartic factorises into the form $(t^2+\beta^2)s^2(s-B_-)(s-B_+)$ where 
\begin{align*}
B_\pm=\frac{2\beta}{t^2+\beta^2}\pm\sqrt{1+2\frac{2\beta^2-1}{t^2+\beta^2}}.
\end{align*}
It is obvious that since $\beta\geq1$, $B_+>1>\sqrt{2c}$ and $B_-\leq0$ is equivalent to 
\begin{align*}
4\frac{\beta^2}{(t^2+\beta^2)^2}\leq1+2\frac{2\beta^2-1}{t^2+\beta^2}
\end{align*}
which is true when $t\geq \beta$. So the mapping $(0,\infty)\times [0,\sqrt{2c}]\rarrow \R$ such that $(t,s)\rarrow\mathcal{R}f_\alpha(i\beta+t,s)$ is negative which proves the first result. The extension to $\beta\in[1,\infty)$ when $\alpha=0$ is immediate from \eqref{tempz4z3}.
\end{proof}
\label{lemmadec}
\end{lemma}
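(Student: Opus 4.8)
The plan is to reduce the monotonicity statement to a pointwise sign condition on $\widetilde g'_\alpha$. Since $\frac{d}{dt}(i\beta+t)=1$, we have $\frac{d}{dt}\mathcal{R}[\widetilde g_\alpha(i\beta+t)]=\mathcal{R}[\widetilde g'_\alpha(i\beta+t)]$, so it suffices to show $\mathcal{R}[\widetilde g'_\alpha(i\beta+t)]<0$ for every $t>0$. Starting from \eqref{gasderiv}, I would insert into both square-root factors the representation $\frac1{\sqrt{w^2+2c}}=\frac1\pi\int_{-\sqrt{2c}}^{\sqrt{2c}}\frac{1}{w-is}\frac{ds}{\sqrt{2c-s^2}}$ -- valid with the branch and sheet of \eqref{Gfunction} precisely because $w=i\beta+t$ has positive real part -- and then fold the integral onto $[0,\sqrt{2c}]$ by pairing $s$ with $-s$. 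This rewrites $\widetilde g'_\alpha(w)=\frac1\pi\int_0^{\sqrt{2c}}f_\alpha(w,s)\frac{ds}{\sqrt{2c-s^2}}$ for an explicit symmetrised kernel $f_\alpha$, so the lemma follows once one proves $\mathcal{R}f_\alpha(i\beta+t,s)<0$ for all $t>0$ and $s\in[0,\sqrt{2c}]$.

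To get that bound I would compute $\frac1t\mathcal{R}f_\alpha(i\beta+t,s)$ in closed form by rationalising the four rational terms, arriving at the two ``free'' terms $\frac{-1}{s^2-2s\beta+t^2+\beta^2}$, $\frac{-1}{s^2+2s\beta+t^2+\beta^2}$ plus two $\alpha$-terms with denominators $(t^2+\beta^2)(1\pm2s\beta+s^2(t^2+\beta^2))$. The argument then splits on whether $t<\beta$ or $t\ge\beta$. When $t<\beta$, dropping the $s^2(t^2+\beta^2)$ inside the $\alpha$-denominators (legitimate since $1\pm2s\beta+s^2(t^2+\beta^2)\ge1\pm2s\beta>0$) reduces negativity to that of a quartic in $s$ which factors as $\frac{2\alpha}{t^2+\beta^2}(s^2-C_-)(s^2-C_+)$; one then checks $C_-<0$ and, using $t<\beta$ and $\beta\ge1$, that $C_+>t^2(1/\alpha-1)+\beta^2>\beta^2\ge1>2c$, so the quartic is negative on $(0,\sqrt{2c})$. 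When $t\ge\beta$, convexity of $x\mapsto 1/(1+x)$ bounds the sum of the two free terms above by $\frac{-2}{(t^2+\beta^2)(1+s^2)}$, and negativity again reduces to a quartic, this time factoring as $(t^2+\beta^2)s^2(s-B_-)(s-B_+)$ with $B_+>1>\sqrt{2c}$; the inequality $B_-\le0$ turns out to be equivalent to $4\beta^2/(t^2+\beta^2)^2\le1+2(2\beta^2-1)/(t^2+\beta^2)$, which holds exactly when $t\ge\beta$.

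For the last assertion, set $\alpha=0$: the two $\alpha$-terms disappear and $\frac1t\mathcal{R}f_0(i\beta+t,s)=\frac{-1}{t^2+(s-\beta)^2}-\frac{1}{t^2+(s+\beta)^2}<0$ for every $t>0$ and \emph{any} real $\beta$, so the restriction $\beta\le1/\sqrt{2c}$ is no longer needed. The main obstacle I expect is purely computational bookkeeping: first, confirming the integral representation reproduces exactly the branch appearing in \eqref{Gfunction} and \eqref{gasderiv}, which is what forces the hypothesis $\mathcal{R}(w)>0$; and second, executing the two quartic factorisations and locating $C_\pm$, $B_\pm$ relative to $2c$ -- this is the only place the hypotheses $\beta\ge1$ and the dichotomy $t<\beta$ versus $t\ge\beta$ are genuinely used, and it is where a sign error would be fatal.
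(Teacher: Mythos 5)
Your proposal follows the paper's proof essentially step for step: same integral representation of $(\sqrt{w^2+2c})^{-1}$ and folding onto $[0,\sqrt{2c}]$, same explicit formula for $\tfrac1t\mathcal{R}f_\alpha(i\beta+t,s)$, same dichotomy $t<\beta$ versus $t\ge\beta$ with the identical quartic factorisations via $C_\pm$ and $B_\pm$, and the same observation that when $\alpha=0$ the $\alpha$-terms drop out so the restriction $\beta\le 1/\sqrt{2c}$ is unnecessary. The argument is correct and the computations are the ones in the paper.
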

Since $\mathcal{R}[\widetilde{g}_\alpha(i\beta-t)]=\mathcal{R}[\widetilde{g}_\alpha(i\beta+t)]$ for $t\in(-\infty,\infty)$, lemma \ref{lemmadec} also proves $(0,\infty)\rarrow \R;$ $t\rarrow \mathcal{R}[\widetilde{g}_\alpha(i\beta-t)]$ is decreasing. 

\begin{proof}[Proof of Proposition \ref{propEKL}]
We perform a saddle point analysis on the right hand side of \eqref{tempz4e6} and begin by proving the first statement. Consider \eqref{gasderiv}, a computation reveals
\begin{align*}
\widetilde{g}_\alpha'(w)+w\widetilde{g}_\alpha''(w)=2c\left (\frac{-1}{(w^2+2c)^{3/2}}-\frac{\alpha}{w^2}\frac{1}{(1/w^2+2c)^{3/2}}\right )
\end{align*}
and since $\widetilde{g}_\alpha'(iw_\alpha)=0$, further computation yields
\begin{align}
\widetilde{g}''_\alpha(iw_\alpha)=\frac{-2c}{(w_\alpha^2-2c)^{3/2}}\left (\frac{1}{w_\alpha}+\frac{w_\alpha^3}{\alpha^2}\right ).\label{g''alphaas}
\end{align}
Take $\eps>0$ small and less than $\min(1/\sqrt{2c}-w_\alpha, w_\alpha-\sqrt{2c})$ and write \eqref{tempz4e6} as
\begin{align*}
E_{k,\ell}=C_{k,\ell}\left (\int_{\gamma_1(\eps/2)} \frac{dw}{w}\frac{\exp[{\abs{\ell} \widetilde{g}_{\alpha}(w)]}}{\sqrt{w^2+2c}\sqrt{1/w^2+2c}}+\int_{\gamma_1(\infty)\setminus \gamma_1(\eps/2)} \frac{dw}{w}\frac{\exp[{\abs{l} \widetilde{g}_{\alpha}(w)]}}{\sqrt{w^2+2c}\sqrt{1/w^2+2c}}\right )
\end{align*}
where $C_{k,\ell}=\frac{(1+(-1)^{|\ell|+|k|})i^{-|k|-|\ell|}}{2(1+a^2)2\pi i}$. We parametrise $\gamma_1(\infty)$ by 
\begin{align*}
w(t)=iw_\alpha-t, \quad t\in(-\infty,\infty).
\end{align*}
Taylor's theorem yields 
\begin{align}
\widetilde{g}_\alpha(w(t))-\widetilde{g}_\alpha(iw_\alpha)=\widetilde{g}_\alpha ''(iw_\alpha)\frac{t^2}{2}+t^3R_{9}(t,\alpha),\label{giwalphaexpand}
\end{align}
where
\begin{align*}
R_{9}(t,\alpha)=\frac{1}{2\pi i} \int_{\partial\B_{\eps}(iw_\alpha)}dz\frac{\widetilde{g}_{\alpha}(z)}{(z-w(t))(z-iw_\alpha)^{3}}(=:R_{9}(t)).
\end{align*}
Here $\partial \B(iw_\alpha,\eps)=\{\eps e^{i\theta} + iw_\alpha: \theta\in[0,2\pi)\}$, so $\abs{R_{9}(t,\alpha)}\leq C$ for $t\in[-\eps/2,\eps/2]$ and all $\alpha$. Since $g_{\alpha}''(iw_\alpha)$ is less than some negative number for all $\alpha$, we can take $\eps$ so small that
\begin{align}
g_{\alpha}''(iw_\alpha)/2+|tR_{9}(t,\alpha)|<-b\label{temp4txcv}
\end{align}
for some $b>0$ uniformly in $\alpha$ and $t\in[-\eps/2,\eps/2]$.
Setting $\beta=w_\alpha$ in lemma \ref{gasdesc}, we have a descent contour so it follows from \eqref{giwalphaexpand} and \eqref{temp4txcv} that there are positive constants $C_1, C_2$ so that
\begin{align*}
\abs{\int_{\gamma_1(\infty)\setminus \gamma_1(\eps/2)} \frac{dw}{w}\frac{\exp[{\abs{\ell}(\widetilde{g}_{\alpha}(w)- \widetilde{g}_{\alpha}(iw_\alpha))]}}{\sqrt{w^2+2c}\sqrt{1/w^2+2c}}}&\leq C_1\sup_{w\in\gamma_1(\infty)\setminus\gamma_1(\eps/2)}\exp[-\abs{\ell}\abs{\mathcal{R}(\widetilde{g}_\alpha(w)-\widetilde{g}_\alpha(iw_\alpha))}]\\
&\leq C_2e^{-b\abs{\ell}\eps^2/4}.
\end{align*}
For the integral over $\gamma_1(\eps/2)$, using the same parametrisation, \eqref{giwalphaexpand} gives
\begin{align}
&\int_{\gamma_1(\eps/2)} \frac{dw}{w}\frac{\exp[{\abs{\ell} \widetilde{g}_\alpha(w)]}}{\sqrt{w^2+2c}\sqrt{1/w^2+2c}}\nonumber\\
&=-\frac{\exp[\abs{\ell}\widetilde{g}_\alpha(iw_\alpha)]}{V(iw_\alpha)}\int_{-\eps/2}^{\eps/2}\exp[\abs{\ell}(\widetilde{g}_\alpha''(iw_\alpha)t^2/2+t^3R_{9}(t))]V(iw_\alpha)/V(iw_\alpha-t)dt\label{tempint12}
\end{align}
where $V(w)=w\sqrt{w^2+2c}\sqrt{1/w^2+2c}$. We require three bounds. Taylors theorem applied to $t\rarrow V(iw_\alpha)/V(iw_\alpha-t)$ at $t=0$ yields
\begin{align}
&\abs{\int_{-\eps/2}^{\eps/2}\exp[\abs{\ell}(\widetilde{g}_\alpha''(iw_\alpha)t^2/2+t^3R_{9}(t))]\frac{V(iw_\alpha)}{V(iw_\alpha-t)}dt-\int_{-\eps/2}^{\eps/2}\exp[\abs{\ell}(\widetilde{g}_\alpha''(iw_\alpha)t^2/2+t^3R_{9}(t))]dt}\nonumber\\
&\leq\int_{-\eps/2}^{\eps/2}|t\exp[\abs{\ell}(\widetilde{g}_\alpha''(iw_\alpha)t^2/2+t^3R_{9}(t))]|dt\leq C_3\int_{-\eps/2}^{\eps/2}|t|e^{-|\ell|bt^2}dt\leq \frac{C_4}{|\ell|^{3/2}}.\nonumber
\end{align}
We use the bound $|e^{t}-1|\leq|t|e^{|t|}$  to get
\begin{align}
&\abs{\int_{-\eps/2}^{\eps/2}\exp[\abs{\ell}(\widetilde{g}_\alpha''(iw_\alpha)t^2/2+t^3R_{9}(t))]dt-\int_{-\eps/2}^{\eps/2}\exp[\abs{\ell}(\widetilde{g}_\alpha''(iw_\alpha)t^2/2)]dt}\nonumber\\
&\leq \int_{-\eps/2}^{\eps/2}\Big|e^{|\ell|t^3R_{9}(t)}-1\Big|\exp[\abs{\ell}(\widetilde{g}_\alpha''(iw_\alpha)t^2/2)]dt\leq |\ell|\int_{-\eps/2}^{\eps/2}|t|^3\exp[|\ell|(g''_\alpha(iw_\alpha)t^2/2+|t^3R_{9}(t)|)]dt\nonumber\\
&\leq   C_5 |\ell|\int_{-\eps/2}^{\eps/2}|t|^3\exp[-|\ell|bt^2]dt\leq \frac{C_6}{|\ell|}.\nonumber
\end{align}
Finally,
\begin{align}
\abs{\int_{-\eps/2}^{\eps/2}\exp[|\ell|g_\alpha''(iw_\alpha)t^2/2]dt-\int_{-\infty}^{\infty}\exp[|\ell|g_\alpha''(iw_\alpha)t^2/2]dt}\leq C_7\frac{e^{-|\ell|b\eps^2/4}}{\eps}. 
\label{tempdiff12z}
\end{align}
Now take the absolute value of the integral in \eqref{tempint12} minus the second term of the difference in \eqref{tempdiff12z}. Write this as a sum of differences above, then use the triangle inequality and the three bounds. The main factor in \eqref{EklAs1} comes from the term $\frac{\exp[\abs{\ell}\widetilde{g}_\alpha(iw_\alpha)]}{V(iw_\alpha)}$ in \eqref{tempint12}.
 \\\\
The case $\alpha=|k|/|\ell|$ for fixed $k$ is handled differently.  Recall the integral in \eqref{tempz4e6e4} and write it as
\begin{align}
-\int_{\gamma'_{\eta_1,\eta_2}(\infty)} \frac{dw}{w}\frac{\exp[{\abs{\ell} \widetilde{g}_{0}(w)]}G(1/w)^{|k|}}{\sqrt{w^2+2c}\sqrt{1/w^2+2c}}.
\label{temprewrite232}
\end{align}
The minus sign in \eqref{temprewrite232} appears by defining $\gamma'_{\eta_1,\eta_2}(\infty)$ to be $\gamma_{\eta_1,\eta_2}(\infty)$ with reverse orientation.
In this case the asymptotics come from the branch point $i/\sqrt{2c}$, and the saddle point function $\widetilde{g}_0$ is analytic at this point.
Parametrise $w(t)=i(t+1/\sqrt{2c}) +\eta_2$ for $t\in[0,\eta_1]$ and observe that
\begin{align}
\sqrt{1/w(t)^2+2c}\rarrow \sqrt{t}\sqrt{2c/(t+1/\sqrt{2c})+\sqrt{2c}/(t+1/\sqrt{2c})^2}
\end{align} as $\eta_2\rarrow 0^+$. So
\begin{align}
\frac{G(1/w(t))^{|k|}}{w(t)\sqrt{w(t)^2+2c}\sqrt{1/w(t)^2+2c}}\rarrow \frac{f(t)}{\sqrt{t}}, \quad \text{as} \ \eta_2\rarrow 0^+
\end{align} for a function $f$. A computation gives
\begin{align}
f(0)=-G(-i\sqrt{2c})^{|k|}\sqrt{1-4c^2}\sqrt{2c+1}. 
\end{align} 
 We can get a bounded function $R_{10}(t)$ such that
\begin{align}
f(t)-f(0)=tR_{10}(t)\label{temptay2cs}
\end{align}
for $t\in[0,\eta_1]$.
From \eqref{gasderiv} we also have a bounded function $R_{11}(t)$ such that
\begin{align}
\widetilde{g}_0(i/\sqrt{2c}+it)-\widetilde{g}_0(i/\sqrt{2c})=-t\sqrt{2c/(1-4c^2)}+t^2R_{11}(t)\label{tempbound2ac}
\end{align}
for $t\in[0,\eta_1]$.
Let $b'=\sqrt{2c/(1-4c^2)}$.
By lemma \ref{gasdesc}, the straight line from $i/\sqrt{2c}+i\eta_1+\eta_2$ to infinity is a descent contour. Take $\eta_1>0$ so small that 
\begin{align}
-t\sqrt{2c/(1-4c^2)}+t^2|R_{11}(t)|<-t\sqrt{2c/(1-4c^2)}/2.
\end{align}
Then, by \eqref{temptay2cs} and \eqref{tempbound2ac}, there are positive constants $C_4,C_5$ such that
\begin{align}
\abs{\int_{i/\sqrt{2c}+i\eta_1+\R_{>0}} \frac{dw}{w}\frac{\exp[{\abs{\ell} \widetilde{g}_{0}(w)]}G(1/w)^{|k|}}{\sqrt{w^2+2c}\sqrt{1/w^2+2c}}}\leq C_4e^{|\ell|\mcR[\widetilde{g}_0(i/\sqrt{2c})]-C_5|\ell|\eta_1}.
\end{align}
Take $\eta_2\rarrow 0^+$ in \eqref{temprewrite232}. We consider a sequence of approximations and bound their differences.
First, we have the estimate,
\begin{align}
&\abs{\int_0^{\eta_1}\frac{dt}{\sqrt{t}}f(t)e^{|\ell|\widetilde{g}_0(i/\sqrt{2c}+it)}-\int_0^{\eta_1}\frac{dt}{\sqrt{t}}f(0)e^{|\ell|\widetilde{g}_0(i/\sqrt{2c}+it)}}\nonumber\\
&\leq C\int_0^{\eta_1}dt \sqrt{t} e^{|\ell|\mcR\widetilde{g}_0(i/\sqrt{2c}+it)}\leq Ce^{|\ell|\mcR\widetilde{g}_0(i/\sqrt{2c})}\int_0^{\eta_1}dt \sqrt{t} e^{-|\ell|b't/2}\leq C'\frac{e^{|\ell|\mcR\widetilde{g}_0(i/\sqrt{2c})}}{|\ell|^{3/2}},\nonumber
\end{align}
secondly,
\begin{align}
&\abs{\int_0^{\eta_1}\frac{dt}{\sqrt{t}}f(0)e^{|\ell|\widetilde{g}_0(i/\sqrt{2c}+it)}-\int_0^{\eta_1}\frac{dt}{\sqrt{t}}f(0)e^{|\ell|(\widetilde{g}_0(i/\sqrt{2c})-b't)}}\nonumber\\
&\leq \int_0^{\eta_1}\frac{dt}{\sqrt{t}}f(0)e^{|\ell|(\mcR\widetilde{g}_0(i/\sqrt{2c})-b't)}\Big |e^{R_{11}(t)t^2}-1\Big|\leq C'''\int_0^{\eta_1}dt \ t^{3/2}e^{|\ell|(\mcR\widetilde{g}_0(i/\sqrt{2c})-b't+t^2|R_{11}(t)|)}\nonumber\\
&\leq C''''e^{|\ell|\mcR\widetilde{g}_0(i/\sqrt{2c})}\int_0^{\eta_1}dt t^{3/2}e^{-|\ell|b't/2}\leq C^{(v)}\frac{e^{|\ell|\mcR\widetilde{g}_0(i/\sqrt{2c})}}{|\ell|^{5/2}},\nonumber
\end{align}
and finally,
\begin{align}
&\abs{\int_0^{\eta_1}\frac{dt}{\sqrt{t}}f(0)e^{|\ell|\widetilde{g}_0(i/\sqrt{2c})-\ell b't}-\int_0^{\infty}\frac{dt}{\sqrt{t}}f(0)e^{|\ell|\widetilde{g}_0(i/\sqrt{2c})-\ell b't}}\leq C^{(vi)}e^{|\ell|\mcR\widetilde{g}_0(i/\sqrt{2c})}\frac{e^{-|\ell|b'\eta_1}}{|\ell|b'\sqrt{\eta_1}}.\nonumber
\end{align}
Furthermore,
\begin{align}
\int_0^\infty \frac{dt}{\sqrt{t}}f(0)e^{|\ell|\widetilde{g}_0(i/\sqrt{2c})-\ell b't}=-\sqrt{\frac{\pi}{|\ell|}}G(\frac{i}{\sqrt{2c}})^{|\ell|}G(-i\sqrt{2c})^{|k|}(1-4c^2)^{3/2}\sqrt{\sqrt{2c}+1/\sqrt{2c}}.\label{temp1das}
\end{align}
Now note that when $\ell+k$ is even, $G(\frac{i}{\sqrt{2c}})^{|\ell|}G(-i\sqrt{2c})^{|k|}$ is a real number. The main contribution to the integral in \eqref{tempz4e6e4} is the right hand side of \eqref{temp1das}. If we write the difference between the main contribution and the integral as sum of the differences above, then by the triangle inequality, \eqref{EklAs2} holds.
\end{proof} 

\begin{remark}
We can rewrite the expressions for $E_{k,l}$ in lemma \ref{propEKL} as
\begin{align}
E_{k,l}=\frac{\cos(\pi({|k|+|\ell|})/2)G(iw_\alpha)^{\max{(|k|,|\ell|)}}G(1/(iw_\alpha))^{\min{(|k|,|\ell|)}}}{(1+a^2)\sqrt{2\pi |\tilde{g}''_\alpha(iw_\alpha)\max(|k|,|\ell|)|}w_\alpha\sqrt{w_\alpha^2+2c}{\sqrt{1/w_\alpha^2+2c}}}(1+o(1))
\label{temprewritesymmEKL}
\end{align}
when $\alpha=\min{(\frac{|k|}{|\ell|},\frac{|\ell|}{|k|})}$ varies in compact subset of $(0,1]$.
If  instead one of $k$ or $\ell$ is fixed, then 
\begin{align}
E_{k,l}=\frac{\cos(\frac{\pi\max(|k|,|\ell|)}{2})G(\frac{i}{\sqrt{2c}})^{\max(|k|,|\ell|)}G(-i\sqrt{2c})^{\min(|k|,|\ell|)}}{2(1+a^2)\sqrt{\pi\max(|k|,|\ell|)}}\frac{\sqrt{(1-4c^2)^{3}(2c+1)}}{(2c)^{1/4}}(1+o(1)).
\end{align}
\end{remark}
We know that $|G(i)|<1$, from which we can discern that the asymptotics of $E_{k,l}$ are exponentially decaying when $\alpha=1$, since then $iw_\alpha=i$ so $\mcR[\tilde{g}_1(iw_1)]=2\log|G(i)|<0$. Here we give a  lemma that shows $E_{k,l}$ is exponentially decaying for $0\leq \alpha\leq 1$.
\begin{lemma} For $0\leq \alpha< 1$,
\begin{align}
\mcR[\tilde{g}_{\alpha}(iw_\alpha)]\leq(1+\alpha)\log|G(i)|<0.
\end{align}
\label{galphaalpha}
\begin{proof}
Consider the integral
\begin{align}
\int_{\sqrt{2c}}^x\frac{du}{\sqrt{u^2-2c}}=-\log\Big(\frac{x}{\sqrt{2c}}-\sqrt{\Big(\frac{x}{\sqrt{2c}}\Big)^2-1}\Big).
\end{align}
We see that
\begin{align}
\mcR[\tilde{g}_\alpha(iw_\alpha)]=-\int_{\sqrt{2c}}^{w_\alpha}\frac{du}{\sqrt{u^2-2c}}-\alpha\int_{\sqrt{2c}}^{1/w_\alpha}\frac{du}{\sqrt{u^2-2c}}
\end{align}
and 
\begin{align}
(1+\alpha)\log|G(i)|=-(1+\alpha)\int_{\sqrt{2c}}^1\frac{du}{\sqrt{u^2-2c}}
\end{align}
where $1\leq w_\alpha\leq 1/\sqrt{2c}$.
Define
\begin{align}
s(\alpha)&:=\mcR\tilde{g}_\alpha(iw_\alpha)-(1+\alpha)\log|G(i)|\\
&=-\int_1^{w_\alpha}\frac{du}{\sqrt{u^2-2c}}+\alpha\int_{1/w_\alpha}^1\frac{du}{\sqrt{u^2-2c}}.\nonumber
\end{align}
We want to show that $s(\alpha)\leq 0$.
Substituting $u=1/v$ yields
\begin{align}
\int_{1/w_\alpha}^1\frac{du}{\sqrt{u^2-2c}}=\int_1^{w_\alpha}\frac{1}{\sqrt{1-2cv^2}}\frac{dv}{v},
\end{align}
and hence 
\begin{align}
s(\alpha)=\int_1^{w_\alpha}\frac{\alpha}{u\sqrt{1-2cu^2}}-\frac{1}{\sqrt{u^2-2c}}du.\label{tempintsalpha}
\end{align}
For $1\leq u\leq 1/\sqrt{2c}$,
\begin{align}
\frac{\alpha}{u\sqrt{1-2cu^2}}\leq \frac{1}{\sqrt{u^2-2c}},
\end{align}
which is equivalent to
\begin{align}
2cu^4-(1-\alpha^2)u^2-2c\alpha^2\leq 0
\end{align}
or
\begin{align}
\frac{1-\alpha^2-\sqrt{(1-\alpha^2)^2+16c^2\alpha^2}}{4c}\leq u^2\leq w_\alpha^2
\end{align}
Hence the integrand in \eqref{tempintsalpha} is less than or equal to zero, so $s(\alpha)\leq 0$.
\end{proof}
\end{lemma}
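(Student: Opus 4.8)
The plan is to rewrite both sides of the inequality as real integrals and then reduce the whole claim to the pointwise non-positivity of a single integrand on the range $[1,w_\alpha]$. The strict bound $(1+\alpha)\log|G(i)|<0$ needs nothing: it follows from $|G(i)|=(1-\sqrt{1-2c})/\sqrt{2c}<1$ together with $1+\alpha\ge 1$, so all the work is in the first inequality. The main tool will be the elementary antiderivative $\int_{\sqrt{2c}}^x du/\sqrt{u^2-2c}=\log\big(x/\sqrt{2c}+\sqrt{(x/\sqrt{2c})^2-1}\big)$.

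First I would record the integral representation of $\log|G(i\beta)|$ for $\sqrt{2c}<\beta<1/\sqrt{2c}$. Tracking the branch of $\sqrt{w^2+2c}$ in \eqref{Gfunction} along the positive imaginary axis gives $\sqrt{(i\beta)^2+2c}=i\sqrt{\beta^2-2c}$, hence $G(i\beta)=\tfrac{i}{\sqrt{2c}}\big(\beta-\sqrt{\beta^2-2c}\big)$ and, using $(u+\sqrt{u^2-2c})(u-\sqrt{u^2-2c})=2c$ to flip a sign, $\log|G(i\beta)|=-\int_{\sqrt{2c}}^{\beta}du/\sqrt{u^2-2c}$. Since $w_\alpha\in[1,1/\sqrt{2c}]$ places both $w_\alpha$ and $1/w_\alpha$ in $[\sqrt{2c},1/\sqrt{2c}]$, I can apply this to each term of $\mcR[\tilde g_\alpha(iw_\alpha)]=\log|G(iw_\alpha)|+\alpha\log|G(i/w_\alpha)|$ and also to $(1+\alpha)\log|G(i)|$. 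Splitting each of these integrals at $u=1$ cancels the common part and leaves
\[
s(\alpha):=\mcR[\tilde g_\alpha(iw_\alpha)]-(1+\alpha)\log|G(i)|=-\int_1^{w_\alpha}\frac{du}{\sqrt{u^2-2c}}+\alpha\int_{1/w_\alpha}^1\frac{du}{\sqrt{u^2-2c}}.
\]
Substituting $u=1/v$ in the second integral brings both onto $[1,w_\alpha]$ and merges them into $s(\alpha)=\int_1^{w_\alpha}\big(\tfrac{\alpha}{v\sqrt{1-2cv^2}}-\tfrac{1}{\sqrt{v^2-2c}}\big)\,dv$.

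It then remains to check that this integrand is $\le 0$ on $[1,w_\alpha]$. Clearing the (positive) denominators and squaring, $\tfrac{\alpha}{v\sqrt{1-2cv^2}}\le\tfrac{1}{\sqrt{v^2-2c}}$ is equivalent to $2cv^4-(1-\alpha^2)v^2-2c\alpha^2\le 0$; viewed as a quadratic in $v^2$ it has one nonpositive root and one positive root, and the positive root is exactly $w_\alpha^2$ by the defining equation $\widetilde g_\alpha'(iw_\alpha)=0$ (equivalently $2cw_\alpha^4-(1-\alpha^2)w_\alpha^2-2c\alpha^2=0$). Hence the quadratic is $\le 0$ precisely for $v\le w_\alpha$, so the integrand is non-positive throughout the range of integration and $s(\alpha)\le 0$. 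The step requiring genuine care is the branch computation fixing the sign in $\sqrt{(i\beta)^2+2c}=i\sqrt{\beta^2-2c}$, hence the sign of $\log|G(i\beta)|$; after that, the one substantive observation is that $w_\alpha^2$ is a root of $2cv^4-(1-\alpha^2)v^2-2c\alpha^2$, which is exactly what makes the integrand sign-definite on $[1,w_\alpha]$, and the rest is bookkeeping.
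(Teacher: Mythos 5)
Your proof is correct and follows essentially the same route as the paper's: the integral representation of $\log|G(i\beta)|$ via the antiderivative of $1/\sqrt{u^2-2c}$, cancellation of the common piece on $[\sqrt{2c},1]$, the substitution $u=1/v$ to merge onto $[1,w_\alpha]$, and the sign-definiteness of the resulting integrand via the quadratic in $u^2$ whose positive root is $w_\alpha^2$. The only difference is that you spell out the branch computation giving $\sqrt{(i\beta)^2+2c}=i\sqrt{\beta^2-2c}$, which the paper leaves implicit.
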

Finally, we have a short lemma that we use to extract the leading order terms of $\K_{1,1}^{-1}$ from Proposition \ref{propEKL}.
\begin{lemma}\label{alphaalpha'}
For $0< \alpha,\alpha'\leq 1$,
\begin{align}
\tilde{g}_\alpha(iw_\alpha)=\tilde{g}_{\alpha'}(iw_{\alpha'})+\log(G(1/(iw_{\alpha'}))(\alpha-\alpha')+O(\alpha-\alpha')^2.
\end{align}
\begin{proof}
Since $\alpha\rarrow \tilde{g}_\alpha(iw_\alpha)$ is smooth, we use Taylor's theorem and note that
\begin{align}
\left. {\frac{d}{d\alpha}}\right |_{\alpha'}\tilde{g}_\alpha(iw_\alpha)&=\log(G(1/(iw_{\alpha'})))+i\frac{dw_{\alpha'}}{d\alpha}\left.\frac{d}{dw}\right |_{iw_{\alpha'}}\tilde{g}_{\alpha'}(w)\nonumber
\end{align}
by the chain rule. The lemma follows since $iw_{\alpha'}$ is a critical point of $\tilde{g}_{\alpha'}$.
\end{proof}
\end{lemma}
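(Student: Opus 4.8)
\textbf{Proof proposal for Lemma \ref{alphaalpha'}.}

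The plan is to treat $\alpha\mapsto \tilde{g}_\alpha(iw_\alpha)$ as a smooth function of the single real parameter $\alpha$ on $(0,1]$ and apply Taylor's theorem to second order at $\alpha'$. The only inputs needed are (i) smoothness of $\alpha\mapsto w_\alpha$, and (ii) the identification of the first derivative.

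First I would establish that $\alpha\mapsto w_\alpha$ is $C^\infty$ on $(0,1]$. One can either quote the explicit formula $w_\alpha=\big((1-\alpha^2+\sqrt{(1-\alpha^2)^2+16c^2\alpha^2})/(4c)\big)^{1/2}$, whose expression under the radicals is smooth and strictly positive for $\alpha\in(0,1]$ (so the square roots are smooth), or invoke the implicit function theorem applied to $\tilde{g}'_\alpha(iw_\alpha)=0$: differentiating in $w$ along the imaginary axis, the relevant Jacobian is (a nonzero constant times) $\tilde{g}''_\alpha(iw_\alpha)$, which by the explicit formula \eqref{g''alphaas} equals $\frac{-2c}{(w_\alpha^2-2c)^{3/2}}\big(\tfrac1{w_\alpha}+\tfrac{w_\alpha^3}{\alpha^2}\big)\neq 0$ since $w_\alpha\in[1,1/\sqrt{2c})$. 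Either way, $w_\alpha$ depends smoothly on $\alpha$, and since $iw_\alpha\in i(1,1/\sqrt{2c})$ avoids the branch cuts of $G$, the map $\alpha\mapsto \tilde{g}_\alpha(iw_\alpha)$ is smooth as claimed.

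Next I would compute the first derivative by the chain rule, carefully separating the explicit and implicit $\alpha$-dependence: writing $\tilde{g}_\alpha(w)=\log G(w)+\alpha\log G(1/w)$,
\begin{align*}
\left.\frac{d}{d\alpha}\right|_{\alpha'}\tilde{g}_\alpha(iw_\alpha)
=\Big(\partial_\alpha\tilde{g}_\alpha(w)\Big)\Big|_{\substack{\alpha=\alpha'\\ w=iw_{\alpha'}}}
+\;i\,\frac{dw_{\alpha'}}{d\alpha}\,\Big(\partial_w\tilde{g}_{\alpha'}(w)\Big)\Big|_{w=iw_{\alpha'}}
=\log G(1/(iw_{\alpha'}))+i\,\frac{dw_{\alpha'}}{d\alpha}\,\tilde{g}'_{\alpha'}(iw_{\alpha'}).
\end{align*}
The second term vanishes because $iw_{\alpha'}$ is by definition a critical point of $\tilde{g}_{\alpha'}$, so the derivative is exactly $\log G(1/(iw_{\alpha'}))$. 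Plugging this into Taylor's theorem with remainder on the compact interval between $\alpha$ and $\alpha'$ (the second derivative is bounded there by smoothness) yields
\begin{align*}
\tilde{g}_\alpha(iw_\alpha)=\tilde{g}_{\alpha'}(iw_{\alpha'})+\log\!\big(G(1/(iw_{\alpha'}))\big)(\alpha-\alpha')+O\big((\alpha-\alpha')^2\big),
\end{align*}
which is the assertion.

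I do not expect any serious obstacle here; the statement is a soft consequence of the envelope/critical-point structure. The one point requiring a line of care is the smoothness of $w_\alpha$ near the endpoint $\alpha=1$ and, if a uniform error constant is wanted (it is used with $\alpha'=1$, $\alpha=(r-1)/(r+1)$), ensuring $w_\alpha$ and its first two $\alpha$-derivatives stay bounded on a neighbourhood of $\alpha=1$ inside $(0,1]$ — both of which follow immediately from the explicit formula for $w_\alpha$ together with $w_\alpha<1/\sqrt{2c}$.
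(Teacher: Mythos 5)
Your proposal is correct and follows essentially the same route as the paper: compute $\frac{d}{d\alpha}\tilde g_\alpha(iw_\alpha)$ by the chain rule, observe that the term involving $dw_\alpha/d\alpha$ is killed by the critical-point condition $\tilde g'_{\alpha'}(iw_{\alpha'})=0$, and invoke Taylor's theorem. The extra paragraph you spend verifying smoothness of $\alpha\mapsto w_\alpha$ (via the explicit formula or the implicit function theorem, with $\tilde g''_\alpha(iw_\alpha)\neq 0$) is a reasonable supplement but is taken for granted in the paper's proof.
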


\section{Uniform asymptotics of $C_{\omega_c}$}\label{Comegacasymptsection}
The goal of this section is to prove propositions \ref{Comegacinnerdisc} and \ref{propeklcomegac},
first we prove a lemma about descent paths.
\begin{lemma}
$\abs{G(e^{i\theta})}$ is strictly increasing for $\theta\in(0,\pi/2)$.
\label{Ginc}
\begin{proof}
Since $\abs{G(e^{i\theta})}^2=G(e^{i\theta})G(e^{-i\theta})$, taking the logarithm we can see the statement of this lemma is equivalent to  assertion that the function $\{e^{i\theta}: \theta\in (0,\pi/2)\}\rarrow \R$ such that $\widetilde{g}_1(w)=\log(G(w))+\log(G(1/w))$ is strictly increasing over $\theta$. From
\begin{align*}
w\widetilde{g}_1'(w)=-\frac{w}{\sqrt{w^2+2c}}+\frac{1}{w}\frac{1}{\sqrt{1/w^2+2c}},
\end{align*}
changing variables $w(\theta)=e^{i\theta}$ and from $\sqrt{(\overline{w})^2+2c}=\overline{\sqrt{w^2+2c}}$,
\begin{align*}
\frac{d}{d\theta} \widetilde{g}_1(e^{i\theta})= 2 \ \mathcal{I}\left(\frac{e^{i\theta}}{\sqrt{(e^{i\theta})^2+2c}}\right).
\end{align*}
Once again we use the integral representation of the reciprocal  square root to write
\begin{align*}
\mathcal{I}\left(\frac{e^{i\theta}}{\sqrt{(e^{i\theta})^2+2c}}\right)&=\frac{1}{\pi}\int_{-\sqrt{2c}}^{\sqrt{2c}}\mathcal{I}\left (\frac{e^{i\theta}}{e^{i\theta}-is}\right )\frac{ds}{\sqrt{2c-s^2}}\\
&=\frac{1}{\pi}\int_0^{\sqrt{2c}}\frac{4s^2 \sin(\theta)\cos(\theta)}{|e^{i\theta}+is|^2|e^{i\theta}-is|^2}\frac{ds}{\sqrt{2c-s^2}}
\end{align*}
and since the integrand is positive for $0<s<\sqrt{2c}$, $0<\theta<\pi/2$ the lemma follows.
\end{proof}
\end{lemma}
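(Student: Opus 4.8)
The plan is to reduce the statement to the positivity of a single derivative. First I would use the symmetry $\overline{G(w)}=G(\overline{w})$ from \eqref{Gsymmetries}, together with $1/e^{i\theta}=e^{-i\theta}=\overline{e^{i\theta}}$, to write
\begin{align*}
|G(e^{i\theta})|^2 = G(e^{i\theta})\,\overline{G(e^{i\theta})} = G(e^{i\theta})\,G(1/e^{i\theta}) = \exp\big(\widetilde{g}_1(e^{i\theta})\big),
\end{align*}
where $\widetilde{g}_1(w)=\log G(w)+\log G(1/w)$ is the $\alpha=1$ saddle-point function introduced above. For $\theta\in(0,\pi/2)$ the point $e^{i\theta}$ avoids the cut $i[-\sqrt{2c},\sqrt{2c}]$ (every point of that segment has modulus $\le\sqrt{2c}<1$), and the conjugation symmetry forces $\widetilde{g}_1(e^{i\theta})$ to have zero imaginary part, so it is a real branch of $\log|G(e^{i\theta})|^2$. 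Hence the lemma is equivalent to the statement that $\theta\mapsto\widetilde{g}_1(e^{i\theta})$ is strictly increasing on $(0,\pi/2)$.

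Second, I would differentiate. By the chain rule $\frac{d}{d\theta}\widetilde{g}_1(e^{i\theta})=ie^{i\theta}\widetilde{g}_1'(e^{i\theta})$, and putting $\alpha=1$ in \eqref{gasderiv} gives $w\widetilde{g}_1'(w)=-w/\sqrt{w^2+2c}+(1/w)/\sqrt{1/w^2+2c}$. Evaluating at $w=e^{i\theta}$ and using $\sqrt{\overline{w}^2+2c}=\overline{\sqrt{w^2+2c}}$ from \eqref{squarerootsymmetries}, the second term becomes the complex conjugate of the first, so
\begin{align*}
\frac{d}{d\theta}\widetilde{g}_1(e^{i\theta}) = i\big(-z+\overline{z}\big) = 2\,\mathcal{I}(z), \qquad z:=\frac{e^{i\theta}}{\sqrt{e^{2i\theta}+2c}}.
\end{align*}

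Third, it remains to check that $\mathcal{I}(z)>0$ for $\theta\in(0,\pi/2)$. For this I would use the integral representation $1/\sqrt{w^2+2c}=\frac1\pi\int_{-\sqrt{2c}}^{\sqrt{2c}}\frac{1}{w-is}\frac{ds}{\sqrt{2c-s^2}}$ — the same one (with the correct branch cut and sheet) already used in the proof of Lemma \ref{gasdesc} — so that $z=\frac1\pi\int_{-\sqrt{2c}}^{\sqrt{2c}}\frac{e^{i\theta}}{e^{i\theta}-is}\frac{ds}{\sqrt{2c-s^2}}$. Taking imaginary parts gives $\mathcal{I}\big(e^{i\theta}/(e^{i\theta}-is)\big)=s\cos\theta/(1-2s\sin\theta+s^2)$; pairing the contributions of $s$ and $-s$ (the weight $1/\sqrt{2c-s^2}$ is even, and $|e^{i\theta}\mp is|^2=1\mp2s\sin\theta+s^2$) collapses the integrand to $4s^2\sin\theta\cos\theta/(|e^{i\theta}+is|^2|e^{i\theta}-is|^2)$, which is strictly positive for $s\in(0,\sqrt{2c})$ and $\theta\in(0,\pi/2)$. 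Therefore $\mathcal{I}(z)>0$, the derivative above is positive, and the lemma follows.

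I do not expect a real obstacle here. The only point requiring a little care is justifying the integral representation on the correct branch and sheet of the square root — one verifies that it has the right branch cut $i[-\sqrt{2c},\sqrt{2c}]$ and matches $1/w$ as $w\to\infty$ — but this is precisely the identity already invoked earlier in the paper, so I would simply reuse it. The one mildly clever step is the $s\leftrightarrow-s$ pairing, which turns a sign-indefinite integrand into a manifestly positive one.
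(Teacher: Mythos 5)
Your proof is correct and follows essentially the same route as the paper's: reduce to monotonicity of $\widetilde{g}_1(e^{i\theta})$, differentiate to obtain $2\,\mathcal{I}\bigl(e^{i\theta}/\sqrt{e^{2i\theta}+2c}\bigr)$, then apply the integral representation of $1/\sqrt{w^2+2c}$ and pair $s\leftrightarrow -s$ to exhibit a manifestly positive integrand. The extra care you take in justifying the branch/sheet of the square root and the conjugation symmetries is sound and matches what the paper leaves implicit.
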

\begin{corollary}
\label{corr}
For $\alpha \in (-1,\infty)$ ($\alpha\in (-\infty,-1)$) the function $(0,\pi/2)\ni\theta\mapsto \mathcal{R}[g_\alpha(ie^{-i\theta})]$ is strictly decreasing (strictly increasing). If $\alpha=-1$ the same function is zero.
\begin{proof}
Due to lemma \ref{Gsymmetries}
\begin{align}
\mathcal{R}[g_\alpha(ie^{-i\theta})]&=\log\abs{G(ie^{-i\theta})}+\alpha\log\abs{G(1/(ie^{-i\theta}))}\nonumber\\
&=(1+\alpha)\log|G(e^{i(\pi/2-\theta)})|\nonumber
\end{align}
so the statement follows from lemma \ref{Ginc}.
\end{proof}
\label{galphadec}
\end{corollary}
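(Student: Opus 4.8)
The plan is to deduce Corollary~\ref{corr} from Lemma~\ref{Ginc} by a reflection argument, so essentially no new analysis is needed. First I would note that $w=ie^{-i\theta}=e^{i(\pi/2-\theta)}$ lies on the unit circle, so $1/w=\bar w$, and that for $\theta\in(0,\pi/2)$ this point lies in the open first quadrant, hence off the cut $i[-\sqrt{2c},\sqrt{2c}]$ (recall $c<1/2$, so $\sqrt{2c}<1=|w|$). Moreover the injectivity of $G$ away from its cut shows that $G(w)=G(e^{i(\pi/2-\theta)})$ is never a negative real on this arc: if it were, then $G(e^{i\phi})=\overline{G(e^{i\phi})}=G(e^{-i\phi})$ with $e^{i\phi}\ne e^{-i\phi}$, a contradiction. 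Hence $g_\alpha(w)=\log G(w)+\alpha\log G(1/w)$ is well defined on the arc.

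The key step I would use is that on the unit circle the real parts of $\log G(w)$ and $\log G(1/w)$ coincide. Indeed, by the symmetry $\overline{G(w)}=G(\bar w)$ from~\eqref{Gsymmetries} together with $1/w=\bar w$, one has $|G(1/w)|=|G(\bar w)|=|\overline{G(w)}|=|G(w)|$, and therefore $\mathcal{R}[\log G(1/w)]=\log|G(1/w)|=\log|G(w)|=\mathcal{R}[\log G(w)]$. Taking real parts in the definition of $g_\alpha$ and writing $w=e^{i(\pi/2-\theta)}$ then gives
\[
\mathcal{R}[g_\alpha(ie^{-i\theta})]=(1+\alpha)\log\bigl|G(e^{i(\pi/2-\theta)})\bigr|.
\]

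It now suffices to invoke Lemma~\ref{Ginc}: as $\theta$ increases over $(0,\pi/2)$ the angle $\pi/2-\theta$ decreases over $(0,\pi/2)$, so $\theta\mapsto|G(e^{i(\pi/2-\theta)})|$ is strictly decreasing, and hence so is $\theta\mapsto\log|G(e^{i(\pi/2-\theta)})|$. Multiplying by the constant $1+\alpha$ yields the three cases of the statement: strictly decreasing for $\alpha>-1$, strictly increasing for $\alpha<-1$, and identically $0$ for $\alpha=-1$. I do not expect any genuine obstacle here — all the analytic content sits in Lemma~\ref{Ginc}; the only points needing a little care are checking that $ie^{-i\theta}$ stays in the domain of $g_\alpha$ over the whole open arc and tracking the orientation reversal $\theta\mapsto\pi/2-\theta$, which is what converts "$|G|$ increasing on $(0,\pi/2)$" into "$\mathcal{R}[g_\alpha]$ decreasing".
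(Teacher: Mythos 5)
Your proof is correct and follows essentially the same route as the paper: both reduce $\mathcal{R}[\tilde{g}_\alpha(ie^{-i\theta})]$ to $(1+\alpha)\log|G(e^{i(\pi/2-\theta)})|$ via the symmetries in \eqref{Gsymmetries} and then invoke Lemma \ref{Ginc}. The only difference is that you spell out the intermediate checks (that the arc avoids the cut, that $1/w=\bar w$ on the unit circle, and the resulting equality $|G(1/w)|=|G(w)|$) which the paper compresses into a single display.
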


We write
\begin{align}
\psi(\theta):=\widetilde{g}_{\alpha}(e^{i\theta})=(1+\alpha)\log|G(e^{i\theta})|+i(1-\alpha)\arg(G(e^{i\theta})).\label{psithetaalpha}
\end{align}
 to shorten the expressions.
  We require a few facts and approximations that will be used multiple times in the following proofs. From \eqref{psithetaalpha} and \eqref{Gsymmetries} we have
 \begin{align}
 \psi(\theta)=\overline{\psi(\pi-\theta)}+i\pi(1-\alpha),\label{psisym}
 \end{align}
 from which we see that $\mcI[\psi(\pi/2+\theta)]$ is odd and $\mcR[\psi(\pi/2+\theta)]$ is even on $[-\pi/2,\pi/2]$. In particular, \eqref{psisym} gives
 \begin{align}
 \mathcal{I}[\psi(\pi/2)]=(1-\alpha)\pi/2 \quad \text{and} \quad \mcR[\psi'(\pi/2)]=\mcR[\psi'''(\pi/2)]=0\label{psievals1}
 \end{align}
   A computation yields \begin{align}
\psi'(\theta)=(1+\alpha)\mathcal{R}\big [\frac{-ie^{i\theta}}{\sqrt{(e^{i\theta})^2+2c}}\big ]+i(1-\alpha)\mathcal{I}\big [ \frac{-ie^{i\theta}}{\sqrt{(e^{i\theta})^2+2c}}\big ]
\label{psi'1}
\end{align}
from which we get 
\begin{align}
\psi'(\pi/2)=-i\frac{(1-\alpha)}{\sqrt{1-2c}}.\label{psievals2}
\end{align} Further computation gives 
\begin{align}
\psi''(\pi/2)=-\frac{2c(1+\alpha)}{(1-2c)^{3/2}}.\label{psievals3}
\end{align} 
Taylor's theorem gives a bounded function $R_{12}(\theta,\theta_c,\alpha)$ such that
\begin{align}
\psi(\theta)-\psi(\theta_c)=\psi'(\theta_c)(\theta-\theta_c)+\psi''(\theta_c)(\theta-\theta_c)^2/2+\psi'''(\theta_c)(\theta-\theta_c)^3/3!+R_{12}(\theta,\theta_c,\alpha)(\theta-\theta_c)^4,
\label{taylorpsithetac4fc}
\end{align}
and there are also bounded functions $R_{13}(\theta_c,\alpha), R_{14}(\theta_c,\alpha),R_{15}(\theta_c,\alpha)$ such that
\begin{align}
\label{taylorpsi''thetac1e1}\psi'(\theta_c)&=\psi'(\pi/2)-\psi''(\pi/2)\varphi_c+\frac{\psi'''(\pi/2)}{2}\varphi_c^2 +R_{13}(\theta_c,\alpha)\varphi_c^3,\\
\psi''(\theta_c)&=\psi''(\pi/2)-\psi'''(\pi/2)\varphi_c+R_{14}(\theta_c,\alpha)\varphi_c^2,\nonumber\\
\psi'''(\theta_c)&=\psi'''(\pi/2)+R_{15}(\theta_c,\alpha)\varphi_c\nonumber.
\end{align}
From \eqref{psievals1}, \eqref{taylorpsithetac4fc} and \eqref{taylorpsi''thetac1e1} we obtain
\begin{align}
\label{realtaylor5x1q}\mcR[\psi(\theta)-\psi(\theta_c)]=[\varphi_c\psi''(\pi/2)+\mcR[R_{13}(\theta_c,\alpha)]\varphi_c^3](\theta_c-\theta)
+[\psi''(\pi/2)+\mcR[R_{14}(\theta_c,\alpha)]\varphi_c^2]\frac{(\theta_c-\theta)^2}{2}\\-\mcR[R_{15}(\theta_c,\alpha)]\varphi_c(\theta_c-\theta)^3+\mcR[R_{12}(\theta,\theta_c,\alpha)](\theta_c-\theta)^4.\nonumber
\end{align}
We have a bounded function $R_{16}(\theta,\alpha)$ such that
\begin{align}
\psi(\pi/2+\theta)=\psi(\pi/2)+\psi'(\pi/2)\theta+\psi''(\pi/2)\theta^2/2+\theta^3R_{16}(\theta,\alpha),\label{psitaylorpi2}
\end{align}
which using \eqref{psievals1}, \eqref{psievals2} gives a bounded function $R_{8}(\theta)$ on $\R$ such that
\begin{align}
\mcI[\psi(\pi/2+\theta)]=(1-\alpha)(\pi/2-\theta/\sqrt{1-2c}+R_{8}(\theta)\theta^3)\label{tempImpsi12}
\end{align}
We are now ready for the proof of Proposition \ref{Comegacinnerdisc}.
\begin{proof}[Proof of Proposition \ref{Comegacinnerdisc}]
We parametrise $w(\theta)=e^{i\theta}$ and use the fact that $l+k$ is even to write
\begin{align}
\tilde{C}_{\omega_c}(k,\ell)&=\frac{i^{-k-\ell}}{2(1+a^2)2\pi i}(1+(-1)^{\ell+k})\int_{\Gamma_{\omega_c}\cap\HH}\frac{dw}{w}\frac{G(w)^\ell G(1/w)^k}{\sqrt{w^2+2c}\sqrt{1/w^2+2c}}\label{temp234321}\\
&=\frac{i^{-k-\ell}}{2(1+a^2)\pi }\int_{\theta_c}^{\pi-\theta_c}\frac{\exp{(\ell \psi(\theta))}}{|e^{2i\theta}+2c|}d\theta
\nonumber
\end{align}
We rewrite \eqref{psitaylorpi2} to get a $R_{17}(\theta,\alpha)$ such that
\begin{align}
\psi(\theta)-\psi(\pi/2)=-\psi'(\pi/2)(\pi/2-\theta)+R_{17}(\theta,\alpha)(\pi/2-\theta)^2.
\label{tempz4f4}
\end{align} We use \eqref{tempz4f4} and the bound $|e^{t}-1|\leq |t|e^{|t|}$ to get constants $C_1,C_2>0$ such that
\begin{align}
&\abs{\int_{\theta_c}^{\pi-\theta_c}\frac{\exp{(\ell \psi(\theta))}}{|e^{2i\theta}+2c|}d\theta-\int_{\theta_c}^{\pi-\theta_c}\frac{\exp{(\ell \psi(\pi/2)-\psi'(\pi/2)(\pi/2-\theta))}}{|e^{2i\theta}+2c|}d\theta}\label{tempdiffb24}
\\&\leq C_1|\ell|\varphi_c^2e^{\ell\mcR[\psi(\pi/2)]}\int_{\theta_c}^{\pi-\theta_c}e^{|\ell R_{17}(\theta,\alpha)|(\pi/2-\theta)^2}d\theta\nonumber \\ &\leq C_1e^{\ell\mcR[\psi(\pi/2)]}\varphi_c^{1+\gamma}e^{C_2\varphi_c^\gamma}.\nonumber
\end{align}
Next the second term in the difference in \eqref{tempdiffb24} can be approximated by \eqref{taylorinnd3f}. We get 
\begin{align}
&\abs{\int_{\theta_c}^{\pi-\theta_c}\frac{\exp(\ell\psi(\pi/2)-\ell\psi'(\pi/2)(\pi/2-\theta))}{|e^{2i\theta}+2c|}d\theta-\int_{\theta_c}^{\pi-\theta_c}\frac{\exp(\ell\psi(\pi/2)-\ell\psi'(\pi/2)(\pi/2-\theta))}{1-2c}d\theta}\label{tempdiffr5fv}
\\
&\leq C_4e^{\ell\mcR[\psi(\pi/2)]}\varphi_c^3\leq C_4e^{l\mcR[\psi(\pi/2)]}\frac{\varphi_c^{1+\gamma}}{|\ell|}.\nonumber
\end{align}
Finally, the second them in the difference in \eqref{tempdiffr5fv} is
\begin{align}
\int_{\theta_c}^{\pi-\theta_c}\frac{\exp(\ell\psi(\pi/2)-\ell\psi'(\pi/2)(\pi/2-\theta))}{1-2c}d\theta=\frac{2e^{\ell\psi(\pi/2)}\sin((\ell-k)\varphi_c/\sqrt{1-2c})}{(\ell-k)\sqrt{1-2c}} \label{tempsincontribution3d3}
\end{align}
The two bounds \eqref{tempdiffb24}, \eqref{tempdiffr5fv} together with \eqref{tempsincontribution3d3} and \eqref{temp234321} give
a bounded function $R_{18}(\xi,\ell,k)$ such that
\begin{align}
\tilde{C}_{\omega_c}(k,\ell)=\frac{(-1)^{k}|G(i)|^{\ell+k}}{(1-a)^2\pi}\Big (\frac{\sin((\ell-k)\varphi_c/\sqrt{1-2c})}{(\ell-k)/\sqrt{1-2c}}+R_{18}(\xi,\ell,k)\varphi_c^{1+\gamma}\Big ).\label{tempCom1}
\end{align}
Next, for all $\eps>0$ small if $bx\in \Lambda_\eps$ then there is a $C(\eps)>0$ such that
\begin{align}
\abs{\frac{\sin(bx)}{x}}\geq C(\eps)\min (|b|,1/|x|)\label{tempz3e4}
\end{align}
where we used the lower bound $|\sin(x)/x|\geq 1-2|x|/\pi$ for $x\in [-\pi/2,\pi/2]$.
 Hence if $R_{6}$ is defined by \eqref{tempCom12},
\begin{align}
|R_{6}| &\leq C\varphi_c^{1+\gamma}/\min(\varphi_c/(1-2c),1/|\ell-k|)\\
&\leq C\max(\varphi_c^{\gamma}(1-2c),\varphi_c^{1+\gamma}|\ell-k|).\nonumber
\end{align} 
Now $\varphi_c^{2-\gamma}|\ell-k|\leq \varphi_c^{2-\gamma}(|\ell|+|k|)<1$ implies $\varphi_c^{1+\gamma}|\ell-k|\leq \varphi_c^{-1+2\gamma}$ and $\varphi_c^\gamma(1-2c)\leq\varphi_c^{-1+2\gamma}$, so we obtain \eqref{temp2zmhh}.
\end{proof}

Now we give three propositions which go into the proof of Proposition \ref{propeklcomegac}.

\begin{proposition}\label{Comegasin}
Let non-zero integers $\ell,k$ be such that $\ell+k\leq 2$ is even and $\alpha = k/\ell$ lies in a compact subset of $[-1,1)$. There is a bounded function $R_{19}(\xi,\ell,k)$ such that
\begin{align}
\tilde{C}_{\omega_c}(k,\ell)=\frac{(-1)^{k}|G(\omega_c)|^{\ell+k}}{\pi(1-a)^2}\Big[\frac{\sin{((\ell-k)(F(\varphi_c)))}}{(\ell-k)/\sqrt{1-2c}}+R_{19}(\xi,\ell,k)\frac{\varphi_c}{(\ell-k)}\Big].
\label{Cwcalphaminus11/2}
\end{align}
\begin{proof}
We recall \eqref{temp234321} as
\begin{align}
\tilde{C}_{\omega_c}(k,\ell)&=\frac{i^{-k-\ell}}{2(1+a^2)\pi }\int_{\theta_c}^{\pi-\theta_c}\frac{\exp{(\ell \psi(\theta))}}{|e^{2i\theta}+2c|}d\theta.
\label{temp23432}
\end{align}
Integrating by parts twice, we have the integral in \eqref{temp23432} equal to
\begin{align}
\frac{1}{\ell}\big (e^{\ell\psi(\pi-\theta_c)}b(\pi-\theta_c)-e^{\ell\psi(\theta_c)}b(\theta_c)\big )-\frac{1}{\ell^2}\big (e^{\ell\psi(\pi-\theta_c)}\frac{b'(\pi-\theta_c)}{\psi'(\pi-\theta_c)}-e^{\ell\psi(\theta_c)}\frac{b'(\theta_c)}{\psi'(\theta_c)}\big)\label{temp087}\\+\frac{1}{\ell^2}\int_{\theta_c}^{\pi-\theta_c}\exp( \ell\psi(\theta)) \frac{d}{d\theta}\Big (\frac{b'(\theta)}{\psi'(\theta)}\Big )d\theta
\nonumber
\end{align}
where
\begin{align}
 b(\theta)=1/[\psi'(\theta)|e^{2i\theta}+2c|].
\end{align}
From \eqref{psisym} we get $\psi'(\theta)=-\overline{\psi'(\pi-\theta)}$ and so $b(\theta)=-\overline{b(\pi-\theta)}$. This gives
\begin{align}
e^{\ell\psi(\pi-\theta_c)}b(\pi-\theta_c)-e^{\ell\psi(\theta_c)}b(\theta_c)
&=-2e^{\ell\mathcal{R}[\psi(\theta_c)]} \mathcal{R}\big [e^{ i\ell\mathcal{I}[\psi(\theta_c)]}b(\theta_c)\big ].\label{temp5g5}
\end{align}
Note that because of \eqref{psievals2} the term next to the exponential in the integrand in \eqref{temp087} is not defined when $\alpha=1$. It is bounded when $\alpha$ lies in a compact subset of $[-1,1)$, this follows from the change of variables $w=e^{i\theta}$ in lemma \eqref{gasderiv} and noting that $w_\alpha>1$ for $0<\alpha<1$. When $\ell+k<0$ and $\alpha<1$ we have $\ell<0$ so by corollary \ref{galphadec}, $l\mcR[\psi(\theta)]$ achieves its maximum over $[\theta_c,\pi-\theta_c]$ at the endpoints where $l\mcR[\psi(\theta_c)]=l\mcR[\psi(\pi-\theta_c)]$. For $\ell+k\in \{0,1,2\}$, $\ell \mcR[\psi(\theta)]$ is bounded trivially on $[\theta_c,\pi-\theta_c]$. From this we obtain a bounded function $R_{20}(\xi,\ell,k)$ such that 
\begin{align}
\int_{\theta_c}^{\pi-\theta_c}\frac{\exp{(\ell \psi(\theta))}}{|e^{2i\theta}+2c|}d\theta&=\frac{-2}{\ell}e^{\ell\mathcal{R}[\psi(\theta_c)]} \mathcal{R}\big [e^{ i\ell\mathcal{I}[\psi(\theta_c)]}b(\theta_c)\big ]+\varphi_c\frac{e^{\ell\mathcal{R}[\psi(\theta_c)]}}{\ell^2(1-\alpha)^2}R_{20}(\xi,\ell,k).\label{temp12412}
\end{align}
By \eqref{taylorpsi''thetac1e1} and \eqref{temomo} there is a bounded function $R_{21}(\xi,\alpha)$ such that
\begin{align}
|b(\theta_c)-b(\pi/2)|\leq 2\frac{|R_{21}(\xi,\alpha)|\varphi_c}{(1-2c)^2|\psi'(\pi/2)|^2}.\label{temptaylorbound2ss}
\end{align}
From \eqref{tempImpsi12}, \eqref{temptaylorbound2ss} and $b(\pi/2)=i/[(1-\alpha)\sqrt{1-2c}]$ there are bounded functions $R_{22}(\xi,\alpha), R_{23}(\xi,\ell,k)$ such that \eqref{temp12412} equals
\begin{align}
\label{tempsdf3} 2i^{\ell-k} e^{\ell\mathcal{R}[\psi(\theta_c)]}\Big(\Big[\frac{\sin{[(\ell-k)\varphi_c/\sqrt{1-2c}-(\ell-k)R_{7}(-\varphi_c)\varphi_c^3]}}{(\ell-k)\sqrt{1-2c}}\\\Big (1+R_{22}(\xi,\alpha)\varphi_c\Big)\Big ]+R_{23}(\xi,\ell,k)\frac{\varphi_c}{(\ell-k)^2}\Big).\nonumber
\end{align}
Hence \eqref{Cwcalphaminus11/2} follows by \eqref{temp23432}, \eqref{temp12412} and \eqref{tempsdf3}.
\end{proof}
\end{proposition}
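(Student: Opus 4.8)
The plan is to treat this as an endpoint‑contribution integral rather than a saddle‑point one. First I would parametrise $w(\theta)=e^{i\theta}$ in the definition of $\tilde C_{\omega_c}(k,\ell)$, collapse the two arcs of $\Gamma_{\omega_c}$ onto the single arc $[\theta_c,\pi-\theta_c]$ using that $\ell+k$ is even, and write the resulting integrand as $e^{\ell\psi(\theta)}/|e^{2i\theta}+2c|$ with $\psi(\theta)=\widetilde g_\alpha(e^{i\theta})$ as in \eqref{psithetaalpha}. The crucial structural point is that in this regime $\psi$ has \emph{no} critical point on the arc: for $\alpha$ in a compact subset of $[-1,1)$ the relevant zero $iw_\alpha$ of $\widetilde g'_\alpha$ has $w_\alpha>1$ (and $\widetilde g'_\alpha$ has no zero at all for $\alpha\le 0$), so $\psi'$ is bounded away from $0$ on $[\theta_c,\pi-\theta_c]$ uniformly in $\alpha$. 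Moreover, when $\ell+k<0$ we have $\ell<0$, and by Lemma \ref{Ginc} (equivalently Corollary \ref{galphadec}) $\mathcal R[\psi(\theta)]=(1+\alpha)\log|G(e^{i\theta})|$ is maximised on the arc at its two endpoints, so $e^{\ell\mathcal R[\psi(\theta)]}\le e^{\ell\mathcal R[\psi(\theta_c)]}$; for $\ell+k\in\{0,2\}$ the exponent $\ell\mathcal R[\psi]=(\ell+k)\log|G(e^{i\theta})|$ is bounded trivially.

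Next I would integrate by parts twice, writing the integrand as $\psi'\cdot b$ with $b(\theta)=1/(\psi'(\theta)|e^{2i\theta}+2c|)$; this produces a leading boundary term of order $\ell^{-1}$ and two remainders of order $\ell^{-2}$ (one boundary term, one integral of $(b'/\psi')'$ against $e^{\ell\psi}$). Using the reflection symmetry \eqref{psisym}, $\psi(\theta)=\overline{\psi(\pi-\theta)}+i\pi(1-\alpha)$ — and that $\ell(1-\alpha)=\ell-k$ is even so the extra phase is trivial — one gets $\psi'(\theta)=-\overline{\psi'(\pi-\theta)}$ and $b(\theta)=-\overline{b(\pi-\theta)}$, which collapses the pair of endpoint contributions to $-2e^{\ell\mathcal R[\psi(\theta_c)]}\mathcal R\big[e^{i\ell\mathcal I[\psi(\theta_c)]}b(\theta_c)\big]$. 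The $\ell^{-2}$ remainders are $O\!\big(\varphi_c\,e^{\ell\mathcal R[\psi(\theta_c)]}/(\ell^2(1-\alpha)^2)\big)$, which I would control via the uniform lower bound on $|\psi'|$ and the resulting uniform bounds on $b,b',(b'/\psi')'$ on the arc.

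Finally I would extract the leading term. The key exact identity is $e^{i\theta_c}=ie^{-i\varphi_c}$, so $\mathcal I[\psi(\theta_c)]=(1-\alpha)\arg G(ie^{-i\varphi_c})=(1-\alpha)(\pi/2+F(\varphi_c))$ by the definition \eqref{Ftheta} of $F$; hence $e^{i\ell\mathcal I[\psi(\theta_c)]}=i^{\ell-k}e^{i(\ell-k)F(\varphi_c)}$ since $\ell-k$ is even. Taylor‑expanding $b$ about $\pi/2$ gives $b(\theta_c)=b(\pi/2)+O(\varphi_c)=i/((1-\alpha)\sqrt{1-2c})+O(\varphi_c)$ (the error uniform in $\alpha$, again because $|\psi'|$ is bounded below and $|e^{2i\theta}+2c|=1-2c>0$ at $\pi/2$), using \eqref{psievals2}; taking the real part turns the $b(\pi/2)$‑part into $-\,i^{\ell-k}\sin((\ell-k)F(\varphi_c))/((1-\alpha)\sqrt{1-2c})$. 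Combining with $e^{\ell\mathcal R[\psi(\theta_c)]}=|G(\omega_c)|^{\ell+k}$, the prefactor $i^{-k-\ell}/(2(1+a^2)\pi)$ (so that $i^{-k-\ell}i^{\ell-k}=(-1)^k$), and the identity $(1+a^2)(1-2c)=(1-a)^2$ — which rewrites $\tfrac1{(1+a^2)\sqrt{1-2c}}$ as $\tfrac{\sqrt{1-2c}}{(1-a)^2}$ — yields exactly the main term $\frac{(-1)^k|G(\omega_c)|^{\ell+k}}{\pi(1-a)^2}\cdot\frac{\sin((\ell-k)F(\varphi_c))}{(\ell-k)/\sqrt{1-2c}}$. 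The three error sources (the $O(\varphi_c)$ in $b(\theta_c)$, the $O(\varphi_c)$ from the real part, and the $O(\varphi_c/\ell)$ from the $\ell^{-2}$ remainders) all carry a factor $1/\ell$ relative to the main term, and since $\ell-k=\ell(1-\alpha)\asymp\ell$ uniformly (as $\alpha$ avoids $1$), they assemble into a single $R_{19}(\xi,\ell,k)\,\varphi_c/(\ell-k)$, giving \eqref{Cwcalphaminus11/2}.

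The main obstacle is bookkeeping for uniformity: one must check that the Taylor remainder for $b$ near $\pi/2$, the bound on $(b'/\psi')'$ along the whole arc, and the constants in the endpoint estimate are uniform in $\alpha$ over the compact subset of $[-1,1)$. This rests entirely on the quantitative lower bound $|\psi'(\theta)|\ge c_0>0$ on $[\theta_c,\pi-\theta_c]$, which holds precisely because $\alpha$ stays away from $1$ — at $\alpha=1$ one has $w_\alpha=1$, the saddle $i$ lands on the contour, and this whole argument breaks (that case is Proposition \ref{propeklcomegac}).
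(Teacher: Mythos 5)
The proposal is correct and follows essentially the same route as the paper: collapse $\Gamma_{\omega_c}$ to the single arc $[\theta_c,\pi-\theta_c]$, integrate by parts twice with $b=1/(\psi'|e^{2i\theta}+2c|)$, use the reflection symmetry $\psi(\theta)=\overline{\psi(\pi-\theta)}+i\pi(1-\alpha)$ to reduce to a single endpoint, Taylor-expand $b$ at $\pi/2$, and control the $\ell^{-2}$ remainders via the uniform lower bound on $|\psi'|$ (which holds since the saddle $iw_\alpha$ has $w_\alpha>1$ for $\alpha$ bounded away from $1$). Your use of the exact relation $\mcI[\psi(\theta_c)]=(1-\alpha)(\pi/2+F(\varphi_c))$ is a slightly cleaner route to the phase $i^{\ell-k}e^{i(\ell-k)F(\varphi_c)}$ than the paper's appeal to the Taylor expansion \eqref{tempImpsi12}, but the two are the same by the definition \eqref{Ftheta}, and the rest of the bookkeeping (including the algebra $(1+a^2)(1-2c)=(1-a)^2$) matches the paper's computation.
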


 \begin{proposition}\label{steepestdescComega}
Let non-zero integers $\ell,k$ be such that $\ell+k< 0$ is even and $\alpha = k/\ell$ lies in a compact subset of $(-1,1]$. There is a bounded function $R_{24}(\xi,\ell,k)$ such that 
\begin{align}
\tilde{C}_{\omega_c}(k,\ell)&=\frac{(-1)^{k}  \ |G(\omega_c)|^{\ell+k}}{\pi(1-a)^2}\Big[e^{\frac{(\ell+k)c\varphi_c^2}{(1-2c)^{3/2}}}\int_{0}^{\varphi_c}\cos( (\ell-k)F(\theta))e^{-\frac{(\ell+k)c}{(1-2c)^{3/2}}\theta^2}d\theta\label{comegacrewriteg5g3}\\&\qquad\qquad+R_{24}(\xi.\ell,k)\frac{\varphi_c}{|\ell+k|}\Big].\nonumber
\end{align} 
\begin{proof}
We parametrise $\Gamma_{\omega_c}\cap \HH^+$ by $w(\theta)=e^{i\theta}$ for $\theta\in [\theta_c,\pi/2]$, use \eqref{integralsymmetries} and the fact that $\ell+k$ is even to write 
\begin{align}
\tilde{C}_{\omega_c}(k,\ell)&=\frac{i^{-k-\ell}}{2(1+a^2)2\pi i}(1+(-1)^{\ell+k})\Big[\int_{\Gamma_{\omega_c}\cap\HH^+}\frac{dw}{w}\frac{G(w)^\ell G(1/w)^k}{\sqrt{w^2+2c}\sqrt{1/w^2+2c}}\nonumber\\&\qquad\qquad-(-1)^{\ell+k}\overline{\int_{\Gamma_{\omega_c}\cap\HH^+}\frac{dw}{w}\frac{G(w)^\ell G(1/w)^k}{\sqrt{w^2+2c}\sqrt{1/w^2+2c}}}\Big]\nonumber\\
&=\frac{i^{-k-\ell}}{(1+a^2)2\pi i}2ie^{\ell\mcR{[\psi(\theta_c)]}}\mcR\Big[\int_{\theta_c}^{\pi/2}\frac{e^{\ell\psi(\theta)-\ell\mcR[\psi(\theta_c)]}}{|e^{2i\theta}+2c|}d\theta\Big].
\label{temp234323}
\end{align}
Now, \begin{align}
\frac{d}{d\theta}|e^{2i\theta}+2c|=\frac{2ic(e^{-2i\theta}-e^{2i\theta})}{|e^{2i\theta}+2c|}
\label{temomo}
\end{align}  which is zero when $\theta=\pi/2$, so there is a bounded function $R_{25}(\theta)$ such that 
\begin{align}
\frac{1}{|e^{2i\theta}+2c|}=\frac{1}{(1-2c)}+R_{25}(\theta)(\theta-\pi/2)^2.
\label{taylorinnd3f}
\end{align} Hence the integral in \eqref{temp234323} can be approximated with an error bound given by
\begin{align}
&\abs{\int_{\theta_c}^{\pi/2}\frac{e^{\ell\psi(\theta)-\ell\mcR[\psi(\theta_c)]}}{|e^{2i\theta}+2c|}d\theta-\int_{\theta_c}^{\pi/2}\frac{e^{\ell\psi(\theta)-\ell\mcR[\psi(\theta_c)]}}{1-2c}d\theta}
\leq C\int_{\theta_c}^{\pi/2}(\pi/2-\theta)^2e^{\ell\mcR[\psi(\theta)-\psi(\theta_c)]}d\theta\label{boundtempbound345f}
\end{align} for some $C>0$.
We will return to this bound. Also
note that the assumptions $\ell+k<0$ and $\alpha\in (-1,1]$ give $\psi''(\pi/2)>0$ and $\ell<0$. We now focus on approximating the real part of the second integral appearing in the difference in \eqref{boundtempbound345f}
\begin{align}
\mcR\Big (\int_{\theta_c}^{\pi/2}e^{\ell\psi(\theta)-\ell\mcR[\psi(\theta_c)]}\Big )=\int_{\theta_c}^{\pi/2}\cos(\ell\mcI[\psi(\theta)])e^{\ell\mcR[\psi(\theta)-\psi(\theta_c)]}d\theta.
\end{align}
Recalling \eqref{realtaylor5x1q}, we see that
\begin{align}
&-\psi''(\pi/2)\varphi_c(\theta-\theta_c)+\psi''(\pi/2)(\theta_c-\theta)^2/2\nonumber\\
&-|\mcR[R_{13}(\theta_c,\alpha)\varphi_c^3(\theta_c-\theta)+R_{14}(\theta_c,\alpha)\varphi_c^2(\theta_c-\theta)^2/2\nonumber\\&\qquad\qquad R_{15}(\theta_c,\alpha)\varphi_c(\theta_c-\theta)^3+R_{12}(\theta_c,\alpha)(\theta_c-\theta)^4]|\nonumber\\
&>-\frac{1}{4}\psi''(\pi/2)\varphi_c(\theta-\theta_c) \label{tempbound1x2}
\end{align}
uniformly for $\theta_c\leq \theta\leq\pi/2$. Note the inequality in \eqref{tempbound1x2} reverses upon multiplying both sides by $\ell<0$.
Now we use \eqref{realtaylor5x1q}, \eqref{tempbound1x2} and the bound $|e^t-1|\leq |t|e^{|t|}$ to obtain a constant $C>0$ such that 
\begin{align}
\label{tempbound1x3}&\abs{\int_{\theta_c}^{\pi/2}\cos(\ell\mcI[\psi(\theta)])e^{\ell\mcR[\psi(\theta)-\psi(\theta_c)]}d\theta-\int_{\theta_c}^{\pi/2}\cos(\ell\mcI[\psi(\theta)])e^{-\ell\psi''(\pi/2)(\theta-\theta_c)[\varphi_c-\frac{\theta-\theta_c}{2}]}d\theta}\\
&\leq C|\ell|\int_{0}^{\varphi_c}e^{-\ell\psi''(\pi/2)\varphi_c\theta/4}\Big(\varphi_c^3\theta+\varphi_c^2\theta^2+\varphi_c\theta^3+\theta^4\Big )d\theta\nonumber\\
&\leq 4C|\ell|\varphi_c^3\int_{0}^{\varphi_c}e^{-\ell\psi''(\pi/2)\varphi_c\theta/4} \ \theta d\theta= 4C\frac{\varphi_c}{|\ell| \ (\psi''(\pi/2))^2}\int_0^{\ell\psi''(\pi/2)\varphi_c^2}e^{-\theta/4} \ \theta d\theta.\nonumber
\end{align}
We return to the bound \eqref{boundtempbound345f}. By a similar argument to \eqref{tempbound1x3}
\begin{align}
&Ce^{\ell\mcR[\psi(\theta_c)]}\abs{\int_{\theta_c}^{\pi/2}(\pi/2-\theta)^2e^{\ell\mcR{\psi(\theta)-\psi(\theta_c)}}d\theta-\int_{\theta_c}^{\pi/2}(\pi/2-\theta)^2e^{-\ell\psi''(\pi/2)(\theta-\theta_c)[\varphi_c-(\theta-\theta_c)/2]}d\theta}\label{tempbound1x4}\\
&\leq Ce^{\ell\mcR[\psi(\theta_c)]}\int_{0}^{\varphi_c}(\varphi_c-\theta)^2e^{-\ell\psi''(\pi/2)\varphi_c\theta}\Big(\varphi_c^3\theta+\varphi_c^2\theta^2+\varphi_c\theta^3+\theta^4\Big )d\theta\nonumber\\
&\leq \frac{4Ce^{\ell\mcR[\psi(\theta_c)]}\varphi_c^3}{|\ell\psi''(\pi/2)|^2}\int_0^{\ell\psi''(\pi/2)\varphi_c^2}e^{-\theta/4} \ \theta d\theta.\label{templgrp4}
\end{align}
The second term in the difference in \eqref{tempbound1x4} is bounded by
\begin{align}
&Ce^{\ell\mcR[\psi(\theta_c)]}\int_{\theta_c}^{\pi/2}(\pi/2-\theta)^2e^{-\ell\psi''(\pi/2)(\theta-\theta_c)[\varphi_c-(\theta-\theta_c)/2]}d\theta\nonumber\\
&\leq Ce^{\ell\mcR[\psi(\theta_c)]}\int_{0}^{\varphi_c}\varphi_c^2e^{-\ell\psi''(\pi/2)\theta\varphi_c/2}d\theta\leq C_1e^{\ell\mcR[\psi(\theta_c)]}\frac{\varphi_c}{|\ell\psi''(\pi/2)|}.\label{temprhssd}
\end{align}
for some $C_1>0$. Hence by the triangle inequality \eqref{boundtempbound345f} is bounded above by the sum of \eqref{templgrp4} and the right hand side of \eqref{temprhssd}.
Next, because of \eqref{psisym}, the fact that $\mcI[\psi(\pi/2+\theta)]$ is odd on $[-\pi/2,\pi/2]$ and that $\ell \pi(1-\alpha)=\pi(\ell-k)$ is an even multiple of $\pi$, we have
\begin{align}
&\int_{\theta_c}^{\pi/2}\cos(\ell\mcI[\psi(\theta)])e^{-\ell\psi''(\pi/2)(\theta-\theta_c)[\varphi_c-\frac{\theta-\theta_c}{2}]}\nonumber\\
&=\int_0^{\varphi_c}\cos(\ell\mcI[\psi(\theta+\theta_c)])e^{\ell\psi''(\pi/2)[(\theta-\varphi_c)^2-\varphi_c^2]/2}d\theta\nonumber\\
&=e^{-\ell\psi''(\pi/2)\varphi_c^2/2}\int_{-\varphi_c}^0\cos(\ell\mcI[\psi(\pi/2+\theta)])e^{\ell\psi''(\pi/2)\theta^2/2}d\theta\nonumber\\
&=e^{-\ell\psi''(\pi/2)\varphi_c^2/2}\int_{0}^{\varphi_c}\cos(\ell\mcI[\psi(\pi/2-\theta)])e^{\ell\psi''(\pi/2)\theta^2/2}.d\theta\label{tempfinal1zz2}
\end{align}
The integral in \eqref{comegacrewriteg5g3} follows directly from \eqref{tempfinal1zz2} since \eqref{tempfinal1zz2} multiplied by $1/(1-2c)$ gives the main contribution to the integral in \eqref{temp234323}.
\end{proof}
\end{proposition}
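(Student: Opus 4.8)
The plan is to evaluate the contour integral defining $\tilde C_{\omega_c}(k,\ell)$ by a Laplace-type estimate anchored at the endpoint $\theta_c$ of the relevant arc of $\Gamma_{\omega_c}$. Since $\theta_c=\pi/2-\varphi_c$ drifts to $\pi/2$ as $\varphi_c\to0$, this is a boundary-point Laplace analysis in which the scales $\varphi_c$, $|\ell|$ and $|\ell+k|$ all interact, and keeping those straight in the error terms will be the main technical issue. To set up, I would parametrise $\Gamma_{\omega_c}\cap\HH$ by $w(\theta)=e^{i\theta}$, use that $\ell+k$ is even together with the reflection symmetry \eqref{psisym}, $\psi(\theta)=\overline{\psi(\pi-\theta)}+i\pi(1-\alpha)$, to fold the integral over $(\theta_c,\pi-\theta_c)$ onto $(\theta_c,\pi/2)$ as in \eqref{temp234323}, and factor out $e^{\ell\mcR[\psi(\theta_c)]}=|G(\omega_c)|^{\ell(1+\alpha)}=|G(\omega_c)|^{\ell+k}$ by \eqref{psithetaalpha}. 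This reduces everything to estimating $\int_{\theta_c}^{\pi/2}\cos(\ell\,\mcI[\psi(\theta)])\,|e^{2i\theta}+2c|^{-1}e^{\ell\,\mcR[\psi(\theta)-\psi(\theta_c)]}\,d\theta$. The hypotheses $\ell+k<0$ and $\alpha\in(-1,1]$ give $\ell<0$ and $\ell\psi''(\pi/2)=-2c(\ell+k)(1-2c)^{-3/2}>0$ by \eqref{psievals3}, and the Taylor expansion \eqref{realtaylor5x1q} shows the leading quadratic exponent $\ell\psi''(\pi/2)(\theta-\theta_c)\big[(\theta-\theta_c)/2-\varphi_c\big]$ is $\le0$ on $[\theta_c,\pi/2]$, so this really is a decaying boundary Laplace integral whose mass concentrates at $\theta_c$.

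Next I would carry out three successive approximations, each controlled through $|e^t-1|\le|t|e^{|t|}$: (i) replace $|e^{2i\theta}+2c|^{-1}$ by $(1-2c)^{-1}$, the error being $O((\theta-\pi/2)^2)$ because $\tfrac{d}{d\theta}|e^{2i\theta}+2c|$ vanishes at $\pi/2$, cf.\ \eqref{temomo}--\eqref{taylorinnd3f}; (ii) replace $\mcR[\psi(\theta)-\psi(\theta_c)]$ by its quadratic part $-\psi''(\pi/2)(\theta-\theta_c)[\varphi_c-(\theta-\theta_c)/2]$, absorbing the cubic and quartic Taylor remainders and the $\varphi_c$-corrections contained in \eqref{realtaylor5x1q} into the error by way of the uniform exponent bound \eqref{tempbound1x2}; (iii) complete the square and recentre the resulting Gaussian weight at $\pi/2$, using that $\mcI[\psi(\pi/2+\cdot)]$ is odd and that $\ell\pi(1-\alpha)=\pi(\ell-k)$ is an even multiple of $\pi$, exactly as in \eqref{tempfinal1zz2}. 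In each of these estimates, after scaling $\theta-\theta_c$ by $|\ell\psi''(\pi/2)|\varphi_c$, the error is a bounded multiple of $\int_0^{\ell\psi''(\pi/2)\varphi_c^2}\theta e^{-\theta/4}\,d\theta\le\int_0^\infty\theta e^{-\theta/4}\,d\theta<\infty$, hence of size $O\big(\varphi_c/(|\ell|\,\psi''(\pi/2)^2)\big)$; since $\alpha$ ranges over a compact subset of $(-1,1]$ the factor $1+\alpha$ is bounded away from $0$, so $|\ell|\,\psi''(\pi/2)^2$ is comparable to $(1+\alpha)|\ell+k|$ and the error is $O(\varphi_c/|\ell+k|)$, which is precisely the claimed remainder $R_{24}\varphi_c/|\ell+k|$. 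Note that no lower bound on the main term is needed, since in \eqref{comegacrewriteg5g3} the remainder sits additively inside the bracket.

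Finally I would identify the main term. After step (iii) it reads $(1-2c)^{-1}e^{-\ell\psi''(\pi/2)\varphi_c^2/2}\int_0^{\varphi_c}\cos\big(\ell\,\mcI[\psi(\pi/2-\theta)]\big)e^{\ell\psi''(\pi/2)\theta^2/2}\,d\theta$. Since $\mcI[\psi(\pi/2-\theta)]=(1-\alpha)\arg G(ie^{-i\theta})$ and $\ell(1-\alpha)=\ell-k$, \eqref{Ftheta} gives $\ell\,\mcI[\psi(\pi/2-\theta)]=(\ell-k)\big(F(\theta)+\pi/2\big)$, and since $\ell-k$ is even, $\cos\big((\ell-k)(F(\theta)+\pi/2)\big)=(-1)^{(\ell-k)/2}\cos\big((\ell-k)F(\theta)\big)$. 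Collecting the explicit constants --- $i^{-k-\ell}(-1)^{(\ell-k)/2}=(-1)^{\ell}=(-1)^k$ (using that $\ell+k$ is even), $(1+a^2)(1-2c)=(1-a)^2$, and $-\ell\psi''(\pi/2)/2=(\ell+k)c(1-2c)^{-3/2}$ --- turns this into exactly the double-exponential-weighted cosine integral of \eqref{comegacrewriteg5g3}, and combining with the $O(\varphi_c/|\ell+k|)$ error from the previous paragraph finishes the proof. The delicate point is steps (ii)--(iii): one has to verify that the Gaussian-type tail bounds close with a constant that is uniform in $\xi,\ell,k$ even though the effective window $[\theta_c,\pi/2]$ shrinks with $\varphi_c$, which is exactly what the uniform estimate \eqref{tempbound1x2} is designed to provide.
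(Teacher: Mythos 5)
Your proposal is correct and follows essentially the same route as the paper's proof: fold the contour onto $[\theta_c,\pi/2]$ via \eqref{psisym}, Taylor-expand $|e^{2i\theta}+2c|^{-1}$ about $\pi/2$, replace the exponent by its quadratic part using the uniform bound \eqref{tempbound1x2} together with $|e^t-1|\le|t|e^{|t|}$, recentre the Gaussian at $\pi/2$ using the oddness of $\mcI[\psi(\pi/2+\cdot)]$, and identify the constants exactly as in \eqref{tempfinal1zz2}--\eqref{comegacrewriteg5g3}. Your explicit remark that $|\ell|\,\psi''(\pi/2)^2\asymp(1+\alpha)|\ell+k|$ is bounded below because $\alpha$ stays in a compact subset of $(-1,1]$ makes the passage from the raw error $O(\varphi_c/(|\ell|\psi''(\pi/2)^2))$ to the stated $O(\varphi_c/|\ell+k|)$ slightly more transparent than the paper's, but the argument is the same.
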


We now note that by equation 4.21 in \cite{C/J}, the formula \eqref{EKL1} in fact holds for $k>0$ \textit{or} $\ell>0$ instead of just $k,\ell>0$ as we stated it. Hence for non-zero integers $k,\ell$ such that $k>0$ or $\ell>0$, define a function $D_{\omega_c}(k,\ell)= E_{k,\ell}-\tilde{C}_{\omega_c}(k,\ell)$ where
\begin{align}
D_{\omega_c}(k,\ell)=\frac{i^{-k-\ell}}{2(1+a^2)2\pi i}\int_{\tilde{\Gamma}_{\omega_c}}\frac{dw}{w}\frac{G(w)^\ell G(1/w)^k}{\sqrt{w^2+2c}\sqrt{1/w^2+2c}}
\label{Domegac}
\end{align}
and where $\tilde{\Gamma}_{\omega_c}=\Gamma_1\setminus \Gamma_{\omega_c}$  has positive orientation counterclockwise around the origin. Note that $D_{\omega_c}(k,\ell)=D_{\omega_c}(\ell,k)$ when $k,\ell>0$.

One can use the symmetries in \eqref{squarerootsymmetries} to get
\begin{align}
&\int_{\Gamma_{a,b}}\frac{dw}{w}\frac{G(w)^\ell G(1/w)^k}{\sqrt{w^2+2c}\sqrt{1/w^2+2c}}\label{integralsymmetries}=-\overline{\int_{\overline{\Gamma_{a,b}}}\frac{dw}{w}\frac{G(w)^\ell G(1/w)^k}{\sqrt{w^2+2c}\sqrt{1/w^2+2c}}}\\
&=(-1)^{\ell+k}\int_{-\Gamma_{a,b}}\frac{dw}{w}\frac{G(w)^\ell G(1/w)^k}{\sqrt{w^2+2c}\sqrt{1/w^2+2c}}=-(-1)^{\ell+k}\overline{\int_{-\overline{\Gamma_{a,b}}}\frac{dw}{w}\frac{G(w)^\ell G(1/w)^k}{\sqrt{w^2+2c}\sqrt{1/w^2+2c}}}\nonumber
\end{align}
where $\Gamma_{a,b}=\{e^{i\theta}:\theta\in[a,b]\}$, $0\leq a<b\leq \pi/2\}$ and each curve has orientation counterclockwise around the origin.

\begin{proposition}
 Let non-zero integers $\ell,k$ be such that $\ell+k> 0$ is even and $\alpha = k/\ell$ lies in a compact subset of $(-1,1]$. There exists  a bounded function $R_{26}(\xi,\ell,k)$  such that 
\begin{align}
D_{\omega_c}(k,\ell)&=\frac{i^{2k}e^{\ell\mcR[\psi(\theta_c)]}}{\pi(1+a^2)(1-2c)}\Big[e^{\frac{(\ell+k)c}{(1-2c)^{3/2}}\varphi_c^2}\int_{\varphi_c}^\infty \cos((\ell-k)F(\theta))e^{-\frac{(\ell+k)c}{(1-2c)^{3/2}}\theta^2}d\theta\label{Domcpropexp1x}\\
&+R_{26}(\xi,\ell,k)\Big (\frac{\varphi_c}{(\ell+k)(1+\alpha)}+\frac{1}{(\ell+k)^{3/2}}\Big)\Big]\nonumber
\end{align}
\begin{proof}
From the definition of $D_{\omega_c}(k,\ell)$ we see that
\begin{align}
D_{\omega_c}(k,\ell)&=\frac{i^{-k-\ell}}{2(1+a^2)2\pi i}(1+(-1)^{\ell+k})\Big[\int_{\tilde{\Gamma}_{\omega_c}\cap\HH^+}\frac{dw}{w}\frac{G(w)^\ell G(1/w)^k}{\sqrt{w^2+2c}\sqrt{1/w^2+2c}}\nonumber\\&\qquad\qquad-(-1)^{\ell+k}\overline{\int_{\tilde{\Gamma}_{\omega_c}\cap\HH^+}\frac{dw}{w}\frac{G(w)^\ell G(1/w)^k}{\sqrt{w^2+2c}\sqrt{1/w^2+2c}}}\Big]\nonumber\\
&=\frac{i^{-k-\ell}}{(1+a^2)2\pi i}2ie^{\ell\mcR[\psi(\theta_c)]}\mcR\Big[\int_{\theta_c-\delta}^{\theta_c} \frac{e^{\ell\psi(\theta)-\ell\mcR[\psi(\theta_c)]}}{|e^{2i\theta}+2c|}d\theta+\int_{0}^{\theta_c-\delta} \frac{e^{\ell\psi(\theta)-\ell\mcR[\psi(\theta_c)]}}{|e^{2i\theta}+2c|}d\theta\Big]
\label{Domegacrewritetempze1}
\end{align}
where we parametrised $\tilde{\Gamma}_{\omega_c}\cap \HH^+$ by $w(\theta)=e^{i\theta}$ for $\theta\in[0,\theta_c]$ and $\delta>0$ will be chosen small enough. The main contribution to $D_{\omega_c}$ comes from  the first integral in \eqref{Domegacrewritetempze1}.\\
 From \eqref{taylorinnd3f} we have a constant $C>0$ such that
\begin{align}
\abs{\int_{\theta_c-\delta}^{\theta_c} \frac{e^{\ell\psi(\theta)-\mcR[\psi(\theta_c)]}}{|e^{2i\theta}+2c|}d\theta-\int_{\theta_c-\delta}^{\theta_c} \frac{e^{\ell\psi(\theta)-\ell\mcR[\psi(\theta_c)]}}{1-2c}d\theta}\leq C\int_{\theta_c-\delta}^{\theta_c}(\theta-\pi/2)^2e^{\ell\mcR[\psi(\theta)-\psi(\theta_c)]}d\theta. \label{tempboundv4d}
\end{align}
We will return to this bound. First we will consider
\begin{align}
\mcR\Big [\int_{\theta_c-\delta}^{\theta_c}e^{\ell\psi(\theta)-\ell\mcR[\psi(\theta_c)]}d\theta\Big ]=\int_{\theta_c-\delta}^{\theta_c}\cos(\ell\mcI[\psi(\theta)])e^{\ell\mcR[\psi(\theta)-\psi(\theta_c)]}d\theta
\end{align}
since this contributes to the leading term of $D_{\omega_c}$.

Recall \eqref{realtaylor5x1q} and take $\delta$ so small that 
\begin{align}
&\label{tempboundvxa4}\psi''(\pi/2)\varphi_c(\theta_c-\theta)+\psi''(\pi/2)(\theta_c-\theta)^2/2\\
&+|\mcR[R_{13}(\theta_c,\alpha)\varphi_c^3(\theta_c-\theta)+R_{14}(\theta_c,\alpha)\varphi_c^2(\theta_c-\theta)^2/2\nonumber\\&\qquad\qquad R_{15}(\theta_c,\alpha)\varphi_c(\theta_c-\theta)^3+R_{12}(\theta_c,\alpha)(\theta_c-\theta)^4]|\nonumber\\
&<\frac{1}{2}\big [\psi''(\pi/2)\varphi_c(\theta_c-\theta)+\psi''(\pi/2)(\theta_c-\theta)^2/2\big ] \nonumber
\end{align}
for all $\theta\in [\theta_c-\delta,\theta_c]$.
Now we use \eqref{realtaylor5x1q}, the bound $|e^t-1|\leq |t|e^{|t|}$ and then \eqref{tempboundvxa4} to give a constant $C$ such that
\begin{align}
&\abs{\int_{\theta_c-\delta}^{\theta_c}\cos(\ell\mcI[\psi(\theta)])e^{\ell\mcR[\psi(\theta)-\psi(\theta_c)]}d\theta-\int_{\theta_c-\delta}^{\theta_c}\cos(\ell\mcI[\psi(\theta)])e^{\ell\psi''(\pi/2)[\varphi_c(\theta_c-\theta)+(\theta_c-\theta)^2/2]}d\theta}\nonumber\\
\leq &C\ell\int_0^\delta e^{\frac{\ell}{2}\big[\psi''(\pi/2)\varphi_c\theta+\psi''(\pi/2)\theta^2/2\big ]}\big [\varphi_c^3\theta+\varphi_c^2\theta^2+\varphi_c\theta^3+\theta^4\big ] d\theta.\label{temmp2312}
\end{align}
Rewrite the integral on the right hand side of \eqref{temmp2312} as a sum of four integrals. We bound these four integrals separately.  Here, and several times below, we will use that $\psi''(\pi/2)<0$ and $\ell>0$. The first bound is
\begin{align}
&\ell\int_0^\delta e^{\ell\big[\psi''(\pi/2)\varphi_c\theta+\psi''(\pi/2)\theta^2/2\big ]/2}\varphi_c^3\theta d\theta
\leq \ell\int_0^\delta e^{\ell\big [\psi''(\pi/2)\varphi_c\theta\big ]/2}\varphi_c^3\theta d\theta< \frac{\varphi_c}{\ell\psi''(\pi/2)^2}\int_0^\infty e^{-\theta/2}\theta d\theta,\label{tempbound345f}
\end{align}
the third is
\begin{align}
&\ell\int_0^\delta e^{\ell\big [\psi''(\pi/2)\varphi_c\theta+\psi''(\pi/2)\theta^2/2\big ]/2}\varphi_c\theta^3d\theta\leq \ell\int_0^\delta e^{\ell\psi''(\pi/2)\theta^2/4}\varphi_c\theta^3d\theta< \frac{\varphi_c}{\ell\psi''(\pi/2)^2}\int_0^{\infty}e^{-\theta^2/4}\theta^3d\theta,\label{tempbound312d}
\end{align}
and the fourth is
\begin{align}
&\ell\int_0^\delta e^{\frac{\ell}{2}\big [\psi''(\pi/2)\varphi_c\theta+\psi''(\pi/2)\theta^2/2\big ]}\theta^4d\theta\leq \ell\int_0^\delta e^{\ell\psi''(\pi/2)\theta^2/4}\theta^4d\theta< \frac{1}{(-\ell\psi''(\pi/2))^{3/2}}\int_0^{\infty}e^{-\theta^2/4}\theta^4d\theta.\label{tempbound4edv}
\end{align}
We bound the second integral slightly differently by making the substitution $\theta\rarrow (\ell^2\psi''(\pi/2)^2\varphi_c)^{-1/3}\theta$ in
\begin{align}
&\ell\int_0^\delta e^{\frac{\ell}{2}\big [\psi''(\pi/2)\varphi_c\theta+\psi''(\pi/2)\theta^2/2\big ]}\varphi_c^2\theta^2d\theta\nonumber\\
&=\frac{\varphi_c}{\ell\psi''(\pi/2)^2}\int_0^{\delta (\ell^2\psi''(\pi/2)^2\varphi_c)^{1/3}}e^{\frac{-1}{2}\big [(\ell|\psi''(\pi/2)|\varphi_c^2)^{1/3}\theta+(\ell|\psi''(\pi/2)|\varphi_c^2)^{-1/3}\theta^2/2\big ]}\theta^2d\theta\nonumber\\
&< \frac{\varphi_c}{\ell\psi''(\pi/2)^2}\int_0^{\infty}e^{\frac{-1}{2}\big [(\ell|\psi''(\pi/2)|\varphi_c^2)^{1/3}\theta+(\ell|\psi''(\pi/2)|\varphi_c^2)^{-1/3}\theta^2/2\big ]}\theta^2d\theta.
\label{tempbound5f3xx}
\end{align}
The integral in \eqref{tempbound5f3xx} is bounded because $\int_0^{\infty}e^{-t\theta/2-\theta^2/(4t)}\theta^2d\theta$ is bounded uniformly for all $t> 0$. We now return to the bound \eqref{tempboundv4d}. Similar to how we got the bound \eqref{temmp2312}, we obtain
\begin{align}
&\abs{\int_{\theta_c-\delta}^{\theta_c}(\theta-\pi/2)^2e^{\ell\mcR[\psi(\theta)-\psi(\theta_c)]}d\theta-\int_{\theta_c-\delta}^{\theta_c}(\theta-\pi/2)^2e^{\ell\psi''(\pi/2)[\varphi_c(\theta_c-\theta)+(\theta-\theta_c)^2/2]}d\theta}\label{temp123cag}\\
&\leq C\ell\int_{\theta_c-\delta}^{\theta_c}(\theta-\pi/2)^2e^{\ell\psi''(\pi/2)[\varphi_c(\theta_c-\theta)+(\theta-\theta_c)^2/2]}[\varphi_c^3(\theta_c-\theta)\nonumber\\&\qquad\qquad+\varphi_c^2(\theta_c-\theta)^2+\varphi_c(\theta_c-\theta)^3+(\theta_c-\theta)^4]d\theta\nonumber\\
\leq &C(2\delta^2+2\varphi_c^2)\Big(\ell\int_{\theta_c-\delta}^{\theta_c}e^{\ell\psi''(\pi/2)[\varphi_c(\theta_c-\theta)+(\theta-\theta_c)^2/2]}[\varphi_c^3(\theta_c-\theta)\label{tempbound6scz}\\&\qquad\qquad+\varphi_c^2(\theta_c-\theta)^2+\varphi_c(\theta_c-\theta)^3+(\theta_c-\theta)^4]d\theta\Big) \nonumber
\end{align}
The term appearing in the big brackets in \eqref{tempbound6scz} was bounded previously via \eqref{tempbound345f}, \eqref{tempbound4edv}, \eqref{tempbound312d} and \eqref{tempbound5f3xx}.
The second integral in the difference in \eqref{temp123cag} is 
\begin{align}
&\int_{\theta_c-\delta}^{\theta_c}(\theta-\pi/2)^2e^{\ell\psi''(\pi/2)[\varphi_c+(\theta-\theta_c)^2/2]}d\theta\nonumber\\
&\leq \int_0^{\delta}(2\theta^2+2\varphi_c^2)e^{\ell\psi''(\pi/2)[\varphi_c\theta+\theta^2/2]}d\theta\nonumber\\
&\leq \int_0^\delta 2\theta^2e^{\ell\psi''(\pi/2)\theta^2/2}d\theta+\int_0^\delta \varphi_c^2e^{\ell\psi''(\pi/2)\varphi_c\theta}d\theta\nonumber\\
&<\frac{2}{(-\ell\psi''(\pi/2))^{3/2}}\int_0^{\infty}\theta^2e^{-\theta^2/2}d\theta+\frac{2\varphi_c}{-\ell\psi''(\pi/2)}\int_0^{\infty}e^{-\theta}d\theta.
\end{align} Hence we have established an upper bound on the right hand side of \eqref{tempboundv4d}.
 The second integral appearing in the difference in \eqref{temmp2312} contributes to the main term and using similar manipulations leading to \eqref{tempfinal1zz2} we can rewrite it is as
\begin{align}
&\int_{\theta_c-\delta}^{\theta_c}\cos(\ell\mcI[\psi(\theta)])e^{\ell\psi''(\pi/2)[\varphi_c(\theta_c-\theta)+(\theta_c-\theta)^2/2]}d\theta\\
&=e^{-\ell\psi''(\pi/2)\varphi_c^2/2}\int_{\varphi_c}^{\delta+\varphi_c}\cos(\ell\mcI[\psi(\pi/2-\theta)])e^{\ell\psi''(\pi/2)\theta^2/2}d\theta.\nonumber
\end{align}
 We extend the integration in the last integral to infinity which gives an error term
\begin{align}
&\abs{\int_{\varphi_c}^{\delta+\varphi_c}\cos(\ell\mcI[\psi(\pi/2-\theta)])e^{\ell\psi''(\pi/2)\theta^2/2}d\theta-\int_{\varphi_c}^{\infty}\cos(\ell\mcI[\psi(\pi/2-\theta)])e^{\ell\psi''(\pi/2)\theta^2/2}d\theta}\label{tempbound7sz}\\
&\leq \int_{\delta+\varphi_c}^{\infty}e^{\ell\psi''(\pi/2)\theta^2/2}d\theta\leq 2\frac{e^{\ell\psi''(\pi/2)(\delta+\varphi_c)^2/2}}{-\ell\psi''(\pi/2)(\delta+\varphi_c)}.\nonumber
\end{align}
Thus the infinite integral
\begin{align}
&e^{-\ell\psi''(\pi/2)\varphi_c^2/2}\int_{\varphi_c}^{\infty}\cos(\ell\mcI[\psi(\pi/2-\theta)])e^{\ell\psi''(\pi/2)\theta^2/2}d\theta\nonumber
\end{align}
 gives the main term in \eqref{Domcpropexp1x}.
 
It remains to bound the second integral in \eqref{Domegacrewritetempze1}. Since $\mcR[\psi(\theta)]$ increases on $(0,\pi/2)$, there is a constant $C>0$ such that
\begin{align}
\abs{\int_0^{\theta_c-\delta}d\theta\frac{e^{\ell\psi(\theta)-\ell\mcR[\psi(\theta_c)]}}{|e^{2i\theta}+2c|}}
\leq Ce^{\ell\mcR[\psi(\theta_c-\delta)-\psi(\theta_c)]}.
\label{tempbound8sxzs}
\end{align}From \eqref{realtaylor5x1q} we can take $\delta$ so small that 
\begin{align}
\mcR[\psi(\theta_c-\delta)-\psi(\theta_c)]&<\frac{\psi''(\pi/2)}{2}[\delta\varphi_c+\delta^2/2]\leq \frac{\psi''(\pi/2)\delta^2}{4}.\nonumber
\end{align}
 Hence \eqref{tempbound8sxzs} is bounded above by 
\begin{align}
Ce^{-\ell|\psi''(\pi/2)|\delta^2/4}\nonumber.
\end{align}
\end{proof}
\label{PropDomegacexp3x}
\end{proposition}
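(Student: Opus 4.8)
The plan is to run the same steepest-descent argument used for Proposition \ref{steepestdescComega}, but now on the complementary arc. Starting from the contour-integral formula \eqref{Domegac} for $D_{\omega_c}(k,\ell)=E_{k,\ell}-\tilde C_{\omega_c}(k,\ell)$, where $\tilde\Gamma_{\omega_c}=\Gamma_1\setminus\Gamma_{\omega_c}$ is the pair of short arcs around $\pm i$, I would first use the symmetry relations \eqref{integralsymmetries} together with the hypothesis that $\ell+k$ is even to fold the integral onto the single arc $\{e^{i\theta}:\theta\in[0,\theta_c]\}$ in the upper half-plane, writing $D_{\omega_c}(k,\ell)$ as a constant times $e^{\ell\mcR[\psi(\theta_c)]}$ times $\mcR\big[\int_0^{\theta_c} e^{\ell\psi(\theta)-\ell\mcR[\psi(\theta_c)]}/|e^{2i\theta}+2c|\,d\theta\big]$, exactly as in the opening of the proof of Proposition \ref{steepestdescComega}. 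Note that $\alpha=k/\ell\in(-1,1]$ with $\ell+k>0$ forces $\ell>0$, and $\ell\psi''(\pi/2)=-2c(\ell+k)/(1-2c)^{3/2}<0$ by \eqref{psievals3}, so the relevant behaviour sits at the endpoint $\theta=\theta_c$ and the quadratic in the exponent is a genuinely decaying Gaussian there.

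Next I would split the $\theta$-integral at $\theta_c-\delta$ for a small fixed $\delta$. The far piece $[0,\theta_c-\delta]$ is negligible: since $\theta\mapsto\mcR[\psi(\theta)]$ is strictly increasing on $(0,\pi/2)$ (Lemma \ref{Ginc}, Corollary \ref{galphadec}) and $\ell>0$, the integrand is bounded by $e^{\ell\mcR[\psi(\theta_c-\delta)-\psi(\theta_c)]}$, and \eqref{realtaylor5x1q} makes this $\le e^{-c_0\ell\delta^2}$ for some $c_0>0$. On the near piece $[\theta_c-\delta,\theta_c]$ I would carry out the same chain of approximations as in Proposition \ref{steepestdescComega}: first replace $1/|e^{2i\theta}+2c|$ by $1/(1-2c)$ using \eqref{taylorinnd3f}; then replace $\mcR[\psi(\theta)-\psi(\theta_c)]$ by the quadratic $\psi''(\pi/2)[\varphi_c(\theta_c-\theta)+(\theta_c-\theta)^2/2]$ coming from \eqref{realtaylor5x1q} and \eqref{taylorpsi''thetac1e1}, controlling the error with $|e^t-1|\le|t|e^{|t|}$ and splitting the resulting bound into the four monomial integrals with weights $\varphi_c^3\theta$, $\varphi_c^2\theta^2$, $\varphi_c\theta^3$, $\theta^4$ — the first, third and fourth bounded by elementary Gaussian integrals, and the middle one handled by the substitution $\theta\mapsto(\ell^2\psi''(\pi/2)^2\varphi_c)^{-1/3}\theta$ and the uniform-in-$t$ bound on $\int_0^\infty e^{-t\theta/2-\theta^2/(4t)}\theta^2\,d\theta$. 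Completing the square, $\varphi_c(\theta_c-\theta)+(\theta_c-\theta)^2/2=[(\theta_c-\theta+\varphi_c)^2-\varphi_c^2]/2$, and the change of variables $\theta\mapsto\pi/2-\theta$ (recall $\theta_c=\pi/2-\varphi_c$) turns the leading term into $e^{-\ell\psi''(\pi/2)\varphi_c^2/2}\int_{\varphi_c}^{\delta+\varphi_c}\cos(\ell\mcI[\psi(\pi/2-\theta)])\,e^{\ell\psi''(\pi/2)\theta^2/2}\,d\theta$, which I extend to $\int_{\varphi_c}^{\infty}$ at the cost of another exponentially small error. Finally, by \eqref{psisym} and the fact that $\ell(1-\alpha)\pi=(\ell-k)\pi$ is an even multiple of $\pi$, one identifies $\cos(\ell\mcI[\psi(\pi/2-\theta)])$ with $\pm\cos((\ell-k)F(\theta))$ for $F$ as in \eqref{Ftheta}, the sign being absorbed into the $i^{-k-\ell}$ prefactor; substituting $\ell\psi''(\pi/2)=-2c(\ell+k)/(1-2c)^{3/2}$ yields the exponent $(\ell+k)c/(1-2c)^{3/2}$ displayed in \eqref{Domcpropexp1x}. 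Collecting all error terms via the triangle inequality produces a remainder of the stated form $R_{26}(\xi,\ell,k)\big(\tfrac{\varphi_c}{(\ell+k)(1+\alpha)}+\tfrac{1}{(\ell+k)^{3/2}}\big)$, the first piece tracking the $O(\varphi_c/(\ell+k))$ errors (scaled by $(1+\alpha)^{-1}\propto|\psi''(\pi/2)|^{-1}$) and the second the $O((\ell+k)^{-3/2})$ Gaussian error from the $\theta^4$ term.

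The main obstacle I anticipate is bookkeeping these errors \emph{uniformly} in the coupled regime $\varphi_c$ small with $(\ell+k)\varphi_c^{2-\gamma}\ge1$, where $\varphi_c$ and $\ell+k$ cannot be treated as independent: one must check that replacing $1/|e^{2i\theta}+2c|$ and truncating the Gaussian tails genuinely cost only the advertised powers of $\varphi_c$ and $(\ell+k)^{-1}$, and that the factor $(1+\alpha)^{-1}$ (which blows up as $\alpha\to-1$, i.e. $\ell+k\ll|\ell|$) is correctly exhibited rather than hidden inside a ``bounded'' function. The substitution handling the $\varphi_c^2\theta^2$ term, and the uniform-in-$t$ bound on $\int_0^\infty e^{-t\theta/2-\theta^2/(4t)}\theta^2\,d\theta$, is the one genuinely non-routine estimate; everything else is a careful but standard Laplace/steepest-descent computation modelled on Propositions \ref{Comegasin} and \ref{steepestdescComega}.
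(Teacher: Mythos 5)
Your plan matches the paper's proof essentially step for step: the same fold onto $\{e^{i\theta}:\theta\in[0,\theta_c]\}$ using \eqref{integralsymmetries} and parity of $\ell+k$, the same split at $\theta_c-\delta$, the same replacement of $1/|e^{2i\theta}+2c|$ by $1/(1-2c)$, the same Taylor quadratic in the exponent with the same four monomial error integrals (including the substitution $\theta\mapsto(\ell^2\psi''(\pi/2)^2\varphi_c)^{-1/3}\theta$ and the uniform-in-$t$ bound on $\int_0^\infty e^{-t\theta/2-\theta^2/(4t)}\theta^2\,d\theta$ for the $\varphi_c^2\theta^2$ piece), and the same completion of the square, change of variables $\theta\mapsto\pi/2-\theta$, extension of the Gaussian tail to infinity, and monotonicity bound for the arc $[0,\theta_c-\delta]$. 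The uniformity concerns you flag are legitimate but are precisely what the paper's explicit bounds \eqref{tempbound345f}--\eqref{tempbound5f3xx}, \eqref{tempbound6scz}, \eqref{tempbound7sz}, and \eqref{tempbound8sxzs} resolve, and $(1+\alpha)^{-1}$ is bounded by the hypothesis that $\alpha$ lies in a compact subset of $(-1,1]$, so there is no hidden blow-up.
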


We now have all of the ingredients to prove Proposition \ref{propeklcomegac}.
\begin{proof}[Proof of Proposition \ref{propeklcomegac}]
We divide the proof into five cases, one when $\tilde{\alpha}=-1$, the other four when $\tilde{\alpha}\in (-1,1)$ or $\tilde{\alpha}=1$ and $\ell+k$ is positive or negative. Each case consists of defining $R_9$ by the rearrangement of \eqref{tempd2wd} and using the stated substitutions and bounds.

First the case $\tilde{\alpha}=-1$ for which the assumptions imply $\ell+k\in\{-2,0,2\}$.
By \eqref{tempz3e4} we have
\begin{align}
\abs{\frac{\sin((\ell-k)F(\varphi_c))}{\ell-k}}\geq C(\eps)\min (\varphi_c,1/|\ell-k|).\label{lowerboundsinF2s}
\end{align}
By lemma \ref{galphaalpha} and proposition \ref{propEKL} and there are $C_1,C_2>0$ such that 
\begin{align}
E_{|k|,|\ell|}\leq C_1e^{-C_2|\ell|}.\label{etempws}
\end{align}
% It is straightforward to rearrange \eqref{tempd2wd} to obtain a bound
%\begin{align}
%|R_{54}|&:= \abs{\frac{E_{k,\ell}-C_{\omega_c}(k,\ell)+(-1)^k |G(\omega_c)|^{\ell+k}\sin((\ell-k)F(\varphi_c))(\pi(1-a)^2(\ell-k)/\sqrt{1-2c})^{-1}}{-(-1)^k |G(\omega_c)|^{\ell+k}\sin((\ell-k)F(\varphi_c))(\pi(1-a)^2(\ell-k)/\sqrt{1-2c})^{-1}}}\nonumber\\
%&=\abs{\frac{E_{|k|,|\ell|}-(-1)^k |G(\omega_c)|^{\ell+k}R_{29}(\xi,\ell,k)\varphi_c(\pi(1-a)^2(\ell-k))^{-1}}{-(-1)^k |G(\omega_c)|^{\ell+k}\sin((\ell-k)F(\varphi_c))(\pi(1-a)^2(\ell-k)/\sqrt{1-2c})^{-1}}}\nonumber\\
%&\leq C_3\frac{e^{-C_2|\ell|}}{|G(\omega_c)|^{\ell+k}\min(\varphi_c,1/|\ell-k|)}+C_4\frac{\varphi_c}{|\ell-k|\min(\varphi_c,1/|\ell-k|)}\nonumber\\
%&\leq C_5\max(\frac{1}{|\ell-k|},\varphi_c),\nonumber
%\end{align}
%where in the second line we used Proposition \ref{Comegasin}, in the third line we used the triangle inequality together with \eqref{lowerboundsinF2s} and \eqref{etempws} and in the last line we used $\varphi_c^{2-\gamma}(|\ell|+|k|)\geq 1$.
We define $R_9$ as the rearrangement of \eqref{tempd2wd}
\begin{align}
R_9&:= \frac{E_{k,\ell}-C_{\omega_c}(k,\ell)+(-1)^k |G(\omega_c)|^{\ell+k}\sin((\ell-k)F(\varphi_c))(\pi(1-a)^2(\ell-k)/\sqrt{1-2c})^{-1}}{-(-1)^k |G(\omega_c)|^{\ell+k}\sin((\ell-k)F(\varphi_c))(\pi(1-a)^2(\ell-k)/\sqrt{1-2c})^{-1}}\label{R54tmp3rf}
\end{align}
and use the formula for $C_{\omega_c}$ given by Proposition \ref{Comegasin} so that $R_9$ is equal to
\begin{align}
-\frac{E_{|k|,|\ell|}-(-1)^k |G(\omega_c)|^{\ell+k}R_{29}(\xi,\ell,k)\varphi_c(\pi(1-a)^2(\ell-k))^{-1}}{(-1)^k |G(\omega_c)|^{\ell+k}\sin((\ell-k)F(\varphi_c))(\pi(1-a)^2(\ell-k)/\sqrt{1-2c})^{-1}}.
\end{align}
Now we use the upper bounds \eqref{lowerboundsinF2s} and \eqref{etempws}, and the conditions $\varphi_c^{2-\gamma}(|\ell|+|k|)\geq 1, \ell+k\in\{-2,0,2\}$ to get
\begin{align}
|R_{9}|&\leq C_3\frac{e^{-C_2|\ell|}}{|G(\omega_c)|^{\ell+k}\min(\varphi_c,1/|\ell-k|)}+C_4\frac{\varphi_c}{|\ell-k|\min(\varphi_c,1/|\ell-k|)}\label{temp2eda}\\
&\leq C_5\max(\frac{1}{|\ell|},\varphi_c).\nonumber
\end{align}
For the case $\tilde{\alpha}\in (-1,1)$ and $\ell+k<-2$ we can set the two different expressions for $\tilde{C}_{\omega_c}$, \eqref{Cwcalphaminus11/2} and \eqref{comegacrewriteg5g3}, equal to one another so that bound \eqref{lowerboundsinF2s} carries over to
\begin{align}
\abs{\frac{|G(\omega_c)|^{\ell+k}e^{(\ell+k)c'\varphi_c^2}}{\pi(1-a)^2}\int_0^{\varphi_c}\cos((\ell-k)F(\theta))e^{(\ell+k)c'\theta^2}d\theta}&\geq C_1(\eps)|G(\omega_c)|^{\ell+k}\min(\frac{\varphi_c}{1-2c},\frac{1}{|\ell-k|})\label{tempboundges4}
\end{align}
for $|\ell|$ large enough. We also have that  $\ell\mcR[\psi(\theta_c)]= (\ell+k)\log|G(e^{i\theta_c})|\geq(\ell+k)\log|G(i)|> 0$ which implies $|G(\omega_c)|^{\ell+k}$ grows exponentially in $|\ell|$. 
Just as in \eqref{R54tmp3rf} we define $R_9$ to be the rearrangement of \eqref{tempd2wd} but now use the formula for $C_{\omega_c}$ provided by Proposition \ref{steepestdescComega} so that
\begin{align}
R_9=\frac{E_{|k|,|\ell|}-(-1)^k|G(\omega_c)|^{\ell+k}R_{19}(\xi,\ell,k)\varphi_c (|\ell+k|\pi(1-a)^2)^{-1}}{-(-1)^k|G(\omega_c)|^{\ell+k}(\pi(1-a)^2)^{-1}e^{-(\ell+k)c'\varphi_c^2}\int_0^{\varphi_c} \cos((\ell-k)F(\theta))e^{(\ell+k)c'\theta^2}d\theta}. \label{tempg3av}
\end{align}
Just as in \eqref{temp2eda}, we use \eqref{tempg3av} together with \eqref{etempws}, \eqref{tempboundges4} to get the bound 
\begin{align}
|R_9|&\leq C(\tilde{\alpha})\max(\frac{1}{|\ell|},\varphi_c).
\end{align}
For the case $\tilde{\alpha}=1$, $\ell+k<-2$ we again define $R_{9}$ as the rearrangement of \eqref{tempd2wd}. We then use Proposition \ref{Comegasin} followed by lemma \ref{templemmalpha13s} on $C_{\omega_c}$.  Then using the bounds \eqref{etempws}, \eqref{dawsonest1}, \eqref{dawsonest2} we obtain a $C>0$ such that
\begin{align}
|R_{9}|\leq C\begin{cases}
\frac{1}{|\ell+k|}+\frac{\varphi_c}{\sqrt{|\ell+k|}}, & \sqrt{|\ell+k|c'}\varphi_c\in(0,1]\\
\varphi_c^2, & \sqrt{|\ell+k|c'}\varphi_c\in[1,\infty).
\end{cases}
\end{align}
For the case $\tilde{\alpha}\in (-1,1)$, $\ell+k>2$ we again define $R_{9}$ as the rearrangement of \eqref{tempd2wd}. Then use Proposition \ref{PropDomegacexp3x} together with lemma \ref{templeextendsinexp1dx} on the function $E_{k,\ell}-C_{\omega_c}(k,\ell)$ appearing in the numerator of $R_{9}$. Then use the bound \eqref{lowerboundsinF2s} and $\varphi_c^{2-\gamma}(|\ell|+|k|)\geq 1$  to obtain
\begin{align}
|R_{9}|\leq C'(\tilde{\alpha})\max(\frac{1}{\sqrt{\ell}}+\varphi_c,\frac{\varphi_c^{1-\gamma}}{\sqrt{\ell}}+\frac{1}{\sqrt{\ell}}).
\end{align}
For the case $\tilde{\alpha}=1$, $\ell+k>2$ use lemmas \ref{templeextendsinexp1dx} and \ref{PropDomegacexp3x} and the bound \eqref{millest} to obtain
\begin{align}
|R_{9}|\leq C'(\varphi_c^2+\frac{\varphi_c}{\sqrt{\ell+k}}+\frac{1}{\ell+k}).
\end{align}
\end{proof}

\section{Uniform bound on $R_{\eps_1,\eps_2}$}\label{SecUniformbound}

We begin by proving Theorem \ref{ThmInvKast}. It is proved in \cite{C/J} that
for $n=4m$, $m\in\N_{>0}$, $(x_1,x_2)\in W_{\eps_1}$, $(y_1,y_2)\in B_{\eps_2}$ with $\eps_1,\eps_2\in \{0,1\}$ we have the formula
\begin{align}
K_{a,1}^{-1}((x_1,x_2),(y_1,y_2))=& \ \K_{1,1}^{-1}((x_1,x_2),(y_1,y_2))-B_{\eps_1,\eps_2}(a,(x_1,x_2),(y_1,y_2))\label{inverseKastelements}\\
&\qquad\qquad+B^*_{\eps_1,\eps_2}(a,(x_1,x_2),(y_1,y_2))\nonumber
\end{align}
where $B_{\eps_1,\eps_2}(a,x_1,x_2,y_1,y_2)$ is given by \eqref{Beps1eps2} below and
$B^*_{\eps_1,\eps_2}$ is the sum of three double contour integrals related to $B_{\eps_1,\eps_2}$ by symmetry. 

We recall $B_{\eps_1,\eps_2}$ as equation (3.11) in \cite{C/J},
\begin{align}
B_{\eps_1,\eps_2}(a,x_1,x_2,y_1,y_2)=\frac{i^{(x_1-x_2+y_1-y_2)/2}}{(2\pi i)^2}\int_{\Gamma_{r'}}\frac{dw_1}{w_1}\int_{\Gamma_{1/r'}}dw_2 \frac{V_{\eps_1,\eps_2}(w_1,w_2)}{w_2-w_1}\frac{H_{x_1+1,x_2}(w_1)}{H_{y_1,y_2+1}(w_2)}
\label{Beps1eps2}
\end{align}
where equation (2.11) in \cite{C/J} is
\begin{align}
H_{x,y}(w)&=\frac{w^{n/2}G(w)^{(n-x)/2}}{G(w^{-1})^{(n-y)/2}}
\label{Hxy}
\end{align}
for integers $0<x,y<n$ and $G$ is defined in \eqref{Gfunction}.
The expression for $V_{\eps_1,\eps_2}(w_1,w_2)$ is somewhat involved and we refer to equation (3.7) in \cite{C/J}. The precise form is of $V_{\eps_1,\eps_2}$ is not needed here. However we note that $V_{\eps_1,\eps_2}$ is analytic in $\big(\C\setminus (i(-\infty,-1/\sqrt{2c}]\cup i[-\sqrt{2c},\sqrt{2c}]\cup i[1/\sqrt{2c},\infty))\big )^2$. Thus the integrand of $B_{\eps_1,\eps_2}$ is analytic in the same set minus the collection of points $w_1=w_2$.  

Define the error term
\begin{align}
R_{\eps_1,\eps_2}(a,x_1,x_2,y_1,y_2)&=\frac{i^{(x_1-x_2+y_1-y_2)/2}}{(2\pi i)^2}\int_{\text{desc$_\xi$}}\frac{dw_1}{w_1}\int_{\text{asc$_{\xi}$}}dw_2 \frac{V_{\eps_1,\eps_2}(w_1,w_2)}{w_2-w_1}\frac{H_{x_1+1,x_2}(w_1)}{H_{y_1,y_2+1}(w_2)}
\label{Reps1eps2}
\end{align}
where desc$_\xi$ and asc$_\xi$ are the contours of steepest descent and ascent of the function $g_{\xi}$ passing through $\omega_c$, see \cite{C/J} or lemmas \ref{CJcurve1} and \ref{CJcurve2} below. We want to show that
\begin{align}
B_{\eps_1,\eps_2}(a,x_1,x_2,y_1,y_2)=C_{\omega_c}(x,y)+R_{\eps_1,\eps_2}(a,x_1,x_2,y_1,y_2).
\label{Beps1eps2Comegac}
\end{align}
The formula  \eqref{Beps1eps2Comegac} follows by a contour deformation in the contour integral formula for $B_{\eps_1,\eps_2}$ as given in \eqref{Beps1eps2}. For this section we use the coordinates in \eqref{changeofcoordinates}.
From equations \eqref{Hxy} and \eqref{gxi} we compute 
\begin{align}
\frac{H_{x_1+1,x_2}(w_1)}{H_{y_1,y_2+1}(w_2)}
=\Big (\frac{G(w_1^{-1})^{a_2}}{G(w_1)^{a_1}} \frac{G(w_2)^{b_1}}{G(w_2^{-1})^{b_2}}\Big )\exp{\{\frac{n}{2}(g_{\xi}(w_1)-g_\xi (w_2))\}}.
\label{Hxysub}
\end{align}
We have lemma 3.15 from \cite{C/J}  
\begin{lemma}
For $\xi=\xi_c$ there is a path of steepest descent for $g_\xi$ leaving $i$ at the angle $-\pi/6$ going to $0$ (via $\HH^+$) and a path of steepest ascent leaving at an angle $\pi/6$ and going to infinity (via $\HH^+$).\label{CJcurve1}
\end{lemma}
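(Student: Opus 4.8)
This lemma is a local-to-global steepest-descent statement about the degenerate saddle $w=i$ of $g_{\xi_c}$, and I would prove it in three stages: pin down the local behaviour of $g_{\xi_c}$ at $i$, determine the steepest directions there, then globalise the two trajectories. First, differentiating \eqref{gxi} (equivalently, using the saddle equation \eqref{critptsss}) one checks $g_{\xi_c}'(i)=0$ --- which is precisely how $\xi_c$ was defined --- and $g_{\xi_c}''(i)=0$, since $g_{\xi_c}$ has a \emph{double} critical point at $i$ (two of the four critical points $\pm\omega_c,\pm\overline{\omega_c}$ coalescing there as $\xi\uparrow\xi_c$); a double critical point means $g_{\xi_c}-g_{\xi_c}(i)$ vanishes to exactly third order at $i$, so $g_{\xi_c}'''(i)\neq0$. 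Hence $g_{\xi_c}(w)-g_{\xi_c}(i)=\tfrac16 g_{\xi_c}'''(i)(w-i)^3+O((w-i)^4)$ near $i$, so three steepest-descent rays and three steepest-ascent rays issue from $i$, consecutive ones separated by angle $\pi/3$, and $w-i=\rho e^{i\phi}$ is a steepest-descent direction precisely when $3\phi+\arg g_{\xi_c}'''(i)\equiv\pi\pmod{2\pi}$.

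To locate these rays I would use that $\mcR[g_{\xi_c}]$ vanishes identically on the unit circle: for $|w|=1$ one has $1/w=\overline w$, hence $|G(1/w)|=|G(\overline w)|=|G(w)|$ by \eqref{Gsymmetries}, and then \eqref{gxi} gives $\mcR[g_{\xi_c}(w)]=\log|w|+\xi_c(\log|G(1/w)|-\log|G(w)|)=0$. Thus the unit circle is a branch of the level curve $\{\mcR[g_{\xi_c}]=\mcR[g_{\xi_c}(i)]\}$ through $i$, with horizontal tangent there; this forces $\arg g_{\xi_c}'''(i)\in\{\pi/2,-\pi/2\}$, places the six critical rays at $\phi\in\{0,\pm\pi/3,\pm2\pi/3,\pi\}$, and makes the steepest-descent and steepest-ascent rays their bisectors $\phi\in\{\pm\pi/6,\pm\pi/2,\pm5\pi/6\}$. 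To decide which bisectors are descent, I would evaluate $\mcR[g_{\xi_c}]$ along the inward segment $w=i\beta$, $\beta\uparrow1$ (equivalently, compute the sign of $\arg g_{\xi_c}'''(i)$): a short computation gives $\mcR[g_{\xi_c}(i\beta)]>\mcR[g_{\xi_c}(i)]$ just below $\beta=1$, so $\phi=-\pi/2$ is an ascent direction, and by the alternation of ascent and descent sectors $\phi=-\pi/6$ is steepest descent while $\phi=\pi/6$ is steepest ascent, as the lemma asserts.

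It remains to continue the steepest-descent trajectory $\gamma$ leaving $i$ in direction $e^{-i\pi/6}$ all the way to $0$ inside $\HH^+$, and symmetrically the ascent trajectory to $\infty$. Since $e^{-i\pi/6}$ points into $\{|w|<1\}$ and $\mcR[g_{\xi_c}]$ strictly decreases along $\gamma$ while $\mcR[g_{\xi_c}]\equiv0$ on the unit circle, $\gamma$ stays in $\{|w|<1\}$. The real segments $(0,1)$ and $(-1,0)$ are themselves steepest trajectories, with $\mcI[g_{\xi_c}]\equiv0$ and $\equiv\pi$ respectively, whereas $\mcI[g_{\xi_c}]\equiv\mcI[g_{\xi_c}(i)]\notin\{0,\pi\}$ on $\gamma$ (a direct computation gives $\mcI[g_{\xi_c}(i)]=\tfrac\pi2(1+\sqrt{1-2c})$); hence $\gamma$ is disjoint from both segments and stays in $\HH^+$. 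As $g_{\xi_c}$ has no critical point in $\{0<|w|<1\}\cap\HH^+$, $\gamma$ cannot accumulate at a critical point, so $\mcR[g_{\xi_c}]$ decreases to $-\infty$ along it; from \eqref{gxi}, \eqref{Gfunction}, \eqref{Gwlarge} and \eqref{Gwsmall} one has $\mcR[g_{\xi_c}(w)]=(1+\xi_c)\log|w|+O(1)$ near $0$ with $1+\xi_c>0$, which forces $\gamma$ to approach $0$. For the ascent trajectory, the involution $w\mapsto1/\overline w$ fixes $i$, acts near $i$ as $w-i\mapsto\overline{w-i}$ (so it sends the $e^{-i\pi/6}$ ray to the $e^{i\pi/6}$ ray) and satisfies $g_{\xi_c}(1/\overline w)=-\overline{g_{\xi_c}(w)}$; it therefore carries $\gamma$ to the steepest-ascent trajectory leaving $i$ in direction $e^{i\pi/6}$ and running through $\HH^+$ to $\infty$, where $\mcR[g_{\xi_c}]\to+\infty$.

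The routine part is the local saddle analysis of the first two paragraphs; the delicate part is the global step --- controlling the full level-set picture of $\mcR[g_{\xi_c}]$ so as to confirm that the two trajectories are genuinely funnelled to $0$ and $\infty$ through $\HH^+$ and do not get trapped along the branch cuts $i[0,\sqrt{2c}]$ and $i[1/\sqrt{2c},\infty)$. This is exactly the level-set analysis of $g_{\xi_c}$ carried out in \cite{C/J}; indeed this lemma is Lemma 3.15 of \cite{C/J} specialised to the limiting value $\xi=\xi_c$, and I would invoke that analysis to conclude.
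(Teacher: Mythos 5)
The paper gives no proof of this lemma: it simply records it as Lemma 3.15 of \cite{C/J}, so the ``paper's approach'' is deferral. Your proposal correctly reconstructs what that proof must look like --- the cubic degenerate saddle at $i$ (with $g'_{\xi_c}(i)=g''_{\xi_c}(i)=0$, $g'''_{\xi_c}(i)=-id_3\neq0$, consistent with Lemma \ref{gomegac''}), the six $\pi/3$-spaced critical rays with the unit circle supplying one level-set branch and forcing $\arg g'''_{\xi_c}(i)=\pm\pi/2$, the sign check along $i\beta$, the globalisation via $\mcR[g_{\xi_c}]\equiv0$ on $|w|=1$ and $\mcR[g_{\xi_c}(w)]=(1+\xi_c)\log|w|+O(1)$ at $0$, and the involution $w\mapsto1/\overline w$ exchanging descent and ascent --- and then, having correctly flagged that the delicate part is the global level-set analysis avoiding the cuts $i[0,\sqrt{2c}]$ and $i[1/\sqrt{2c},\infty)$, you defer to \cite{C/J}, exactly as the paper does. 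Your side computation $\mcI[g_{\xi_c}(i)]=\tfrac{\pi}{2}(1+\sqrt{1-2c})$ is in fact correct (note in passing that \eqref{Gsymmetries} should read $G(-w)=-G(w)$, since $G$ inverts the odd map $u\mapsto\sqrt{c/2}\,(u-1/u)$; with this, $\log G(i)-\log G(1/i)=i\pi$ and the value follows). So the proposal is correct and takes essentially the same route as the paper.
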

Note $\mathcal{R}[g_\xi(w)]$ is symmetric across the real and imaginary axes so this also gives the paths on the upper left quadrant and bottom half of the plane. We have lemma 3.20 from \cite{C/J} 
\begin{lemma}
For $\xi_c>\xi>-\frac{1}{2}\sqrt{1+2c}$, choose $\omega_c=e^{i\theta_c}$ with $\theta_c\in(0,\pi/2)$ such that $g'_\xi(\omega_c)=0$. There is a contour of steepest ascent leaving $\omega_c$ at an angle $\theta_c-\pi/4$ ending at infinity (via $\HH^+$) and a contour of steepest ascent leaving $\omega_c$ at an angle $\theta_c+3\pi/4$ ending at a cut (via $\HH^+$) and an ascent contour ending at $i\sqrt{2c}$ traveling via the cut $i[0,\sqrt{2c}]$.\\
 There is a contour of steepest descent leaving $\omega_c$ at an angle $\theta_c-3\pi/4$ ending at zero (via $\HH^+$) and a contour of steepest descent leaving $\omega_c$ at an angle $\theta_c+\pi/4$ ending at a cut (via $\HH^+$) and a descent contour ending at $i/\sqrt{2c}$ traveling via the cut $i[1/\sqrt{2c},\infty)$.\label{CJcurve2}
\end{lemma}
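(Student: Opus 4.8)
\emph{Overall strategy and the local picture at $\omega_c$.} The plan is to carry out the classical two-step steepest-descent analysis for the saddle function $g_\xi$ of \eqref{gxi}: first determine the local geometry of the level curves of $\mathcal R[g_\xi]$ at the simple critical point $\omega_c=e^{i\theta_c}$, then follow the steepest curves globally using monotonicity of $\mathcal R[g_\xi]$ and the reflection symmetries of the cut plane. For the local part, since for $-\tfrac12\sqrt{1+2c}<\xi<\xi_c$ the four zeros $\pm\omega_c,\pm\overline{\omega_c}$ of $g'_\xi$ are distinct, $\omega_c$ is a simple saddle and $g_\xi(w)-g_\xi(\omega_c)=\tfrac12 g''_\xi(\omega_c)(w-\omega_c)^2+O((w-\omega_c)^3)$. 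Differentiating the expression for $g'_\xi$ from \eqref{firstderiv} and simplifying on the critical set using \eqref{critptsss} (just as in the companion computation \eqref{g''alphaas}), the one fact to extract is the phase $\arg g''_\xi(\omega_c)=\pi/2-2\theta_c$, i.e. $\omega_c^2 g''_\xi(\omega_c)\in i\R_{>0}$. Granting this, the two steepest-descent rays at $\omega_c$ leave in the directions $\phi$ with $2\phi+\arg g''_\xi(\omega_c)\equiv\pi\ (\mathrm{mod}\ 2\pi)$, namely $\phi\in\{\theta_c+\pi/4,\ \theta_c-3\pi/4\}$; the two steepest-ascent rays leave in the directions $\{\theta_c-\pi/4,\ \theta_c+3\pi/4\}$; and the level set $\{\mathcal R[g_\xi]=\mathcal R[g_\xi(\omega_c)]\}$ crosses $\omega_c$ transversally, bisecting these four rays. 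This already accounts for every emission angle in the statement.

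\emph{Global continuation of the steepest curves.} Along any steepest-descent (resp.\ ascent) arc emanating from $\omega_c$, $\mathcal R[g_\xi]$ is strictly decreasing (resp.\ increasing); hence such an arc is simple, never re-meets the level set $\{\mathcal R[g_\xi]=\mathcal R[g_\xi(\omega_c)]\}$, and must terminate either at another zero of $g'_\xi$, at a branch point, or by tending to $0$ or $\infty$. One records the boundary behaviour: $\mathcal R[g_\xi(w)]=\log|w|+\xi\log|G(w^{-1})/G(w)|\to-\infty$ as $w\to0$ and $\to+\infty$ as $w\to\infty$, the $G$-factors being bounded there by \eqref{Gwlarge}, \eqref{Gwsmall}. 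Using the symmetries $\mathcal R[g_\xi(\overline w)]=\mathcal R[g_\xi(-w)]=\mathcal R[g_\xi(w)]$ (as already noted after Lemma~\ref{CJcurve1}) one reduces to the closed first quadrant, in which $\omega_c$ is the only critical point, and one describes the level set $\{\mathcal R[g_\xi]=\mathcal R[g_\xi(\omega_c)]\}$ there as a union of arcs joining $\omega_c$ to its mirror images, to the branch points $i\sqrt{2c}$ and $i/\sqrt{2c}$, to $0$ and to $\infty$. Reading off which complementary region each of the four local rays enters then pins the far endpoints: one descent ray falls into the region whose $\mathcal R[g_\xi]$-infimum is attained at $0$, the other into the region abutting the cut $i[1/\sqrt{2c},\infty)$; symmetrically the two ascent rays run toward $\infty$ and toward the cut $i[0,\sqrt{2c}]$.

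\emph{The cut-travelling segments and assembly.} For the descent contour that reaches $i[1/\sqrt{2c},\infty)$: once on the cut it proceeds along it, and $t\mapsto\mathcal R[g_\xi(it)]=\log t+\xi\log|G(1/(it))/G(it)|$ is strictly decreasing for $t\ge 1/\sqrt{2c}$ --- this is the $g_\xi$-version of Lemma~\ref{lemmadec}/Corollary~\ref{galphadec}, proved by the same integral-representation estimate of the reciprocal square roots applied to $\tfrac{d}{dt}\mathcal R[g_\xi(it)]$ --- so the contour descends monotonically into the branch point $i/\sqrt{2c}$; symmetrically the steepest-ascent contour climbs the cut $i[0,\sqrt{2c}]$ to $i\sqrt{2c}$. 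Splicing the local rays to these global continuations produces all six contours claimed. As an independent check one may let $\xi\uparrow\xi_c$: then $\omega_c$ and $\overline{\omega_c}$ coalesce at $i$ into the degenerate (monkey-)saddle of Lemma~\ref{CJcurve1}, so the two-saddle picture constructed here is consistent with the boundary case.

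\emph{Main obstacle.} The substantive difficulty is the \emph{global} topology of $\{\mathcal R[g_\xi]=\mathcal R[g_\xi(\omega_c)]\}$ in the cut plane --- showing that no steepest arc from $\omega_c$ runs into $\pm\overline{\omega_c}$ or closes up, and that the four local rays are matched to the correct endpoints. I would settle this by a Morse-theoretic count of the connected components of $\{\mathcal R[g_\xi]\gtrless\mathcal R[g_\xi(\omega_c)]\}$, leaning on the two reflection symmetries, on monotonicity of $\mathcal R[g_\xi]$ along the imaginary axis and along horizontal rays (the $g_\xi$-analogues of Lemma~\ref{lemmadec}), and on the boundary behaviour near $0$, $\infty$ and the four branch points; continuity in $\xi$ along the open interval $(-\tfrac12\sqrt{1+2c},\xi_c)$, where $\omega_c$ sweeps the arc from $1$ to $i$, then propagates the region arrangement from one value of $\xi$ to all others and removes any remaining ambiguity.
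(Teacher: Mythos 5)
The statement you are proving is not actually proved in this paper: the authors simply quote it, ``We have lemma 3.20 from \cite{C/J}'', so there is no internal proof to compare against. What can be assessed is the correctness of your blind attempt, and there are genuine gaps.

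First, the crucial local fact is not proved. You write ``the one fact to extract is the phase $\arg g''_\xi(\omega_c)=\pi/2-2\theta_c$, i.e.\ $\omega_c^2 g''_\xi(\omega_c)\in i\R_{>0}$'' and then say ``Granting this\ldots''. The fact that $\omega_c^2 g''_\xi(\omega_c)$ is purely imaginary follows immediately from \eqref{secderiv}, $g'_\xi(\omega_c)=0$ and the symmetry $\sqrt{\overline w^2+2c}=\overline{\sqrt{w^2+2c}}$; but the lemma's assignment of ascent versus descent to $\theta_c\pm\pi/4$ hinges entirely on the \emph{sign}, i.e.\ on showing $\mathcal{I}\bigl[e^{i\theta_c}/(e^{2i\theta_c}+2c)^{3/2}\bigr]<0$ for all $\theta_c\in(0,\pi/2)$ (equivalently, that what the paper later calls $\mathcal{I}[F_4(\theta_c)]$ is negative there, cf.\ the proof of Lemma~\ref{gomegac''}). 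Without that sign your two descent angles swap with the two ascent angles, and nothing in the proposal pins it down.

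Second, the global matching --- that the four local rays end respectively at $0$, $\infty$, and the two cuts, rather than running into $\pm\overline{\omega_c}$ or closing up --- is acknowledged by you as ``the substantive difficulty'' and left as a Morse-theoretic programme rather than a proof. That is the heart of the lemma, so the proposal is incomplete precisely where the work is.

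Third, the cut-travelling step is both wrong and internally inconsistent. On the cut one has $|G(1/(it))|=1$ for $t\ge 1/\sqrt{2c}$, so $\mathcal R[g_\xi(it)]=\log t-\xi\log|G(it)|$, and with $\sqrt{(it)^2+2c}=i\sqrt{t^2-2c}$ one finds
\begin{equation*}
\frac{d}{dt}\mathcal R[g_\xi(it)]=\frac1t+\frac{\xi}{\sqrt{t^2-2c}}\longrightarrow \frac{1+\xi}{t}>0\quad(t\to\infty),
\end{equation*}
so the map $t\mapsto\mathcal R[g_\xi(it)]$ is \emph{not} strictly decreasing on $[1/\sqrt{2c},\infty)$ as you claim; it has at most one stationary point and is eventually increasing. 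Moreover, even taking your claim at face value it would say the opposite of what you need: if $\mathcal R$ were strictly decreasing in $t$, then travelling toward $i/\sqrt{2c}$ (decreasing $t$) would \emph{increase} $\mathcal R$, making that piece an ascent, not the descent the lemma asserts. What the lemma actually needs is that $\mathcal R[g_\xi(it)]$ is increasing in $t$ on the interval between $1/\sqrt{2c}$ and the point where the steepest-descent curve reaches the cut, and that requires locating the unique critical point $t^*=\sqrt{2c/(1-\xi^2)}$ of the derivative above and arguing about its position; the analogy with Lemma~\ref{gasdesc}/Corollary~\ref{galphadec}, which concern the quite different function $\widetilde g_\alpha$, does not transfer without this extra analysis.
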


Now deform $\Gamma_{r'}$ to the path of steepest descent, $\text{desc}_{\xi}$, for $g_{\xi}$ passing through the critical points $\pm\omega_c$, $\pm\overline{\omega}_c$. This can be done due to the crude estimates for $\abs{w_2}=R$ large and $\abs{w_1}=1/R$ small,
\begin{align}
&\abs{\frac{1}{H_{y_1,y_2+1}(w_2)}}=\abs{\frac{G(w_1^{-1})^{a_2}}{G(w_1)^{a_1}}\exp(-ng_\xi(w_2)/2)}=\big(\frac{c}{2}\big)^{a_2}\frac{1}{R^{n(1+\xi)/2-a_2}}\abs{(1+O(1/R))},
\label{largeRgxi}\\
&\abs{H_{x_1+1,x_2}(w_1)}=\abs{\frac{G(w_2)^{b_1}}{G(w_2^{-1})^{b_2}}\exp(ng_{\xi}(w_1)/2)}=\big(\frac{c}{2}\big)^{b_1}\frac{1}{R^{n(1+\xi)/2-b_1}}\abs{(1+O(1/R))}.
\label{expanonH}
\end{align}
 which follow from \eqref{Gwlarge} and \eqref{Gwsmall}. Deform $\Gamma_{1/r'}$ to path of steepest ascent of $g_{\xi}$ passing through the points $\pm \omega_c$, $\pm \overline{\omega_c}$, label this contour asc$_\xi$. Since the $w_2$-contour passes over the $w_1$-contour we pick up a contribution 
 \begin{align}
\frac{i^{(x_2-x_1+y_1-y_2)/2}}{2\pi i}\int_{\Gamma_{\omega_c}}V_{\eps_1,\eps_2}(w,w)G(w)^{(y_1-x_1-1)/2}G(w^{-1})^{(x_2-y_2-1)/2}\frac{dw}{w}
\label{Comegacdef}
\end{align}
by the residue theorem.
Lemma (3.2) in \cite{C/J}  is
\begin{align}
V_{\eps_1,\eps_2}(w,w)=\frac{(-1)^{1+h(\eps_1,\eps_2)}a^{\eps_2}G(w^{-1})^{h(\eps_1,\eps_2)}+a^{1-\eps_2}G(w)G(w^{-1})^{1-h(\eps_1,\eps_2)}}{2(1+a^2)\sqrt{w^2+2c}\sqrt{1/w^2+2c}}.
\end{align} 
This together with the definitions in \eqref{klh1} and \eqref{klh2} give that the contribution \eqref{Comegacdef} is equal to
 $C_{\omega_c}(x,y)$ defined in \eqref{Comegacccc}. The rest of the integral is given by what we called the error term, \eqref{Reps1eps2}. This proves the formula \eqref{Beps1eps2Comegac} and hence Theorem \ref{ThmInvKast}.
\\

\begin{figure}[h]
\centering
\includegraphics[width = 0.6\textwidth]{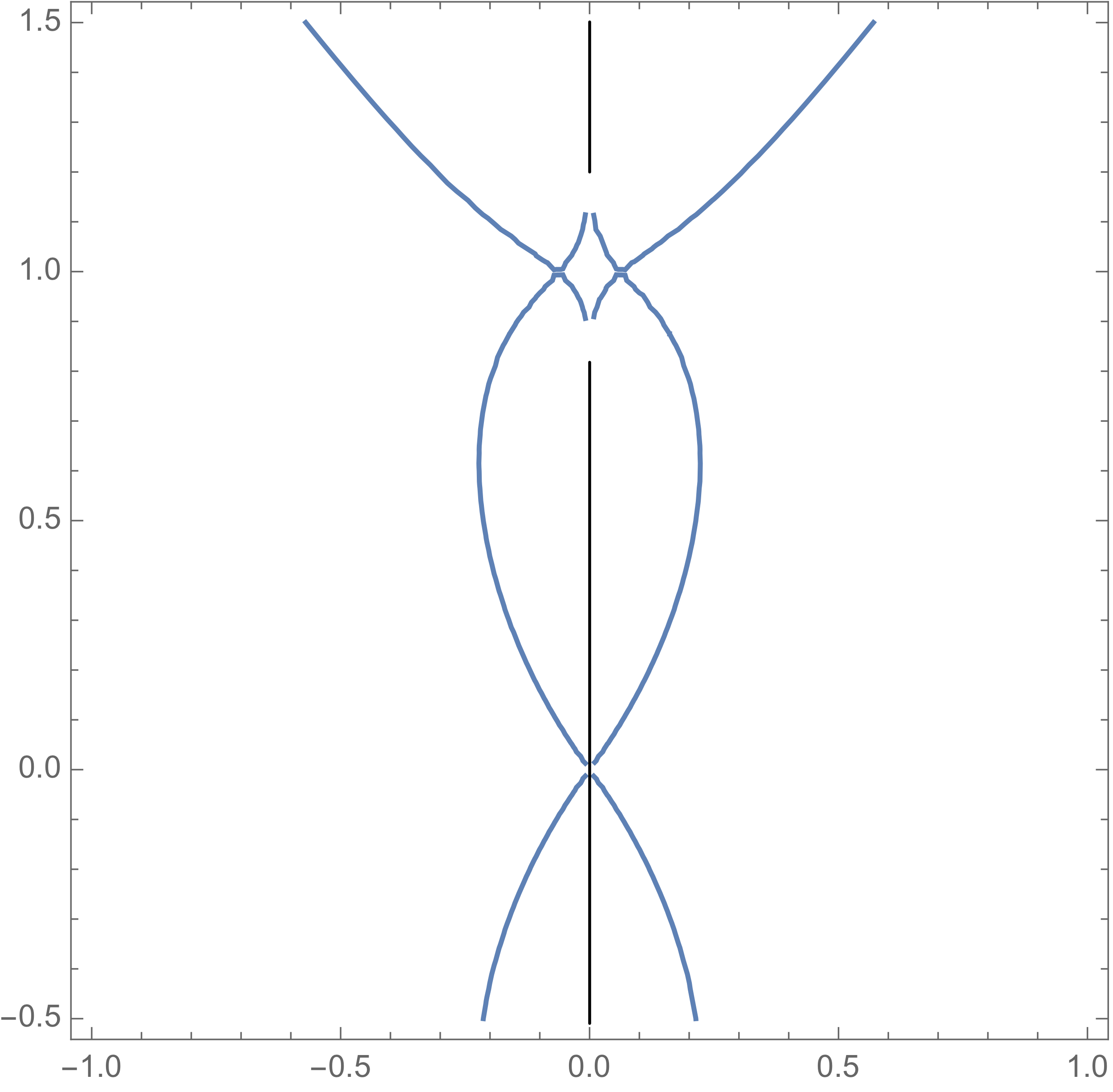}
\caption[Steepest ascent and descent contours]
	{A numerical plot of $\{w: \mathcal{I}[g_{\xi}(w)-g_{\xi}(\omega_c)]=0\}$, $\xi_c-\xi=0.01$ (blue) and the branch cut of of $g_\xi$ (black) where $a=1/2$.}
	\label{steepestdescentascentcontours}
\end{figure}

We now turn to the proof of Proposition \ref{uniformbound}. This involves an analysis of the steepest descent and ascent contours for the saddle point function $g_\xi$, across a region of the parameter space $\xi$ where two single critical points merge to form a double critical point.
In order to carry out this analysis we note that the ascent contour is infinite and the descent contour travels to the origin, where the integrand is not analytic. In light of this, we make the following deformations.
\\
Consider the upper right quadrant of $\C$. Take $R>2$ very large, by lemmas 3.15 and 3.20 in \cite{C/J}, the path of steepest ascent of $g_\xi$ in the upper right quadrant intersects the quarter circle centred at 0 that travels from $R$ to $iR$. Deform the section of asc$_\xi$ in the upper right quadrant that extends from the intersection to infinity to the section of the quarter circle starting at the intersection, travelling down the quarter circle and ending at the point $R\in \C$. Make the equivalent symmetric deformations in the other three quadrants and label this contour $\text{asc}^R_\xi$. Similarly, the path of steepest descent in the upper right quadrant intersects a small quarter circle centred at $0$ of radius $\eps$, take $\eps=1/R$ for simplicity. Deform the section of desc$_\xi$ starting at the intersection and going to zero to the contour starting from the intersection and travelling down along the small quarter circle to the point $1/R\in \C$. Make the equivalent symmetric deformations in the other quadrants and label this new contour desc$_{\xi,1/R}$. Now note from \eqref{expanonH} that we can take $R$ so large and fixed so that $\mcR[g_{\xi}(\omega_c)]> \mcR[g_\xi(w_1)]+1$ and $\mcR[-g_\xi(\omega_c)]>\mcR[-g_\xi(w_2)]+1$ for $|w_1|=1/R$, $|w_2|=R$.
\begin{figure}
\centering
\begin{tikzpicture}[scale=0.5]
\draw (-5,0) -- (5,0);
\draw (0,-5) -- (0,5);
\draw (0,0) .. controls (1.5,1.5) ..
 (0,2.2);
 \draw (0,0) .. controls (-1.5,1.5) .. (0,2.2);
 \draw (0,0) .. controls (-1.5,-1.5) .. (0,-2.2);
 \draw (0,0) .. controls (1.5,-1.5) .. (0,-2.2);
\draw (0,1.8) .. controls (1,2.5) and  (3,3) .. (4,4.5);
\draw (0,1.8) .. controls (-1,2.5) and (-3,3) .. (-4,4.5);
\draw (0,-1.8) .. controls (-1,-2.5) and (-3,-3) .. (-4,-4.5);
\draw (0,-1.8) .. controls (1,-2.5) and (3,-3) .. (4, -4.5);
\draw (1.1,1.5) -- (2,1.5) node[below]{desc$_{\xi}$};
\draw (2.3,3) -- (3.3,3) node[below]{asc$_{\xi}$};

\end{tikzpicture}
\qquad \qquad \qquad \qquad
\begin{tikzpicture}[scale=0.5]
\draw (-5.15,0) -- (5.15,0);
\draw (0,-5.15) -- (0,5.15);

\draw (0.14,0.14) .. controls (1.5,1.5) .. (0,2.2);
 \draw (0.2,0) arc (0:45:0.2);
 \draw (-0.14,0.14) .. controls (-1.5,1.5) .. (0,2.2);
 \draw (-0.14,0.14) arc (135:180:0.2);
 \draw (-0.14,-0.14) .. controls (-1.5,-1.5) .. (0,-2.2);
 \draw (-0.2,0) arc (180:225:0.2);
 \draw (0.14,-0.14) .. controls (1.5,-1.5) .. (0,-2.2);
 \draw (0.14,-0.14) arc (315:360:0.2);
 
\draw (0,1.8) .. controls (1,2.5) and  (3,3) .. (3.4,3.8);
\draw (5.1,0) arc (0:48:5.1);
\draw (0,1.8) .. controls (-1,2.5) and (-3,3) .. (-3.4,3.8);
\draw (-3.4,3.8) arc (132:180:5.1);
\draw (0,-1.8) .. controls (-1,-2.5) and (-3,-3) .. (-3.4,-3.8);
\draw (-5.1,0) arc (180:228:5.1);
\draw (0,-1.8) .. controls (1,-2.5) and (3,-3) .. (3.4, -3.8);
\draw (3.4,-3.8) arc (312:360:5.1);
\draw (1.1,1.5) -- (2.4,1.5) node[below]{$\widetilde{\text{desc}}_{\xi,1/R}$};
\draw (2.4,3) -- (3.3,3) node[below]{$\widetilde{\text{asc}}_{\xi}^R$};
\end{tikzpicture}
\caption{Global contour deformations} \label{fig:M1}
\end{figure}
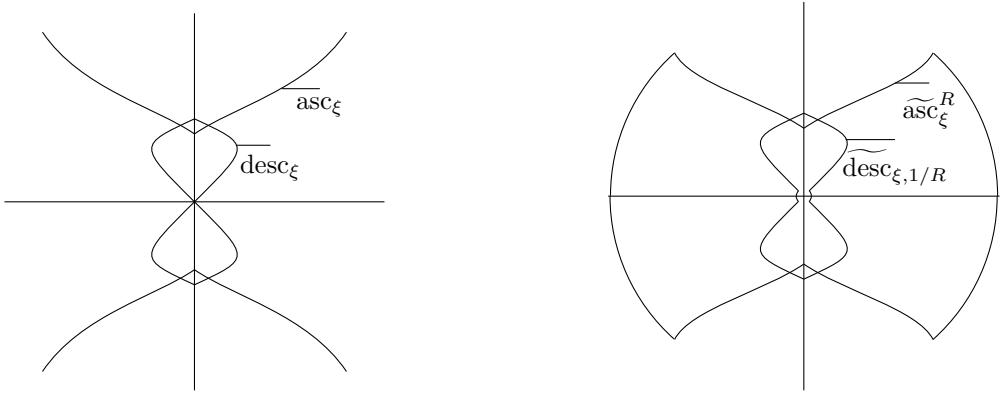

\begin{figure}
\centering
\begin{tikzpicture}[scale=2]
\draw (0,0) -- (0,0) node[left]{$i$};
\draw (0,-1.25) -- (0,1.25);
\draw (0.309,0) -- (1.162,0.84);
\draw (1.162,0.84) .. controls (1.3,1) .. (1.45, 1.2);
\draw (0,-1) -- (0.309,0);
\draw (0,1) -- (0.309,0);
\draw (0.309,0) -- (1.209,-0.70);
\draw (1.209,-0.7) .. controls (1.45,-0.9) .. (1.6,-1.1);
\draw (1.909,-0.65) -- (1.909,-0.65) node[below]{$\widetilde{\text{desc}}_{\xi,1/R}$};
\draw (1.5,0.99) -- (1.5,0.99) node[below]{$\widetilde{\text{asc}}_{\xi}^R$};
\end{tikzpicture}
\caption{Local contour deformations near $i$ in $\HH^+$.} \label{fig:M2}
\end{figure}
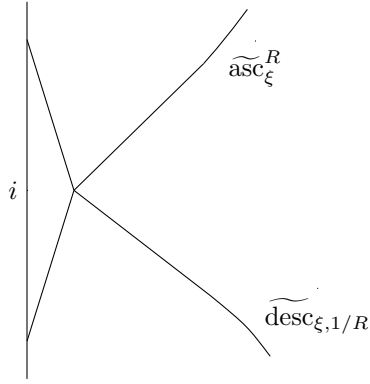
In order to deal with the contributions to the asymptotics coming from small neighbourhoods of the critical points we make the following local deformations. \\
Let $0\leq \xi_c-\xi\leq \delta^*_n$ where $\delta^*_n>0$ is a sequence going to zero. Consider the section of the contour desc$_{\xi,1/R}$ in $\HH^+$.  For $\delta_0>0$ sufficiently small and $n$ large enough, later we will prove that the section of desc$_{\xi,1/R}$ that leaves the small quarter circle centred at $0$ of radius $1/R$ and travelling up to $\omega_c'$ passes through the set $\{i+\delta_0e^{i\theta}:\theta\in[-\pi/4-a^*,-\pi/6+a^*]\}$ for some positive $a^*<\pi/24$. Label the contours intersection with this set as $w^*_\xi$. Next consider the connected component of desc$_{\xi,1/R}$ with endpoints $w^*_\xi$ and $\omega_c$, deform this component to a straight line connecting the two endpoints. Also, deform the component of desc$_{\xi,1/R}$ starting at $\omega_c$ and ending at the imaginary axis to the straight line starting from $\omega_c$ and leaving at an angle $3\pi/5$, extending to the imaginary axis. In the case $\omega_c=i$ this last section vanishes. Make the equivalent symmetric deformations in the other three quadrants of $\C$. Label this deformed contour $\widetilde{\text{desc}}_{\xi,1/R}$ and note it now depends on $\delta_0$ and $\xi$, it has also only been defined for $\delta_0$ sufficiently small and $n$ large enough. We call the equivalent deformations to $\text{asc}^R_\xi$ as $\widetilde{\text{asc}}_{\xi}^R$.
\\
We now have 
\begin{align}
R_{\eps_1,\eps_2}(a,x_1,x_2,y_1,y_2)&=\frac{i^{(x_1-x_2+y_1-y_2)/2}}{(2\pi i)^2}\int_{\widetilde{\text{desc}}_{\xi,1/R}}\frac{dw_1}{w_1}\int_{\widetilde{\text{asc}}^R_\xi}dw_2 \frac{V_{\eps_1,\eps_2}(w_1,w_2)}{w_2-w_1}\frac{H_{x_1+1,x_2}(w_1)}{H_{y_1,y_2+1}(w_2)}
\end{align}
So we have the bound
\begin{align}
\abs{R_{\eps_1,\eps_2}}&=\frac{1}{(2\pi)^2}\abs{\int_{\widetilde{\text{desc}}_{\xi,1/R}}\frac{dw_1}{w_1}\int_{\widetilde{\text{asc}}^R_\xi}dw_2 \Big (\frac{G(w_1^{-1})^{a_2}}{G(w_1)^{a_1}} \frac{G(w_2)^{b_1}}{G(w_2^{-1})^{b_2}}\Big ) \frac{V_{\eps_1,\eps_2}(w_1,w_2)}{w_2-w_1}\exp\{\frac{n}{2}(g_{\xi}(w_1)-g_\xi(w_2))\}}\\
&\leq\frac{M }{(2\pi)^2}\int_{\widetilde{\text{desc}}_{\xi,1/R}}\abs{dw_1}\int_{\widetilde{\text{asc}}^R_\xi}\abs{dw_2}|K(w_1,w_2)| \abs{\frac{\exp\{\frac{n}{2}(g_{\xi}(w_1)-g_\xi(w_2))\}}{w_2-w_1}}.
\label{startingpoint}
\end{align}
where
\begin{align}
M:=\sup_1\{\abs{V_{\eps_1,\eps_2}(w_1,w_2)}\}, &&K(w_1,w_2):=\frac{G(w_1^{-1})^{a_2}}{G(w_1)^{a_1}} \frac{G(w_2)^{b_1}}{G(w_2^{-1})^{b_2}}
\end{align}
and $\sup_1$ is the supremum over a compact subset $A$ of $( \overline{\B}(0,R)\setminus \B(0,1/R))^2$ such that $V_{\eps_1,\eps_2}$ is analytic on $A$ and $\widetilde{\text{desc}}_{\xi,1/R},\widetilde{\text{asc}}^R_\xi\subset A$ for every $\xi$. Note that both $\widetilde{\text{desc}}_{\xi,1/R},\widetilde{\text{asc}}^R_\xi\subset  \overline{\B}(0,R)\setminus \B(0,1/R)$ for all $0\leq \xi_c-\xi\leq \delta^*_n$.
Now we are in a position to prove our bound on $R_{\eps_1,\eps_2}$.

\begin{proof}[Proof of Proposition \ref{uniformbound}]
We first formulate a few key lemmas. The first of these, provides us with the fact that the contours $\widetilde{\text{desc}}_{\xi,1/R}$, $\widetilde{\text{asc}}_\xi^R$ are well-defined.
\begin{lemma}
Let $\delta^*_n>0$ be a sequence that goes to zero as $n\rarrow \infty$, for $\delta_0>0$ sufficiently small there is a natural number $N$ such that for all $n>N$ the section of desc$_{\xi}$ with endpoints $0$ and $\omega_c$ intersects the set $\{i+\delta_0 e^{i\varphi}:\varphi\in [-\pi/4,-\pi/12]\}$ for all $0\leq \xi_c-\xi\leq \delta^*_n$ and the section of $\text{asc}_\xi$ starting at $\omega_c$ and ending at infinity travelling via $\HH^+$ intersects the set $\{i+\delta_0 e^{i\varphi}:\varphi\in [\pi/12,\pi/4]\}$.\label{contoursexistlemma}
\end{lemma}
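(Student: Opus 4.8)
The strategy is a perturbation/continuity argument in the parameter $\xi$, anchored at the degenerate endpoint $\xi = \xi_c$ where the two simple critical points $\omega_c, \overline{\omega_c}$ have merged into the double critical point at $i$. At $\xi = \xi_c$, Lemma \ref{CJcurve1} tells us exactly what the relevant pieces of $\text{desc}_{\xi_c}$ and $\text{asc}_{\xi_c}$ look like near $i$: the descent path leaves $i$ at angle $-\pi/6$ heading towards $0$, and the ascent path leaves $i$ at angle $\pi/6$ heading to infinity. In particular, for $\delta_0 > 0$ small enough, the descent path hits the circle $\{i + \delta_0 e^{i\varphi}\}$ at angle essentially $-\pi/6$ (strictly inside the closed arc $[-\pi/4,-\pi/12]$), and the ascent path at angle essentially $+\pi/6$ (strictly inside $[\pi/12,\pi/4]$); more precisely, by a Taylor expansion of $g_{\xi_c}$ at $i$ (where $g_{\xi_c}$ has a double zero of the derivative, so the steepest lines emanate at the angles determined by $\arg g_{\xi_c}'''(i)$), the intersection angle differs from $\pm\pi/6$ by $O(\delta_0)$, which can be made $<\pi/24$ by choosing $\delta_0$ small.

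Next I would promote this to a statement uniform in $\xi$ for $0 \le \xi_c - \xi \le \delta_n^*$. The key input is the local normal form near $i$: writing $\omega_c = \omega_c(\xi)$ with $\omega_c - i = O(\sqrt{\xi_c-\xi})$ by \eqref{omegacintermsofxi}, the function $g_\xi$ near $i$ is, after affine rescaling, a small perturbation of a cubic $\tfrac13 z^3 - \mu z$ with $\mu = \mu(\xi) \to 0$ proportional to $\xi_c - \xi$ (this is the universal unfolding of the merging-saddle situation; it can be read off from the Taylor expansion of $g_\xi$ at $i$ using $g_\xi'(\omega_c)=0$ and the formulas for $g_\xi''$, $g_\xi'''$ near $\pi/2$ that are implicit in the $\psi$-expansions of Section 4, or directly from \eqref{critptsss}). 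The steepest descent/ascent curves of $\tfrac13 z^3 - \mu z$ through the critical point $z = \sqrt\mu$ are explicit, and for all sufficiently small $\mu \ge 0$ they cross a fixed small circle around the merge point at angles within $o(1)$ of the $\mu = 0$ angles $\pm\pi/6$, $\pm 5\pi/6$, etc. Since the perturbation from the true $g_\xi$ to this cubic normal form is $O(|z-i|)$-small uniformly on a fixed neighbourhood of $i$ once $n$ is large (so that $\xi_c - \xi \le \delta_n^*$ is small), a standard implicit-function-theorem / structural-stability argument for the level set $\{\mathcal I[g_\xi(w) - g_\xi(\omega_c)] = 0\}$ shows the descent and ascent branches persist and cross $\{i + \delta_0 e^{i\varphi}\}$ at angles in the prescribed arcs. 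One also needs that, away from the circle of radius $\delta_0$ around $i$, the descent branch genuinely continues down to $0$ and the ascent branch out to $\infty$ — but that is exactly the content of Lemma \ref{CJcurve2} (for $\xi < \xi_c$) together with Lemma \ref{CJcurve1} (for $\xi = \xi_c$), and the global topology of these curves does not change as $\xi$ varies in this range because no new critical points appear and the branch-cut structure of $g_\xi$ is unchanged.

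The main obstacle I anticipate is making the "uniformity in $\xi$ near the coalescence" rigorous: precisely, controlling the geometry of the zero set of $\mathcal I[g_\xi - g_\xi(\omega_c)]$ in a $\delta_0$-neighbourhood of $i$ simultaneously for all $\xi \in [\xi_c - \delta_n^*, \xi_c]$, since at $\xi = \xi_c$ the Hessian degenerates and the usual steepest-descent local analysis (which assumes a nondegenerate saddle) breaks down. The clean way around this is to commit to the cubic normal form: show that on a fixed disc $\B(i,\delta_0)$ there is a holomorphic change of coordinates $w \mapsto z$, depending smoothly on $\xi$, with $g_\xi(w) - g_\xi(\omega_c) = \tfrac13 z^3 - \mu(\xi) z + (\text{const})$ exactly (Malgrange/Weierstrass preparation for the two-parameter cubic), and then quote the elementary description of the descent/ascent rays of $\tfrac13 z^3 - \mu z$ for $\mu \ge 0$ small, transporting the angle estimates back through the coordinate change (which is $C^1$-close to the identity, so distorts angles by $O(\delta_0)$). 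Choosing first $\delta_0$ small (to control the coordinate-change distortion and to sit inside the validity of Lemmas \ref{CJcurve1}, \ref{CJcurve2}) and then $N$ large (so that $\delta_n^*$, hence $\mu(\xi)$, is small enough) yields the claim.
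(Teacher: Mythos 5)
Your plan is built on the right mechanism --- near the coalescing saddles the geometry of the level set $\mathcal{I}[g_\xi-g_\xi(\omega_c)]=0$ is governed by the cubic term --- and that is also the engine behind the paper's argument. But the paper implements it much more concretely, and there is one genuine gap in your plan.

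On the implementation: the paper avoids the Weierstrass/Malgrange normal-form machinery entirely. Instead of expanding at $i$ (where $g_\xi'(i)\ne 0$ for $\xi<\xi_c$, so the normal form must absorb a $\xi$-dependent linear term), the paper Taylor-expands $g_\xi$ at $\omega_c$, where $g_\xi'(\omega_c)=0$ exactly. Lemma \ref{gomegac''} then gives the explicit asymptotics $g_\xi''(\omega_c)=-id_2\sqrt{\xi_c-\xi}+O(\xi_c-\xi)$ and $g_\xi'''(\omega_c)=-id_3+O(\sqrt{\xi_c-\xi})$; substituting into the steepest-descent conditions for $w=\omega_c+pe^{i\theta}$ and dividing by the dominant $d_3p^3$ term yields the scalar constraints $|\cos(3\theta)|<\eps$ and $\sin(3\theta)<\eps$ whenever $\delta/2\le p\le\delta$ and $\xi_c-\xi\le\delta^4$ (Lemma \ref{cos3thetalessthaneps}). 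This arrives at the same ``angle forced near a root of $\cos(3\theta)$'' conclusion as your cubic-model picture but by elementary Taylor estimates with controlled remainders, and without having to prove that the $\xi$-dependent change of coordinates exists on a $\xi$-independent disc. The one thing your route would buy is conceptual transparency about the Airy/fold universality, at the cost of verifying the preparation theorem uniformly in $\xi$ --- which you state but do not prove.

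The genuine gap is branch selection. Once you know the crossing angle is near a root of $\cos(3\theta)=0$ with $\sin(3\theta)\le 0$, you still have three candidates, $\theta\approx -5\pi/6$, $-\pi/6$, $\pi/2$, and the lemma requires the first. Your proposal defers this to ``the global topology of these curves does not change,'' but that is precisely the step that needs a concrete argument, and it is not supplied by Lemmas \ref{CJcurve1} and \ref{CJcurve2} on their own. The paper rules out the spurious angles by two explicit geometric facts (Lemma \ref{closeto-pi/6}): the component of $\text{desc}_\xi$ joining $0$ to $\omega_c$ lies in one quadrant and cannot cross the imaginary axis, which excludes $\theta\approx -5\pi/6$ once $p\ge\eps/2>|\omega_c-i|$; and since it leaves $\omega_c$ at angle $\theta_c-3\pi/4$ and $\mathcal{R}[g_\xi]$ vanishes identically on the unit circle, it cannot re-cross the unit circle, which excludes $\theta\approx\pi/2$. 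Without an argument of this kind, your plan does not actually identify which steepest branch hits the arc $\varphi\in[-\pi/4,-\pi/12]$, and the lemma is not proved. (The final reparameterization from the circle about $\omega_c$ to the circle about $i$, which you call ``$C^1$-close to the identity,'' is fine --- the paper verifies it costs only $O(|\omega_c-i|^2)$ in the angle.)
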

The next lemma gives control over the integral in a neighbourhood of the critical points.
\begin{lemma} Let $\delta^*_n>0$ be a sequence going to zero as $n\rarrow\infty$. 
If $\delta_0>0$ is sufficiently small then for $w_1\in \widetilde{\text{desc}}_{\xi,1/R}\cap\overline{\mathbb{B}}(i,\delta_0)$ and all $n$ large enough, there exists $C>0$ such that
\begin{align}
\mathcal{R}[g_{\xi}(w_1)-g_{\xi}(\omega_c)]\leq -C\max\{|w_1-\omega_c|^3,\sqrt{\xi_c-\xi}|w_1-\omega_c|^2\}
\label{localbound}
\end{align}
for all $0\leq \xi_c-\xi\leq \delta^*_n$.
Also, for $w_2\in \widetilde{\text{asc}}_{\xi}^R\cap\overline{\mathbb{B}}(i,\delta_0)$ and $n$ large enough, 
\begin{align}
\mathcal{R}[g_{\xi}(\omega_c)-g_{\xi}(w_2)]\leq -C\max\{|w_2-\omega_c|^3,\sqrt{\xi_c-\xi}|w_1-\omega_c|^2\}
\label{localbound}
\end{align}
for all $0\leq \xi_c-\xi\leq \delta^*_n$. 
\label{locallemma}
\end{lemma}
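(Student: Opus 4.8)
The plan is to Taylor‑expand $g_\xi$ to third order about the critical point $\omega_c$ and then read off the claimed bound on each of the straight segments constituting $\widetilde{\text{desc}}_{\xi,1/R}$ and $\widetilde{\text{asc}}^R_\xi$ inside $\overline{\B}(i,\delta_0)$. First I would fix $\delta_0$ smaller than the distance from $i$ to the branch cuts $i[-\sqrt{2c},\sqrt{2c}]$ and $i[1/\sqrt{2c},\infty)$ of $g_\xi$ (this distance is positive since $0<2c<1$), so that $g_\xi$ is analytic and uniformly bounded on $\overline{\B}(i,2\delta_0)$ for every $\xi$ under consideration. By Lemma \ref{thetacxi} we have $|\omega_c-i|\asymp\sqrt{\xi_c-\xi}$, so for $n$ large $\omega_c\in\B(i,\delta_0/2)$ and Taylor's theorem with Cauchy estimates gives, for $w\in\overline{\B}(i,\delta_0)$,
\begin{align}
g_\xi(w)-g_\xi(\omega_c)=\tfrac12 g''_\xi(\omega_c)(w-\omega_c)^2+\tfrac16 g'''_\xi(\omega_c)(w-\omega_c)^3+\rho_\xi(w)(w-\omega_c)^4,
\end{align}
with $g'_\xi(\omega_c)=0$ and $\sup_{\xi,w}|\rho_\xi(w)|<\infty$.

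Next I would estimate the two Taylor coefficients. Since $i$ is a double (but not triple) critical point of $g_{\xi_c}$, one has $g''_{\xi_c}(i)=0$, $g''_\xi(i)=O(\xi_c-\xi)$ (indeed $g''_\xi(i)=(\xi_c-\xi)/\xi_c$ by a one‑line computation), and $g'''_{\xi_c}(i)\neq0$; writing $g''_\xi(\omega_c)=g''_\xi(i)+\int_i^{\omega_c}g'''_\xi$ then gives $g''_\xi(\omega_c)=g'''_{\xi_c}(i)(\omega_c-i)+O(\xi_c-\xi)$ and $g'''_\xi(\omega_c)=g'''_{\xi_c}(i)+O(\sqrt{\xi_c-\xi})$, whence by Lemma \ref{thetacxi}
\begin{align}
c_1\sqrt{\xi_c-\xi}\le|g''_\xi(\omega_c)|\le C_1\sqrt{\xi_c-\xi},\qquad 0<c_2\le|g'''_\xi(\omega_c)|\le C_2
\end{align}
uniformly for $\xi$ near $\xi_c$. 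I would also record the relevant arguments: from the steepest–descent angle $\theta_c-3\pi/4$ at $\omega_c$ in Lemma \ref{CJcurve2} one reads off $\arg g''_\xi(\omega_c)\equiv-\tfrac\pi2+2\varphi_c\pmod{2\pi}$, and from the $-\pi/6$ descent angle at $i$ in Lemma \ref{CJcurve1} one gets $\arg g'''_{\xi_c}(i)\equiv-\tfrac\pi2\pmod{2\pi}$, hence $\arg g'''_\xi(\omega_c)=-\tfrac\pi2+O(\varphi_c)$.

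Now $\widetilde{\text{desc}}_{\xi,1/R}\cap\overline{\B}(i,\delta_0)$ consists of finitely many straight segments issuing from $\omega_c$: the segment from $\omega_c$ to $w^*_\xi$, whose direction is $\arg(w^*_\xi-\omega_c)=\arg(w^*_\xi-i)+o(1)\in[-\pi/4-a^*,-\pi/6+a^*]+o(1)$ by Lemma \ref{contoursexistlemma} (using $|\omega_c-i|=o(1)$ and $a^*<\pi/24$), together with the segment leaving $\omega_c$ at angle $3\pi/5$ (which lies entirely in $\overline{\B}(i,\delta_0)$ since $\omega_c\to i$). On such a segment, parametrising $w=\omega_c+se^{i\phi}$,
\begin{align}
\mathcal R[g_\xi(w)-g_\xi(\omega_c)]=\tfrac12|g''_\xi(\omega_c)|\sin(2\phi+2\varphi_c)\,s^2+\tfrac16|g'''_\xi(\omega_c)|\big(\sin 3\phi+O(\varphi_c)\big)s^3+O(s^4).
\end{align}
A direct trigonometric check shows that for every admissible $\phi$ — i.e. $\phi$ in a small neighbourhood of $[-\pi/4,-\pi/6]$, or $\phi=3\pi/5$ — one has $\sin(2\phi+2\varphi_c)\le-c_3<0$ and $\sin 3\phi\le-c_4<0$ uniformly in $\xi,n$. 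Combining with the coefficient bounds and choosing $\delta_0$ small enough to absorb the $O(s^4)$ term into the cubic term, I obtain $\mathcal R[g_\xi(w)-g_\xi(\omega_c)]\le-c_5\sqrt{\xi_c-\xi}\,s^2-c_6 s^3\le-C\max\{s^3,\sqrt{\xi_c-\xi}\,s^2\}$, which is the first inequality of the Lemma with $s=|w-\omega_c|$. The bound for $w_2\in\widetilde{\text{asc}}^R_\xi\cap\overline{\B}(i,\delta_0)$ is identical applied to $-g_\xi$, using the symmetry of $\mathcal R[g_\xi]$ across the coordinate axes and the reflected ascent angles from Lemmas \ref{CJcurve1}, \ref{CJcurve2}, \ref{contoursexistlemma}.

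I expect the main obstacle to be precisely the uniform control of the sign of the \emph{quadratic} angular factor $\sin(2\phi+2\varphi_c)$: its coefficient $|g''_\xi(\omega_c)|$ is only of order $\sqrt{\xi_c-\xi}$, so if $\phi$ ever strayed into a range where this factor were non‑negative, not only would the $\sqrt{\xi_c-\xi}\,s^2$ part of the estimate fail for small $s$, but $\widetilde{\text{desc}}_{\xi,1/R}$ would cease to be a descent contour near $\omega_c$ at all. This is exactly why Lemma \ref{contoursexistlemma} must be established first — it confines the true steepest descent and ascent curves to the sectors $\{i+\delta_0e^{i\varphi}:\varphi\in[-\pi/4,-\pi/12]\}$ and $\{i+\delta_0e^{i\varphi}:\varphi\in[\pi/12,\pi/4]\}$ — and why the margin $a^*<\pi/24$ is chosen so small: it keeps every straight‑line replacement direction safely inside the angular window on which both the quadratic and the cubic contributions are strictly decreasing, uniformly as $\xi\uparrow\xi_c$ and $n\to\infty$.
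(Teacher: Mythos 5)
Your proposal is correct and follows essentially the same route as the paper's proof: expand $g_\xi$ to fourth order about $\omega_c$ via \eqref{gomegactaylor}, use the estimates $g''_\xi(\omega_c)\asymp\sqrt{\xi_c-\xi}$ and $g'''_\xi(\omega_c)\asymp 1$ (recorded in Lemma~\ref{gomegac''}, which you rederive), then check that $\sin 2\theta$ and $\sin 3\theta$ are uniformly bounded below zero along the two straight segments of $\widetilde{\text{desc}}_{\xi,1/R}$ in $\overline{\B}(i,\delta_0)$ — the $3\pi/5$ ray and the one localised near $-\pi/6$ by Lemma~\ref{closeto-pi/6} — and similarly for the ascent contour by symmetry. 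Your closing remark about why the angular localisation cannot be dispensed with (the quadratic coefficient degenerates like $\sqrt{\xi_c-\xi}$) is precisely the structural point around which the paper organises Lemmas~\ref{cos3thetalessthaneps}–\ref{closeto-pi/6}.
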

The next lemma gives control of the integral over sections of the descent path bounded away from the critical points, and follows from the above lemma.

\begin{lemma}
Let $\delta^*_n>0$ be a sequence that goes to zero as $n\rarrow \infty$, if $\delta_0>0$ is sufficiently small and $w^{*}_{\xi}$ is a point where desc$_{\xi}$ intersects $\partial \B(i,\delta_0)$ in $\HH^+$, $w_\xi^{**}$ be a point where $\text{asc}_\xi$ intersects $\partial \B(i,\delta_0)$ in $\HH^+$. There is a number $C(\delta_0)>0$ such that
\begin{align}
\mathcal{R}[g_{\xi}(w_{\xi}^{*})-g_{\xi}(\omega_c)]<-C(\delta_0)
\label{globalbound}
\end{align}
for all $0\leq\xi_c-\xi\leq\delta^*_n$, when $n$ is large. We also have
\begin{align}
\mathcal{R}[g_{\xi}(\omega_c)-g_{\xi}(w_{\xi}^{**})]<-C(\delta_0)
\end{align} 
for all $0\leq \xi_c-\xi\leq \delta^*_n$, when $n$ is large.
\label{globallemma}
\end{lemma}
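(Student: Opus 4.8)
The plan is to deduce Lemma \ref{globallemma} directly from the local estimate in Lemma \ref{locallemma}, which already carries all the analytic content near the (nearly coalescing) critical points; the remaining work is just to evaluate that estimate at the single point $w^*_\xi$ (resp. $w^{**}_\xi$) and to check that this point stays bounded away from $\omega_c$ uniformly in $\xi$.

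First I would record that, by the construction of $\widetilde{\text{desc}}_{\xi,1/R}$, the connected component of $\text{desc}_{\xi}$ between $\omega_c$ and $w^*_\xi$ has been deformed to the straight segment joining $\omega_c$ to $w^*_\xi$; since $\overline{\B}(i,\delta_0)$ is convex and both endpoints lie in it, that segment lies in $\widetilde{\text{desc}}_{\xi,1/R}\cap\overline{\B}(i,\delta_0)$, and in particular $w^*_\xi\in\widetilde{\text{desc}}_{\xi,1/R}\cap\overline{\B}(i,\delta_0)$. The deformation fixes the endpoint $w^*_\xi$, so $g_\xi(w^*_\xi)$ is unchanged and Lemma \ref{locallemma} applies at $w_1=w^*_\xi$, giving, for $n$ large,
\[
\mathcal{R}[g_\xi(w^*_\xi)-g_\xi(\omega_c)]\ \le\ -C\,\max\{|w^*_\xi-\omega_c|^3,\ \sqrt{\xi_c-\xi}\,|w^*_\xi-\omega_c|^2\}\ \le\ -C\,|w^*_\xi-\omega_c|^3 .
\]

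Next I would supply a lower bound on $|w^*_\xi-\omega_c|$. Since $w^*_\xi\in\partial\B(i,\delta_0)$ we have $|w^*_\xi-i|=\delta_0$, while \eqref{omegacintermsofxi} in Lemma \ref{thetacxi} gives $|\omega_c-i|=O(\sqrt{\xi_c-\xi})=O(\sqrt{\delta^*_n})\to0$; hence for $n$ large enough $|\omega_c-i|<\delta_0/2$, and the triangle inequality yields $|w^*_\xi-\omega_c|\ge \delta_0-\delta_0/2=\delta_0/2$. Combining with the displayed estimate, $\mathcal{R}[g_\xi(w^*_\xi)-g_\xi(\omega_c)]<-C(\delta_0)$ for a suitable $C(\delta_0)>0$ (e.g. $C(\delta_0)=C\delta_0^3/16$), uniformly for $0\le\xi_c-\xi\le\delta^*_n$ and $n$ large. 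The bound for $w^{**}_\xi$ follows in exactly the same way, using the second estimate in Lemma \ref{locallemma} together with the analogous deformation of $\text{asc}_\xi$ into $\widetilde{\text{asc}}^R_\xi$.

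The only genuine obstacle here is Lemma \ref{locallemma} itself: proving the cubic rate (and, for $\xi<\xi_c$, the $\sqrt{\xi_c-\xi}$-weighted quadratic rate) of decay of $\mathcal{R}[g_\xi]$ along the deformed contour requires a Taylor expansion of $g_\xi$ at $i$ that is uniform in $\xi$ near $\xi_c$, arranged so that on the segment the cubic term dominates at distances $\gtrsim\sqrt{\xi_c-\xi}$ while the quadratic term — whose coefficient is small, of order $\sqrt{\xi_c-\xi}$ — takes over closer to $\omega_c$. Granting that, Lemma \ref{globallemma} is immediate as above. If one wanted the conclusion for an arbitrary intersection of $\text{desc}_\xi$ with $\partial\B(i,\delta_0)$ in $\HH^+$ rather than the distinguished point used in the deformation, one would additionally invoke the monotonicity of $\mathcal{R}[g_\xi]$ along $\text{desc}_\xi$ (the defining property of a steepest descent path, cf. Corollary \ref{galphadec}) to reduce to the first exit point, for which the above argument applies verbatim.
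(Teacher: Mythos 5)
Your proof is correct and takes essentially the same route as the paper: the paper notes at the end of the proof of Lemma \ref{locallemma} that ``this proof also yields Lemma \ref{globallemma}'', i.e.\ one evaluates the local bound at the boundary point $w^*_\xi$ (resp.\ $w^{**}_\xi$) and uses that $|w^*_\xi-\omega_c|\ge\delta_0/2$ once $n$ is large enough that $|\omega_c-i|<\delta_0/2$. You have simply made that one-line remark explicit, including the convexity/endpoint observation ensuring $w^*_\xi\in\widetilde{\text{desc}}_{\xi,1/R}\cap\overline{\B}(i,\delta_0)$ and the triangle-inequality step.
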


Let $\delta_0>0$ be small enough and $n$ large enough such that the above lemmas hold. Define
\begin{align}
&\Gamma_{\text{d}}^\text{in}= \widetilde{\text{desc}}_{\xi,1/R}\cap\mathbb{B}(i,\delta_0)\cap \HH^+, && \Gamma_{\text{d}}^\text{out}= (\widetilde{\text{desc}}_{\xi,1/R}\setminus\mathbb{B}(i,\delta_0))\cap \HH^+, \\
&\Gamma_{\text{a}}^\text{in}= \widetilde{\text{asc}}_{\xi}^R\cap\mathbb{B}(i,\delta_0)\cap \HH^+, && \Gamma_{\text{a}}^\text{out}=( \widetilde{\text{asc}}_{\xi}^R\setminus\mathbb{B}(i,\delta_0))\cap \HH^+.\nonumber
\label{contourdivides}
\end{align}
It is enough to bound the integral in \eqref{startingpoint} restricted to $(\HH^+)^2$,
\begin{align}
\int_{\widetilde{\text{desc}}_{\xi,1/R}\cap \HH^+}\abs{dw_1}\int_{\widetilde{\text{asc}}^R_\xi\cap \HH^+}\abs{dw_2} \abs{K(w_1,w_2)}\abs{\frac{\exp\{\frac{n}{2}(g_{\xi}(w_1)-g_\xi(w_2))\}}{w_2-w_1}}.
\end{align}
In particular write this integral as a sum of integrals over products of sets of the form \eqref{contourdivides}. The bound
\begin{align}
&\int_{\Gamma_{\text{d}}^{\text{out}}}\abs{dw_1}\int_{\widetilde{\text{asc}}^R_\xi\cap \HH^+}\abs{dw_2}|K(w_1,w_2)| \abs{\frac{\exp\{\frac{n}{2}(g_{\xi}(w_1)-g_\xi(w_2))\}}{w_2-w_1}}\label{temp1232}\\&\leq \exp\{-\frac{n}{2}(C(\delta_0)-\mcR[g_\xi(\omega_c)]+C'r)\}\int_{\Gamma_{\text{d}}^{\text{out}}}\abs{dw_1}\int_{\widetilde{\text{asc}}^R_\xi\cap \HH^+}\abs{dw_2} \abs{\frac{1}{w_2-w_1}}
\nonumber
\end{align}
holds by lemma \ref{globallemma} and the trivial bound $\sup_{(\B(0,R)\setminus \B(0,1/R))^2}|K(w_1,w_2)|\leq e^{C'r}$ for some $C'>0$. Since $\mcR[g_\xi(e^{i\theta})]=0$ for any real $\theta$, $\mcR[g_\xi(\omega_c)]=0$. By an application of the Hayman-Wu theorem (see Garnett and Marshall \cite{GM}), the contours are finite length. To see this, recall a version of the Hayman-Wu theorem; If $\varphi:\mathbb{D}\rarrow \C$ is conformal on the unit disc and $L$ is a straight line in $\C$ then length$(\varphi^{-1}(L))\leq 4\pi$. One can take $\xi_c-\xi$ small enough that $\widetilde{\text{desc}}_{\xi,1/R}$ and $\widetilde{\text{asc}}^R_\xi$ do not intersect the branch cuts so we can find a compact set $A\subset \B(0,R+1)\setminus \B(i,\delta_0)$ such that $g_\xi$ is analytic on $A$ and $A$ contains both  $\widetilde{\text{desc}}_{\xi,1/R}$ and $\widetilde{\text{asc}}^R_\xi$ for all $\xi$. Cover $A$ with finitely many open balls that do not intersect the branch cuts. Apply the Hayman-Wu theorem to the straight line $\pm i\mcI[g_\xi(\omega_c)]+\R$ and $g_\xi$ restricted to each open ball. Since the number of balls is finite the contours are finite. The same theorem gives the same bound for each contour and so the contours lengths are bounded uniformly in $\xi,\xi'$. The integrand in \eqref{temp1232} is integrable since the two sections of the contours do not intersect. Hence the integral over $\Gamma_{\text{d}}^{\text{out}}\times\widetilde{\text{asc}}^R_\xi\cap \HH^+$ is $O(e^{-C(\delta_0)n/2+C'r})$. One obtains the same bound for $\Gamma_{\text{a}}^{\text{out}}\times\widetilde{\text{desc}}^R_\xi\cap \HH^+$ in an analogous fashion. The integral over sections of the contours contained in $(\B(i,\delta_0)\cap\HH^+)^2$ is
\begin{align}
\int_{\Gamma_{\text{d}}^{\text{in}}}\abs{dw_1}\int_{\Gamma_{\text{a}}^{\text{in}}}\abs{dw_2} |K(w_1,w_2)|\abs{\frac{\exp\{\frac{n}{2}(g_{\xi}(w_1)-g_\xi(w_2))\}}{w_2-w_1}}.
\end{align}
These contours are straight lines by definition
\begin{align}
\Gamma_{\text{d}}^{\text{in}}=
\{\omega_c+t'e^{i3\pi/5}\}\cup \{\omega_c+s'(w_\xi^*-\omega_c)\}=: \Gamma_{\text{d}}^{\text{in},1}\cup\Gamma_{\text{d}}^{\text{in},2},\\
\Gamma_{\text{a}}^{\text{in}}=\{\omega_c+te^{-i3\pi/5}\}\cup \{\omega_c+s(w_\xi^{**}-\omega_c)\} =: \Gamma_{\text{a}}^{\text{in},1}\cup \Gamma_{\text{a}}^{\text{in},2}.\nonumber
\end{align} 
It is straightforward to see that the lengths of these straight lines are uniformly bounded in $0\leq \xi_c-\xi\leq \delta^*_n$, indeed, their endpoints all lie in a bounded region of $\C$. Define $\gamma^{\text{a},1}(t)=\omega_c+te^{i\theta_{\text{a},1} }$, $\theta_{\text{a},1}=-3\pi/5$ for $t\in[0,\mcR[\omega_c]/\sin(\pi/10)]$, $\gamma^{\text{a},2}(t)=\omega_c+te^{i\theta_{\text{a},2}}$, $\theta_{\text{a},2}=\arg(w_\xi^{**}-\omega_c)$ for $t\in[0,|w_\xi^{**}-\omega_c|]$, $\gamma^{\text{d},1}(s)=\omega_c+se^{i\theta_{\text{d},1}}$, $\theta_{\text{d},1}=3\pi/5$ for $s\in[0,\mcR{\omega_c}/\sin(\pi/10)]$, and finally,  $\gamma^{\text{d},2}(s)=\omega_c+se^{i\theta_{\text{d},2}}$, $\theta_{\text{d},2}=\arg{(w_\xi^*-\omega_c)}$ for $s\in [0,|w_\xi^*-\omega_c|]$. These parametrise the straight lines in $\Gamma_{\text{d}}^{\text{in}}$ and $\Gamma_{\text{a}}^{\text{in}}$. For $i,j=1,2$ consider the bound
\begin{align}
&\label{templocalbound1}\int_{\Gamma_{\text{d}}^{\text{in},i}}\abs{dw_1}\int_{\Gamma_{\text{a}}^{\text{in},j}}\abs{dw_2}|K(w_1,w_2)| \abs{\frac{\exp\{\frac{n}{2}(g_{\xi}(w_1)-g_\xi(w_2))\}}{w_2-w_1}}\\
&\leq\exp\{-\frac{n}{2}(\mcR[g_{\xi}(\omega_c)-g_\xi(\omega_c)])\} \int_{\Gamma_{\text{d}}^{\text{in},i}}\abs{dw_1}\int_{\Gamma_{\text{a}}^{\text{in},j}}\abs{dw_2}|K(w_1,w_2)| \abs{\frac{\exp\{-C\frac{n}{2}(\abs{w_1-\omega_c}^3+\abs{w_2-\omega_c}^3)\}}{w_2-w_1}}\nonumber
\end{align}
which holds by lemma \ref{locallemma}. We can take $\delta_0$ small enough that for $j\in\{1,2\}$
\begin{align}
&|\log(G(w_j))-\log(G(\omega_c))|\leq C''|w_j-\omega_c|,\nonumber
\\&|\log(G(1/w_j))-\log(G(1/\omega_c))|\leq  C''|w_j-\omega_c|\nonumber
\end{align} for $w_j\in\B(i,\delta_0)\cap\HH^+$. From which we gain a $C'''>0$ such that
\begin{align}
|K(w_1,w_2)|\leq |G(\omega_c)|^{b_1-a_1}|G(1/\omega_c)|^{a_2-b_2}e^{C'''r[|w_1-\omega_c|+|w_2-\omega_c|]}.\nonumber
\end{align} Insert this bound into \eqref{templocalbound1}. Parametrising and recalling $\mcR[g_\xi(e^{ix})]=0$ for any real $x$, \eqref{templocalbound1} is bounded by the prefactor $|G(\omega_c)|^{b_1-a_1}|G(1/\omega_c)|^{a_2-b_2}$ multiplied by
\begin{align}
& \int_{\text{Dom}(\gamma^{\text{d},i})}\abs{ds}\int_{\text{Dom}(\gamma^{\text{a},j})}\abs{dt} \abs{\frac{\exp\{-Cn(s^3+t^3)/2+C'''r(s+t)\}}{te^{i\theta_{\text{a},j}}-se^{i\theta_{\text{d},i}}}}\\&=\frac{1}{n^{1/3}}\int_{n^{1/3}\text{Dom}(\gamma^{\text{d},i})}\abs{ds}\int_{n^{1/3}\text{Dom}(\gamma^{\text{a},j})}\abs{dt} \abs{\frac{\exp\{-C(s^3+t^3)/2+C'''rn^{-1/3}(s+t)\}}{te^{i\theta_{\text{a},j}}-se^{i\theta_{\text{d},i}})}}\label{temp13ea}\\
&\leq\frac{1}{n^{1/3}}\int_{0}^\infty ds\int_{0}^\infty dt \exp\{-C(s^3+t^3)/2+C'''(s+t)\}\frac{1}{\abs{te^{i\theta_{\text{a},j}}-se^{i\theta_{\text{d},i}}}}.
\label{temp2314}
\end{align}
when $r\leq n^{1/3}$. 
The double integral in \eqref{temp2314} is bounded above by some positive constant uniformly for $\xi,n$. To prove the $1/\sqrt{n\sqrt{\xi_c-\xi}}$ bound in \eqref{propstate}, we instead use $\mcR[g_\xi(w_i)-g_\xi(\omega_c)]\leq -C\sqrt{\xi_c-\xi}|w_i-\omega_c|^2$ in \eqref{templocalbound1} and then make the change of variables $s\rarrow s/\sqrt{n\sqrt{\xi_c-\xi}}, t\rarrow t/\sqrt{n\sqrt{\xi_c-\xi}}$ instead of $s\rarrow n^{-1/3}s, t\rarrow n^{-1/3}t$ in \eqref{temp13ea}.
\end{proof}

We now turn to the proof of lemmas \ref{contoursexistlemma} to \ref{globallemma}. Let $\delta_0<1-\sqrt{2c}$ so that for $-\frac{1}{2}\sqrt{1+2c}<\xi\leq\xi_c$, $g_\xi(w)$ is analytic in the annulus $\mathcal{L}(\delta_0)=\{z\in \C:1-\delta_0<|z|<1+\delta_0\}$. From Taylor's theorem,
\begin{align}
g_\xi(w)=g_\xi(\omega_c)+g''_\xi(\omega_c)(w-\omega_c)^2/2 + g'''_\xi(\omega_c)(w-\omega_c)^3/3!+R_{\omega_c}(w)(w-\omega_c)^4
\label{gomegactaylor}
\end{align}
for $\xi_c\geq \xi\geq \eps^*>-\frac{1}{2}\sqrt{1+2c}$. Where
\begin{align}
R_{\omega_c}(w)=\frac{1}{2\pi i}\int_{\abs{z-\omega_c}=\delta_1}\frac{g_\xi(z)}{(z-\omega_c)^4(z-w)}dz
\end{align}
for any $0<\delta_1<\delta_0$ and $R_{\omega_c}(w)(w-\omega_c)^4$ is $O(\abs{w-\omega_c}^4)$ uniformly in $\xi$.

We have the equations
\begin{align}
&wg'_\xi(w)=1+\xi\big(\frac{w}{\sqrt{w^2+2c}}+\frac{1}{w\sqrt{1/w^2+2c}}\big),
\label{firstderiv}\\
&wg'_\xi(w)+w^2g''_\xi(w)=2c\xi\big(\frac{w}{(w^2+2c)^{3/2}}-\frac{1}{w(1/w^2+2c)^{3/2}}\big)\label{secderiv},\\
&g'_\xi(w)+3wg''_\xi(w)+w^2g'''_\xi(w)=2c\xi\big(\frac{2(c-w^2)}{(w^2+2c)^{5/2}}+\frac{2}{w^2}\frac{c-1/w^2}{(1/w^2+2c)^{5/2}}\big).
\label{thirderiv}
\end{align}
which follow by taking derivatives. Before we prove lemmas \ref{contoursexistlemma}, \ref{locallemma},  \ref{globallemma} we relate the second and third derivatives of $g_\xi$ at $\omega_c$ to $\xi_c-\xi$.

\begin{lemma}
There are functions $R_{27}(\xi),R_{28}(\xi)$ bounded for some interval $0\leq \xi_c-\xi< c_2$ such that
\begin{align}
g''_\xi(\omega_c)&=-d_2i\sqrt{\xi_c-\xi}+R_{27}(\xi)(\xi_c-\xi)
\label{g''xiomegacthetac}
\end{align}
and
\begin{align}
g'''_\xi(\omega_c)&=-id_3+R_{28}(\xi)(\xi_c-\xi)^{1/2}
\end{align}
where $d_2=4\sqrt{\frac{c(1+c)}{(1-2c)^{5/2}}}$, $d_3=\frac{8c(1+c)\abs{\xi_c}}{(1-2c)^{5/2}}$.
\begin{proof}

From \eqref{secderiv} write
\begin{align}
g''_\xi(\omega_c)&=\omega_c^{-2}4ic\xi\mathcal{I}[\frac{\omega_c}{(\omega_c^2+2c)^{3/2}}]=4ic\xi e^{-2i\theta_c}\mathcal{I}[\frac{e^{i\theta_c}}{((e^{i\theta_c})^2+2c)^{3/2}}].
\end{align}
Define $F_4(\theta_c)=e^{i\theta_c}/{((e^{i\theta_c})^2+2c)^{3/2}}$. For each $\theta_c\in[0,\pi/2]$ there exists $\chi\in[\theta_c,\pi/2]$ such that
\begin{align}
\mathcal{I}[F_4(\theta_c)]=\mathcal{I}[F_4(\pi/2)]-\mathcal{I}[F_4'(\pi/2)]\varphi_c+R_{29}(\xi)\varphi_c^2
\end{align}
with $R_{29}(\xi)=\mathcal{I}[F_4''(\chi)]$. Since $\mathcal{I}[F_4(\pi/2)]=0$,
\begin{align}
g''_\xi(\omega_c)&=4ic\xi[-1+R_5(\xi)\varphi_c][-\mathcal{I}[F_4'(\pi/2)]\varphi_c+R_{29}(\xi)\varphi_c^2]
\end{align}
where $\varphi_cR_{30}(\xi)=e^{-2i\theta_c}+1$, $R_{30}(\xi_c)=2i$ and $|R_{30}(\xi)|\leq 2e^{\pi-2\theta_c}$.
\begin{align}
g''_\xi(\omega_c)&=-4ic\xi_c\mathcal{I}[-F_4'(\pi/2)\varphi_c]+R_{31}(\xi)\varphi_c^2-4ic(\xi_c-\xi)\mathcal{I}[F_4'(\pi/2)\varphi_c]
\label{temp232}
\end{align}
where $R_{31}(\xi)=4ic\xi[\mathcal{I}[R_{29}(\xi)](-1-R_{30}(\xi)\varphi_c)+R_{30}(\xi)F_4'(\pi/2)]$ is bounded.\\
A computation yields 
\begin{align}
F_4'[\theta_c]\Big\rvert_{\theta_c=\pi/2}=ie^{i\theta_c}\frac{2(c-e^{i2\theta_c})}{((e^{i\theta_c})^2+2c)^{5/2}}\Big\rvert_{\theta_c=\pi/2}=i\frac{2(1+c)}{(1-2c)^{5/2}}.
\end{align}
Inserting \eqref{thetacintermsofxi} into \eqref{temp232} we have
\begin{align}
g''_\xi(\omega_c)&=\frac{-8ic(1+c)\abs{\xi_c}}{(1-2c)^{5/2}}\sqrt{\frac{(1-2c)^{5/2}}{4c(1+c)\xi_c^2}}\sqrt{\xi_c-\xi}+R_{27}(\xi)(\xi_c-\xi)
\end{align}
for a bounded function $R_{27}$.

For the third derivative, from \eqref{thirderiv} write
\begin{align}
g_\xi'''(\omega_c)=4c\xi \omega_c^{-3}\mathcal{R}[\frac{2\omega_c(c-\omega_c^2)}{(\omega_c^2+2c)^{5/2})}]-3\omega_c^{-1}g_\xi''(\omega_c).
\end{align}
Define $L_2(\theta_c)=\mathcal{R}[\frac{2e^{i\theta_c}(c-e^{2i\theta_c})}{((e^{i\theta_c})^2+2c)^{5/2})}]$, for every $\theta_c\in [0,\pi/2]$ there is a $\chi\in[\theta_c,\pi/2]$ such that $R_{32}(\xi)=-L_2'(\chi)$ and $L_2(\theta_c)-L_2(\pi/2)=R_{32}(\xi)\varphi_c$. It is clear that $L_2(\pi/2)=\frac{2(1+c)}{(1-2c)^{5/2}}$. 
So now
\begin{align}
g'''_\xi(\omega_c)=4c\xi(i+R_{10}(\xi)\varphi_c)(\frac{2(1+c)}{(1-2c)^{5/2})}+R_{32}(\xi_c)\varphi_c)-3(i+R_{34}(\xi)\varphi_c)g''_\xi(\omega_c)
\end{align} 
where $|R_{33}(\xi)|\leq 3e^{3\pi/2-3\theta_c}$, $|R_{34}(\xi)|\leq e^{\pi/2-\theta_c}$. The lemma follows by the statement on $g''_\xi(\omega_c)$ and lemma \ref{thetacxi}.
\end{proof}
\label{gomegac''}
\end{lemma}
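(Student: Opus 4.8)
## Proof proposal for Lemma \ref{gomegac''}

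The plan is to treat the two assertions about $g_\xi''(\omega_c)$ and $g_\xi'''(\omega_c)$ separately, in each case starting from the closed-form expressions \eqref{secderiv} and \eqref{thirderiv} for the derivatives, evaluating at the critical point $\omega_c = e^{i\theta_c}$ where $\omega_c g_\xi'(\omega_c) = 0$, and then Taylor-expanding in $\varphi_c = \pi/2 - \theta_c$ about $\varphi_c = 0$ (i.e.\ about $\omega_c = i$), finally converting powers of $\varphi_c$ into powers of $\sqrt{\xi_c - \xi}$ via \eqref{thetacintermsofxi} from Lemma \ref{thetacxi}. The leading-order coefficients $d_2, d_3$ will come out as explicit rational functions of $c$ times $|\xi_c|$, and all higher-order contributions get absorbed into bounded remainder functions.

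For $g_\xi''(\omega_c)$: since $\omega_c g_\xi'(\omega_c) = 0$, equation \eqref{secderiv} gives $\omega_c^2 g_\xi''(\omega_c) = 2c\xi\big(\tfrac{\omega_c}{(\omega_c^2+2c)^{3/2}} - \tfrac{1}{\omega_c(1/\omega_c^2+2c)^{3/2}}\big)$, and using the square-root symmetries \eqref{squarerootsymmetries} together with $|\omega_c| = 1$ one sees the bracket equals $2i\,\mathcal{I}[\omega_c/(\omega_c^2+2c)^{3/2}]$, so $g_\xi''(\omega_c) = 4ic\xi\,\omega_c^{-2}\mathcal{I}[F_4(\theta_c)]$ with $F_4(\theta) = e^{i\theta}/(e^{2i\theta}+2c)^{3/2}$. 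Since $\mathcal{I}[F_4(\pi/2)] = 0$ (the value at $\theta = \pi/2$ is real), a first-order Taylor expansion of $F_4$ about $\pi/2$ gives $\mathcal{I}[F_4(\theta_c)] = -\mathcal{I}[F_4'(\pi/2)]\varphi_c + O(\varphi_c^2)$, and one computes $F_4'(\pi/2) = 2i(1+c)/(1-2c)^{5/2}$. One also writes $\omega_c^{-2} = e^{-2i\theta_c} = -1 + O(\varphi_c)$ and $\xi = \xi_c + O(\varphi_c^2)$. Multiplying these expansions and keeping the leading term yields $g_\xi''(\omega_c) = -\tfrac{8ic(1+c)|\xi_c|}{(1-2c)^{5/2}}\varphi_c + (\text{bounded})\varphi_c^2$; substituting $\varphi_c = \sqrt{\tfrac{(1-2c)^{5/2}}{4\xi_c^2 c(1+c)}}\sqrt{\xi_c-\xi} + O(\xi_c-\xi)$ from \eqref{thetacintermsofxi} produces the claimed formula with $d_2 = 4\sqrt{c(1+c)/(1-2c)^{5/2}}$ and $R_{27}(\xi)$ bounded on $[0,c_2)$.

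For $g_\xi'''(\omega_c)$: from \eqref{thirderiv} and $g_\xi'(\omega_c) + 3\omega_c g_\xi''(\omega_c) + \omega_c^2 g_\xi'''(\omega_c) = 2c\xi(\cdots)$, solve for $g_\xi'''(\omega_c)$. The term $g_\xi'(\omega_c)$ is not zero but equals $-2\omega_c g_\xi''(\omega_c)$ by \eqref{secderiv} applied once more (or more simply, absorb it; either way it contributes at order $\sqrt{\xi_c-\xi}$), so $g_\xi'''(\omega_c) = 4c\xi\,\omega_c^{-3}\mathcal{R}[\tfrac{2\omega_c(c-\omega_c^2)}{(\omega_c^2+2c)^{5/2}}] - 3\omega_c^{-1}g_\xi''(\omega_c) + (\text{terms of order }\sqrt{\xi_c-\xi})$. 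With $L_2(\theta) = \mathcal{R}[2e^{i\theta}(c-e^{2i\theta})/(e^{2i\theta}+2c)^{5/2}]$ one has $L_2(\pi/2) = 2(1+c)/(1-2c)^{5/2}$ and $L_2(\theta_c) = L_2(\pi/2) + O(\varphi_c)$, while $\omega_c^{-3} = e^{-3i\theta_c} = i + O(\varphi_c)$ (using $e^{-3i\pi/2} = i$) and $\xi = \xi_c + O(\varphi_c^2)$; the term $3\omega_c^{-1}g_\xi''(\omega_c)$ is $O(\varphi_c) = O(\sqrt{\xi_c-\xi})$ by the previous part. Collecting the leading term gives $g_\xi'''(\omega_c) = 4c\xi_c \cdot i \cdot \tfrac{2(1+c)}{(1-2c)^{5/2}} + O(\sqrt{\xi_c-\xi}) = -\tfrac{8ic(1+c)|\xi_c|}{(1-2c)^{5/2}} + R_{28}(\xi)\sqrt{\xi_c-\xi}$, i.e.\ $d_3 = 8c(1+c)|\xi_c|/(1-2c)^{5/2}$ (note $\xi_c < 0$ so $\xi_c = -|\xi_c|$, which fixes the sign).

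The only real care needed — and the step most prone to error — is bookkeeping the signs and the factors of $i$ coming from $\omega_c^{-k} = e^{-ik\theta_c}$ evaluated near $\theta_c = \pi/2$ (where $e^{-i\pi} = -1$, $e^{-3i\pi/2} = i$), combined with the sign of $\xi_c$; a sign slip here would change $d_2$ or $d_3$. Everything else is routine Taylor expansion with remainders controlled uniformly in $\xi$ because all the functions involved ($F_4$, $L_2$, their derivatives, and $e^{-ik\theta_c}$) are smooth and bounded on the relevant compact $\theta_c$-interval, and the conversion $\varphi_c \leftrightarrow \sqrt{\xi_c-\xi}$ is handled wholesale by Lemma \ref{thetacxi}. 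I would also double-check at the end that $d_3 = d_2^2 |\xi_c| \cdot \tfrac{(1-2c)^{5/2}}{(1-2c)^{5/2}}$-type consistency holds with the coefficient appearing in \eqref{xiintermsofthetac}, as a sanity check on the constants.
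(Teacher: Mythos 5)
Your proposal follows the same route as the paper: evaluate \eqref{secderiv} and \eqref{thirderiv} at $\omega_c = e^{i\theta_c}$, reduce the brackets to $\mathcal{I}[F_4(\theta_c)]$ and $\mathcal{R}[L_2(\theta_c)]$ using the square-root symmetries, Taylor-expand those together with $e^{-2i\theta_c}$, $e^{-3i\theta_c}$ and $\xi$ about $\theta_c = \pi/2$, and then trade $\varphi_c$ for $\sqrt{\xi_c-\xi}$ via \eqref{thetacintermsofxi}. The constants $d_2$, $d_3$ come out the same and the remainder bookkeeping is the same. So this is essentially the paper's own proof.

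One factual slip worth flagging, though it does not affect the outcome: you write that $g_\xi'(\omega_c)$ ``is not zero but equals $-2\omega_c g_\xi''(\omega_c)$''. This is incorrect on two counts. First, $\omega_c$ is by definition a critical point of $g_\xi$ (see the sentence just above \eqref{critptsss}: ``$g_\xi'(\pm\omega_c)=g_\xi'(\pm\overline{\omega_c})=0$''), so $g_\xi'(\omega_c)$ is identically zero — that is exactly why the paper can drop that term from \eqref{thirderiv} with no remainder at all. Second, even if one pretended \eqref{secderiv} had zero right-hand side, it would give $g_\xi'(w) = -w\,g_\xi''(w)$, not $-2w\,g_\xi''(w)$, so the claimed identity does not follow from \eqref{secderiv} in any reading. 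Your hedge (``or more simply, absorb it; either way it contributes at order $\sqrt{\xi_c-\xi}$'') keeps the final asymptotics correct, but the cleaner and correct justification is simply that the term vanishes. Everything else — the computations of $F_4'(\pi/2)$, $L_2(\pi/2)$, the sign from $\xi_c = -|\xi_c|$, and the conversion of $\varphi_c$ to $\sqrt{\xi_c-\xi}$ — is right.
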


\begin{lemma}
Let $p\exp{(i\theta)}+\omega_c$ $(p>0,\theta\in (-\pi,\pi])$ be a point in the annulus $\mathcal{L}(\delta_0)$ satisfying 
\begin{align}
\mathcal{I}[g_\xi(\omega_c+pe^{i\theta})]=\mathcal{I}[g_\xi(\omega_c)], && \mathcal{R}[g_\xi(\omega_c+pe^{i\theta})]<\mathcal{R}[g_\xi(\omega_c)].
\label{ascdesccond1}
\end{align} 
There are constants $a_1,a_2>0$ sufficiently small such that the following holds. Set $\delta=\delta(\eps) = \min\{\eps a_1,a_2\}$. For every $\eps>0$, if $0\leq\xi_c-\xi\leq\delta^4$ and $\delta/2\leq p\leq\delta$ then
\begin{align}
\abs{\cos(3\theta)}<\eps, && \sin(3\theta)<\eps.
\end{align}
\\
If instead $\mathcal{R}[g_\xi(\omega_c+pe^{i\theta})]>\mathcal{R}[g_\xi(\omega_c)]$, then for every $\eps>0$ there is a $\delta>0$ sufficiently small such that  if $0\leq\xi_c-\xi\leq\delta^4$ and $\delta/2\leq p\leq\delta$ then
\begin{align}
\abs{\cos(3\theta)}<\eps, && \sin(3\theta)>\eps.
\end{align}
\begin{proof}
Let $0<\delta_1<1$ be small enough so that the remainders in equation \eqref{gomegactaylor} and lemma \ref{gomegac''} are bounded for all $0\leq\xi_c-\xi\leq \delta_1^{4}$ and $0\leq p\leq \delta_1$. Insert the Taylor expansion \eqref{gomegactaylor} into the first condition in \eqref{ascdesccond1},
\begin{align}
\mathcal{I}[g_\xi''(\omega_c)e^{2i\theta}+g'''_\xi(\omega_c)\frac{pe^{3i\theta}}{3}]=\mathcal{I}[-R_{\omega_c}(\omega_c+pe^{i\theta})e^{4i\theta}p^2].
\end{align}
Now apply lemma \ref{gomegac''} to the left hand side, a rearrangement gives
\begin{align}
\cos(3\theta)= \ &\frac{3}{d_3}\mathcal{I}\Big[R_{27}(\xi)e^{2i\theta}\frac{\xi_c-\xi}{p}+R_{28}(\xi)e^{3i\theta}\frac{\sqrt{\xi_c-\xi}}{3}\label{temp121332}\\&+R_{\omega_c}(\omega_c+pe^{i\theta})e^{4i\theta}p\Big]-\frac{3d_2\sqrt{\xi_c-\xi}}{d_3p}\cos(2\theta).
\nonumber
\end{align}
The absolute value of the right hand side of \eqref{temp121332} is bounded above by 
\begin{align}
\frac{3d_2\sqrt{\xi_c-\xi}}{d_3p}+\frac{3}{d_3}\abs{R_{\omega_c}(\omega_c+pe^{i\theta})}p+\abs{R_{27}(\xi)}\frac{3(\xi_c-\xi)}{d_3p}+\abs{R_{28}(\xi)}d_3\sqrt{\xi_c-\xi}.
\end{align}
This is bounded by some constant $C_1$ multiplied by $\delta$ when $\delta\leq\delta_1$. Hence we take $\delta=\min(\eps/C_1,\delta_1)$ and $a_1=1/C_1, a_2=\delta_1$. The remaining statements are proved by alterations to the above argument.
\end{proof}
\label{cos3thetalessthaneps}
\end{lemma}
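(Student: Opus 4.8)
The plan is to substitute the fourth–order Taylor expansion \eqref{gomegactaylor} of $g_\xi$ about $\omega_c$ into the two conditions of \eqref{ascdesccond1} defining $w=\omega_c+pe^{i\theta}$ and read off the constraints on $\theta$. The mechanism is that near the coalescence of the critical points the quadratic coefficient $g''_\xi(\omega_c)$ is small, of size $\sqrt{\xi_c-\xi}$ by lemma \ref{gomegac''}, whereas $g'''_\xi(\omega_c)=-id_3+O(\sqrt{\xi_c-\xi})$ has a fixed nonzero leading part; so on the scale $p$ comparable to $\delta$ with $\xi_c-\xi\le\delta^4$ the cubic term dominates, the level condition $\mcI[g_\xi(w)-g_\xi(\omega_c)]=0$ becomes $\cos(3\theta)=O(\delta)$ to leading order, and the sign of $\mcR[g_\xi(w)-g_\xi(\omega_c)]$ fixes which half–line $\sin(3\theta)$ lies in.

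First I would fix $0<\delta_1\le\delta_0$ small enough that, for $0\le\xi_c-\xi\le\delta_1^4$ and $0\le p\le\delta_1$, the remainder $R_{\omega_c}(w)(w-\omega_c)^4$ in \eqref{gomegactaylor} is $O(p^4)$ uniformly in $\xi$ (immediate from its integral representation) and the functions $R_{27},R_{28}$ of lemma \ref{gomegac''} are bounded. Inserting \eqref{gomegactaylor} together with the expressions for $g''_\xi(\omega_c),g'''_\xi(\omega_c)$ from lemma \ref{gomegac''} into $\mcI[g_\xi(w)-g_\xi(\omega_c)]=0$, dividing by $(d_3/6)p^3$ and isolating $\mcI[-id_3e^{3i\theta}/6]=-(d_3/6)\cos(3\theta)$, one obtains $\cos(3\theta)$ as a combination of terms each $O(\sqrt{\xi_c-\xi}/p)$, $O(\sqrt{\xi_c-\xi})$, $O((\xi_c-\xi)/p)$ or $O(p)$, with constants depending only on $\delta_1$ (through $d_2,d_3$ and the remainder bounds). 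Under $\xi_c-\xi\le\delta^4$, $\delta/2\le p\le\delta$ each of these is $\le C_1\delta$; taking $a_1\le 1/C_1$, $a_2=\delta_1$, $\delta=\min\{\eps a_1,a_2\}$ then gives $\abs{\cos(3\theta)}<\eps$.

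For the sign constraint I would put the same expansion into $\mcR[g_\xi(w)-g_\xi(\omega_c)]<0$: the cubic term contributes $(d_3/6)p^3\sin(3\theta)$, the quadratic term $O(\sqrt{\xi_c-\xi}\,p^2)$, the quartic remainder $O(p^4)$, so dividing by $p^3$ yields $\sin(3\theta)<C_2(\sqrt{\xi_c-\xi}/p+\sqrt{\xi_c-\xi}+p)\le C_2'\delta$; shrinking $a_1$ to $\min\{1/C_1,1/C_2'\}$ makes $\abs{\cos(3\theta)}<\eps$ and $\sin(3\theta)<\eps$ hold together, which is the first assertion. For the ascent version one reverses this last inequality to get $\sin(3\theta)>-C_2'\delta$; running the cosine estimate with a small auxiliary parameter — forcing $\abs{\cos(3\theta)}$, hence $1-\abs{\sin(3\theta)}$, arbitrarily small — together with $\sin(3\theta)>-C_2'\delta$ pins $\sin(3\theta)$ to the positive branch, giving $\sin(3\theta)>\eps$ for $\eps$ in the nontrivial range $0<\eps<1$.

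The computations are elementary; the real content is the bookkeeping. The point to watch is that after dividing by $p^3$ every remainder still carries a strictly positive power of $\delta$ — the slowest being $\sqrt{\xi_c-\xi}/p\le 2\delta$, which holds precisely because the hypotheses couple $\xi_c-\xi\le\delta^4$ with $p\ge\delta/2$ — and that $C_1,C_2'$ depend on $\delta_1$ alone, never on $\eps$, $\xi$ or $p$, so the single choice $\delta=\min\{\eps a_1,a_2\}$ is uniform. Conceptually the lemma is the quantitative statement that the steepest descent and ascent rays of $g_\xi$ through $\omega_c$ emanate, up to $O(\delta)$, along the directions with $3\theta\equiv-\pi/2\pmod{2\pi}$, respectively $3\theta\equiv\pi/2\pmod{2\pi}$, dictated by the cubic term $-id_3(w-\omega_c)^3/6$.
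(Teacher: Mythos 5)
Your proposal is correct and follows essentially the same route as the paper: insert the fourth-order Taylor expansion \eqref{gomegactaylor} into the level condition, use Lemma \ref{gomegac''} to expose the dominant cubic contribution $-id_3(w-\omega_c)^3/6$, divide by the cubic power of $p$, and show every residual term is $O(\delta)$ because $\sqrt{\xi_c-\xi}/p\le 2\delta$ and $p\le\delta$. Your account is in fact slightly more complete than the paper's: you spell out the ascent-case argument (combine $|\cos(3\theta)|$ small from the level condition with $\sin(3\theta)>-C\delta$ from the sign condition to pin $\sin(3\theta)$ to the positive branch), which the paper compresses into the phrase that the remaining statements follow by ``alterations to the above argument.''
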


For $s>0$ and a subset $S$ of $\R$ define $\overline{\B}(S,s)=\cup_{x\in S}\overline{\B}(x,s)$. We now prove a lemma from which Lemma \ref{contoursexistlemma} follows.

\begin{lemma}
For every $\eps>0$ sufficiently small there is a $\delta>0$ (with $\delta\rarrow 0$ when $\eps\rarrow 0$) such that desc$_\xi$ intersects $\{i+\eps e^{i\varphi}:\varphi\in \B(-\pi/6,\delta)\}$ for all $0\leq \xi_c-\xi\leq \eps^4$.
\begin{proof}
Let $p>0$, $\theta\in (-\pi,\pi]$ be such that $p\exp{(i\theta)}+\omega_c$ is in the component $\Gamma_\xi$ of desc$_\xi$ with endpoints $0$ and $\omega_c$. By Lemma
\ref{cos3thetalessthaneps} and the fact the $\cos(3\theta)$ is locally invertible near its roots; for every $\eps>0$ sufficiently small $\theta$ is close to one of the points $-5\pi/6,-\pi/6,\pi/2$ for all $0\leq \xi_c-\xi\leq \eps^4$ and $\eps/2\leq p \leq \eps$. Let us discount the cases where $\theta$ is close to $-5\pi/6$ or $\pi/2$. By lemma \ref{thetacxi} there is $C>0$ such that $|\omega_c-i|\leq C \eps^2$. Take $\eps$ so small that $|\omega_c-i|\leq \eps/8<\eps/2\leq p$. Because $\Gamma_\xi$ can not cross the imaginary axis and by basic geometry, $\theta$ is not close to $-5\pi/6$. Since $\Gamma_\xi$ leaves $\omega_c$ at an angle $\theta_c-3\pi/4$ and $\mcR[g_\xi(e^{ix})]=0$ for all real $x$, $\Gamma_\xi$ can not cross the unit circle. So $\theta$ is not close to $\pi/2$. \\
Set $i+\eps e^{i\varphi}=\omega_c+pe^{i\theta}$. We have the inequality $|p-\eps|\leq |pe^{i\theta}-\eps e^{i\varphi}|=|\omega_c-i|$. So there is a constant $C>0$ such that
$|\varphi+\pi/6|\leq |\theta+\pi/6|+C|\omega_c-i|^2\leq |\theta+\pi/6|+C\eps^2$.
\end{proof}
\label{closeto-pi/6}
\end{lemma}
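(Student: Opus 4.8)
The plan is to parametrise the branch of $\text{desc}_\xi$ running from $0$ to $\omega_c$ relative to $\omega_c$, use Lemma~\ref{cos3thetalessthaneps} to trap its angular variable near a finite set of values, discard all but the value $-\pi/6$ using crude topological constraints on the contour, and finally transfer this control from the circle of radius $p$ about $\omega_c$ to the circle of radius $\eps$ about $i$.

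First I would let $\Gamma_\xi$ be the component of $\text{desc}_\xi$ with endpoints $0$ and $\omega_c$ and write a generic point of it as $\omega_c+pe^{i\theta}$ with $p>0$, $\theta\in(-\pi,\pi]$. Such a point satisfies $\mcI[g_\xi(\omega_c+pe^{i\theta})]=\mcI[g_\xi(\omega_c)]$ and $\mcR[g_\xi(\omega_c+pe^{i\theta})]<\mcR[g_\xi(\omega_c)]$, so Lemma~\ref{cos3thetalessthaneps} applies on the annular range where $p\asymp\eps$ once its accuracy parameter is matched to the present radius (this only requires $\xi_c-\xi\le\eps^4$, which is our hypothesis). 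It gives that $|\cos(3\theta)|$ and $\sin(3\theta)$ are both smaller than a quantity tending to $0$ with $\eps$; since $\cos(3\theta)$ is locally invertible near each of its zeros, this pins $\theta$ to within $o_\eps(1)$ of one of the three zeros of $\cos(3\theta)$ at which $\sin(3\theta)\le 0$, namely $-5\pi/6$, $-\pi/6$ and $\pi/2$.

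Next I would rule out the two spurious branches. By Lemma~\ref{thetacxi}, and in particular \eqref{omegacintermsofxi}, we have $|\omega_c-i|=O(\sqrt{\xi_c-\xi})=O(\eps^2)$ on $0\le\xi_c-\xi\le\eps^4$, so for $\eps$ small we may assume $|\omega_c-i|\le\eps/8<\eps/2\le p$. Since $\Gamma_\xi$ stays in $\HH^+$ and $\mcR[g_\xi]$ is even in $\mcR[w]$, $\Gamma_\xi$ cannot cross the imaginary axis, which, together with $\omega_c$ being $\eps/8$-close to $i$ while $p\ge\eps/2$, forbids $\theta$ near $-5\pi/6$. And since by Lemma~\ref{CJcurve2} the contour $\Gamma_\xi$ leaves $\omega_c$ into the open unit disc while $\mcR[g_\xi]$ is constantly $\mcR[g_\xi(\omega_c)]$ on the unit circle and strictly decreases along $\Gamma_\xi$, the contour never returns to $|w|=1$ and therefore stays inside the disc, which forbids $\theta$ near $\pi/2$. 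This leaves $\theta$ close to $-\pi/6$. To conclude, note $\Gamma_\xi$ starts within $\eps/8$ of $i$ and ends at $0$, so by the intermediate value theorem it meets $\{|w-i|=\eps\}$; writing $i+\eps e^{i\varphi}=\omega_c+pe^{i\theta}$ at such a point gives $|p-\eps|\le|\eps e^{i\varphi}-pe^{i\theta}|=|\omega_c-i|=O(\eps^2)$, hence $e^{i\varphi}=(p/\eps)e^{i\theta}+(\omega_c-i)/\eps=e^{i\theta}+O(\eps)$ and so $|\varphi+\pi/6|\le|\theta+\pi/6|+C\eps$, which is less than the required $\delta=\delta(\eps)\rarrow 0$.

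The main obstacle I expect is not the angular estimate itself, which is essentially handed to us by Lemma~\ref{cos3thetalessthaneps}, but securing the uniform-in-$\xi$ control of the \emph{global} shape of $\Gamma_\xi$ as the two simple saddles $\omega_c$ and $-\overline{\omega_c}$ merge at $i$: one must be certain that throughout the whole window $0\le\xi_c-\xi\le\eps^4$ the contour genuinely stays in $\HH^+$, does not cross the imaginary axis or the unit circle, and actually reaches the circle of radius $\eps$ about $i$. All of this rests on the structural description of the steepest descent and ascent contours in Lemmas~\ref{CJcurve1} and~\ref{CJcurve2} from~\cite{C/J}, and it is there that the care is needed; the remaining bookkeeping (matching the free parameter of Lemma~\ref{cos3thetalessthaneps} to the radius $\eps$, and the $O(\eps)$ passage from $\theta$ to $\varphi$) is routine.
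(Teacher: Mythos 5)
Your proposal is correct and follows essentially the same route as the paper: parametrise $\Gamma_\xi$ relative to $\omega_c$, invoke Lemma~\ref{cos3thetalessthaneps} to pin $\theta$ near $\{-5\pi/6,-\pi/6,\pi/2\}$, discard $-5\pi/6$ via the imaginary axis and $\pi/2$ via the unit circle, and transfer from the circle about $\omega_c$ to the circle about $i$ using $|\omega_c-i|=O(\eps^2)$. Your explicit intermediate value remark and the slightly more careful $|\varphi+\pi/6|\le|\theta+\pi/6|+C\eps$ (versus the paper's $C\eps^2$, which looks like a slip in the exponent but is harmless either way) are minor refinements of the same argument.
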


\begin{lemma}  Let $\delta^*_n>0$ be a sequence going to zero as $n\rarrow\infty$. 
If $\delta_0>0$ is sufficiently small then for $w_1\in \widetilde{\text{desc}}_\xi\cap\overline{\mathbb{B}}(i,\delta_0)$ and all $n$ large enough, there exists $C>0$ such that
\begin{align}
\mathcal{R}[g_{\xi}(w_1)-g_{\xi}(\omega_c)]\leq -C\max\{|w_1-\omega_c|^3,\sqrt{\xi_c-\xi}|w_1-\omega_c|^2\}
\end{align}
for all $0\leq \xi_c-\xi\leq \delta^*_n$.
\begin{proof}
We give the proof for $\widetilde{\text{desc}}_\xi$, the case for $\widetilde{\text{asc}}_\xi$ is similar. Set $w_1-\omega_c=pe^{i\theta}$. From lemma \ref{gomegac''} and \eqref{gomegactaylor},
\begin{align}
&\mathcal{R}[g_{\xi}''(\omega_c)(w_1-\omega_c)^2/2!+g_{\xi}'''(\omega_c)(w_1-\omega_c)^3/3!+R_{\omega_c}(w_1)(w_1-\omega_c)^4]\label{taylordisc}\\
&=d_2\sqrt{\xi_c-\xi}\frac{p^2}{2}\sin(2\theta)+d_3\frac{p^3}{3!}\sin(3\theta)+\mathcal{R}[R_{27}(\xi)(\xi_c-\xi)p^2e^{2i\theta}\nonumber\\& \ \ +R_{28}(\xi)(\xi_c-\xi)^{1/2}p^3e^{3i\theta}+R_{\omega_c}(w_1)(w_1-\omega_c)^4].
\nonumber
\end{align}
For $\delta_0$ small enough and $n$ large there is a $C'>0$ such that $|R_{\omega_c}(w_1)|,|R_8(\xi)|,|R_{12}(\xi)|\leq C'$ using these inequalities in \eqref{taylordisc} we obtain
\begin{align}
\mathcal{R}[g_{\xi}(w_1)-g_{\xi}(\omega_c)]\leq d_2\sqrt{\xi_c-\xi}\frac{p^2}{2}(\sin(2\theta)+\frac{2C'}{d_2}\sqrt{\xi_c-\xi})+d_3\frac{p^3}{3!}(\sin(3\theta)+\frac{6C'}{d_3}(\sqrt{\xi_c-\xi}+p))
\end{align}
Next we recall a fact proved in the proof of lemma \ref{closeto-pi/6}, for any $\delta_0$ sufficiently small and $n$ large enough, if $w_\xi^*\in \Gamma_\xi\cap\partial B(i,\delta_0)$ then $\arg(w_\xi^*-\omega_c)\in \B(-\pi/6, \pi/24)$. Recall also, that $\widetilde{\text{desc}}_\xi\cap\B(i,\delta_0)$ consists of two straight lines: One leaving $\omega_c$ at angle $\theta=3\pi/5$, terminating at the imaginary axis and the other having endpoints $\omega_c$, $w_\xi^*$. Hence take $\delta_0$ so small that $6C'\delta_0/d_3<1/8$, and $n$ so large that $\max(2C'\sqrt{\xi_c-\xi}/d_2,6C'/d_3(\sqrt{\xi_c-\xi}+|\omega_c-i|))<1/8$ so
\begin{align}
\mathcal{R}[g_{\xi}(w_1)-g_{\xi}(\omega_c)]&\leq d_2\sqrt{\xi_c-\xi}\frac{p^2}{2}(\sin(2\theta)+1/8)+d_3\frac{p^3}{3!}(\sin(3\theta)+1/4)]\\
&\leq -3d_2\sqrt{\xi_c-\xi}\frac{p^2}{16}-d_3\frac{p^3}{4!}\nonumber
\end{align}
since both $\sin(2\theta),\sin(3\theta)<-1/2$ when $\theta=3\pi/5$ or $\theta\in [-5\pi/24,-3\pi/24]$. Note this proof also yields Lemma \ref{globallemma}.
\end{proof}
\end{lemma}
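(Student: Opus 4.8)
The plan is a local saddle-point estimate near the (near-double) critical point $\omega_c$. Since $g'_\xi(\omega_c)=0$, the Taylor expansion \eqref{gomegactaylor} gives
\begin{align*}
g_\xi(w_1)-g_\xi(\omega_c)=\tfrac12 g''_\xi(\omega_c)(w_1-\omega_c)^2+\tfrac16 g'''_\xi(\omega_c)(w_1-\omega_c)^3+R_{\omega_c}(w_1)(w_1-\omega_c)^4 ,
\end{align*}
and the first step is to feed in the asymptotics of Lemma \ref{gomegac''}, namely $g''_\xi(\omega_c)=-d_2 i\sqrt{\xi_c-\xi}+R_{27}(\xi)(\xi_c-\xi)$ and $g'''_\xi(\omega_c)=-id_3+R_{28}(\xi)\sqrt{\xi_c-\xi}$ with $d_2,d_3>0$. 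Writing $w_1-\omega_c=pe^{i\theta}$ (so $p=|w_1-\omega_c|$) and using $\mathcal{R}[-ie^{2i\theta}]=\sin(2\theta)$, $\mathcal{R}[-ie^{3i\theta}]=\sin(3\theta)$, one gets
\begin{align*}
\mathcal{R}[g_\xi(w_1)-g_\xi(\omega_c)]=\frac{d_2}{2}\sqrt{\xi_c-\xi}\,p^2\sin(2\theta)+\frac{d_3}{6}\,p^3\sin(3\theta)+\mathcal{E},
\end{align*}
where the error $\mathcal{E}$ collects the contributions of $R_{27},R_{28},R_{\omega_c}$ and is $O\bigl((\xi_c-\xi)p^2+\sqrt{\xi_c-\xi}\,p^3+p^4\bigr)$ uniformly in $\xi$ once $\delta_0$ is fixed small and $n$ is large.

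The second step is to use the geometry of the deformed contour. By construction $\widetilde{\text{desc}}_\xi\cap\overline{\mathbb{B}}(i,\delta_0)$ is a union of two straight segments issuing from $\omega_c$: one at angle $3\pi/5$ (ending on the imaginary axis) and one ending at the point $w_\xi^*$ where the contour meets $\partial\mathbb{B}(i,\delta_0)$. The analysis behind Lemmas \ref{cos3thetalessthaneps} and \ref{closeto-pi/6} pins down the second angle: for $\delta_0$ small and $n$ large, $\arg(w_\xi^*-\omega_c)\in\mathbb{B}(-\pi/6,\pi/24)$, so along all of $\widetilde{\text{desc}}_\xi\cap\overline{\mathbb{B}}(i,\delta_0)$ the argument lies in $\{3\pi/5\}\cup[-5\pi/24,-3\pi/24]$, and a direct check gives $\sin(2\theta)<-\tfrac12$ and $\sin(3\theta)<-\tfrac12$ for all such $\theta$. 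Choosing $\delta_0$ small and $n$ large enough that $\mathcal{E}$ is dominated by a small fraction of the (negative) leading terms, this yields
\begin{align*}
\mathcal{R}[g_\xi(w_1)-g_\xi(\omega_c)]\le -c_1\sqrt{\xi_c-\xi}\,p^2-c_2 p^3
\end{align*}
for positive $c_1,c_2$, which is the claim because $-c_1\sqrt{\xi_c-\xi}\,p^2-c_2 p^3\le -\min(c_1,c_2)\max\{p^3,\sqrt{\xi_c-\xi}\,p^2\}$. The case $w_1\in\widetilde{\text{asc}}_\xi$ runs identically — there the two segments leave $\omega_c$ at angle $-3\pi/5$ and at an angle in $\mathbb{B}(\pi/6,\pi/24)$, on which $\sin(2\theta),\sin(3\theta)>\tfrac12$, giving $\mathcal{R}[g_\xi(\omega_c)-g_\xi(w_2)]\le -c_1\sqrt{\xi_c-\xi}\,p^2-c_2p^3$ — and the same bookkeeping at the fixed radius $p$ of order $\delta_0$ simultaneously delivers Lemma \ref{globallemma}.

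I expect the real obstacle to be the second step: controlling the argument $\theta$ of the steepest paths uniformly as $\xi\uparrow\xi_c$, i.e.\ across the coalescence of the two simple critical points of $g_\xi$ into the double point at $i$. In that regime neither the quadratic nor the cubic term of the expansion dominates throughout the relevant range of $p$, so the decay must be squeezed out of both terms at once, and this forces one to know, uniformly in $\xi$, that $\sin(2\theta)$ and $\sin(3\theta)$ stay bounded away from $0$ with the correct sign on the contour. That is precisely what Lemma \ref{cos3thetalessthaneps} provides — it makes $\cos(3\theta)$ small along the level set $\mathcal{I}[g_\xi-g_\xi(\omega_c)]=0$ near $i$ — together with the topological constraints (the steepest path cannot cross the unit circle or the imaginary axis) used in Lemma \ref{closeto-pi/6} to select the correct branch of $\theta$. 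Everything else — the uniform estimate for the Taylor remainder $R_{\omega_c}$ and the arithmetic of the constants — is routine.
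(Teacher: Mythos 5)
Your proposal is correct and follows essentially the same route as the paper: Taylor-expand $g_\xi$ around $\omega_c$, feed in the $\sqrt{\xi_c-\xi}$-scaling of $g''_\xi(\omega_c)$ and $g'''_\xi(\omega_c)$ from Lemma \ref{gomegac''}, and then use the explicit geometry of the two straight segments of $\widetilde{\text{desc}}_\xi$ near $i$ (angles $3\pi/5$ and in $\B(-\pi/6,\pi/24)$) to get $\sin(2\theta),\sin(3\theta)<-\tfrac12$, absorb the higher-order remainder for $\delta_0$ small and $n$ large, and take the maximum at the end. Your closing remarks about where the uniformity in $\xi$ is actually earned (Lemmas \ref{cos3thetalessthaneps} and \ref{closeto-pi/6}) correctly identify the crux, and the paper's proof does exactly this.
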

\section{Some technical lemmas}
Now we give the proof of the lemma relating $\omega_c$ and $\xi$.
\begin{proof}[Proof of lemma \ref{thetacxi}]
\label{proofxiomegac}
By \eqref{firstderiv}, $\xi=-1/(2\mcR[F_1(e^{i\theta_c})] )$ and $\xi_c=-1/(2\mcR[F_1(i)])$ where we define $F_1(w)=w/\sqrt{w^2+2c}$. Let $F_2(\theta_c)=1/\mcR[F_1(e^{i\theta_c})]$, $F_2$ is smooth on $[0,\pi/2]$. For any $\theta_c\in[0,\pi/2]$ there exists $\chi$ between $\pi/2$ and $\theta_c$ such that 
\begin{align}
F_2(\theta_c)-F_2(\pi/2)=-F_2'(\pi/2)\varphi_c+\frac{F_2''(\pi/2)}{2}\varphi_c^2-\frac{F_2'''(\pi/2)}{3!}\varphi_c^3+\frac{F_2''''(\chi)}{4!}\varphi_c^4.
\end{align}
Direct computation gives that $F_2'(\pi/2)=F_2'''(\pi/2)=0$ and that $F_2''(\pi/2)=-4c(1+c)/\sqrt{1-2c}$. \eqref{xiintermsofthetac} follows by defining $R_{35}(\varphi_c)=F_2''''(\chi)$ for the corresponding $\chi$ and the fact that $\xi_c-\xi=(F_2(\pi/2)-F_2(\theta_c))/2$.\\
Define $F_3(w)=F_1(w)+F_1(1/w)$. We are going to find a section of a small neighbourhood around $i$ in which to invert $F_3(w)$. Note $F_3$ is not one-to-one in any neighbourhood of $i$. It is straightforward to compute $F_3(i)=-1/\xi_c$, $F_3'(i)=0$ and $F_3''(i)=-8c(1+c)/(1-2c)^{5/2}$. One sees from the definition of $F_3$ that $F_3$ is analytic in the disc $\B(i,\delta_1)$ where $\delta_1=1-\sqrt{2c}$. Define $H$ such that $F_3(i)-F_3(w)=H(w)(w-i)^2$ so that $H$ is analytic in $\B(i,\delta_1)$ and $H(i)=-F_3''(i)/2$ is a positive real number. Non-zero analytic functions have isolated zeros so we can find $0<\delta_2<\delta_1$ so that $0\notin H(\overline{\B}(i,\delta_2))$ and define a branch of $\sqrt{H(w)}$ so that $\sqrt{H(w)}$ is analytic on $\overline{\B}(i,\delta_2)$. Now define the analytic function $h(w)=(w-i)\sqrt{H(w)}$ on $\overline{\B}(i,\delta_2)$, one sees $h$ has a non-vanishing derivative at $i$. Note $h(w)^2=F_3(i)-F_3(w)$. Let $c_1>0$ be chosen so that $\abs{h(w)}\geq 2c_1$ on $\abs{w-i}=\delta_2$. By Lagrange's Inverse Function theorem (see \cite{Gam}) we have
\begin{align}
h^{-1}(z)=\frac{1}{2\pi i}\int_{|\zeta-i|=\delta_2}\frac{\zeta h'(\zeta)}{h(\zeta)-z}d\zeta, \qquad |z|< c_1.
\end{align}
Now consider $-\frac{1}{2}\sqrt{1+2c}<\xi\leq \xi_c$ such that $c_1^2\geq\frac{1}{\xi}-\frac{1}{\xi_c}\geq0$ and label $w_{\pm}(\xi)=h^{-1}(\pm \sqrt{\frac{1}{\xi}-\frac{1}{\xi_c}})$. Take a sequence $\theta_c^{(n)}\rarrow \pi/2^-$ such that $e^{i\theta_c^{(n)}}\in \overline{B}(i,\delta_2)\cap \HH^+$ for all $n$, by continuity $\sqrt{H(e^{i\theta_c^{(n)}})}\rarrow \sqrt{H(i)}$ which is a positive real number, so $\arg(\sqrt{H(e^{i\theta_c^{(n)}})})\rarrow 0$. Since $[-\pi/4,0)\ni \arg(e^{i\theta_c^{(n)}}-i)\rarrow 0$ one can take $n$ so large that $\mcR[h(e^{i\theta_c^{(n)}})]>0$. This discounts the possibility $w_-=\omega_c$ and so we have $w_+=\omega_c$. Hence we have obtained
\begin{align}
\omega_c(\xi)=\frac{1}{2\pi i}\int_{|\zeta-i|=\delta_2}\frac{\zeta h'(\zeta)}{h(\zeta)}\frac{1}{1-\frac{\sqrt{\frac{1}{\xi}-\frac{1}{\xi_c}}}{h(\zeta)}}d\zeta.
\end{align}
For $|t|<1$, $\frac{1}{1-t}=1+t+\frac{t^2}{1-t}$,
\begin{align}
\omega_c(\xi)=\frac{1}{2\pi i}\int_{|\zeta-i|=\delta_2}&\frac{\zeta h'(\zeta)}{h(\zeta)}d\zeta+\sqrt{\frac{1}{\xi}-\frac{1}{\xi_c}}\frac{1}{2\pi i}\int_{|\zeta-i|=\delta_2}\frac{\zeta h'(\zeta)}{h(\zeta)^2}d\zeta\label{omegacintegral}\\&+\big(\frac{1}{\xi}-\frac{1}{\xi_c}\big)\frac{1}{2\pi i}\int_{|\zeta-i|=\delta_2}\frac{\zeta h'(\zeta)}{h(\zeta)^3}\frac{1}{1-\frac{\sqrt{\frac{1}{\xi}-\frac{1}{\xi_c}}}{h(\zeta)}}d\zeta.
\nonumber
\end{align}
Since $h$ on $ \overline{\B}(i,\delta_2)$ only has a simple zero at $\zeta=i$ the first term on the right hand side of \eqref{omegacintegral} is $i$ by Cauchy's residue theorem. In the second term of \eqref{omegacintegral} we have 
\begin{align}
\frac{1}{2\pi i}\int_{|\zeta-i|=\delta_2}\frac{\zeta h'(\zeta)}{h(\zeta)^2}d\zeta&=\frac{1}{2\pi i}\int_{|\zeta-i|=\delta_2}\zeta(\frac{ H'(\zeta)}{2\sqrt{H(\zeta)}^3(\zeta-i)}+\frac{1}{\sqrt{H(\zeta)}(\zeta-i)^2})d\zeta\nonumber\\
&=i\frac{H'(i)}{2\sqrt{H(i)}^3}+\lim_{\zeta\rarrow i}\frac{d}{d\zeta}\frac{\zeta}{\sqrt{H(\zeta)}}\nonumber\\
&=\frac{1}{\sqrt{H(i)}}\nonumber\\
&=\frac{1}{2}\sqrt{\frac{(1-2c)^{5/2}}{c(1+c)}}.\nonumber
\end{align}
By the mean value theorem for every $-\frac{1}{2}\sqrt{1+2c}\leq\xi\leq \xi_c$ there exists $\eta_\xi\in(0,(\xi-\xi_c)/\xi_c)$ such that $\sqrt{\frac{1}{\xi}-\frac{1}{\xi_c}}=\frac{\sqrt{\xi_c-\xi}}{|\xi_c|}(1+\frac{(\xi_c-\xi)/\xi_c}{2(1+\eta_\xi)^{3/2}})$. \eqref{omegacintermsofxi} follows by letting 
\begin{align}
R_2(\xi)=\frac{1}{\xi_c\xi 2\pi i}\int_{|\zeta-i|=\delta_2}\frac{\zeta h'(\zeta)}{h(\zeta)^3}\frac{1}{1-\frac{\sqrt{\frac{1}{\xi}-\frac{1}{\xi_c}}}{h(\zeta)}}d\zeta-\frac{1}{2}\sqrt{\frac{(\xi_c-\xi)(1-2c)^{5/2}}{2(1+\eta_\xi)^3\xi_c^2c(1+c)}}.
\end{align}
Since $e^{-i\varphi_c}=-i(\omega_c-i)+1$, we can take the principle branch $\log$ so
\begin{align}
\varphi_c&=i\log(1-i(\omega_c-i))=\omega_c-i-\frac{(\omega_c-i)^2}{2\pi}\int_{|z|=\delta_1}\frac{\log(1+z)}{w^2(w+i(\omega_c-i))}dz,
\label{temp242}
\end{align}
\eqref{thetacintermsofxi} follows from substituting \eqref{omegacintermsofxi} in to \eqref{temp242}.
\end{proof}

Now we give some lemmas from which we obtain the leading order terms in proposition \ref{propeklcomegac}. 
\begin{lemma}\label{templeextendsinexp1dx}
Let non-zero integers $\ell,k$ be such that $\ell+k>0$ is even and $\alpha=k/\ell$ lies in a compact subset of $[-1,1)$. There is bounded function $R_{35}(\xi,k,\ell)$ such that
\begin{align}
e^{\frac{(\ell+k)c}{(1-2c)^{3/2}}\varphi_c^2}&\int_{\varphi_c}^\infty \cos((\ell-k)\arg G(ie^{-i\theta}))e^{-\frac{(\ell+k)c}{(1-2c)^{3/2}}\theta^2}d\theta\label{templeextendsinexp}
\\&=-i^{\ell-k}\frac{\sin((\ell-k)F(\varphi_c))}{(\ell-k)/\sqrt{1-2c}}
+R_{35}(\xi,k,\ell)\Big(\frac{\sqrt{\ell+k}}{(\ell-k)^2}+\frac{\varphi_c(\ell+k)}{(\ell-k)^2}\Big).\nonumber
\end{align} 

If instead $\alpha=1+\kappa_{\ell}\in[-1,1]$ such that $\ell\kappa_{\ell}$ is bounded then there is a bounded function $R_{38}$ such that
\begin{align}
&e^{(\ell+k)c'\varphi_c^2}\int_{\varphi_c}^\infty\cos((\ell-k)\arg G(ie^{-i\theta}))e^{-(\ell+k)c'\theta^2}d\theta\label{temp6h66c}\\
&=i^{\ell-k}\frac{D_-(\sqrt{(\ell+k)c'}\varphi_c)}{\sqrt{(\ell+k)c'}}+\Big(\frac{1}{(\ell+k)^{3/2}}+\frac{\varphi_c}{\ell+k}\Big)R_{38}(\xi,k,\ell).\nonumber
\end{align}
\begin{proof}
Recall \eqref{psithetaalpha} and write the integral in \eqref{templeextendsinexp} as
\begin{align}
e^{\frac{(\ell+k)c}{(1-2c)^{3/2}}\varphi_c^2}\int_{\pi/2-\theta_c}^\infty \cos((\ell-k)\text{arg}G(ie^{-i\theta}))e^{-\frac{(\ell+k)c}{(1-2c)^{3/2}}\theta^2}d\theta\label{temp6z4b}\\=
e^{(\ell+k)c'\varphi_c^2}\int_{\varphi_c}^\infty \cos(\ell\mcI [\psi(\pi/2-\theta)])e^{-(\ell+k)c'\theta^2}d\theta.\nonumber
\end{align}
We use integration by parts to get \eqref{temp6z4b} equal to
\begin{align}
-\frac{f(\varphi_c)}{\ell-k}\sin(\ell\mcI[\psi(\pi/2-\varphi_c)])-\frac{e^{(\ell+k)c'\varphi_c^2}}{\ell-k}\int_{\varphi_c}^\infty \sin(\ell\mcI[\psi(\pi/2-\theta)])\frac{d}{d\theta}\Big[e^{-c'(\ell+k)\theta^2}f(\theta)\Big]d\theta
\end{align}
 where 
 \begin{align}
 f(\theta)=\Big (\frac{d}{d\theta}\text{arg}(G(ie^{-i\theta}))\Big)^{-1}.
  \end{align}
  Note: Want to show that if $G(e^{i\theta})$ is real and negative then $\theta=0$.
  \\
 Except for the discontinuities at $\theta=2\pi n, n\in \Z$ caused by $\arg$, one can check that $\text{arg}(G(e^{i\theta}))$ has negative derivative bounded above by some negative number for all $\theta$ and so $f$ is smooth on $\R\setminus \{2\pi n\}$. Indeed, from \eqref{psi'1} we can compute
\begin{align}
\frac{d}{d\theta}\text{arg}(G(e^{i\theta}))&=\mcI\Big[\frac{-ie^{i\theta}}{\sqrt{(e^{i\theta})^2+2c}}\Big]\\
&=\frac{-1}{\pi}\int_{-\sqrt{2c}}^{\sqrt{2c}}\mcR\Big [\frac{e^{i\theta}}{e^{i\theta}-is}\Big ]\frac{ds}{\sqrt{2c-s^2}}\nonumber\\
&=\frac{-2}{\pi}\int_0^{\sqrt{2c}}\frac{1+s^2\cos(2\theta)}{|e^{i\theta}+is|^2|e^{i\theta}-is|^2}\frac{ds}{\sqrt{2c-s^2}}\nonumber
\end{align}
where the integrand is bounded below by some positive number.
 Integrating by parts again we have \eqref{temp6z4b} equal to
 \begin{align}
- \frac{f(\varphi_c)}{\ell-k}\sin(\ell\mcI[\psi(\pi/2-\varphi_c)])+\left .\frac{e^{c'(\ell+k)\varphi_c^2}}{(\ell-k)^2}\right |^\infty_{\varphi_c} \cos(\ell\mcI[\psi(\pi/2-\theta)])f(\theta)\frac{d}{d\theta}\Big [e^{-c'(\ell+k)\theta^2}f(\theta)\Big ]\label{temp7zc3}\\
 +\frac{e^{c'(\ell+k)\varphi_c^2}}{(\ell-k)^2}\int_{\varphi_c}^\infty \cos(\ell\mcI[\psi(\pi/2-\theta)])\frac{d}{d\theta}\Big[ f(\theta) \frac{d}{d\theta}\Big [e^{-c'(\ell+k)\theta^2}f(\theta)\Big ]\Big]d\theta.\nonumber
 \end{align}
 From \eqref{tempImpsi12}
 \begin{align}
\ell\mcI[\psi(\pi/2-\theta)]=(\ell-k)\pi/2+\frac{\ell-k}{\sqrt{1-2c}}\theta-(\ell-k)R_{7}(-\theta)\theta^3.
 \end{align}
 Together with Taylors theorem applied to $f$, for each $\varphi_c>0$ we get a  $\chi\in (0,\varphi_c)$ such that
 \begin{align}
\frac{f(\varphi_c)}{\ell-k}\sin(\ell\mcI[\psi(\pi/2-\varphi_c)])=\frac{i^{\ell-k}\sin(\frac{\ell-k}{\sqrt{1-2c}}\varphi_c-(\ell-k)R_{7}(-\varphi_c)\varphi_c^3)}{\ell-k}\Big (-\sqrt{1-2c}+f'(\chi)\varphi_c\Big)\label{temp54gh}
 \end{align}
 where we used $f(0)=i(1-\alpha)/\psi'(\pi/2)$ and the values of \eqref{psievals1} and \eqref{psievals2} computed earlier.
By taking a derivative there is a $C>0$ such that
\begin{align}
\abs{f(\theta)\frac{d}{d\theta}\Big [e^{-c'(\ell+k)\theta^2}f(\theta)\Big ]}\leq Ce^{-c'(\ell+k)\theta^2}(1+(\ell+k)\theta)\label{temp32va}
\end{align}
since $f$ is smooth.
Similarly,
\begin{align}
\abs{\frac{d}{d\theta}\Big [ f(\theta)\frac{d}{d\theta}\Big [e^{-c'(\ell+k)\theta^2}f(\theta)\Big ]\Big ]}\leq Ce^{-c'(\ell+k)^2}((\ell+k)+(\ell+k)\theta+(\ell+k)^2\theta^2).\label{temp43vas}
\end{align}
 So by \eqref{temp32va} the second term in \eqref{temp7zc3} is bounded above by
\begin{align}
C\frac{1+(\ell+k)\varphi_c}{(\ell-k)^2}.\label{temp3gaf}
\end{align}
Then if we use \eqref{temp43vas} in the third term \eqref{temp7zc3} and integrate by parts, the third term in \eqref{temp7zc3} is bounded above by
\begin{align}
%&C\frac{e^{c'(\ell+k)\varphi_c^2}}{(\ell-k)^2}\int_{\varphi_c}^\infty e^{-(\ell+k)\theta^2}((\ell+k)^2\theta^2+(\ell+k)\theta+(\ell+k))d\theta\nonumber\\
&\label{temp54fa}C\frac{e^{c'(\ell+k)\varphi_c^2}}{(\ell-k)^2}\Big ((\ell+k)^2\Big[\int_{\varphi_c}^\infty \frac{e^{-c'(\ell+k)\theta^2}}{2(\ell+k)}d\theta+\frac{\varphi_c e^{-c'(\ell+k)\varphi_c^2}
}{\ell+k}\Big ] \\&\quad+(\ell+k)\frac{e^{-c'(\ell+k)\varphi_c^2}}{2(\ell+k)}+(\ell+k)\int_{\varphi_c}^\infty e^{-c'(\ell+k)\theta^2}d\theta\Big)\nonumber\\
&\leq C'\Big(\frac{\sqrt{\ell+k}}{(\ell-k)^2}+\frac{1}{(\ell-k)^2}+\frac{\varphi_c(\ell+k)}{(\ell-k)^2}\Big )\nonumber
\end{align} 
where in the last line we used the fact that there is a $C'>0$ such that
\begin{align}
\int_z^\infty e^{-x^2}dx\leq C'e^{-z^2}
\end{align}
for $z>0$. 
Inserting \eqref{temp54gh} into \eqref{temp7zc3} the two bounds \eqref{temp3gaf} and \eqref{temp54fa} can be used to give \eqref{templeextendsinexp}.

Now we prove \eqref{temp6h66c}. We have 
\begin{align}
\cos(\ell\kappa_\ell\theta h(\theta))=1+\theta^2R_{39}(\theta,\kappa_\ell,\ell)\label{temp7dc4}
\end{align} 
 for a smooth $h$ on $\R$ and bounded $R_{39}$.
Recall \eqref{Ftheta}, we have that
\begin{align}
&e^{(\ell+k)c'\varphi_c^2}\abs{\int_{\varphi_c}^\infty\cos(\ell\kappa_\ell \theta F(\theta))e^{-(\ell+k)c'\theta^2}d\theta-\int_{\varphi_c}^\infty e^{-(\ell+k)c'\varphi_c^2}d\theta}\\ 
&\leq C_1e^{(\ell+k)c'\varphi_c^2}\int_{\varphi_c}^\infty \theta^2e^{-c'(\ell+k)\theta^2}d\theta\leq C_2\frac{D_-(\varphi_c\sqrt{(\ell+k)c'})}{(\ell+k)^{3/2}}+C_3\frac{\varphi_c}{\ell+k}\nonumber
\end{align}
where the last inequality follows since
\begin{align}
\int_z^\infty x^2e^{-x^2}dx=\frac{1}{2}e^{-z^2}(D_-(z)+z).
\end{align}
Now recall the bound $D_-(z)<1/(z+1)$ from \eqref{millest}.
\end{proof}
\end{lemma}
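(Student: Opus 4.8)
The plan is to handle the two displayed formulas separately, according to whether $\ell-k=\ell(1-\alpha)$ is large (first formula) or bounded (second formula), reducing the integral in each case to an explicit boundary term or a Gaussian integral in disguise and controlling the rest by Gaussian tail estimates. Throughout, I use that $ie^{-i\theta}=e^{i(\pi/2-\theta)}$ together with \eqref{Ftheta} gives $\arg G(ie^{-i\theta})=F(\theta)+\pi/2$, that $F$ is smooth near $\theta=0$ with $F'(0)=1/\sqrt{1-2c}$ (cf. \eqref{Ftheta}, since $G$ is analytic and non-vanishing near $i$), and that $\ell-k$ is even because $\ell+k$ is; hence $f(\theta):=(\frac{d}{d\theta}\arg G(ie^{-i\theta}))^{-1}=1/F'(\theta)$ is smooth near $0$ with $f(0)=\sqrt{1-2c}$. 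Also $\ell+k>0$ forces $\ell>0$, so $e^{-(\ell+k)c'\theta^2}$ genuinely decays.

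\emph{First formula.} Here $\alpha$ lies in a compact subset of $[-1,1)$, so $|\ell-k|\gtrsim|\ell|\to\infty$ and dividing by powers of $\ell-k$ is harmless. Writing $\cos((\ell-k)\arg G(ie^{-i\theta}))=\frac{f(\theta)}{\ell-k}\frac{d}{d\theta}\sin((\ell-k)\arg G(ie^{-i\theta}))$ and integrating by parts, the boundary term at $\infty$ vanishes while the one at $\varphi_c$, using $\ell-k$ even and $\arg G(ie^{-i\varphi_c})=F(\varphi_c)+\pi/2$, equals $-i^{\ell-k}\frac{f(\varphi_c)}{\ell-k}\sin((\ell-k)F(\varphi_c))$; a first-order Taylor expansion of $f$ at $0$ turns this into the asserted leading term $-i^{\ell-k}\sin((\ell-k)F(\varphi_c))/((\ell-k)/\sqrt{1-2c})$ plus an error of size $O(\varphi_c/|\ell-k|)$. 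The leftover integral $-\frac{e^{(\ell+k)c'\varphi_c^2}}{\ell-k}\int_{\varphi_c}^\infty\sin((\ell-k)\arg G(ie^{-i\theta}))\frac{d}{d\theta}[f(\theta)e^{-(\ell+k)c'\theta^2}]\,d\theta$ is integrated by parts once more, now pulling out $\sin=-\frac{f}{\ell-k}\frac{d}{d\theta}\cos$ and moving the derivative onto the rest (the naive choice would merely recover the original integral); this gains a factor $1/(\ell-k)^2$, a boundary term at $\varphi_c$ of order $(1+(\ell+k)\varphi_c)/(\ell-k)^2$, and a bulk integral whose integrand is $O((1+(\ell+k)\theta+(\ell+k)^2\theta^2)e^{-(\ell+k)c'\theta^2})$, each derivative hitting the Gaussian contributing a factor $(\ell+k)\theta$. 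Evaluating via $u=\sqrt{(\ell+k)c'}\,\theta$ and using $\int_z^\infty e^{-u^2}\,du\le Ce^{-z^2}$ and $\int_z^\infty u^2e^{-u^2}\,du=\frac12 e^{-z^2}(D_-(z)+z)$ with $D_-(z)<1/(z+1)$ from \eqref{millest}, this is $O((\sqrt{\ell+k}+(\ell+k)\varphi_c)/(\ell-k)^2)$. Combining the three error contributions (and using the hypotheses relating $\ell,k,\varphi_c$) gives \eqref{templeextendsinexp}.

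\emph{Second formula.} Now $\alpha=1+\kappa_\ell$ with $\ell\kappa_\ell$ bounded, so $\ell-k=-\ell\kappa_\ell$ is bounded and integration by parts buys nothing; instead the oscillation is negligible. By $\ell-k$ even and \eqref{Ftheta}, $\cos((\ell-k)\arg G(ie^{-i\theta}))=i^{\ell-k}\cos((\ell-k)F(\theta))$, and since $(\ell-k)F(\theta)=\ell\kappa_\ell\,\theta\,h(\theta)$ for a function $h$ smooth near $0$ with $\ell\kappa_\ell$ bounded, $\cos((\ell-k)F(\theta))=1+\theta^2R_{39}(\theta,\kappa_\ell,\ell)$ with $R_{39}$ bounded; hence the integral equals $i^{\ell-k}e^{(\ell+k)c'\varphi_c^2}\int_{\varphi_c}^\infty(1+\theta^2R_{39})e^{-(\ell+k)c'\theta^2}\,d\theta$. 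The substitution $u=\sqrt{(\ell+k)c'}\,\theta$ identifies the leading piece with $i^{\ell-k}D_-(\sqrt{(\ell+k)c'}\,\varphi_c)/\sqrt{(\ell+k)c'}$ directly from the definition of the Mills ratio, and the remainder, using $\int_z^\infty u^2e^{-u^2}\,du=\frac12 e^{-z^2}(D_-(z)+z)$ together with $D_-(z)<1/(z+1)$, is $O((\ell+k)^{-3/2}+\varphi_c(\ell+k)^{-1})$, which is \eqref{temp6h66c}.

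The main obstacle is the first formula: one must perform the two integrations by parts in exactly this order, choose the second antiderivative so as not to go around in a circle, and estimate the first two $\theta$-derivatives of $f(\theta)e^{-(\ell+k)c'\theta^2}$ sharply — tracking the $(\ell+k)\theta$ factors — so that every error (the $O(\varphi_c/|\ell-k|)$ from expanding $f(\varphi_c)$, the boundary term of the second integration by parts, and its bulk term) is absorbed into the single stated bound; a crude estimate at any step would be too lossy. The second formula is essentially a short computation once the Mills ratio is recognized.
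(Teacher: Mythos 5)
Your proposal follows exactly the paper's strategy: two integrations by parts (choosing the second antiderivative so as not to cycle back) for the first formula, and the Taylor expansion $\cos((\ell-k)F(\theta))=1+O(\theta^2)$ together with the Mills-ratio identification for the second. One small bookkeeping point: the constant term in your bound for $\frac{d}{d\theta}\bigl[f(\theta)\frac{d}{d\theta}(e^{-c'(\ell+k)\theta^2}f(\theta))\bigr]$ should be $O(\ell+k)$ rather than $O(1)$, coming from the term $-2c'(\ell+k)f^2$; this does not change the final error since it still integrates to $O(\sqrt{\ell+k}/(\ell-k)^2)$, and you correctly get $f(0)=\sqrt{1-2c}$, which is the sign consistent with the lemma's stated leading term.
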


\begin{lemma}\label{templemmalpha13s}
Let $\ell,k$ be non-zero integers such that $\ell+k<0$ and $\alpha=k/\ell=1+\kappa_\ell$ for some $\kappa_\ell=O(1/|\ell|)$. There is a bounded function $R_{40}(\xi,\kappa_\ell,\ell)$ such that
\begin{align}
&e^{(\ell+k)c'\varphi_c^2}\int_0^{\varphi_c}\cos((l-k)F(\theta))e^{-(\ell+k)c'\theta^2}d\theta\\&=\frac{D_+(\varphi_c\sqrt{-(\ell+k)c'})}{\sqrt{-(\ell+k)c'}}+\frac{\varphi_c^2}{\sqrt{|\ell+k|}}\min(\varphi_c\sqrt{c'|\ell+k|},\frac{1}{\varphi_c\sqrt{c'|\ell+k|}})R_{40}(\xi,\kappa_\ell,\ell)
\end{align}
\begin{proof}
We have \eqref{temp7dc4} so
\begin{align}
\abs{\int_0^{\varphi_c}\cos(\ell\kappa_{\ell}F(\theta))e^{-(\ell+k)c'\theta^2}d\theta-\int_0^{\varphi_c}e^{-(\ell+k)c'\theta^2}d\theta}\leq C\frac{\varphi_c^2}{\sqrt{|\ell+k|}}\int_0^{\varphi_c\sqrt{c'|\ell+k|}}e^{\theta^2}d\theta
\end{align}
Now use the pair of inequalities
 \begin{align}
\int_0^z e^{x^2}dx\leq ze^{z^2+1}, && \int_0^z e^{x^2}dx\leq e^{z^2+1}/z\label{pairineqs}
\end{align}
for $z>0$. 
\end{proof}
\end{lemma}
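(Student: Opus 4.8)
The plan is to mimic the argument used for the case $\alpha=1+\kappa_\ell$ in the proof of Lemma \ref{templeextendsinexp1dx}: replace the cosine factor by $1$ at a controlled cost, evaluate the remaining Gaussian-type integral exactly in terms of the Dawson function $D_+$, and estimate the error with elementary bounds on $\int_0^z e^{x^2}dx$.

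First I would record that $\ell-k=\ell(1-\alpha)=-\ell\kappa_\ell$, which is bounded since $\kappa_\ell=O(1/|\ell|)$. Writing $(\ell-k)F(\theta)=-\ell\kappa_\ell\,\theta\cdot(F(\theta)/\theta)$ and using that $F(\theta)/\theta$ is smooth and bounded near $\theta=0$ by \eqref{Ftheta}, Taylor's theorem gives $\cos((\ell-k)F(\theta))=1+\theta^2R_{39}(\theta,\kappa_\ell,\ell)$ with $R_{39}$ bounded uniformly in $\ell$, which is exactly \eqref{temp7dc4}. Since $\ell+k<0$ and $c'>0$ we have $-(\ell+k)c'>0$, so $e^{-(\ell+k)c'\theta^2}\ge1$ on $[0,\varphi_c]$; bounding $\theta^2\le\varphi_c^2$ there yields
\begin{align*}
\abs{\int_0^{\varphi_c}\cos((\ell-k)F(\theta))e^{-(\ell+k)c'\theta^2}d\theta-\int_0^{\varphi_c}e^{-(\ell+k)c'\theta^2}d\theta}\le C\varphi_c^2\int_0^{\varphi_c}e^{-(\ell+k)c'\theta^2}d\theta.
\end{align*}

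Next I would substitute $\theta=u/\sqrt{-(\ell+k)c'}$ and set $z=\varphi_c\sqrt{-(\ell+k)c'}=\varphi_c\sqrt{c'|\ell+k|}$. For the main integral this gives $\int_0^{\varphi_c}e^{-(\ell+k)c'\theta^2}d\theta=(-(\ell+k)c')^{-1/2}\int_0^{z}e^{u^2}du=(-(\ell+k)c')^{-1/2}e^{z^2}D_+(z)$ by the definition of $D_+$, so multiplying through by $e^{(\ell+k)c'\varphi_c^2}=e^{-z^2}$ produces precisely $D_+(\varphi_c\sqrt{-(\ell+k)c'})/\sqrt{-(\ell+k)c'}$, the stated leading term. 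For the error, the same substitution bounds the displayed quantity by $C\varphi_c^2(-(\ell+k)c')^{-1/2}\int_0^z e^{u^2}du$; multiplying by $e^{(\ell+k)c'\varphi_c^2}=e^{-z^2}$ and applying the two inequalities $\int_0^z e^{x^2}dx\le z e^{z^2+1}$ and $\int_0^z e^{x^2}dx\le e^{z^2+1}/z$ gives $e^{-z^2}\int_0^z e^{x^2}dx\le e\min(z,1/z)$, so the error is at most a constant times $\varphi_c^2|\ell+k|^{-1/2}\min(z,1/z)$ with $z=\varphi_c\sqrt{c'|\ell+k|}$, which is the claimed form with $R_{40}$ absorbing the constant. (Equivalently, $e^{-z^2}\int_0^z e^{x^2}dx=D_+(z)$ and $D_+(z)\le e\min(z,1/z)$ by \eqref{dawsonest1} and \eqref{dawsonest2}.)

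There is no real obstacle here; the only step requiring a line of care is checking that the constant in $\cos((\ell-k)F(\theta))=1+\theta^2R_{39}$ is uniform in $\ell$, which is exactly where the hypothesis $\kappa_\ell=O(1/|\ell|)$ (so that $\ell-k=O(1)$) enters, together with the smoothness and boundedness of $F(\theta)/\theta$ on a fixed neighbourhood of $0$ coming from \eqref{Ftheta}.
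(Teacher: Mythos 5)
Your proof is correct and follows essentially the same route as the paper: Taylor-expand the cosine via \eqref{temp7dc4}, bound the error by $C\varphi_c^2\int_0^{\varphi_c}e^{|\ell+k|c'\theta^2}d\theta$, substitute $z=\varphi_c\sqrt{c'|\ell+k|}$ to identify the main term as $D_+(z)/\sqrt{c'|\ell+k|}$, and control the error with the pair of inequalities \eqref{pairineqs} (equivalently the Dawson-function bounds). You merely spell out the substitution and the identification with $D_+$ more explicitly than the paper does.
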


%%% References %%%

\end{document}